\newcommand{\shuffle}{\mathbin{\sqcup\mkern-3mu\sqcup}}
\definecolor{Chelsea}{RGB}{3,70,148}
\theoremstyle{plain}
\newtheorem{theorem}{Theorem}[section]
\newtheorem*{theorem*}{Theorem}
\newtheorem{lemma}[theorem]{Lemma}
\newtheorem{corollary}[theorem]{Corollary}
\newtheorem{proposition}[theorem]{Proposition}
\newtheorem{assumption}[theorem]{Assumption}
\newtheorem{conjecture}[theorem]{Conjecture}
\theoremstyle{definition}
\newtheorem{definition}[theorem]{Definition}
\newtheorem*{definition*}{Definition}
\theoremstyle{remark}
\newtheorem{example}{Example}[section]
\newtheorem{remark}{Remark}[section]
\DeclareMathOperator{\Ss}{\mathbb{S}}
\DeclareMathOperator{\SSS}{\mathbf{S}}
\DeclareMathOperator{\XX}{\mathbb{X}}
\DeclareMathOperator{\XXX}{\mathbf{X}}
\title{\vspace{-10mm} 
Unbiased Rough Integrators and No Free Lunch in Rough-Path-Based Market Models
\vspace{-3mm}}
\date{\today}
\author{Tomoyuki Ichiba\footnote{Email: {\tt ichiba@pstat.ucsb.edu}. Affiliation: {\tt Department of Statistics and Applied Probability, University of California, Santa Barbara, California, 93106-3110}},  Qijin Shi\footnote{Email: {\tt qijin@ucsb.edu}. Affiliation: {\tt Department of Statistics and Applied Probability, University of California, Santa Barbara, California, 93106-3110}}}
\begin{document}
\maketitle
\vspace{-1cm}

\vspace{0mm}
\begin{abstract}

Built to generalise classical stochastic calculus, rough path theory provides a natural and pathwise framework to model continuous non-semimartingale assets. This paper investigates the capacity of this framework to support frictionless continuous No-Free-Lunch markets à la Kreps-Yan. We establish a ``Rough Kreps-Yan" theorem, which links a No Controlled Free Lunch (NCFL) condition to the unbiasedness of the driver of the price process as a rough integrator. The central work of this paper is a classification of these unbiased rough integrators with respect to different classes of controlled paths as integrands, under some assumptions. As the admissible strategies are enlarged from Markovian-type portfolios to signature-type and adaptedly scaled signature-type portfolios, the admissible random rough paths collapse first to Gaussian-Hermite rough paths, and ultimately to the It\^o rough path lift of a standard Brownian motion, up to a time change. Notably, simple strategies do not appear in the theory. This implies that within our framework, continuous frictionless markets based on rough path theory are inevitably constrained to the classical semimartingale paradigm, clarifying the limits of this approach. Our framework covers $\alpha-$H\"older continuous rough paths for $\alpha>0$ arbitrarily small in the tensor algebra setting.
\end{abstract}
\vspace{2mm}
	
\textbf{Keywords:}   Rough paths, Rough differential equations, Rough integrals, No Free Lunch, Kreps-Yan theorem, Hermite polynomials.
\\ \vspace{-0mm}

\textbf{MSC(2020)}:  
60L20, 
60L90, 
60H05, 
60H30, 
60G15, 
91G15 

\textbf{JEL Classification}:  
C02, 
C65, 
G10 

\vspace{8mm}

\tableofcontents


\section{Introduction}

The Fundamental Theorem of Asset Pricing (FTAP) elevates semimartingales to the canonical class of price processes by equating frictionless no–arbitrage with the existence of an equivalent (local) martingale measure (cf. \cite{DS94} and \cite{DS98}). This classical paradigm is inseparable from It\^o calculus, which furnishes stochastic integrals and change–of–variables formulae for semimartingale drivers and underpins modern pricing and hedging. In parallel, modern data and modelling practice motivate market noise with memory and multiscale structure (cf.\ \cite{Con01} and \cite{DiM07}), often outside the semimartingale class. Rough path theory (Lyons and subsequent developments; cf.\ \cite{Lyo98} and \cite{FH14}) furnishes a pathwise calculus for continuous such irregular signals, extending integration and differential equations beyond semimartingales and enabling models with low regularity and memory effects.

Any discussion on no-arbitrage hinges on (i) a sufficiently large set of admissible strategies and (ii) a valid stochastic integral to define the gain process. The classical paradigm builds the integral from simple, predictable strategies and a scaling limit, naturally tied to the semimartingale world. Since different interpretations of the gain process result in different meanings of the self-financing condition of a trading portfolio, and thus different interpretations of no-arbitrage, a basic question arises: can the gain process, defined via an alternative stochastic integral, support a frictionless no-arbitrage theory beyond semimartingales, and where are its limits?

Indeed, four structural features make the rough integral a natural and compelling alternative to It\^o’s integral in financial modelling:
\begin{itemize}
    \item \emph{Consistency with the It\^o calculus.}  Using the It\^o rough lift of a standard Brownian motion, rough integrals and rough differential equations recover the classical It\^o calculus (resp.\ the Stratonovich lift recovers the Stratonovich calculus) (cf.\ \cite{FH14} Chapter 3, 5 and 9).
    
    \item \emph{Stability.} The rough integral map $(\XXX,Y)\mapsto\int Y\,\mathrm{d}\XXX$ is jointly continuous with respect to rough path/controlled path metrics (cf. \cite{FH14} Section 4.4). In particular, $Y_n\rightarrow 0$ in the controlled paths metric implies $\int Y_n \mathrm{d}\XXX \rightarrow 0$ for any rough path driver $\XXX$. This mirrors the ``simple strategies \(\to\) limit" paradigm in the Bichteler–Dellacherie good integrator theory (cf. \cite{Bit81} and \cite{BSV11_Good_Integrator}) in a pathwise manner without assuming semimartingales.
    \item \emph{Model-free accounting.} Rough integral is defined purely pathwise, so self-financing and rebalancing become pathwise accounting identities. The choice of lift only fixes the deterministic compensation (the bracket and higher renormalisation terms), without invoking any probabilistic model a priori.
    \item \emph{A wide class of portfolios.} The integrands in rough integrals, the controlled paths, naturally accommodate state-dependent, as well as history-dependent (signature-type) strategies while keeping the gain functional linear in $Y$ and local in time. By the universal approximation of path signatures (cf. \cite{HBS24} and \cite{CPS25}), linear functionals of signatures approximate broad classes of non-anticipative, path-dependent strategies arbitrarily well in various settings, yielding a rich class of portfolios.
\end{itemize}

Taken together—and in continuity with classical stochastic calculus—these observations motivate the central question of this paper: \emph{Does Rough Path theory, and in particular
rough integration to define the gain process, yield a coherent, frictionless no-arbitrage framework beyond the It\^o-diffusion/semimartingale setting? If so, what are its structural limits?}

Another key feature of rough-path theory is that a given stochastic process admits infinitely many rough-path lifts, if there exists one, and different lifts yield different rough integrals. The non-uniqueness is already evident for Brownian motion: one can construct distinct It\^o and Stratonovich lifts so that the resulting rough integral coincides exactly with the classical It\^o or Stratonovich integral. In the regime where sample paths are a.s.\ $\alpha$–H\"older for some $\alpha\in(\tfrac13,\tfrac12)$ (which includes Brownian motion), the choice of the rough path lift corresponds one-to-one to a $C^{2\alpha}-$renormalisation function, which is usually referred to as the rough bracket in rough paths literature; for the It\^o lift this bracket coincides with the quadratic variation, making It\^o the natural choice to suppress spurious arbitrage.  As we will illustrate in Section \ref{sec:Classification}, by letting $\alpha$ be small enough, one can choose up to $\lfloor\frac{1}{\alpha}\rfloor-1$ different renormalisation functions in the rough path lift. We call them renormalisation terms since they will all appear in an It\^o-type change of variable formula for rough integrals. In other words, rough path theory provides a very rich class of bona fide notions of integrals against stochastic processes to define the gain process of a portfolio. This reframes the central problem of this paper into two precise questions:
\begin{center}
    \emph{What class of continuous stochastic processes can serve as the driving random rough path for an arbitrage-free market? For such a process, what must its specific rough path lift be?}
\end{center}

Two previous related works provide some hints. The first construction was made for fractional Brownian motions (fBm) with Hurst parameter $H\in(\frac{1}{3}, \frac{1}{2}]$ in \cite{QX18}, where an It\^o-type rough path lift was crafted such that state-dependent arbitrages are excluded if the gain process is interpreted as a rough integral. However, as pointed out therein, this is theoretically unsatisfactory since fBm itself is non-Markovian and admits memories for $H\neq \frac{1}{2}$; restricting strategies to be purely state-dependent leaves the most important question unanswered. Yet another relevant discussion was made in \cite{DJ23} for rough differential equations driven markets. Their settings contain proportional transaction cost, for which only portfolios of bounded variation were allowed to ensure a well-defined transaction cost process, so that the integral used therein was essentially still a Riemann-Stieltjes integral rather than a rough integral.

The main contribution of this paper is to provide an answer to the questions above. Roughly speaking, continuous frictionless rough-path-based market models with non-semimartingale drivers are ruled out in our setting.

\subsection{Contribution and outline}
We fix a finite time interval $[0,T]$ and consider the scenario of continuous trading with an early liquidation allowed, i.e., no other jumps except for the final cash out. Unlike in the FTAP, we understand no-arbitrage in this paper as a controlled version of the No Free Lunch (NFL) condition à la Kreps-Yan. It is well known (cf. \cite{DS06} Theorem 5.2.2) that when simple strategies (jumps) are allowed, the (NFL) is equivalent to the existence of an equivalent local martingale measure for the price process, which enforces it to be a semimartingale. Simple strategies can be built into our theory as piecewise controlled paths, but we shall not include them for the sake of continuous trading. In our continuous trading scenario, different classes of strategies (understood as holding processes and liquidation times) induce different mathematical interpretations of our No Controlled Free Lunch (NCFL) condition, which is defined in Definition \ref{def:NCFL} as:

\begin{definition*}[No controlled free lunch]
    Let $\mathcal{H}$ be a linear and convex set of random controlled paths containing 0 and let $T \in{\mathfrak{T}}_T$ be a set of stopping times bounded by the terminal time $T$. For $p\in (1,+\infty]$, a rough market model $\mathbf{S}$ (cf. Definition \ref{def:Rough Market Model}) is said to satisfy the condition of no controlled free lunch (NCFL) of order $p$ within $\mathcal{H}\times  {\mathfrak{T}}_T$ if the closure $\overline{C}^p(\mathcal{H},  {\mathfrak{T}}_T)$ of the gain cone $C^p(\mathcal{H},  {\mathfrak{T}}_T)$ induced by the trading strategies within $\mathcal{H}\times  {\mathfrak{T}}_T$, taken with respect to the weak-star topology on $L^p(\Omega, \mathcal{F}, \mathbb{P})$, satisfies
    \begin{equation}
        \overline{C}^p(\mathcal{H},  {\mathfrak{T}}_T)\cap L^p_+(\Omega, \mathcal{F}, \mathbb{P})=\{0\}.
    \end{equation}
\end{definition*}

We translate the (NCFL) market condition into a purely probabilistic property of the price integrator via the ``Rough Kreps-Yan Theorem" in Theorem \ref{RoughKYTheorem} as:

\begin{theorem*}[Rough Kreps-Yan Theorem]
   Let $p\in (1, +\infty],\;q\in [1,+\infty)$ with $\frac{1}{p}+ \frac{1}{q}=1$ and let $\mathcal{H}$ be a linear and convex subset of controlled portfolios containing $0$ and ${\mathfrak{T}}_T$ any set of stopping times bounded by the terminal time $T$ and containing $T$. A rough market $\mathbf{S}$ satisfies \textnormal{(NCFL)} of order $p$ within $\mathcal{H}\times {\mathfrak{T}}_T$, if and only if there exists an equivalent measure $\mathbb{Q}$ such that 
    \begin{equation}\label{eq:Rough Kreps-Yan in Intro}
        \frac{\mathrm{d}\mathbb{Q}}{\mathrm{d}\mathbb{P}}\in L^q_+ \quad \text{and} \quad \mathbb{E}_{\mathbb{Q}}\Big[\int_0^{\tau} Y_t \mathrm{d} \mathbf{S}_t\Big] =0 \quad\forall Y\in \mathcal{H},\;\forall \tau\in {\mathfrak{T}}_T.
    \end{equation}
\end{theorem*}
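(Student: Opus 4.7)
The plan is to specialise the classical Kreps--Yan argument to the rough-integrator gain cone and handle the two implications separately: the reverse direction by a direct duality check against the density $\mathrm{d}\mathbb{Q}/\mathrm{d}\mathbb{P}\in L^q_+$, the forward direction by Hahn--Banach separation producing a family of non-negative dual functionals that vanish on the cone, followed by the standard exhaustion argument upgrading positivity on sets into strict positivity almost surely.

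For the implication ($\Leftarrow$), suppose $\mathbb{Q}\sim\mathbb{P}$ with density $Z:=\mathrm{d}\mathbb{Q}/\mathrm{d}\mathbb{P}\in L^q_+$ exists and satisfies \eqref{eq:Rough Kreps-Yan in Intro}. Then $f\mapsto\mathbb{E}[Zf]$ is a weak-$*$ continuous linear functional on $L^p$. Since $\mathcal{H}$ is linear and convex with $0\in\mathcal{H}$, and $C^p(\mathcal{H},\mathfrak{T}_T)$ is built as such gains modulo $L^p_+$ (free disposal), this functional is non-positive on $C^p$; by weak-$*$ continuity the inequality extends to $\overline{C}^p(\mathcal{H},\mathfrak{T}_T)$. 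Any $\xi\in\overline{C}^p\cap L^p_+$ then satisfies $Z\xi\ge 0$ and $\mathbb{E}[Z\xi]\le 0$, hence $\xi=0$ almost surely since $Z>0$ almost surely.

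For the implication ($\Rightarrow$), assume NCFL. Fix $A\in\mathcal{F}$ with $\mathbb{P}(A)>0$; NCFL gives $\mathbf{1}_A\notin\overline{C}^p$, so the Hahn--Banach separation theorem in the locally convex space $(L^p,w^*)$, whose continuous dual is $L^q$, produces a non-zero $g_A\in L^q$ with $\mathbb{E}[g_A f]\le 0$ for all $f\in\overline{C}^p$ and $\mathbb{E}[g_A\mathbf{1}_A]>0$. The inclusion $-L^p_+\subset C^p$ forces $g_A\ge 0$ almost surely, while linearity of $\mathcal{H}$ gives $\pm\int_0^\tau Y\,\mathrm{d}\mathbf{S}\in C^p$, which upgrades the inequality to the vanishing identity $\mathbb{E}[g_A\int_0^\tau Y_t\,\mathrm{d}\mathbf{S}_t]=0$ for every $(Y,\tau)\in\mathcal{H}\times\mathfrak{T}_T$.

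To manufacture a strictly positive density, let $\mathcal{G}$ be the set of $g\in L^q_+$ enjoying this vanishing identity and set $s:=\sup_{g\in\mathcal{G}}\mathbb{P}(g>0)$. Choose $g_n\in\mathcal{G}$ with $\mathbb{P}(g_n>0)\to s$ and form $g^{*}:=\sum_n 2^{-n}g_n/(1+\|g_n\|_{L^q})$; the series converges absolutely in $L^q$, lies in $\mathcal{G}$ by linearity and $L^q$-continuity of the vanishing functional, and satisfies $\{g^{*}>0\}=\bigcup_n\{g_n>0\}$, so $\mathbb{P}(g^{*}>0)=s$. If $s<1$, applying the Hahn--Banach step above to $A:=\{g^{*}=0\}$ produces some $g_A\in\mathcal{G}$ charging $A$, and $g^{*}+g_A\in\mathcal{G}$ strictly enlarges the support, contradicting maximality. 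Hence $g^{*}>0$ a.s., and $\mathbb{Q}:=g^{*}\,\mathbb{P}/\mathbb{E}[g^{*}]$ is the sought equivalent measure. The main obstacle, beyond this essentially classical functional-analytic template, is the well-posedness of $C^p(\mathcal{H},\mathfrak{T}_T)\subset L^p$ itself: one needs that the pathwise rough integrals $\int_0^\tau Y\,\mathrm{d}\mathbf{S}$ lie in $L^p$ and depend linearly on $Y\in\mathcal{H}$, so that $L^q$ representation of $w^{*}$-continuous functionals is available. Both are pathwise inputs from the rough integration machinery rather than probabilistic facts, and once they are in place the Kreps--Yan argument transfers verbatim.
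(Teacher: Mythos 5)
Your proof follows essentially the same route as the paper's: the ($\Leftarrow$) direction is the identical direct weak-$*$ duality check, and the ($\Rightarrow$) direction uses the same two-step scheme of a Hahn--Banach separation lemma (the paper cites the second separation theorem in \cite{SW99}, applied to $\overline{C}^p$ and a singleton $\{f\}\subset L^p_+\setminus\{0\}$) followed by the standard exhaustion argument over supports of dual elements, exactly as in the paper's Lemma~\ref{Lemma6.1} and the proof in Appendix~\ref{App:Kreps-Yan}. You are a bit more explicit than the paper on two small points --- that $-L^p_+\subset C^p$ is what forces the separating functional to be non-negative, and that linearity of $\mathcal{H}$ upgrades $g|_{\overline{C}^p}\le 0$ to the vanishing identity --- but these are exactly the steps the paper performs implicitly, so the arguments coincide.
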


We then call the random rough path $\SSS$ an $\mathcal{H}\times {\mathfrak{T}}_T-$unbiased rough integrator under $\mathbb{Q}$, since integrating against $\SSS$ always yields a centred process. More concretely, we define in Definition \ref{def:Unbiased RI}:

\begin{definition*}[Unbiased rough integrator]
    Let $\XXX$ be an adapted random rough path on a filtered probability space $(\Omega, \mathcal{F}, (\mathcal{F}_t)_{t\in [0,T]}, \mathbb{P})$ and $\mathcal{H}$ be a class of adapted random piecewise a.s. by-$\XXX$-controlled paths. Moreover, let ${\mathfrak{T}}_T$ be any set of stopping times bounded by the terminal time $T$ and containing $T$. We say that $\XXX$ is an $\mathcal{H}\times {\mathfrak{T}}_T-$unbiased rough integrator, if 
    \begin{equation}
        \mathbb{E}\Big[\int_0^\tau Y_t\mathrm{d} \XXX_t\Big]=0\quad\quad \forall Y\in \mathcal{H},\; \forall \tau\in {\mathfrak{T}}_T.
    \end{equation}
    We also use the notation $\mathcal{H}-$unbiased rough integrator for ${\mathfrak{T}}_T =\{T\}$ for simplicity.
\end{definition*}

Our main questions then reduce to finding unbiased integrators with respect to different $\mathcal{H}$ and ${\mathfrak{T}}_T$. Since each coordinate of $\SSS$ describes the rough price process of one asset, we discuss only $1-$dimensional unbiased rough integrators without loss of generality. Our approach is rather fundamental and covers $\alpha-$H\"older continuous rough paths for $\alpha>0$ arbitrarily small. We first give a combinatorial classification of all $1-$dimensional rough paths via the complete Bell polynomials, and give a one-to-one correspondence between rough paths and the tuple of the underlying path with renormalisation functions in Proposition \ref{Rough Paths via Bell Polynomials}, with which we develop an It\^o type change of variable formula in Theorem \ref{Ito type formula} for every $1-$dimensional rough path.

We first consider rough noises, defined in Definition \ref{def:rough noise}, as the integrators to model the market noise in the price process without loss of generality. Roughly speaking, rough noises are the rough path lifts of centred stochastic processes with deterministic renormalisation functions, as in Proposition \ref{Rough Paths via Bell Polynomials}. Otherwise, we consider an adapted time change of the random rough paths using the adapted and monotone renormalisation terms (cf. Subsection \ref{subsec:Random renorm}). Then, for a fixed rough noise $\XXX$, define
\begin{equation}
        \mathcal{H}_{\XXX}^{\text{Pol}}:= \Big\{ \Big(P(X_t), DP(X_t),..., D^{(k-1)}P(X_t)\Big):  \; \forall P \;\text{polynomial}\Big\}.
    \end{equation}
as the set of polynomials-induced controlled path and ${\mathfrak{T}}_T:=\{T\}$. We then have the classification  of $\mathcal{H}^{\mathrm{Pol}}-$unbiased rough integrator in Theorem \ref{thm:Pol-unbiased rough integrators}:
\begin{theorem*}[$\mathcal{H}^{\text{Pol}}-$unbiased rough integrator]\label{thm:Pol-unbiased rough integrators in Intro}
    Let $\XXX$ be a rough noise, renormalised by bounded variation paths $G^2,...,G^k$ for some $k\geq 2$. Then, $\XXX$ is an $\mathcal{H}^{\mathrm{Pol}}-$unbiased rough integrator if and only if it is a Gaussian-Hermite rough path as in Definition \ref{Unbiased Rough Paths}.
\end{theorem*}

Roughly speaking, a Gaussian-Hermite rough path is such that the underlying process must have Gaussian marginals, and different levels of the rough path lift must be given by Hermite polynomials applied to the path increments and the variance increments. Now we enlarge the set of controlled paths to include minimal path-history-dependence (a single past state):
\begin{equation}
    \mathcal{H}^{\mathrm{pSig}}_{\XXX}:=\{\prescript{n}{s}{Y}: \; \forall s\in [0,T] \text{ and }\forall n\in \mathbb{N}\},
\end{equation}
where $\prescript{n}{s}{Y}:=(\prescript{n}{s}{Y}^{(1)},...,\prescript{n}{s}{Y}^{(k)})$ is the piecewise controlled path induced by the signature component path $\prescript{n}{s}{Y}^{(1)}_t: = \XX^n_{s,t}\cdot \mathbbm{1}_{t\geq s}$ as in Proposition \ref{prop:Sig as CP}. By allowing an early exit time by ${\mathfrak{T}}_T:=[0,T]$, we must further tighten the choice of unbiased rough integrator to the Hermite rough path lift (cf. Definition \ref{Unbiased Rough Paths}) of a Chen-Hermite almost Brownian motion (cf. Definition \ref{def:CHABM}). More concretely, let
\begin{equation}
    \mathcal{H}:=\text{Span}( \mathcal{H}^{\text{Pol}} \cup \mathcal{H}^{\text{pSig}}) \quad\quad \text{and} \quad\quad {\mathfrak{T}}_T:=[0,T],
\end{equation}
Theorem \ref{Thm:Main Theorem} states:
\begin{theorem*}[$\mathcal{H} \times {\mathfrak{T}}_T$-unbiased rough integrator]\label{Thm:Main Theorem in Intro}
    Let $\XXX$ be a rough noise, renormalised by bounded variation paths $G^2,...,G^k$. Then, $\XXX$ is an $\mathcal{H} \times {\mathfrak{T}}_T$-unbiased rough integrator if and only if it is the Hermite rough path lift of a deterministic time change of a Chen-Hermite almost Brownian motion.
\end{theorem*}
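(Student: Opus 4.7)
The plan is to prove both directions, leveraging the already-established classification of $\mathcal{H}^{\mathrm{Pol}}$-unbiased rough integrators.

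For necessity, assume $\XXX$ is $\mathcal{H}\times\mathfrak{T}_T$-unbiased. Since $\mathcal{H}^{\mathrm{Pol}}\subseteq\mathcal{H}$ and $\{T\}\subseteq\mathfrak{T}_T$, Theorem \ref{thm:Pol-unbiased rough integrators in Intro} immediately forces $\XXX$ to be a Gaussian-Hermite rough path: the underlying process $X$ has centred Gaussian marginals with a (deterministic, by the rough-noise assumption) variance function $\sigma^2(t)$, and higher levels of $\XXX$ are obtained by applying Hermite polynomials to the pair $(X_{s,t},\sigma^2(t)-\sigma^2(s))$. This pins down the pathwise algebraic structure of the lift and the one-dimensional distributions, but not yet the joint law in time.

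Next, I would exploit $\mathfrak{T}_T=[0,T]$. For every $Y\in\mathcal{H}^{\mathrm{Pol}}$, the family of identities $\mathbb{E}\bigl[\int_0^\tau Y_t \, \mathrm{d}\XXX_t\bigr]=0$ ranging over all stopping times $\tau\leq T$ is equivalent to saying the rough-integral process is a martingale. By the It\^o-type formula in Theorem \ref{Ito type formula}, polynomial integrands produce rough integrals of the form $P(X_t)-P(X_0)-\tfrac12\int_0^t P''(X_s)\,\mathrm{d}\sigma^2(s)$; requiring this to be a martingale for every polynomial $P$, combined with the already-known Gaussian marginals, forces $X$ to be a deterministic time change of a Brownian motion. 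Adding the signature-type integrands $\mathcal{H}^{\mathrm{pSig}}$, unbiasedness applied to $\prescript{n}{s}{Y}$ for all admissible $(s,n,\tau)$ produces martingale centering conditions on the iterated rough integrals $\XX^n_{s,\cdot}$ level by level. Matching these conditions to the Hermite identities that govern iterated integrals of Brownian motion yields exactly the Chen-Hermite relation of Definition \ref{def:CHABM}, and together with the deterministic time change from the previous step and the Hermite lift from the first, this completes the forward implication.

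For sufficiency, assume $\XXX$ has the stated form. The same It\^o-type formula turns polynomial integrands into true martingales by classical Brownian calculus under the time change, while the Chen-Hermite property ensures that signature-type integrands also yield martingales with respect to the filtration. Optional sampling at any $\tau\in\mathfrak{T}_T$, combined with linearity of the gain functional, extends unbiasedness to all of $\mathrm{Span}(\mathcal{H}^{\mathrm{Pol}}\cup\mathcal{H}^{\mathrm{pSig}})$.

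The main obstacle will be extracting from the signature-integrand unbiasedness enough independent martingale identities, level by level, to enforce the full Chen-Hermite relations rather than merely first-moment coincidence. The subtle point is that $\prescript{n}{s}{Y}$ carries its support-start at $s$ into the rough integral and thus encodes conditional information at the signature level, which must be disentangled using Chen's relation and the density of test stopping times in $[0,T]$; this is precisely where the joint law (and not merely marginals or brackets) of the rough lift gets fully pinned down.
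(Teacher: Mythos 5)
Your overall architecture is right (first use $\mathcal{H}^{\mathrm{Pol}}$ at $\tau=T$ to pin down the Gaussian-Hermite lift, then squeeze the increment structure out of $\mathcal{H}^{\mathrm{pSig}}$ and the exit times), but the middle step contains a genuine error that would, if correct, prove more than the theorem asserts.

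First, $\mathfrak{T}_T=[0,T]$ denotes \emph{deterministic} exit times, not all stopping times. Unbiasedness over $[0,T]$ therefore only says $\mathbb{E}\!\left[\int_0^t Y_u\,\mathrm{d}\XXX_u\right]=0$ for every fixed $t$; it does not upgrade the rough integral process to a martingale. Your appeal to the martingale problem, and the consequent claim that ``requiring this to be a martingale for every polynomial $P$, combined with Gaussian marginals, forces $X$ to be a deterministic time change of a Brownian motion,'' is therefore not available in this setting. Even if it were, it cannot be true: Markovian (state-only) integrands cannot detect increment correlations. Concretely, $X_t=\sigma(t)Z$ with $Z\sim\mathcal N(0,1)$ and $\sigma$ increasing has Gaussian marginals, admits the Hermite lift, and satisfies $\mathcal{H}^{\mathrm{Pol}}\times[0,T]$-unbiasedness, yet it is a degenerate process, not a time-changed Brownian motion (it does fail $\mathcal{H}^{\mathrm{pSig}}$-unbiasedness, which is exactly why the signature integrands are needed). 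Relatedly, your end conclusion is too strong: the theorem concludes a time-changed \emph{Chen-Hermite almost} Brownian motion, and the gap to a genuine Brownian motion is precisely Conjecture \ref{conjecture}, left open in the paper.

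Second, the proposal misses how the Gaussian \emph{increment} law is actually obtained: the paper derives $\mathbb{E}[\XX^n_{s,t}]=0$ for all $n\ge1$ from $\mathcal{H}^{\mathrm{pSig}}\times[0,T]$-unbiasedness (after extending the Hermite representation to all levels by a rough-path tautology), feeds these into the exponential generating identity $\exp(X_{s,t}z+G_{s,t}z^2)=\sum_n \XX^n_{s,t} z^n/n!$, and reads off the moment generating function $M_{s,t}(z)=\exp(-G_{s,t}z^2)$; the Chen-Hermite balancing then falls out by taking expectations in Chen's relation at level $n$ using the just-established $\mathbb{E}[\XX^n_{\cdot,\cdot}]=0$ at the three relevant pairs $(s,t),(s,u),(u,t)$. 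Your sketch of ``martingale centering conditions level by level'' gestures at this but does not supply the generating-function/Chen argument, which is the crux. Finally, the sufficiency direction cannot invoke ``classical Brownian calculus under the time change'' since a Chen-Hermite almost Brownian motion is not known to be Brownian; the paper's argument relies only on $\mathbb{E}\!\left[\int_0^t\prescript{n}{s}{Y}_u\,\mathrm{d}\XXX_u\right]=\mathbb{E}[H_{n+1}(X_{s,t},G_{s,t})]$ and the orthogonality of Hermite polynomials against Gaussian measure, together with Theorem~\ref{thm:Unbiasedness for polynomials} for the $\mathcal{H}^{\mathrm{Pol}}$ part.
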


A Chen-Hermite almost Brownian motion (cf. Definition \ref{def:CHABM}) is a class of centered Gaussian marginal processes that have the same covariance structure as the standard Brownian motion, and we conjecture that they are indeed the same (cf. Conjecture \ref{conjecture}). However, the sharper collapse would indeed happen if we further enlarge the set of admissible continuous trading portfolios to encode more path-dependence. We define:
\begin{equation}
    \mathcal{H}^{\mathrm{AdpSig}}_{\XXX}:=\{\eta \cdot \prescript{n}{s}{Y}: \; \forall s\in [0,T], \; \forall \eta\in L^\infty(\mathcal{F}_s) \text{ and }\forall n\in \mathbb{N}\},
\end{equation}
and call such portfolios adaptedly scaled piecewise signature-induced portfolios. Consequently, we define
\begin{equation}
\mathcal{H}^\# :=\operatorname{Span}\!\big(\mathcal{H}^{\mathrm{Pol}}\cup \mathcal{H}^{\mathrm{AdpSig}}\big).
\end{equation}
We now state the main theorem of this paper in Theorem \ref{Thm:Main Sharp Theorem}:
\begin{theorem*}[$\mathcal{H}^\# \times {\mathfrak{T}}_T$-unbiased rough integrator]\label{Thm:Main Sahrp Theorem in Intro}
    Let $\XXX$ be a rough noise, renormalised by bounded variation paths $G^2,...,G^k$. Then, $\XXX$ is an $\mathcal{H}^\# \times {\mathfrak{T}}_T$-unbiased rough integrator if and only if it is the It\^o rough path lift of a deterministic time change of a standard Brownian motion.
\end{theorem*}

The implications of our theory for financial modelling with rough paths and rough integrals are straightforward (cf. Subsections \ref{subsec:RKYTheorem} and \ref{subsec:RDE models}). For rough-noise-induced Bachelier-type and RDE models, \((\mathrm{NCFL})\) within Markovian-type strategies forces Gaussian-Hermite rough paths under an equivalent change of measure. Passing to signature-type and then adaptedly scaled signature-type portfolios progressively narrows the admissible noises, ultimately forcing a time-changed Brownian motion together with its Itô rough path lift, up to an equivalent change of measure. Thus, frictionless viability collapses to the classical semimartingale paradigm once strategies are sufficiently rich. Notably, simple strategies do not occur in this mechanism.

\paragraph{Connections to literature}
Our work builds upon several distinct, yet interconnected, streams of literature. The challenge of building no-arbitrage markets beyond the semimartingale paradigm has been explored from various approaches, including market frictions and alternative integral constructions \cite{DHP00}, \cite{DK00}, \cite{EV03}, \cite{Che03}, \cite{BH05}, \cite{Guasoni}, \cite{BSV08}, \cite{QX18}, \cite{DJ23}. Rough–path ideas have increasingly entered finance and econometrics; representative applications include \cite{ABBC21}, \cite{ACLP23}, \cite{ALP24}, \cite{FT24}, \cite{BFGJ24}, \cite{BPS25}, \cite{BBFP25}, \cite{AI25}. In particular, the (RIE) property, which clarifies when a rough integral reduces to a classical limit of Riemann-Stieltjes sum, was discussed in \cite{PP16} and \cite{ALP24}. Non-exhausting applications of path signatures in financial modelling can be found in \cite{ASS20}, \cite{LNP20}, \cite{CGS23}, \cite{DT23}, \cite{BHRS23}, \cite{AGH24}, \cite{CM24}, \cite{CGMS25} and \cite{AG25}. On a technical level, our classification of one-dimensional rough paths is closely related to prior work on the structure of geometric and non-geometric rough paths, often described in the language of branched rough paths \cite{Kel12}, \cite{HK15}, \cite{BC19}.

\paragraph{Organisation.} The rest of this paper is organised as follows: In Section \ref{sec:Preliminaries}, we review the essentials of rough path theory and, in particular, rough integral against $\alpha-$H\"older rough paths with $\alpha>0$ arbitrarily small. In Section \ref{sec:Classification}, we study $1-$dimensional rough paths via a combinatorial classification, derive an It\^o type change of variable formula for all $1-$dimensional rough paths and construct the Hermite-Gaussian rough path lift for a $1-$dimensional stochastic process with Gaussian marginals. In Section \ref{sec:Unbiasedness}, we classify unbiased rough integrators for different sets of controlled paths. In Section \ref{sec:market}, we define formally rough market models, derive the ``Rough Kreps-Yan" Theorem, and derive the (NCFL) consequences of the classifications of unbiased rough integrators for RDE-driven models. Finally, we give an explicit arbitrage in geometric rough market models in Section \ref{sec:Arbitrage}. Conclusions and some outlook on future work are made in Section \ref{sec:Conclusion}. Appendix \ref{app:Proofs} contains some proofs, while Appendix \ref{app:Consistecy} covers some consistency results between controlled paths and rough paths to understand the gain process of an RDE model.


\section{Preliminaries on rough path theory}\label{sec:Preliminaries}

In this section, we give a review of the essentials of rough path theory. Subsection \ref{subsec:notation} contains the notations that we will use throughout this paper. Subsection \ref{subsec:RP} reviews rough paths, controlled paths and rough integrals. Subsection \ref{subsec:Examples of CP} lists examples of controlled paths. Finally, we explain the meaning of the solution to a rough differential equation (RDE) driven by an $\alpha-$H\"older rough path for $\alpha\in (\frac{1}{3}, \frac{1}{2}]$ in Subsection \ref{subsec:RDE}.
\subsection{Basic notations}\label{subsec:notation}
Let $(\mathbb{R}^d,|\cdot|)$ be standard Euclidean space and let $A_1 \otimes \cdots \otimes A_k$ denote the tensor product of vectors $A_1,..., A_k \in \mathbb{R}^d$, that is, the $d$-dimensional hypermatrix of order $k$ with $(i_1,..., i_k)$-component given by $[A_1 \otimes \cdots \otimes A_k]^{i_1... i_k}=A_1^{i_1}\cdot ... \cdot A_k^{i_k}$ for $1 \leq i_1,..., i_k \leq d$, where $A^i_j$ denotes the $i-$th component in the vector $A_j\in \mathbb{R}^d$ for $i=1,...,d$ and $j=1,...,k$. 

The tensor algebra $T(\mathbb{R}^d)$ and the truncated up-to-level-$k$ tensor algebra $T^{(k)}(\mathbb{R}^d)$ are defined as
$$
T(\mathbb{R}^d):= \bigoplus_{i=0}^\infty (\mathbb{R}^d)^{\otimes^{i}}\quad\quad \text{and}\quad\quad T^{(k)}(\mathbb{R}^d):= \bigoplus_{i=0}^k(\mathbb{R}^d)^{\otimes^{i}}.
$$
We equip $T^{(k)}(\mathbb{R}^d)$ with the Euclidean norm, but note that all norms on this space are equivalent since it is finite-dimensional.

The space of continuous paths $X:[0, T] \rightarrow \mathbb{R}^d$ is given by $C([0, T] , \mathbb{R}^d)$, and $\|X\|_{\infty,[0, T]}$ denotes the supremum norm of $X$ over the interval $[0, T]$. We use the notation $TV_{[s,t]}(X)$ for the total variation of the path $X$ on the subinterval $[s,t]$, and drop the interval in notation if it is clear in context. For the increment of a path $X:[0, T] \rightarrow \mathbb{R}^d$, we use the standard shorthand notation
$$
X_{s, t}:=X_t-X_s, \quad \text { for } \quad(s, t) \in \Delta_{[0, T]}:=\left\{(u, v) \in[0, T]^2: u \leq v\right\} .
$$

For any partition $\mathcal{P}=\left\{0=t_0<t_1<\cdots<t_N=T\right\}$ of an interval [$0, T$], we denote the mesh size of $\mathcal{P}$ by $|\mathcal{P}|:=\max \left\{\left|t_{k+1}-t_k\right|: k=0,1, \ldots, N-1\right\}$.

Let $W$ be an Euclidean space, we define for a two-parameter function $\XX: \Delta_{[0,T]}\rightarrow W$ and $\gamma>0$ its $\gamma$-H\"older coefficient as
$$
||\XX||_{\gamma}:=\sup_{(s,t)\in \Delta_{[0,T]}}\frac{|\XX_{s,t}|}{|t-s|^\gamma},
$$
and say it is $\gamma$-H\"older continuous if $||\XX||_\gamma <\infty$. Note that if $\XX$ represents the increments of a non-constant path, it cannot be $\gamma$-H\"older continuous for any $\gamma>1$. However, general two-parameter functions can have H\"older exponent strictly larger than 1. (E.g., $\XX_{s,t}:= (t-s)^2$). For $\alpha\in (0,1]$, we say a continuous path $X \in C([0,T], \mathbb{R}^d)$ is $\alpha$-H\"older continuous if its increments function is $\alpha$-H\"older continuous. We say $X$ has H\"older exponent $\alpha$ if it is $\alpha-$H\"older continuous but not $\beta-$H\"older continuous for any $\beta>\alpha$; and similarly, we say $X$ has H\"older exponent $\alpha^-$ if it is $\beta-$H\"older continuous for any $\beta<\alpha$ but not $\alpha-$H\"older continuous.

Let $(E,\|\cdot\|)$ be a normed space and let $f, g: E \rightarrow \mathbb{R}$ be two functions. We shall write $f \lesssim g$ or $f \leq C g$ to mean that there exists a constant $C>0$ such that $f(x) \leq C g(x)$ for all $x \in E$. Note that the value of such a constant may change from line to line, and that the constants may depend on the normed space, for example, through its dimension or regularity parameters.

Finally, given two vector spaces $U, V$, we write $\mathcal{L}(U , V)$ for the space of linear maps from $U$ to $V$. For $x\in \mathbb{R}$, we write $\lfloor x\rfloor: = \sup \{n\leq x: n\in \mathbb{Z} \}$.


\subsection{Rough paths, controlled paths and rough integral}\label{subsec:RP}
We now review $\alpha$-H\"older rough paths for $\alpha\in (0,1]$, (piecewise) controlled paths and the corresponding rough integrals. To the authors' best knowledge, the proofs of the results in this subsection in standard textbooks \cite{FV10} and \cite{FH14}, only cover the case when $\alpha\in (\frac{1}{3}, \frac{1}{2}]$. However, they can be canonically generalised to our settings, and we develop them in Appendix \ref{app:Proofs} for the convenience of the readers, only if necessary.

\begin{definition}[Rough paths]\label{Rough Paths}
     Let $\alpha\in (0, 1]$, we fix $k:=\lfloor \frac{1}{\alpha}\rfloor$. We say that $\mathbf{X}=$ $\left(1, \mathbb{X}^{1}, \ldots, \mathbb{X}^{k}\right): \Delta_T \rightarrow T^{(k)}(\mathbb{R}^d)$ is an $\alpha$-H\"older rough path in $\mathbb{R}^d$ if Chen's relation (\cite{Che57}) holds, i.e., for all $0\leq s\leq u\leq t\leq  T$ and all $i=0,...,k$ we have:
\begin{equation}\label{Chen's relation}
        \XX^{i}_{s,t}=\sum_{j=0}^{i}\XX^{j}_{s,u}\otimes \XX^{i-j}_{u,t}, 
\end{equation}
where we use the convention $\XX^0 = 1$ and for all $(s,t)\in \Delta_{[0,T]}$ and all $i=1,\cdots , k$ we have
\begin{equation}\label{Holder regularity}
\left\|\XX_{s, t}^{i}\right\| \lesssim(t-s)^{i\alpha },
\end{equation}
where $||\cdot||$ denotes the corresponding Euclidean norm.
\end{definition}

The object $\XXX$ is related to a path $X$ in the sense that $\XX^{1}$ represents exactly the increment of a path since when $i=1$, \eqref{Chen's relation} reads exactly $\XX^{1}$ is additive. In the following, we shall say that $\XXX$ is a rough path lift of a path $X$ if $\XX^{1}_{s,t} =X_{s,t}$ for all $0<s,t<T$.

For those who are familiar with rough paths in continuous semimartingales settings (e.g., Brownian motions, It\^o diffusions), note that if we set $\alpha\in (\frac{1}{3}, \frac{1}{2}]$, we recover the commonly-used definition for continuous rough paths in literature, where Chen's relation reads $\XX^{2}_{s,t} = \XX^{2}_{s,u}+ \XX^{2}_{u,t} + X_{s,u}\otimes X_{u,t}$ for all $0<s\leq u\leq t<T$. This is already sufficient for the analysis of most continuous semimartingales, since Brownian motions and It\^o diffusions with regular drift and volatility coefficients have almost surely H\"older exponent $\frac{1}{2}^-$. Moreover, they can be almost surely lifted to rough paths via It\^o integrals. (See e.g., \cite{FV10}). However, in this paper, we will develop our theory for arbitrary $\alpha \in (0,1]$, since the goal of this paper is to detect whether rough path theory can possibly go beyond It\^o calculus for the sake of financial modelling.

\begin{remark}
    One shall have the intuition that the terms $\XX^{i}_{s,t}$, also called postulated signatures of the path $X$, are designed to represent some postulated values for the (a priori ill-posed) iterated integrals 
    \begin{equation}
        ``\int_s^t \int_s^{u_1} \cdots\int_s^{u_i}\mathrm{d}X_{u_{i-1}}\cdots\mathrm{d}X_{u_2} \mathrm{d}X_{u_1}"  \text{ for }i=1,...k
    \end{equation} up to the level $k=\lfloor\frac{1}{\alpha}\rfloor$. Indeed, one can check that if $X$ is of bounded variation (or has H\"older exponent strictly larger than $\frac{1}{2}$, resp.), then the iterated Riemann-Stieltjes integrals (H\"older integrals, resp.) yield a well-defined $\alpha$-H\"older rough path for any $\alpha\in(0,1]$. 
\end{remark}

While \eqref{Chen's relation} does capture the most basic (additivity) property that one expects any decent theory of integration to respect, it does not imply any form of integration by parts/chain rule. Indeed, one can also postulate such a condition on the underlying rough path as below.

For notational convenience, we think of components of some fixed rough path increment $\mathbf{X}_{s, t} \in T^{(k)}(\mathbb{R}^d)$ as being indexed by words $w$ of length at most $k$ with letters in the alphabet $\{1, \ldots, d\}$, i.e., $w=w_1\cdots w_n$ with $n\leq k$ and $w_1,...,w_k\in \{1,...,d\}$. Given a word $w=w_1 \cdots w_n$, the corresponding component $\mathbf{X}^w$, which we also write as $\langle\mathbf{X}, w\rangle$, is then interpreted as the postulated $n$-fold integral
\begin{equation}
    \left\langle\mathbf{X}_{s, t}, w\right\rangle:= \XX^{n; w_1...w_n }_{s,t}
\end{equation}
where $\XX^{n}_{s,t}\in (\mathbb{R}^d)^{\otimes n}$ is viewed as a hypermatrix and $\XX^{n; w_1...w_n }_{s,t}$ denotes the $(w_1,...,w_n)-$component of the hypermatrix $\XX^{n}_{s,t}$. We also use the convention 
\begin{equation}
    \langle \XXX_{s,t}, \emptyset \rangle:=\XX_{s,t}^0=1.
\end{equation}

In order to describe the constraints imposed on these iterated integrals by the chain rule, we define the shuffle product $\shuffle$ between two words as the formal sum over all possible ways of interleaving them. For example, one has

\begin{equation}
    a \shuffle x=a x+x a, \quad a b \shuffle x y=a b x y+a x b y+x a b y+a x y b+x a y b+x y a b,
\end{equation}
with the empty word acting as the neutral element. With this notation at hand, we define:

\begin{definition}[Geometric rough paths]\label{Geometric Rough Paths}
    Let $\XXX$ be as in Definition \ref{Rough Paths}. $\XXX$ is called geometric if 
    \begin{equation}\label{Shuffle Product}
        \left\langle\mathbf{X}_{s, t}, w\right\rangle\cdot \left\langle\mathbf{X}_{s, t}, v\right\rangle = \left\langle\mathbf{X}_{s, t}, w\shuffle v\right\rangle.
    \end{equation}
    for all $0<s,t<T$ and all words $w,v$ such that the sum of their lengths is no larger than $k$.
\end{definition}

\begin{remark}\label{rem:GRP}
    It was already remarked in \cite{Ree58} that chain rules would imply \eqref{Shuffle Product} for iterated Riemann-Stieltjes integrals of paths of bounded variation. See also \cite{FH14}, Section 2.2 for a simpler version when $k=2$, where the \eqref{Shuffle Product} becomes essentially $\int_s^t X^i_{u}\mathrm{d} X^j_u + \int_s^t X^j_{u}\mathrm{d} X^i_u = X^i_t\cdot X^j_t - X^i_s\cdot X^j_s$ for every $i,j=1,...,d$. This illustrates the integration by parts/chain rule nature of the constraints \eqref{Shuffle Product}.
\end{remark}

\begin{remark}
    We mention that what we call geometric rough paths is more often referred to as weakly geometric rough paths in rough path literature. There is only a very minor difference between these two notions, but one barely distinguishes them in practice. See \cite{FH14} Section 2.2 for a discussion on this topic. For brevity, we drop the adjective ``weakly” and simply write ``geometric”.
\end{remark}

Let us review some classical rough paths before proceeding further. 
\begin{example}[Classical examples]
Assume $X\in C^{\alpha}([0, T], \mathbb{R})$ is a 1-dimensional $\alpha-$H\"older continuous path. One can check that $\XX^{i}_{s,t}:= \frac{1}{i!}\cdot X_{s,t}^i$ yields a well-defined and unique geometric rough path lift of $X$. As we shall classify in Section \ref{sec:Classification}, every 1-dimensional non-geometric rough path (cf. Definition \ref{Rough Paths}) can be understood in the form of the geometric lift with some higher order perturbation functions.

Moving to the $d-$dimensional case for $d>1$, things get trickier since generally speaking, one does not have natural choices for the values to postulate for the cross integral, e.g., $\XX^{2;31}$, $\XX^{4; 1212}$, etc. However, we regain the possibility of random rough path lifts once we enter a stochastic setting. For semimartingales, It\^o (or Stratonovich resp.) integrals lift the sample paths to well-defined non-geometric (or geometric resp.) rough paths almost surely. Gaussian processes with sufficiently regular covariance functions admit a canonical geometric rough paths lift (cf. \cite{FH14} Chapter 10), including fractional Brownian motions (fBm) with Hurst parameter $H>\frac{1}{4}$ (cf. \cite{CQ02}), while geometric rough path lift can be constructed in non-canonical ways for fBm with $H\leq\frac{1}{4}$ (cf. \cite{Unt10}, \cite{NT11}). Finally, continuous Markov processes with regular generators can be lifted to rough paths via a construction involving Dirichlet forms (cf. \cite{FV10}, Chapter 16).

Nevertheless, we make the crucial remark that given any path or stochastic process, the rough path lift is never unique as long as there exists one. We shall explain this and classify the 1-dimensional case in Section \ref{sec:Classification} and \ref{sec:Unbiasedness}. One of our objectives in this paper is to study which is the (or whether there exists a) proper rough path lift given a price process (possibly beyond semimartingales), for the sake of certain no-arbitrage conditions.
\end{example}

    The spirit of rough path theory is that if one can fake those (a priori ill-posed) iterated integrals via postulated values, then one can practically use them to develop pathwise analysis that iterated integrals, if well-defined, can develop. This includes defining integrals against irregular paths, which cannot be done via classical pathwise theory (e.g. Riemann-Stiltjes integral, Young's integral). Now we may review the notion of rough integrals.

Given a rough path $\XXX$, one can define the integral against it for certain classes of integrands, i.e., the controlled paths. Roughly speaking, the controlled paths are paths whose future infinitesimal increments locally depend on the reference path $X$ via a Taylor-like expansion with $X$ and certain derivative-like objects.

\begin{definition}[Controlled paths]\label{Controlled Paths}
    Let $\alpha \in (0,1]$, $k:= \lfloor\frac{1}{\alpha}\rfloor$ and $\XXX$ an $\alpha-$H\"older rough path in $\mathbb{R}^d$ as above. Let $W$ be another Euclidean space. We define a $W$-valued controlled path $Y:=(Y^{(1)}, Y^{(2)},..., Y^{(k)})$ as a ${\alpha}-$H\"older continuous path taking values in the space $\bigoplus_{i=0}^{k-1} \mathcal{L}((\mathbb{R}^{d})^{\otimes^i}, W)$ such that for all $s,t\in [0,T]:$
    \begin{equation}\label{eq:Controlled Path}
        |Y^{(i)}_{s,t} - \sum_{j=i+1}^k Y^{(j)}_s \XX^{j-i}_{s,t}|\lesssim |t-s|^{(k+1-i)\cdot \alpha} \quad\quad \forall i=1,...,k-1, 
    \end{equation}
    where we used the canonical isomorphism $\mathcal{L}((\mathbb{R}^d)^{\otimes^{j-1}}, W)\otimes (\mathbb{R}^d)^{\otimes^{(j-i)}}\cong \mathcal{L}((\mathbb{R}^d)^{\otimes^{i-1}}, W)$ via $f\otimes w \mapsto (v \mapsto f(v\otimes w))$ in writing $Y^{(j)}_s \XX^{j-i}_{s,t}$. We also say that $Y$ is controlled by $\XXX$. 
    
    We call $Y:=(Y^{(1)}, Y^{(2)},..., Y^{(k)})$ a piecewise controlled path, if there exists a finite partition $0=t_0<t_1<\cdots<t_n=T$ such that the restriction of $Y$ on each subinterval $[t_i, t_{i+1}]$ is controlled by the restriction of $\XXX$ on the same subinterval and that $Y$ is continuous.
\end{definition}

\begin{remark}
We make the point that in defining piecewise controlled paths, we are intentionally not allowing jumps. Otherwise, simple integrands will be allowed. Since the main goal of this paper is to study whether the framework of rough paths extends some kind of no-arbitrage condition out of the classical semimartingale setting, we do not wish to consider simple integrands in the rest of this paper, since they immediately drive us back to the semimartingale setting by the classical Kreps-Yan theorem. 
\end{remark}

We shall give some classical examples of controlled paths in Subsection \ref{subsec:Examples of CP} below. We are now ready to integrate a controlled path against the reference rough path $\XXX$ as a limit of compensated Riemann-Stieltjes sum.

\begin{proposition}[Rough integral]\label{Rough Integral}
    Let $\alpha\in (0,1]$. Assume $\XXX$ is an $\alpha$-H\"older rough path in $\mathbb{R}^d$ and $Y$ is an associated $\mathcal{L}(\mathbb{R}^d, V)$-valued controlled path as above for some Euclidean space $V$. Then
    \begin{equation}\label{Compensated Riemann Sum}
        \int_s^t Y_u \mathrm{d}\XXX_u := \lim_{|\mathcal{P}|\downarrow 0}\sum_{[t_n, t_{n+1}]\in \mathcal{P}}\sum_{i=1}^k Y^{(i)}_{s,t_n} \XX^{i}_{t_n, t_{n+1}}     
    \end{equation}
    is well-defined for any $0\leq s, t\leq T$, where we use the canonical isomorphism $\mathcal{L}((\mathbb{R}^d)^{\otimes^{i-1}}, \mathcal{L}(\mathbb{R}^d, V))$ $\cong \mathcal{L}((\mathbb{R}^d)^{\otimes^i}, V)$ in writing $Y^{(i)}_{t_n} \XX^{i}_{t_n, t_{n+1}}$. The $k-$tuple $(\int_0^\cdot Y_u\mathrm{d}\XXX_u, Y^{(1)},..., Y^{(k-1)})$ is again a $V$-valued controlled path.
    
    The above construction then extends to the rough integral of a piecewise controlled path $Y$ against the rough path $\XXX$ in a unique additive way. The $k-$tuple $(\int_0^\cdot Y_u\mathrm{d}\XXX_u, Y^{(1)},..., Y^{(k-1)})$ is again a $V$-valued piecewise controlled path with respect to the same partition.
\end{proposition}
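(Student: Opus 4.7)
The plan is to invoke the Sewing Lemma applied to the local compensator
\[
\Xi_{s,t} \;:=\; \sum_{i=1}^{k} Y^{(i)}_s\,\XX^{i}_{s,t},
\]
whose increment along a partition is exactly the compensated Riemann sum in \eqref{Compensated Riemann Sum}. Once the discrepancy $\delta\Xi_{s,u,t}:=\Xi_{s,t}-\Xi_{s,u}-\Xi_{u,t}$ is shown to satisfy $|\delta\Xi_{s,u,t}|\lesssim |t-s|^{1+\varepsilon}$ for some $\varepsilon>0$, the Sewing Lemma delivers a unique continuous path $I$ with $|I_t-I_s-\Xi_{s,t}|\lesssim|t-s|^{1+\varepsilon}$, whose increments are precisely the limit of the Riemann sums.

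For the key almost-additivity estimate, I would expand using Chen's relation \eqref{Chen's relation}: writing $\XX^{i}_{s,t}-\XX^{i}_{s,u}=\sum_{j=0}^{i-1}\XX^{j}_{s,u}\otimes\XX^{i-j}_{u,t}$, reindexing with $m=i-j$ and swapping the order of summation, all terms collapse into
\[
\delta\Xi_{s,u,t} \;=\; -\sum_{m=1}^{k}\Big(Y^{(m)}_{s,u}-\sum_{j=m+1}^{k} Y^{(j)}_s\,\XX^{j-m}_{s,u}\Big)\XX^{m}_{u,t},
\]
with the bracket for $m=k$ being simply $Y^{(k)}_{s,u}$. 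By \eqref{eq:Controlled Path} the bracket is $\lesssim|u-s|^{(k+1-m)\alpha}$ for $m<k$, while for $m=k$ the $\alpha$-H\"older regularity of $Y^{(k)}$ gives $\lesssim|u-s|^{\alpha}$; combined with $|\XX^{m}_{u,t}|\lesssim|t-u|^{m\alpha}$, each summand is $\lesssim|t-s|^{(k+1)\alpha}$. Since $k=\lfloor 1/\alpha\rfloor$ forces $(k+1)\alpha>1$, the Sewing Lemma applies with $\varepsilon=(k+1)\alpha-1>0$.

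To see that $\tilde Y:=\big(\int_0^{\cdot}Y\,\mathrm{d}\XXX,\,Y^{(1)},\ldots,Y^{(k-1)}\big)$ is itself controlled by $\XXX$, I verify the hierarchy in \eqref{eq:Controlled Path} level by level. At the top level $i=1$, the required bound $\big|\tilde Y^{(1)}_{s,t}-\sum_{m=1}^{k-1}Y^{(m)}_s\XX^{m}_{s,t}\big|\lesssim|t-s|^{k\alpha}$ follows by inserting $\Xi_{s,t}$, using the Sewing remainder $|\tilde Y^{(1)}_{s,t}-\Xi_{s,t}|\lesssim|t-s|^{(k+1)\alpha}$, and absorbing the dropped top term $|Y^{(k)}_s\XX^{k}_{s,t}|\lesssim|t-s|^{k\alpha}$. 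For $i\geq 2$, the inequality on $\tilde Y^{(i)}=Y^{(i-1)}$ is exactly the original \eqref{eq:Controlled Path} for $Y^{(i-1)}$ after removing the same level-$k$ tail, which contributes at the right order $|t-s|^{(k+1-i)\alpha}$. The $\alpha$-H\"older regularity of each component, including the new top $\int_0^\cdot Y\,\mathrm{d}\XXX$, is immediate from the Sewing bound on $\Xi$ and the hypothesis on $Y$.

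For the piecewise case, given a partition $0=t_0<\cdots<t_n=T$ along which $Y$ is piecewise controlled, I would apply the previous construction on each $[t_j,t_{j+1}]$ and concatenate. Continuity of the resulting gain process is inherited from the continuity of $Y$ (even though its Gubinelli components may jump at partition points), and additivity together with uniqueness propagate across subintervals by applying the Sewing Lemma on each piece. The main conceptual obstacle I expect is the bookkeeping in the almost-additivity identity for general $k$: for $k=2$ this collapses into the familiar two-term identity, but for arbitrary $k=\lfloor 1/\alpha\rfloor$ one must carefully track the telescoping between Chen's relation and the full controlled-path tower \eqref{eq:Controlled Path} to ensure that all sub-top levels cancel exactly, leaving only the controlled-path remainders at the top order.
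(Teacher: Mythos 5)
Your proposal reproduces the paper's argument in Appendix \ref{App:Rough Integral}: the same local compensator $\Xi_{s,t}=\sum_{i=1}^k Y^{(i)}_s\XX^i_{s,t}$, the same Chen-relation expansion leading to $\delta\Xi_{s,u,t}=-\sum_{m=1}^k\big(Y^{(m)}_{s,u}-\sum_{j=m+1}^k Y^{(j)}_s\XX^{j-m}_{s,u}\big)\XX^m_{u,t}$, the same invocation of the Sewing Lemma with $(k+1)\alpha>1$, and the same level-by-level verification of controlledness for the output tuple by inserting $\Xi$, using the sewing remainder, and absorbing the dropped level-$k$ tail. You are in fact slightly more careful than the paper at the edge case $m=k$, where you explicitly note that the empty bracket $Y^{(k)}_{s,u}$ is controlled by the $\alpha$-H\"older hypothesis rather than by \eqref{eq:Controlled Path} itself.
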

\begin{proof}
    The extension from integrals of controlled paths to integrals of piecewise controlled paths is obvious. The proof of the existence of the former relies on Gubinelli's seminal sewing lemma \cite{Gub04} and is essentially a generalisation of the proof in the standard case where $\alpha\in (\frac{1}{3}, \frac{1}{2}]$, see e.g. \cite{FH14} Theorem 4.10. For completeness, we present a proof in Appendix \ref{App:Rough Integral}
\end{proof}

\begin{remark}
    We note that a more widely used framework for defining (non-geometric) rough paths of lower regularity ($\alpha< \frac{1}{3}$) is the so-called branched rough paths (cf. \cite{Gub10}), where one needs to postulate values to rooted-tree-indexed integrals, such as
    $\int_s^t (X_{s,u}\int_s^u X_{s,v}dX_v)dX_u.$ Then one can use a tree-indexed expansion to define controlled paths and then a corresponding limit of rooted-tree-indexed compensated Riemann sum as the rough integral. An advantage of this approach is that applying a smooth enough function $F$ to the underlying path would always yield a controlled path, a property that the non-geometric rough paths in our settings do not automatically admit without any additional constraints, as will be illustrated in Example \ref{Markovian Integrand}.

    We choose not to adopt this framework for two primary reasons. First, our analysis is self-contained within the tensor algebra setting for all regularities $\alpha>0$ that we consider, and already covers the classical settings of It\^o/Stratonovich calculus for standard Brownian motion. Second, our core arguments in Sections \ref{sec:Classification} and \ref{sec:Unbiasedness} depend on a novel connection between the seminal Marcinkiewicz's theorem (\cite{Mar39}) and our novel combinatorial classification of one-dimensional rough paths with exponential polynomials, which leads to a particularly tractable Itô-type formula. Developing similar results in the context of branched rough paths will be of much heavier algebraic notations. Whether the same results in this paper hold for branched rough paths lies beyond the scope of this paper.
    
\end{remark}

\subsection{Examples of controlled paths}\label{subsec:Examples of CP}
Now, let us briefly review some classical classes of controlled paths, which will be later used to understand trading portfolios in rough market models. Throughout this subsection, we fix an $\alpha-$H\"older continuous rough path $\XXX$ and let $k:=\lfloor\frac{1}{\alpha} \rfloor.$
\begin{example}[Markovian-type controlled paths]
    Assume first $\XXX$ is geometric, or non-geometric but $\alpha>\frac{1}{3}$. It is well known (cf. \cite{FH14} Lemma 4.1 and Section 7.6) that $F(t, X_t)$ is a controlled path for $F\in C^{1,k+1}$. Moreover, suppose $(A_t)_{t\in[0, T]}$ is an additional Lipschitz-continuous information path in some Euclidean space $\mathbb{R}^{d_A}$, e.g., the running average $t \mapsto \int_0^t S_u^i \mathrm{~d} u$ for some (or all) $i=1, \ldots, d$. Similar arguments yield that $F(A_t, S_t)$ is a controlled path for $F\in C^{1,k+1}$. See \cite{ALP24} Section 3.2 for a proof for the case of $\alpha\in (\frac{1}{3}, \frac{1}{2}]$.
They are thus well-defined as controlled paths. We call them Markovian-type controlled paths, due to their nature that the current state depends only on the current information, but not on the path history.
For general non-geometric $\XXX$ with $\alpha<\frac{1}{3}$, some additional regularity requirements for the non-geometric renormalisation terms encoded in the rough path $\XXX$ are needed to ensure the well-definedness of Markovian-type controlled paths. The regularity requirements are rather loose. We delay a detailed discussion, including proof, until Section \ref{sec:Classification}. (Cf. Example \ref{Function of Geometric Rough Paths as Controlled Paths}, Example \ref{Markovian Integrand} and Remark \ref{rem:Function of Info and Path as CP}.)
\end{example}

Markovian-type controlled paths are restrictive to describe trading portfolios, in particular, if we want to allow price processes beyond semimartingales (say, with memories) in the rough path settings. We introduce two ways to include path-dependence in controlled paths.

\begin{example}[Extended signature components as controlled paths]\label{exp:Sig as CP}
    It is well-known (cf. \cite{LV07}) that every rough path uniquely extends to a signature, which is an object in $T^{(\infty)}(\mathbb{R}^{d+1})$ that plays the role of the set of iterated integrals of the paths at every level. Indeed, such constructions can be made via rough integrals, and as a consequence, every component in the extended signature is a well-defined controlled path. We make this rigorous via the following proposition.
\end{example}

\begin{proposition}[Signature as controlled paths]\label{prop:Sig as CP}
    Let $\XXX:=(1,X,...,\XX^k)$ be an $\alpha-$H\"older continuous rough path with $k:=\lfloor \frac{1}{\alpha}\rfloor$ and $w=w_1\cdots w_n$ be a word with letters in the alphabet $\{0,...,d\}$. If $n\leq k$, the following is a controlled path:
    \begin{equation}\label{eq:Sig as CP}
        \Big(\langle \XXX_{s,\cdot}, w \rangle, \langle \XXX, w^{-1} \rangle \otimes e_{w_n}, \langle \XXX, w^{-2} \rangle \otimes e_{w_{n-1}}\otimes e_{w_n},\cdots, \; \langle \XXX, \emptyset \rangle \otimes e_{w_1}\otimes\cdots \otimes e_{w_n},0,...,0\Big)
    \end{equation}
    for any $s\in[0,T]$, where $w^{-i}:=w_1\cdots w_{n-i}$ for $i=1,...,n$ and $\{e_0,...,e_d\}$ is the canonical basis of $\mathbb{R}^{d+1}\cong \mathcal{L}(\mathbb{R}^{d+1}, \mathbb{R})$. Inductively, for $n>k$, we may extend $\XXX$ via
    \begin{equation}\label{eq:Sig Portfolio int}
        \langle \XXX_{s,t}, w \rangle: = \int _s^t \Big(\langle \XXX_{s,u}, w^{-1} \rangle \otimes e_{w_n},\cdots,\langle \XXX_{s,u}, w^{-k} \rangle \otimes e_{w_{n-k+1}}\otimes\cdots \otimes e_{w_n}\Big)\mathrm{d} \XXX_u
    \end{equation}
    where the integrand is a controlled path for all words of finite length with letters in the alphabet $\{0,...,d\}$. We call them the signature components of the rough paths $\XXX$.
\end{proposition}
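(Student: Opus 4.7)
The approach is to separate the cases $n\le k$ and $n>k$: the first reduces to Chen's relation, while the second is an induction on word length combined with the rough-integral construction in Proposition~\ref{Rough Integral}.

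For $n\le k$, I would unpack Chen's relation~\eqref{Chen's relation} for the component indexed by the word $w=w_1\cdots w_n$, which gives the exact deconcatenation identity
\begin{equation*}
  \langle \XXX_{s,v}, w\rangle \;=\; \sum_{j=0}^{n}\langle\XXX_{s,u}, w_1\cdots w_j\rangle\cdot\langle\XXX_{u,v}, w_{j+1}\cdots w_n\rangle,
\end{equation*}
and hence, after subtracting the $j=n$ summand (which equals $\langle \XXX_{s,u}, w\rangle$),
\begin{equation*}
  \langle \XXX_{s,v}, w\rangle - \langle \XXX_{s,u}, w\rangle \;=\; \sum_{j=1}^{n}\langle\XXX_{s,u}, w^{-j}\rangle\cdot\langle\XXX_{u,v}, w_{n-j+1}\cdots w_n\rangle.
\end{equation*}
Through the natural tensor/linear-functional duality, each summand is precisely the pairing $Y^{(j+1)}_u\,\XX^{j}_{u,v}$ read off from the tuple in~\eqref{eq:Sig as CP}, so the controlled-path identity~\eqref{eq:Controlled Path} at level $1$ holds with zero remainder, which a fortiori satisfies the required H\"older bound. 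Running the same Chen-relation argument on each suffix $w^{-i}$ (still of length $\le k$) verifies the analogous exact expansions at higher levels $i\ge 2$, and the H\"older regularity of each entry is inherited from~\eqref{Holder regularity}.

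For $n>k$, I would argue by induction on $n$. Assume the claim for all words of length $\le n-1$, so that each scalar path $\langle\XXX_{s,\cdot}, w^{-j}\rangle$ appearing in~\eqref{eq:Sig Portfolio int} is already the first entry of a controlled path of the prescribed suffix-tensor form. Tensoring each such scalar path with the constant tensor $e_{w_{n-j+1}}\otimes\cdots\otimes e_{w_n}$ and assembling the $k$ components, I would check that the integrand of~\eqref{eq:Sig Portfolio int} is itself a controlled path in the sense of Definition~\ref{Controlled Paths}: the derivative at level $i$ of the $j$-th entry matches the $(i+j-1)$-th entry, which is exactly the index shift dictated by Chen's relation. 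Proposition~\ref{Rough Integral} then yields the rough integral as a well-defined path whose canonical controlled-path tuple has precisely the shape of~\eqref{eq:Sig as CP} for the longer word $w$, closing the induction.

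The conceptual content is not where the difficulty lies; rather, the main obstacle is the combinatorial bookkeeping in the inductive step. One must verify that the nested suffix-tensor structure of~\eqref{eq:Sig as CP} is \emph{preserved} under the rough-integral operation, so that the tuple produced by one step of integration is in exactly the right form to feed into the next inductive step. This amounts to aligning the index shift induced by tensoring with $e_{w_\ell}$ in the integrand with the index shift induced by the rough integral. Once that alignment is set up carefully, both the H\"older estimates and the extension of the signature follow automatically from~\eqref{Holder regularity} and Proposition~\ref{Rough Integral}.
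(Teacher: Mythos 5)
Your argument is essentially correct, but it takes a genuinely different route for $n \le k$. The paper proceeds by a \emph{single induction on $n$ that already begins at $n=1$}: it assumes the controlled-path tuple for $w^{-1}$ (length $n-1$), tensors with $e_{w_n}$, forms the rough integral, shows via Chen's relation and telescoping that the integral equals $\langle\XXX_{s,t},w\rangle$, and then invokes Proposition~\ref{Rough Integral} to conclude that the resulting tuple is again a controlled path. Your approach instead verifies the controlled-path bound \eqref{eq:Controlled Path} for $n\le k$ \emph{directly}, expanding Chen's relation at each level $i$ and matching terms against the suffix-tensor derivatives without any integration step; only for $n>k$ do you fall back on the same ``integrate a controlled integrand'' induction as the paper. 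Both routes are valid. What the paper's route buys is uniformity (one inductive engine for all $n$) and that the controlled-path bookkeeping is delegated entirely to Proposition~\ref{Rough Integral}. What your route buys is transparency for $n\le k$: you see directly that the rough path's own algebraic structure (Chen) is the controlled-path structure, and the statement is read off rather than recovered from an integral.

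One inaccuracy worth correcting: you state that the level-$1$ controlled-path identity holds ``with zero remainder.'' This is true for $n<k$, but for $n=k$ the sum $\sum_{j=i+1}^{k} Y^{(j)}_u\,\mathbb{X}^{j-i}_{u,v}$ in \eqref{eq:Controlled Path} is truncated one step short of the full Chen expansion, leaving a non-vanishing remainder $\langle\XXX_{u,v},w\rangle$ (and analogously $\langle\XXX_{u,v},w^{-(i-1)}\rangle$ at level $i$). The conclusion still holds because this leftover term has H\"older exponent $k\alpha$ (resp. $(k-i+1)\alpha$), which exactly matches the bound required by \eqref{eq:Controlled Path}, but the ``a fortiori'' framing is misleading — the H\"older estimate is doing real work in the boundary case. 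Beyond that, your induction for $n>k$ is structured correctly, though it is cleaner to observe — as the paper does — that the integrand of \eqref{eq:Sig Portfolio int} for $w$ is precisely the controlled-path tuple for $w^{-1}$ tensored on the right with $e_{w_n}$, so its controlledness follows from the inductive hypothesis plus the fact that composing with a fixed constant tensor preserves controlledness, rather than from re-checking each level-$i$ identity by hand.
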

\begin{proof}
     See Appendix \ref{App:Sig as CP}.
\end{proof}

\begin{example}[Path-dependent functionals as controlled paths]\label{exp:Functional as CP}
    In general contexts of financial modelling, another common way to treat path-dependent and adapted trading strategies is via the so-called non-anticipative path functionals. We refer readers to \cite{Dup19} and \cite{CF10} for details. The connection between non-anticipative path functionals and controlled paths is that for $\alpha\in (\frac{1}{3}, \frac{1}{2}]$, one can show that for a sufficiently regular non-anticipative path functional, applying it to any $\alpha-$H\"older continuous rough path yields a controlled path in a slightly weaker sense. One can still define a rough integral against it via the sewing lemma (cf. \ref{Sewing Lemma}). See \cite{Ana19} Section 4.2 and \cite{ALP24} Section 3.3 for more discussions on this topic.
\end{example}


\subsection{Rough differential equations (RDEs)}\label{subsec:RDE}
Yet another fruitful development in rough path theory is the so-called rough differential equation (RDE), which provides robust pathwise methodologies for describing differential dynamics driven by irregular signals. Notably, the theory of RDE recovers the It\^o/Stratonovich SDEs by setting the underlying rough path as the It\^o/Stratonovich of a Brownian motion.

We recall the definition of the solution to an RDE driven by a $\alpha-$H\"older continuous rough path for $\alpha\in (\frac{1}{3}, \frac{1}{2}]$. We refer to \cite{FH14} Chapter 8 for more details. For simplicity, we use the notation $(Y,Y')$ for controlled paths in this subsection, where $Y'$ is usually referred to as the Gubinelli derivative in literature.

\begin{definition}[Solution to RDE]\label{def:RDE}
    Let $\XXX:=(X, \XX)$ be an $\alpha-$H\"older continuous rough path and $f$ a $C^2-$function. A controlled path $(Y, Y')$ is a solution to the RDE
\begin{equation}\label{eq:RDE}
    \mathrm{d} Y_t=f\left( Y_t\right) \mathrm{d} \XXX _t, \quad Y_0=y_0
\end{equation}
if for all $t \in[0, T]$ we have that $Y$ satisfies the integral equality
\begin{equation}\label{1-18}
    Y_t=y_0+\int_0^t (f\left(Y\right)_s,  Df\left(Y_s\right)Y'_s )\mathrm{d} \XXX_s
\end{equation}
where it is straightforward to verify that $(f\left(Y\right)_s,  Df\left(Y_s\right)Y'_s )$ is a controlled path if so is $(Y,Y')$.
\end{definition}

We refer to \cite{Lyo98} and \cite{FH14} Chapter 8 for the well-posedness of RDEs. For our purpose, we assume that there always exists a unique solution to the RDE \eqref{eq:RDE} in the rest of this paper. As we shall point out in Subsection \ref{subsec:RDE models}, it will be essential to be able to integrate against the solutions to RDEs (as controlled paths) again to understand the gain process of RDE market models. This is solved via some consistency results between controlled paths and rough paths; we list them in Appendix \ref{app:Consistecy}. We mention that the same results therein cannot be repeated for $\alpha-$H\"older continuous non-geometric rough paths for $\alpha\leq \frac{1}{3}$. To restore similar results, one needs to introduce the notion of branched rough paths (cf. \cite{Gub10}).


\section{One-Dimensional Rough Paths}\label{sec:Classification}

In this section, we study the structure of one-dimensional rough paths. We first give a combinatorial classification of all 1-dimensional rough paths in Subsection \ref{subsec:classification}. Roughly speaking, every 1-dimensional rough path can be characterised by the underlying path and some renormalisation terms with a graded structure. With the renormalisation terms, we then derive an It\^o-type formula for every 1-dimensional rough path in Subsection \ref{subsec:Ito}, which will be crucial in Section \ref{sec:Unbiasedness}, for classifying unbiased rough integrators for different sets of controlled paths as integrands. Via the Hermite polynomials, we then construct the Gaussian-Hermite rough paths for $1-$dimensional processes with Gaussian marginals. The Gaussian-Hermite rough paths admit an initial unbiased rough integrator property for the set of sufficiently regular functions of the underlying Gaussian process, which mimics the classical Skorokhod integral in a purely pathwise manner. We work in the general setting of $\alpha\in (0,1].$

 
\subsection{Classification of one-dimensional rough paths}\label{subsec:classification}
It is well known that given a 1-dimensional $\alpha-$H\"older path $X$ on $[0, T]$, its only geometric rough path lift is given by polynomials of the path's increments
\begin{equation}
    \XX^i_{s,t} : = \frac{1}{i!}\cdot(X_t - X_s)^i,\quad\quad\forall i=1,...,\lfloor\frac{1}{\alpha}\rfloor,
\end{equation}
whereas all other 1-dimensional rough paths are non-geometric. For $\alpha\in (\frac{1}{3}, \frac{1}{2}]$, the non-geometric part is uniquely determined by the so-called rough bracket (cf. \cite{FH14} Exercise 2.11 and Definition 5.5), a $C^{2\alpha}$ function, which plays a similar renormalising role as the quadratic variation for semimartingales. We now extend such a type of fact for $\alpha>0$ arbitrarily small, employing the complete Bell polynomials (cf. \cite{Bel34}), which can be viewed as a generalisation of Hermite polynomials.

We first recall the definition and the explicit form of the ordinary complete Bell polynomials.
\begin{definition}[Complete Bell polynomials; \cite{Bel34}]\label{BellPoly}
    For any $k,n\in \mathbb{N}_0$, the $n-$th ordinary complete Bell polynomial of degree $k$, $P^{(k)}_n$, is defined as the abstract polynomial coefficients, of $k$ indeterminates, of the following Taylor expansion around $x=0$:

\begin{equation}
    \exp(\sum^k_{m=1}a_m\cdot x^m) =: \sum_{n=0}^\infty P^{(k)}_n(a_1,...,a_k)\cdot x^n,     
\end{equation}
in the sense that the above equation holds for every $a_1,...a_k\in \mathbb{R}.$
\end{definition}

One checks immediately by setting $k=2$, $a_1 = u$, an indeterminate, and $a_2 = -\frac{1}{2}$, the probabilist's Hermite polynomials $\mathrm{He}_n$ are recovered via:
\begin{equation}
    \mathrm{He}_n(u) = n!\cdot P^{(2)}_n(u, -\frac{1}{2}) \quad \forall n\in \mathbb{N}_0.
\end{equation}
Moreover, the first few ordinary complete Bell polynomials are given by 

\begin{equation}\label{eq:First Bell polynomials}
    \begin{aligned}
        & P^{(k)}_0 = 1, \quad \forall k\geq 0;\\
        & P^{(k)}_1 = a_1, \quad \forall k\geq 1;\\
        & P^{(k)}_2 = \frac{a_1^2}{2}+ a_2, \quad \forall k\geq 2;\\
        & P^{(k)}_3 = \frac{a_1^3}{6}+ a_1a_2+ a_3, \quad \forall k\geq 3;\\
        &\quad\quad\quad\quad...
    \end{aligned}
\end{equation}

More explicitly, the ordinary complete Bell polynomials have the following formula:

\begin{lemma}\label{lem:Bell}
    For all $k,n\in \mathbb{N}_0$, we have:
\begin{equation}\label{Bell Polynomial Formula}
    P^{(k)}_n (a_1,...a_k) = \sum_{\substack{p_1,..., p_k\geq0 \\ p_1+2p_2+\cdots+k\cdot p_k = n}}\prod_{m=1}^k\frac{a_m^{p_m}}{p_m !}.
\end{equation}
\end{lemma}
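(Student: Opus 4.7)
The plan is to exploit $\exp(u+v)=\exp(u)\exp(v)$ to factor the defining generating function into a product of single-variable exponentials, expand each factor by its own Taylor series, and then read off the coefficient of $x^n$ by uniqueness of Taylor coefficients.

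First, I would factor
\[ \exp\Big(\sum_{m=1}^k a_m x^m\Big) \;=\; \prod_{m=1}^k \exp(a_m x^m), \]
which is valid for every real $x$ and every $(a_1,\ldots,a_k)\in\mathbb{R}^k$, and expand each factor via the absolutely convergent series $\exp(a_m x^m) = \sum_{p_m\geq 0} \frac{a_m^{p_m}}{p_m!}\,x^{m p_m}$.

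Next, I would multiply the $k$ series. Because each series converges absolutely for every fixed $x$, the $k$-fold Cauchy product may be rearranged freely, yielding
\[ \prod_{m=1}^k \exp(a_m x^m) \;=\; \sum_{p_1,\ldots,p_k\geq 0} \Big(\prod_{m=1}^k \frac{a_m^{p_m}}{p_m!}\Big)\, x^{p_1+2p_2+\cdots+k p_k}. \]
Grouping by the total weighted degree $n:=p_1+2p_2+\cdots+k p_k$, which forces $p_m\leq n/m$ and hence makes the inner sum finite for each $n$, one reads off that the coefficient of $x^n$ is exactly the right-hand side of \eqref{Bell Polynomial Formula}.

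Finally, by Definition \ref{BellPoly} the coefficient of $x^n$ in the left-hand series equals $P^{(k)}_n(a_1,\ldots,a_k)$ for every choice of reals $a_1,\ldots,a_k$; uniqueness of the Taylor coefficients of an entire function of $x$ then identifies $P^{(k)}_n(a_1,\ldots,a_k)$ with the claimed finite sum, and since this holds as an identity in $(a_1,\ldots,a_k)\in\mathbb{R}^k$, it is an identity of polynomials. The computation is routine and poses no real obstacle; the only point worth noting is the justification of the rearrangement of the $k$-fold product, which follows from absolute convergence in $x$ (alternatively, one may work purely with formal power series in $x$ over the polynomial ring $\mathbb{R}[a_1,\ldots,a_k]$, in which the manipulations above are valid tautologically).
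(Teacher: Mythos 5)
Your proof is correct. The route differs from the paper's in a small but genuine way: the paper expands $\exp\bigl(\sum_m a_m x^m\bigr)=\sum_n \frac{1}{n!}\bigl(\sum_m a_m x^m\bigr)^n$ and then applies the \emph{multinomial theorem} to the inner power before regrouping by weighted degree, whereas you instead invoke the homomorphism property $\exp(u+v)=\exp(u)\exp(v)$ to factor the generating function as $\prod_m \exp(a_m x^m)$, expand each factor separately, and take the Cauchy product. The two arguments carry out the same bookkeeping with different starting identities; yours avoids the multinomial coefficient cancellation $n!/(p_1!\cdots p_k!)\cdot\frac{1}{n!}$, while the paper avoids having to justify the rearrangement of a $k$-fold product of series (which you handle correctly via absolute convergence, or equivalently by working formally). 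Both are equally elementary and equally valid; the remark that one may treat everything as a formal power series identity over $\mathbb{R}[a_1,\dots,a_k]$ is a clean way to dismiss convergence concerns entirely and is worth keeping.
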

\begin{proof}
    See Appendix \ref{App:Bell}.
\end{proof}

We are now able to detail the classification. Fix $\alpha\in (0,1]$.

\begin{proposition}\label{Rough Paths via Bell Polynomials}
    Fix $k: = \lfloor \frac{1}{\alpha}\rfloor$. Let $X\in C^{\alpha}([0, T], \mathbb{R})$ and $G^2, ..., G^k$ be functions such that $G^i \in C^{i\alpha}([0, T], \mathbb{R})$ $\forall i =2,...,k.$ (In particular, when $\alpha>\frac{1}{2},$ we have $k=1$ and no $G^i$ appears.) Then $\XXX : [0,T]^2 \rightarrow \mathbb{R}$ defined by:
    \begin{equation}\label{eq:prop4.1-1}
        \XXX_{s,t} := (\XX^{0}_{s,t}, \XX^{1}_{s,t},..., \XX^{k}_{s,t})
    \end{equation}
    where
    \begin{equation}\label{eq:prop4.1-2}
       \XX^{i}_{s,t} : = P^{(k)}_i(X_{s,t},\; G^2_{s,t},\;..., \;G^k_{s,t})\quad\quad \forall i=0,...,k
    \end{equation}
    is a well-defined $\alpha-$H\"older rough path. Conversely, for any $\alpha-$H\"older rough path 
    \begin{equation}\label{eq:prop4.1-3}
        \mathbf{Y}=(1, Y, \mathbb{Y}^2..., \mathbb{Y}^k),
    \end{equation}
    there exists functions $F^2,... F^k$ such that $F^i \in C^{i\alpha}([0, T], \mathbb{R})$ $\forall i =2,...,k$ and 
     \begin{equation}\label{eq:prop4.1-4}
       \mathbb{Y}^{i}_{s,t} : = P^{(k)}_i(Y_{s,t},\; F^2_{s,t},\;..., \;F^k_{s,t})\quad\quad \forall 0\leq s, t\leq T\quad \text{and}\quad\forall i=0,...,k.
    \end{equation}
    In the following, we shall call $F^2,...,F^k$ the renormalisation terms encoded in the rough path $\mathbf{Y}$.
\end{proposition}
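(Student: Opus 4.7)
The plan is to exploit the exponential generating function definition of the complete Bell polynomials, which converts Chen's multiplicative relation into the additivity of path increments through the formal logarithm. For any candidate rough path $\mathbf{Y}$, I will form the truncated formal power series $S_{s,t}(x) := \sum_{n=0}^{k} \mathbb{Y}^n_{s,t}\, x^n$ viewed in $\mathbb{R}[x]/(x^{k+1})$; Chen's relation \eqref{Chen's relation} is exactly the statement that $S_{s,t}(x) = S_{s,u}(x)\cdot S_{u,t}(x)$ in this truncated algebra, while Definition \ref{BellPoly} together with Lemma \ref{lem:Bell} identify the Bell-polynomial ansatz with the exponential of a polynomial. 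Both directions of the proposition then reduce to the $\exp$/$\log$ correspondence between group-like and primitive elements in this truncated algebra.

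For the forward direction, I set $a_1(s,t) := X_{s,t}$ and $a_m(s,t) := G^m_{s,t}$ for $m\geq 2$; all of these are path increments and therefore additive in the middle point, namely $a_m(s,t) = a_m(s,u) + a_m(u,t)$. The exponential law then gives $\exp\bigl(\sum_m a_m(s,t)\, x^m\bigr) = \exp\bigl(\sum_m a_m(s,u)\, x^m\bigr)\cdot \exp\bigl(\sum_m a_m(u,t)\, x^m\bigr)$ in the truncated algebra, and identifying coefficients of $x^n$ via Definition \ref{BellPoly} recovers Chen's relation for $\XX^{n}$. For the Hölder bound \eqref{Holder regularity}, Lemma \ref{lem:Bell} tells us that every monomial $\prod_m (a_m(s,t))^{p_m}/p_m!$ appearing in $P^{(k)}_n$ satisfies $\sum_m m\, p_m = n$, so by the Hölder regularities of $X$ and of the $G^m$'s each such monomial is $n\alpha$-Hölder, hence so is $\XX^{n}_{s,t}$.

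For the converse, given $\mathbf{Y}=(1, Y, \mathbb{Y}^2,\ldots, \mathbb{Y}^k)$, the series $S_{s,t}(x)$ has unit constant term and so admits a formal logarithm $L_{s,t}(x) := \log S_{s,t}(x) \in x\,\mathbb{R}[x]/(x^{k+1})$. Writing $L_{s,t}(x) = \sum_{m=1}^k F^m_{s,t} x^m$ defines two-parameter functions $F^m_{s,t}$, and the identity $\exp L_{s,t} = S_{s,t}$ together with Definition \ref{BellPoly} immediately yields $\mathbb{Y}^n_{s,t} = P^{(k)}_n(F^1_{s,t}, F^2_{s,t},\ldots, F^k_{s,t})$; extracting the $n=1$ coefficient from $\log(1+z) = z - z^2/2 + \cdots$ gives $F^1_{s,t} = Y_{s,t}$, matching the role of $Y$ in the statement. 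Chen's multiplicative relation $S_{s,t}(x) = S_{s,u}(x)\cdot S_{u,t}(x)$ transfers to the additive identity $L_{s,t}(x) = L_{s,u}(x) + L_{u,t}(x)$, so $F^m_{s,t} = F^m_{s,u} + F^m_{u,t}$ and each $F^m$ is the increment of the genuine path $t\mapsto F^m_{0,t}$. Hölder regularity is then shown by induction on $m$: since $F^m$ appears linearly with coefficient $1$ in $P^{(k)}_m$ (inspection of \eqref{eq:First Bell polynomials} and Lemma \ref{lem:Bell}), one can solve $F^m_{s,t} = \mathbb{Y}^m_{s,t} - R_m(Y_{s,t}, F^2_{s,t},\ldots, F^{m-1}_{s,t})$, where every monomial in $R_m$ has weighted degree exactly $m$; the induction hypothesis on $F^2,\ldots, F^{m-1}$ together with the $m\alpha$-Hölder regularity of $\mathbb{Y}^m$ then yields the same regularity for $F^m$.

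The main technical point to watch is ensuring that passage through the formal logarithm preserves the correct weighted-degree homogeneity: the $\exp$/$\log$ correspondence between $S_{s,t}$ and $L_{s,t}$ is clean at the algebraic level, but one must verify that the combinatorial inversion implicit in $\log(1+z) = \sum_{n\geq 1} (-1)^{n+1} z^n/n$ produces, level by level, expressions for $F^m_{s,t}$ that are weighted-homogeneous of degree $m$, so that both additivity and the $m\alpha$-Hölder bound propagate through the induction. This is precisely where the grading (first argument of weight $1$, $m$-th argument of weight $m$) is indispensable: because Chen's relation respects this grading on the $\mathbb{Y}^n$ side and each $P^{(k)}_n$ is weighted-homogeneous of degree $n$, the inverse map respects the same grading, so the inductive construction and regularity estimates for the $F^m$'s close without overshoot.
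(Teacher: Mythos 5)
Your proof is correct and its forward direction coincides with the paper's; for the converse you take a genuinely cleaner route. The paper defines $F^j$ recursively by subtracting from $\mathbb{Y}^j$ the non-$a_j$ monomials of $P^{(k)}_j$ (the polynomials it calls $\tilde P^{(k)}_j$), then verifies the additivity $F^j_{s,t}=F^j_{s,u}+F^j_{u,t}$ by a somewhat involved identity for $\tilde P^{(k)}_j$ combined with Chen's relation for $\mathbb{Y}^j$. You instead pass to the truncated algebra $\mathbb{R}[x]/(x^{k+1})$, observe that Chen's relation is exactly the multiplicativity $S_{s,t}=S_{s,u}\cdot S_{u,t}$ of the generating series, and take the formal logarithm $L_{s,t}=\log S_{s,t}$; the morphism property $\log(AB)=\log A+\log B$ then gives the additivity of all the $F^m$ in a single line. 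The regularity argument is then the same as the paper's: the level-$m$ coefficient relation reads $\mathbb{Y}^m_{s,t}=F^m_{s,t}+R_m$, where $R_m$ is a weighted-homogeneous polynomial of degree $m$ in $Y_{s,t},F^2_{s,t},\dots,F^{m-1}_{s,t}$ (this is the content of Lemma \ref{lem:Bell}), so the $m\alpha$-Hölder bound follows by induction. What your version buys is a transparent packaging of the combinatorics: the additivity proof, which is the delicate step in the paper, becomes immediate. What the paper's version buys is that it never invokes the formal $\log$ or its morphism property, staying entirely inside explicit polynomial manipulations — this makes it more self-contained, and it also produces the formula \eqref{eq:Recover perturbation} for $F^j$ directly, which the paper later reuses (e.g.\ in Subsection \ref{subsec:Random renorm}) without having to unwind a logarithm. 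One small point to be careful about in your write-up: when stating that every monomial of $R_m$ has weighted degree exactly $m$, you should make explicit that this is Lemma \ref{lem:Bell}'s constraint $\sum_j j\,p_j=m$; your sketch gestures at this ``grading'' argument but it is worth spelling out, since it is precisely what guarantees the induction closes without overshooting the regularity exponent.
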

\begin{proof}
    We let $\XXX$ be as given in \eqref{eq:prop4.1-1} and \eqref{eq:prop4.1-2}, and use the convention $G^1=X\in C^{\alpha}$. The graded H\"older regularity condition in \eqref{Holder regularity} follows from the RHS of \eqref{Bell Polynomial Formula} that $p_1+2p_2+...+kp_k = n$ and that $G^{i\alpha}\in C^{i\alpha}$. Moreover, notice that
    \begin{equation}
        \exp{(\sum_{m=1}^k G^m_{s,t}\cdot x^m)} = \exp{(\sum_{m=1}^k G^m_{s,u}\cdot x^m)}\cdot \exp{(\sum_{m=1}^k G^m_{u,t}\cdot x^m)}.
    \end{equation}
    By comparing the coefficients in the Taylor expansions around 0 of the above equality, one obtains:
    \begin{equation}\label{eq:Bell to Chen}
        P^{(k)}_i (G^1_{s,t}, ...,G^k_{s,t})= \sum_{j=0}^i P^{(k)}_j (G^1_{s,u}, ...,G^k_{s,u})\cdot P^{(k)}_{i-j} (G^1_{u,t}, ...,G^k_{u,t})\quad\quad \forall 0\leq s,t \leq T
    \end{equation} 
    which recovers the desired Chen's relation in \eqref{Chen's relation}.

    Now, conversely, let $\mathbf{Y}$ be an $\alpha-$H\"older rough path. We construct the desired functions $F^i$ for $i=2,...,k$ via induction on $i$. That \eqref{eq:prop4.1-4} is satisfied for $i=0,1$ follows from the definition of $\mathbf{Y}$ in \eqref{eq:prop4.1-3}. We now assume there exists $F^i\in C^{i\alpha}([0,T], \mathbb{R})$ satisfying \eqref{eq:prop4.1-4} for all $i< j$ for some $j = 2,...,k$ and aim to construct $F^j\in C^{j\alpha}([0,T], \mathbb{R})$ satisfying \eqref{eq:prop4.1-4} as well. We note from \eqref{Bell Polynomial Formula} that the polynomial $P^{(k)}_j$ is essentially a polynomial of $j$ indeterminates, and thus we may indeed make this assumption without knowing $F^j,..., F^k$. We define the polynomials
    \begin{equation}
        \tilde{P}^{(k)}_j (a_1,..., a_k) := P^{(k)}_j(a_1,...,a_k) - a_j,
    \end{equation}
    and then set
    \begin{equation}\label{eq:Recover perturbation}
        F^j_{s,t}:=\mathbb{Y}^{j}_{s,t} -  \tilde{P}^{(k)}_j (Y_{s,t}, F^1_{s,t},..., F^{j-1}_{s,t},0,...,0) \quad\quad \forall 0\leq s,t\leq T.
    \end{equation}
    Again because the polynomial $P^{(k)}_j$ depends only on the first $j$ indeterminates, it is then obvious that \eqref{eq:prop4.1-4} is satisfied at level $j$ with the constructed $F^2,..., F^j$ and then any choices of $F^{j+1},...,F^k$. Moreover, by the expression of $\tilde P^{(k)}_j$ in Lemma \ref{lem:Bell} and the induction hypothesis $F^i\in C^{i\alpha}$ for $i<j$, we have
\begin{equation}
    \big\lvert \tilde P^{(k)}_j\big(Y_{s,t},F^1_{s,t},\dots,F^{j-1}_{s,t},0,\dots,0\big) \big\rvert
\lesssim |t-s|^{j\alpha} \quad\quad\forall 0\leq s,t\leq T.
\end{equation}
Hence, together with the rough path regularity assumption that $\mathbb{Y}^j_{s,t} \lesssim |t-s|^{j\alpha}$, we have
\begin{equation}
    |F^j_{s,t}|\lesssim |t-s|^{j\alpha}\quad\quad\forall 0\leq s,t\leq T .
\end{equation}
We now only need to show the additivity of $F^j$, i.e., $F^i_{s,t} = F^i_{s,u} + F^i_{u,t} $ for all $0\leq s\leq u\leq t\leq T.$ Similar to \eqref{eq:Bell to Chen}, we have
    \begin{equation}
         P^{(k)}_j (a_1+b_1,..., a_j+b_j, 0,...,0)= \sum_{l=0}^j P^{(k)}_l (a_1,...,a_j,0,...,0)\cdot P^{(k)}_{j-l} (b_1,...,b_j,0,...,0),
    \end{equation}
    for all $a_i,b_i\in \mathbb{R}$ for $i=1,...,j$. Again thanks to Lemma \ref{lem:Bell}, by deleting the terms involving $a_j$ and $b_j$, we obtain
    \begin{equation}\label{eq:Prop4.1-finally}
    \begin{aligned}
        \tilde{P}^{(k)}_j (a_1+b_1,..., a_{j-1}+b_{j-1}, 0,...,0)&= \sum_{l=1}^{j-1} P^{(k)}_l (a_1,...,a_{j-1},0,...,0)\cdot P^{(k)}_{j-l} (b_1,...,b_{j-1},0,...,0)\\
        & \quad + \tilde{P}^{(k)}_j (a_1,..., a_{j-1}, 0,...,0)+ \tilde{P}^{(k)}_j (b_1,..., b_{j-1}, 0,...,0),
    \end{aligned}
    \end{equation}
    for all $a_i,b_i\in \mathbb{R}$ for $i=1,...,j$. Plugging in \eqref{eq:Recover perturbation}, $a_1 = Y_{s,u}$, $b_1 = Y_{u,t}$ and $a_i = F^i_{s,u}$, $b_i = F^i_{u,t}$ for $i=2,...,j-1$ into \eqref{eq:Prop4.1-finally}, we obtain:
    \begin{equation}
        \mathbb{Y}^j_{s,t} - F^j_{s,t} = \Big\{\sum_{l=1}^{j-1} \mathbb{Y}^l_{s,u}\cdot \mathbb{Y}^{j-l}_{u,t}\Big\} + \mathbb{Y}^j_{s,u} - F^j_{s,u} + \mathbb{Y}^j_{u,t} - F^j_{u,t}.
    \end{equation}
    Now the additivity of $F^j$ is obtained by modulo the Chen's relation in \eqref{Chen's relation} for $\mathbb{Y}^j$.
\end{proof}

\begin{remark}
This is not the first paper trying to classify non-geometric rough paths. Such classifications are typically achieved by embedding d-dimensional non-geometric rough paths into geometric ones in an extended state space, but often within the framework of branched rough paths (see, e.g., \cite{Kel12}, \cite{HK15}, \cite{BC19}). Our construction, however, is distinct. It is purely combinatorial, not in the branched framework, and provides a direct isomorphism between the space of non-geometric rough paths and a corresponding set of renormalisation functions.
\end{remark}

\begin{remark}
    One cannot hope to extend the result above to $d-$dimensional non-geometric rough paths naively, since one cannot hope to describe the non-trivial crossed integrals via polynomials of the integrands and integrators. Also, unlike the $1$-dimensional polynomial interpretations, adding a renormalisation function $G^2$ onto $\XX^2$ for an $\mathbb{R}^d-$valued rough path $\XXX=(1, X, \XX^2,...,\XX^k)$ does not guarantee the well-posedness of the modification of the crossed terms in $\XX^n$ for $n\geq 3$; a necessary and sufficient condition for the well-posedness is that $(X, G^2)$ admits a geometric rough path lift. In other words, the framework of embedding non-geometric rough paths into geometric ones in \cite{HK15} become necessary to classify $d-$dimensional ones. However, we mention that it shall be natural to extend similar results to classify $d-$dimensional reduced rough paths for $\alpha\in (0,1]$ (This was defined in \cite{FH14} Definition 5.3 for the scenario of $\alpha\in (\frac{1}{3},\frac{1}{2}]$, but shall generalise naturally to $\alpha\in (0,1]$). 
\end{remark}


\subsection{An It\^o-type formula}\label{subsec:Ito}
We fix an $\alpha-$H\"older continuous path $X\in C^{\alpha}([0, T], \mathbb{R})$ and consider $\XXX$, a rough path lift associated with $X$. Moreover, let $G^2,...G^k$ be the renormalisation terms as stated in Proposition \ref{Rough Paths via Bell Polynomials}. An advantage of making explicit the renormalisation terms is that rough integrals against any rough path then admit a clean It\^o-type change of variable formula. To address this, we first discuss the integrability of $f(t, X_t)$, i.e., the controlledness as in Definition \ref{Controlled Paths}. In short, the controlledness depends on the regularity of $f$, as well as the regularities of the perturbations $G^2,..., G^k$ encoded in the rough path. 

We begin with the geometric case, i.e., $G^i=0$ for all $i=2,...,k.$
\begin{example}[Geometric case]\label{Function of Geometric Rough Paths as Controlled Paths}
    Let $\XXX$ be the geometric rough path associated with $X$, i.e.,
    \begin{equation}
        \XX^{i}_{s,t} = \frac{1}{i!} \cdot(X_t - X_s)^i \quad \quad \forall i = 1,...k;\; \;\forall s,t\in [0,T].
    \end{equation}
    Moreover, let $f\in C^{1,k}([0,T]\times \mathbb{R}, \mathbb{R})$ and $Y^{(i)}_t:=D^{(i-1)}_xf(t, X_t)$ for $i = 1,... k$, where $D^{(i-1)}_xf$ denotes the $(i-1)-$th derivative of $f$ in $x$. Then, we have:
\begin{equation}
    \begin{aligned}
        Y^{(i)}_{s,t} - \sum_{j=i+1}^k Y^{(i)}_s\cdot \XX_{s,t}^{j-1} & = D_x^{(i-1)}f(t, X_t) - D_x^{(i-1)}f(s, X_t)+D_x^{(i-1)}f(s, X_t) \\
        &\quad\quad\quad\quad\quad- \sum_{j=i+1}^k D_x^{(j-1)}f(s, X_s)\cdot \frac{(X_t-X_s)^{j-i}}{(j-i)!}\\
        & = D_tD_x^{(i-1)}f(u, X_t)\cdot (t-s) \\
        &\quad\quad\quad\quad\quad+ \frac{1}{(k+1-i)!}\cdot D_x^{(k)}f(s, X_v)\cdot (X_t-X_s)^{(k+1-i)\alpha}\\
        & \lesssim |t-s|+ |t-s|^{(k+1-i)\alpha}\lesssim  |t-s|^{(k+1-i)\alpha} \quad\quad \forall i=1,...,k
    \end{aligned}
\end{equation}
where the second equality is by the Lagrange mean value theorem and $u,v \in[s,t]$, whereas the inequalities are due to the regularity of $X$ and the fact that $(k+1-i)\alpha<1$. In other words, $Y:=(Y^{(1)},... Y^{(k)})$ is indeed controlled by $\XXX$.
\end{example}

In the geometric case, one does not need to postulate any regularity requirement on the non-geometric parts $G^2,... G^k$. This is indeed due to that geometric rough paths have a trivial non-geometric part, which admits any desired regularity condition. However, regularity requirements are necessary in the general non-geometric case to ensure controlledness.

\begin{example}[Non-geometric case]\label{Markovian Integrand}
   Let $k := \lfloor\frac{1}{\alpha}\rfloor$. Assume $G^2,... G^k\in C^{k\alpha}([0, T], \mathbb{R})$. Let $\XXX$ be a rough path associated with $X$, constructed via the ordinary complete Bell's polynomials and $G^2,... G^k$ as in Proposition \ref{Rough Paths via Bell Polynomials}. Moreover, let $f\in C^{1,k}([0,T]\times \mathbb{R}, \mathbb{R})$ and $Y^{(i)}_t:=D^{(i-1)}_xf(t, X_t)$ for $i = 1,... k$, where $D^{(i-1)}_xf$ denotes the $(i-1)-$th derivative of $f$ in $x$. Similar to above, using the convention that $G^1 = X$ we have
\begin{equation}
    \begin{aligned}
        Y^{(i)}_{s,t} - \sum_{j=i+1}^k Y^{(i)}_s\cdot \XX_{s,t}^{j-1} & = D_x^{(i-1)}f(t, X_t) - D_x^{(i-1)}f(s, X_t)+D_x^{(i-1)}f(s, X_t) \\&\quad\quad\quad\quad - \sum_{j=i+1}^k D_x^{(j-1)}f(s, X_s)\cdot \frac{(X_t-X_s)^{j-i}}{(j-i)!}\\
        &\quad\quad\quad\quad- \sum_{j=i+1}^k D_x^{(j-1)}f(s, X_s)\cdot\sum_{\substack{p_1,...p_k\geq0 \\ p_1+2p_2+...+k\cdot p_k = j-i \\ p_1\neq j-i}}\prod_{m=1}^k\frac{(G^m_t-G^m_s)^{p_m}}{p_m!}\\
        & = D_tD_x^{(i-1)}f(u, X_t)\cdot (t-s) \\
        &\quad\quad\quad\quad+ \frac{1}{(k+1-i)!}\cdot D_x^{(k)}f(s, X_v)\cdot (X_t-X_s)^{(k+1-i)\alpha}\\
        &\quad\quad\quad\quad- \sum_{j=i+1}^k D_x^{(j-1)}f(s, X_s)\cdot\sum_{\substack{p_1,...p_k\geq0 \\ p_1+2p_2+...+k\cdot p_k = j-i \\ p_1\neq j-i}}\prod_{m=1}^k\frac{(G^m_t-G^m_s)^{p_m}}{p_m!}\\
        & \lesssim |t-s|+ |t-s|^{(k+1-i)\alpha} + |t-s|^{k\alpha} \lesssim  |t-s|^{(k+1-i)\alpha}  \quad\quad \forall i=1,...,k
    \end{aligned}
\end{equation}
where the second equality is by the Lagrange mean value theorem and $u,v \in[s,t]$, whereas the inequalities are due to the regularity of $X$ and $G$ and the fact that $(k+1-i)\alpha<1$ and $k+1-i\leq k$. In other words, $Y:=(Y^{(1)},... Y^{(k)})$ is indeed controlled by $\XXX$ if the non-geometric terms in $\XXX$ is extra regular.
\end{example}

\begin{remark}\label{rem:Function of Info and Path as CP}
    With almost the same proof, the above controlledness results can be extended to include some regular information paths. This will practically allow a wider class of portfolios, as will be further discussed in Subsubsection \ref{exp:Markovian Portfolios}. Let $A:[0,T]\rightarrow \mathbb{R}^{d_A}$ be another Lipschitz-continuous path, e.g., the running average $t \mapsto \int_0^t S_u^i \mathrm{~d} u$ for some (or all) $i=1, \ldots, d$. Moreover, let $f\in C^{1,k}(\mathbb{R}^{d_A}\times \mathbb{R}, \mathbb{R})$ and $Y^{(i)}_t:=D^{(i-1)}_xf(A_t, X_t)$ for $i = 1,... k$, where $D^{(i-1)}_xf$ denotes the $(i-1)-$th derivative of $f$ in $x$. Then $Y:=(Y^{(1)},... Y^{(k)})$ is indeed controlled by $\XXX$ if the perturbation terms in $\XXX$ are as regular as in Example \ref{Markovian Integrand}. We leave the verification to readers or see a proof for $\alpha\in (\frac{1}{3}, \frac{1}{2}]$ in \cite{ALP24} Section 3.2 (but in the language of $\frac{1}{\alpha}-$variation). This provides a larger set of controlled portfolios but will not be of further use in this paper.
\end{remark}

With the integrability of the Markovian-type of integrands $f(t, X_t)$, we are now able to formulate the rough It\^o formula for 1-dimensional non-geometric rough paths.

\begin{theorem}[Rough It\^o formula]\label{Ito type formula}
    Let $\alpha>0$ and $k=\lfloor\frac{1}{\alpha}\rfloor$. Let $\XXX: = (1,\XX^1,..., \XX^k)$ be a 1-dimensional $\alpha-$H\"older rough path, and $G^1,... G^k$ its renormalisation terms as in Proposition \ref{Rough Paths via Bell Polynomials} where $G^i\in C^{k\alpha}$ for all $i=2,...,k$ and $F\in C^{1, k+1}([0,T]\times \mathbb{R}, \mathbb{R})$. Assume moreover that $(D_xF(t, X_t),...,D_x^{(k)}F(t, X_t))$ is controlled by $\XXX$ (cf. the above examples). Then we have:

    \begin{equation}\label{Ito Formula}    
    \begin{aligned}
        F(t, X_t) - F(s,X_s) = \int_s^t D_u F(u,X_u)\mathrm{d}u &+ \int_s^t (D_xF(u, X_u),...,D_x^{(k)}F(u, X_u)) \mathrm{d}\XXX_u\\
        &- \sum_{i=2}^k \int_s^t D_x^{(i)} F(u, X_u)\mathrm{d} G^i_u,
    \end{aligned}
    \end{equation}
    where the middle integral is a rough integral, and the others are Young's integral.
\end{theorem}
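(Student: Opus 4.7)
I will obtain the identity as the limit of a compensated Riemann sum along a refining partition $\mathcal{P}=\{s=t_0<\cdots<t_N=t\}$, combining a $(1,k)$-order Taylor expansion of $F$ with the Bell-polynomial identity of Proposition \ref{Rough Paths via Bell Polynomials} to separate the contribution of the rough-integrator increments $\XX^i$ from that of the renormalisation terms $G^i$. Telescoping $F(t,X_t)-F(s,X_s)=\sum_n[F(t_{n+1},X_{t_{n+1}})-F(t_n,X_{t_n})]$ and inserting $F(t_{n+1},X_{t_n})$, each summand splits into a time part $D_tF(t_n,X_{t_n})(t_{n+1}-t_n)+O(|t_{n+1}-t_n|^2)$ and a spatial Taylor expansion $\sum_{i=1}^k\tfrac{1}{i!}D_x^{(i)}F(t_n,X_{t_n})X_{t_n,t_{n+1}}^i+O(|X_{t_n,t_{n+1}}|^{k+1})$. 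Because $|X_{u,v}|\lesssim|v-u|^\alpha$ and $(k+1)\alpha>1$ by the very definition of $k$, both the Taylor remainder and the error incurred by freezing $D_x^{(i)}F(t_{n+1},\cdot)$ at the left endpoint vanish after summation as $|\mathcal{P}|\downarrow 0$.

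\textbf{Bell-polynomial substitution.} By Proposition \ref{Rough Paths via Bell Polynomials} and Lemma \ref{lem:Bell}, isolating the pure-$X$ monomial in $P_i^{(k)}$ yields
\[ \frac{X_{s,t}^i}{i!} \;=\; \XX^i_{s,t} \;-\; Q^{(k)}_i(X_{s,t},G^2_{s,t},\ldots,G^k_{s,t}), \]
where $Q^{(k)}_i$ collects precisely the multi-index monomials $\prod_m (G^m)^{p_m}/p_m!$ with $\sum_m m\,p_m=i$ and $(p_2,\ldots,p_k)\neq 0$. Substituting this into the Taylor sum splits it into two pieces: first, $\sum_n\sum_{i=1}^k D_x^{(i)}F(t_n,X_{t_n})\,\XX^i_{t_n,t_{n+1}}$, which is exactly the compensated Riemann sum of Proposition \ref{Rough Integral} defining $\int_s^t(D_xF,\ldots,D_x^{(k)}F)\,\mathrm{d}\XXX$ and hence converges by the controlled-path hypothesis; second, a correction $-\sum_n\sum_{i=1}^k D_x^{(i)}F(t_n,X_{t_n})\,Q^{(k)}_{i,t_n,t_{n+1}}$ to be analysed.

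\textbf{H\"older bookkeeping of the correction.} Using $X\in C^\alpha$ and $G^j\in C^{k\alpha}$ for every $j\geq 2$, each monomial in $Q^{(k)}_i$ has size $|t_{n+1}-t_n|^{p_1\alpha+(p_2+\cdots+p_k)k\alpha}$. The only monomial with total H\"older exponent $\leq 1$ is the single-$G^i$ term $p_i=1$, $p_{j\neq i}=0$ (which forces $i\geq 2$), of size $|t_{n+1}-t_n|^{k\alpha}$: indeed every mixed monomial with $p_1\geq 1$ and $p_j\geq 1$ for some $j\geq 2$ has exponent $\geq \alpha+k\alpha=(k+1)\alpha>1$, and every pure multi-$G$ monomial with $p_2+\cdots+p_k\geq 2$ has exponent $\geq 2k\alpha>1$ (automatic for $k\geq 2$ since $k\alpha>1-\alpha$; the case $k=1$ is vacuous). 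All non-surviving monomials therefore contribute $o(1)$ after summation over $\mathcal{P}$. The surviving sum $\sum_{i=2}^k\sum_n D_x^{(i)}F(t_n,X_{t_n})\,G^i_{t_n,t_{n+1}}$ converges, by classical Young integration, to $\sum_{i=2}^k\int_s^t D_x^{(i)}F(u,X_u)\,\mathrm{d}G^i_u$, since $D_x^{(i)}F(\cdot,X_\cdot)$ is $\alpha$-H\"older, $G^i\in C^{k\alpha}$, and $\alpha+k\alpha>1$. Combined with the Riemann limit $\sum_n D_tF(t_n,X_{t_n})(t_{n+1}-t_n)\to\int_s^t D_uF(u,X_u)\,\mathrm{d}u$, this gives \eqref{Ito Formula}.

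\textbf{Main obstacle.} The technical crux is the H\"older bookkeeping in the previous paragraph: one must exploit the strictly stronger assumption $G^j\in C^{k\alpha}$, rather than merely $C^{j\alpha}$ (the minimal regularity needed to define the rough path itself), in order to push every mixed and every multi-$G$ monomial past the critical exponent $1$. It is precisely this regularity upgrade that ensures exactly one pure-$G^i$ Young integral survives per level $i\geq 2$, producing a clean It\^o-type correction and ruling out the extra higher-order compensation terms that would otherwise appear.
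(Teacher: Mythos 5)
Your proof follows essentially the same route as the paper's: set up the compensated Riemann sum with local weights $\sum_{i=1}^k D_x^{(i)}F\,\XX^i$, Taylor-expand $F$ to order $k$ along the partition, substitute $X_{s,t}^i/i! = \XX^i_{s,t}$ minus the remaining Bell-polynomial monomials, and use the graded H\"older exponents (together with $(k+1)\alpha>1$ and $G^j\in C^{k\alpha}$) to retain exactly the single-$G^i$ Young correction while discarding every mixed and multi-$G$ monomial after summation. Your H\"older bookkeeping of the discarded monomials is in fact a bit more explicit than the paper's phrasing; the one small slip is the claimed $O(|t_{n+1}-t_n|^2)$ remainder in the time term, which under the stated $F\in C^{1,k+1}$ (no $D_t^2F$ assumed) should read $o(|t_{n+1}-t_n|)$ uniformly — harmless for the Riemann-sum limit.
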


\begin{proof}
    For $\forall v< w$, we set the local approximation of the rough integral
    \begin{equation}
        \Xi_{v,w} := \sum_{i=1}^k D^{(i)}_xF(v, X_v)\cdot \XX^{i}_{v, w} \quad \quad \forall s,t \in [0,T].  
    \end{equation}
    as in \eqref{Compensated Riemann Sum}. Using the convention $G^1 = X$, one computes by Taylor's theorem:
    \begin{equation}
        \begin{aligned}
            F(w, X_w) - F(v,X_v) &= D_uF(v, X_v)\cdot(w-v) + \sum_{i=1}^k \frac{1}{i!}\cdot D^{(i)}_xF(v, X_v)\cdot(X_w-X_v)^i \\  & \quad\quad+ O(|w-v|^{(k+1)\alpha})\\
            & =  D_uF(v, X_v)\cdot(w-v) + \Xi_{v,w} - \sum_{i=2}^k D_x^{(i)}F(v, X_v)\cdot (G^i_w - G^i_v) \\
    &\quad\quad- \sum_{i=2}^k D_x^{(i)}F(v, X_v)\cdot \sum_{\substack{p_1,...p_k\geq0 \\ p_1+2p_2+...+k\cdot p_k = i \\ p_1\neq i,\; p_i\neq 1}}\prod_{m=1}^k
    \frac{(G^m_t - G^m_s)^{p_m}}{p_m!} \\
    &\quad\quad+ O(|w-v|^{(k+1)\alpha})\\
    & =  D_uF(v, X_v)\cdot(w-v) + \Xi_{v,w} \\ &\quad\quad- \sum_{i=2}^k D_x^{(i)}F(v, X_v)\cdot (G^i_w - G^i_v) + O(|w-v|^{(k+1)\alpha})
\end{aligned}
    \end{equation}
    where the last equality is because the regularity of $G$'s and $X$ and that in the sum at least two $p_j$'s are not zero, and we have $p_1\neq i$. Now taking $\lim_{|\mathcal{P}|\downarrow 0}\sum_{[v,w]\in \mathcal{P}}$ on both sides of the equation, we recover the desired equation \eqref{Ito Formula} where the last term is a well-defined Young's integral.
\end{proof}

As a direct consequence by setting $G^i=0$ for $i=2,...,k$, we recover the classical change of variable formula for geometric rough paths in the 1-dimensional case:

\begin{corollary}[Change of variable formula for geometric rough paths]
    Let $\alpha>0$ and $k=\lfloor\frac{1}{\alpha}\rfloor$. Let $\XXX: = (1,\XX^1,..., \XX^k)$ be a geometric 1-dimensional $\alpha-$H\"older rough path, and $F\in C^{1, k+1}([0,T]\times \mathbb{R}, \mathbb{R})$. Then we have:

    \begin{equation}\label{Ito Chain Rule}    
        F(t, X_t) - F(s,X_s) = \int_s^t D_u(u,X_u)\mathrm{d}u + \int_s^t (D_xF(u, X_u),...,D_x^{(k)}F(u, X_u)) \mathrm{d}\XXX_u
    \end{equation}
    where the first is Young's integral, and the second integral is a rough integral.
\end{corollary}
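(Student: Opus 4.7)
The plan is to invoke the rough It\^o formula of Theorem \ref{Ito type formula} specialised to the trivial renormalisation $G^2 \equiv \cdots \equiv G^k \equiv 0$ and observe that the Young-integral remainders drop out. So the corollary should follow essentially by substitution, once I identify a geometric 1D rough path with the choice of zero renormalisation terms in the Bell-polynomial parametrisation.

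First, I would match the geometric lift to the Bell-polynomial parametrisation of Proposition \ref{Rough Paths via Bell Polynomials}. The prescription $\XX^i_{s,t} = X_{s,t}^i/i!$ coincides with $P^{(k)}_i(X_{s,t},0,\ldots,0)$: from the explicit formula in Lemma \ref{lem:Bell}, only the summand with $p_1 = i$ and $p_j = 0$ for $j \geq 2$ survives, producing $X_{s,t}^i/i!$. The uniqueness half of Proposition \ref{Rough Paths via Bell Polynomials} then forces the renormalisation terms encoded in a 1-dimensional geometric $\alpha$-H\"older rough path to vanish identically, so $G^i \equiv 0$ for $i = 2, \ldots, k$.

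Second, I would verify the hypotheses of Theorem \ref{Ito type formula}. The regularity requirement $G^i \in C^{k\alpha}$ holds trivially because each $G^i$ is the zero path. The controlledness of $(D_x F(t,X_t), \ldots, D_x^{(k)} F(t,X_t))$ by $\XXX$ is exactly the content of Example \ref{Function of Geometric Rough Paths as Controlled Paths} for $F \in C^{1,k}$; since $F \in C^{1,k+1}$ here, the hypothesis is satisfied a fortiori, with room to spare in the remainder Taylor term.

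Third, I plug $G^i \equiv 0$ into \eqref{Ito Formula}. Each Young integral $\int_s^t D_x^{(i)} F(u, X_u)\,\mathrm{d} G^i_u$ vanishes because the integrator is the zero path, and what remains is precisely the identity \eqref{Ito Chain Rule}, with the time-derivative term a standard Riemann integral and the spatial term a rough integral against $\XXX$. I expect no real obstacle: the corollary is a direct specialisation of the rough It\^o formula, and the only thing to be careful about is correctly reading off that the geometric lift corresponds to zero renormalisation inside the classification of Proposition \ref{Rough Paths via Bell Polynomials}.
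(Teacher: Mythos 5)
Your proposal is correct and follows exactly the route the paper takes: specialise Theorem~\ref{Ito type formula} to the geometric case by identifying a 1-dimensional geometric lift with vanishing renormalisation terms $G^2=\cdots=G^k\equiv 0$ in the parametrisation of Proposition~\ref{Rough Paths via Bell Polynomials}, note that the Young integrals against $G^i$ drop out, and invoke Example~\ref{Function of Geometric Rough Paths as Controlled Paths} for the controlledness hypothesis. The paper states this as a one-line observation; your write-up simply makes the substitution and hypothesis-checking explicit.
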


\begin{remark}[d-dimensional generalisation]\label{rem:Change of variable}
    We note that the change of variable formula also holds for $d$-dimensional geometric rough path by replacing the partial derivatives with partial gradients. We will cite a general version in Lemma \ref{lemm:Change of variable for GRP} to construct an explicit arbitrage in a geometric rough market thereafter.
\end{remark}

\subsection{The Gaussian-Hermite rough paths}\label{subsec:The unbiased Gaussian rough integrator}
In this subsection, we use the above construction to construct a canonical rough path lift for stochastic processes
\(X=\{X_t\}_{t\in[0,T]}\) whose marginals are Gaussian and whose sample paths possess arbitrarily
low H\"older regularity \(\alpha>0\). The constructed random rough path is then an unbiased rough integrator with respect to the class of certain Markovian-type integrands as in the previous subsection. Throughout this subsection, we make the following assumptions on the underlying process $X$:

\begin{assumption}\label{Assumption}
     $(X_t)_{t\in [0, T]}$ is a one-dimensional centred stochastic process with Gaussian marginals such that $X_0=0$; there exists $\alpha\in (0,1]$ such that the sample paths of $(X_t)_{t\in [0, T]}$ are almost surely $\alpha-$H\"older continuous; the function $t\mapsto \mathbb{E}[X_t^2]$ is $k\alpha$-H\"older continuous where $k:=\lfloor\frac{1}{\alpha}\rfloor$ and of bounded variation.
\end{assumption}

\begin{remark}\label{rem:scope}
We make some comments on the assumption. The $\alpha-$H\"older continuity is needed to define a rough path lift. As for the regularity assumption on the variance map, this is not demanding since taking the expectation is usually a smoothing operator. Indeed, it is easy to check that standard Brownian motion, fractional Brownian motion for any Hurst index $H\in(0,1)$, (fractional) Ornstein-Uhlenbeck processes (\cite{CKM03}), Gaussian moving averages with H\"older kernels, Riemann-Liouville processes, Volterra Gaussian processes with smooth kernels, generalised fractional Brownian motions (\cite{IPT22}, \cite{IPT25}) and many other time-homogeneous or time-changed Gaussian models all satisfy the above assumptions for a suitable $\alpha$. Hence, Assumption~\ref{Assumption} encompasses the vast majority of one-dimensional Gaussian models encountered in stochastic analysis and
mathematical finance while providing just enough structure for the
subsequent rough-path constructions.
\end{remark}

Now we are ready to define the Gaussian-Hermite rough paths. The construction is essentially a simplified version of Proposition \ref{Rough Paths via Bell Polynomials} and the well-definedness is proven therein.
\begin{definition}\label{Unbiased Rough Paths}
    Let $X$ be a stochastic process satisfying Assumption \ref{Assumption} with suitable $\alpha$, $k$ therein. We define polynomials:
    \begin{equation}
        H_n(a_1, a_2):= \sum_{\substack{p_1, p_2\geq0 \\ p_1+2p_2=n}}\frac{a_1^{p_1}\cdot a_2^{p_2}}{p_1!\cdot p_2!}= P^{(k)}_n(a_1,a_2,0,...,0)  \quad\quad\text{for} \; n=0,...,k,
    \end{equation}
    where $P^{(k)}_n$ is the $n-$th ordinary complete Bell Hermite polynomial of degree $k$ as in Definition \ref{BellPoly}. Moreover, we set $G(t):=-\frac{1}{2}\mathbb{E}[X_t^2]$ and then define the Hermite rough path associated with $X$ as:
    \begin{equation}
    \begin{aligned}
        \XXX_{s,t}(\omega):&= (\XX^0_{s,t}(\omega), \XX^1_{s,t}(\omega), ..., \XX^k_{s,t}(\omega))\quad\quad\forall s,t\in [0,T] ;\\
        \XX^i_{s,t}(\omega):& = H_i(X_{s,t}(\omega), G_{s,t})\quad\quad\forall s,t\in [0,T] \;\text{and} \; \text{for } i=0,1,...,k.
    \end{aligned}
    \end{equation}
\end{definition}

We give an immediate example with the fractional Brownian motions.
\begin{example}[Hermite fractional Brownian rough paths]
   Let $(B^H_t)_{t\in [0, T]}$ be a $1-$dimensional fBm with Hurst parameter $H\in (\frac{1}{3}, 1)$ and $G(t):=-\mathbb{E}[(B^H_t)^2]=- t^{2H}$.  We define:
    \begin{equation}
        \mathbb{B}^H_{s,t}:=H_2(B^H_{s,t}, G_{s,t}) = \frac{1}{2} (B^H_t - B^H_s)^2 - \frac{1}{2}(t^{2H}-s^{2H})\quad\quad \forall s,t\in \Delta_{[0,T]}.
    \end{equation}
    Then $\mathbf{B}^H:=(B^H, \mathbb{B}^H)$ is a well-defined $\min\{H, \frac{1}{2}\}-$H\"older continuous rough path lift of $B^H$. For $H=\frac{1}{2},$ the It\^o Brownian rough path is recovered. For $H\in (\frac{1}{3}, \frac{1}{2})$, we recover the construction in \cite{QX18} in the context of arbitrage-free pricing of Markovian-type contingent claims. A similar construction was made in \cite{Fer21} for multidimensional fBm with $H\in (\frac{1}{4},\frac{1}{3}]$ in the language of branched rough paths.
\end{example}

A key property of the Gaussian-Hermite rough path is that the rough integral of Markovian-type controlled paths against it again has zero mean. More concretely, we define:
\begin{definition}[Unbiased rough integrator]\label{def:Unbiased RI}
    Let $\XXX$ be an adapted random rough path on a filtered probability space $(\Omega, \mathcal{F}, (\mathcal{F}_t)_{t\in [0,T]}, \mathbb{P})$ and $\mathcal{H}$ be a class of adapted random piecewise a.s. by-$\XXX$-controlled paths. Moreover, let ${\mathfrak{T}}_T$ be any set of stopping times bounded by the terminal time $T$ and containing $T$. We say that $\XXX$ is an $\mathcal{H}\times {\mathfrak{T}}_T-$unbiased rough integrator, if 
    \begin{equation}
        \mathbb{E}\Big[\int_0^\tau Y_t\mathrm{d} \XXX_t\Big]=0\quad\quad \forall Y\in \mathcal{H},\; \forall \tau\in {\mathfrak{T}}_T.
    \end{equation}
    We also use the notation $\mathcal{H}-$unbiased rough integrator for ${\mathfrak{T}}_T =\{T\}$ for simplicity.
\end{definition}

Note that for each level $n\le k$, the polynomials $H_n$ are selected so that $\XX^n_{s,t}$ coincides with the
$n$-th probabilist's Hermite polynomial evaluated at the centred random variable along with some variance term $(X_{s,t},-\frac{1}{2}(\mathbb E[X_{t}^2]-\mathbb E[X_{s}^2]))$. This choice guarantees by setting $s=0$ and thus $X_s=0$:
\begin{equation}
\mathbb E\bigl[\XX^n_{0,t}\bigr] \;=\;0,
\qquad n=1,\dots,k,\quad t\in [0, T],
\end{equation}
due to the orthogonality between Gaussian distributions and the Hermite polynomials. This already corresponds to a very preliminary version of unbiasedness of the rough integrator, since from Proposition \ref{prop:Sig as CP} and its proof, we have the identity
\begin{equation}
    \XX_{0,t}^n = \int^t_0 (\XX^{(n-1)}_{0,s},..., \XX_{0,s}^0, 0,...,0)\mathbb{\XXX}_s.
\end{equation}
Moreover, due to the fact that the Hermite polynomials are orthogonal with respect to Gaussian measures, it should be expected that the above constructed rough paths, as rough integrators, should be unbiased also with respect to any Markovian-type integrands, as in example \ref{Markovian Integrand}. Indeed, the general unbiasedness for the rough integral against a Gaussian-Hermite rough path can be formulated as:

\begin{theorem}
    Let $X$ be a stochastic process satisfying Assumption \ref{Assumption} with suitable $\alpha$, $k$ therein, and $\XXX$ its Hermite rough path lift as in Definition \ref{Unbiased Rough Paths}. For $F\in C_b^{1, k+1}([0, T]\times \mathbb{R}, \mathbb{R})$, we have:
    \begin{equation}\label{eq:Unbiased Gaussian}
        \mathbb{E}\Big[\int_s^t (D_xF(u, X_u), D_x^{(2)}F(u, X_u),..., D_x^{(k)}F(u, X_u))\mathrm{d}\XXX_u\Big]=0 \quad\quad\forall 0\leq s\leq t\leq T.
    \end{equation}
    In particular, $\XXX$ is an $\mathcal{H}_X^{\text{Mar};b}-$unbiased rough integrator for 
    \begin{equation}\label{eq:Theorem 4.2 eq1}
        \mathcal{H}_X^{\text{Mar};b}:= \Big\{ (D_xF(t,X_t))_{t\in [0,T]}: F\in C_b^{1, k+1}([0, T]\times \mathbb{R}, \mathbb{R})\Big\},
    \end{equation}
    where $(D_xF(t,X_t))_{t\in [0,T]}$ is understood as a controlled path via its space derivatives as the integrand in \eqref{eq:Unbiased Gaussian}.
\end{theorem}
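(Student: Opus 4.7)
The plan is to apply the rough Itô formula of Theorem \ref{Ito type formula} to $F \in C_b^{1,k+1}$ evaluated on the Hermite rough path $\XXX$. By construction, the only non-trivial renormalisation term of $\XXX$ is $G^2 = G(\cdot) = -\tfrac{1}{2}\mathbb{E}[X_\cdot^2]$, with $G^i \equiv 0$ for $i \geq 3$, so the formula collapses to
\begin{equation*}
F(t,X_t) - F(s,X_s) \;=\; \int_s^t \partial_u F(u,X_u)\, du \;+\; \int_s^t (D_x F,\ldots,D_x^{(k)} F)(u,X_u)\, d\XXX_u \;-\; \int_s^t D_x^{(2)} F(u,X_u)\, dG_u.
\end{equation*}
Taking expectation on both sides and rearranging, the target \eqref{eq:Unbiased Gaussian} reduces to the identity
\begin{equation*}
\mathbb{E}[F(t,X_t)] - \mathbb{E}[F(s,X_s)] \;=\; \int_s^t \mathbb{E}[\partial_u F(u,X_u)]\, du \;-\; \int_s^t \mathbb{E}[D_x^{(2)} F(u,X_u)]\, dG_u.
\end{equation*}
The exchange of expectation with the pathwise rough-integral limit is legitimated by dominated convergence, using that $F \in C_b^{1,k+1}$ together with the Gaussian moment bounds on $X_{s,t}$ (and hence on every Bell-polynomial component of $\XXX$) yields a uniform integrable majorant for the compensated Riemann sums \eqref{Compensated Riemann Sum}.

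To establish the displayed identity, I would write $\mu(u) := \mathbb{E}[F(u,X_u)] = \Psi(u, v(u))$, with $v(u) := \mathbb{E}[X_u^2]$ and $\Psi(u,w) := \int F(u,x)\,\phi(x,w)\,dx$ for $\phi(\cdot,w)$ the centred Gaussian density of variance $w$. The cornerstone is the heat equation $\partial_w \phi = \tfrac{1}{2}\partial_x^2 \phi$, which combined with integration by parts twice in $x$ (legitimate since $k+1\geq 2$ and $\phi$ decays superpolynomially) produces the clean derivatives
\begin{equation*}
\partial_u \Psi(u,w) = \int \partial_u F(u,x)\,\phi(x,w)\,dx, \qquad \partial_w \Psi(u,w) = \tfrac{1}{2}\int D_x^{(2)} F(u,x)\,\phi(x,w)\,dx.
\end{equation*}
Both partials are bounded and jointly continuous on $[0,T]\times[0,\max_u v(u)]$, extending continuously to $w=0$ via the weak convergence $\phi(\cdot,w)\to\delta_0$. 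A Young-type chain rule for the $C^1$ function $\Psi$ composed with the continuous BV path $u\mapsto v(u)$ then gives
\begin{equation*}
\mu(t) - \mu(s) = \int_s^t \partial_u \Psi(u,v(u))\, du + \int_s^t \partial_w \Psi(u,v(u))\, dv(u),
\end{equation*}
which, using $dv = -2\,dG$, is exactly the required identity.

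The principal technical obstacle is the rigorous justification of this chain rule, since $v$ is merely of bounded variation (and potentially degenerate at $u=0$, where $v(0)=0$). My plan is to verify it directly by a Riemann--Stieltjes argument: for a partition $\mathcal{P}=\{s=u_0<\cdots<u_N=t\}$, decompose each increment $\mu(u_{i+1})-\mu(u_i)$ through the intermediate value $\Psi(u_i,v(u_{i+1}))$, apply a first-order Taylor expansion in the $u$-variable of $F$ and the exact integral $\int_{v(u_i)}^{v(u_{i+1})}\partial_w\Psi(u_i,w)\,dw$ in the $w$-variable, and bound the combined remainders by $\omega(|\mathcal{P}|\vee|\mathcal{P}|_v)\cdot(\mathrm{TV}(v)+T)$, where $\omega$ is a joint modulus of continuity of $\partial_u\Psi,\partial_w\Psi$ on the compact range of $(u,v(u))$ and $|\mathcal{P}|_v := \max_i|v(u_{i+1})-v(u_i)|\to 0$ as $|\mathcal{P}|\to 0$ by continuity of $v$. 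The Riemann--Stieltjes sums thus converge to the claimed Stieltjes integrals, proving the bivariate chain rule. Once \eqref{eq:Unbiased Gaussian} is established, the $\mathcal{H}_X^{\text{Mar};b}$-unbiasedness of $\XXX$ follows by specialising to $s=0$, $t=T$ in Definition \ref{def:Unbiased RI}.
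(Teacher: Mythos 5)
Your proposal is correct and follows the same two-stage strategy as the paper: first invoke the rough Itô formula to reduce \eqref{eq:Unbiased Gaussian} to a purely deterministic identity for $\mu(u)=\mathbb{E}[F(u,X_u)]$, then verify that identity by Gaussian analysis. The difference lies entirely in the second stage. The paper parametrises by the standard deviation $f(u)=\mathbb{E}[X_u^2]^{1/2}$, changes variables to the \emph{standard} normal density $\varphi$, writes $\mathbb{E}[F(u,X_u)]=\int F(u,f(u)x)\varphi(x)\,dx$, applies the chain rule against $\mathrm{d}u$ and $\mathrm{d}f(u)$, and converts the second derivative via Stein's identity $x\varphi(x)=-\varphi'(x)$ plus $f\,\mathrm{d}f=\tfrac12\,\mathrm{d}v$. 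You instead parametrise directly by the variance $v(u)=\mathbb{E}[X_u^2]$, keep the Gaussian density $\phi(\cdot,w)$ as the varying object, and use the heat equation $\partial_w\phi=\tfrac12\partial_x^2\phi$ followed by two integrations by parts. These two computations are arithmetically equivalent (the heat equation for $\phi$ is exactly the Stein identity after the scaling change of variable), but your parametrisation has a concrete technical advantage: Assumption \ref{Assumption} guarantees that $v$ is of bounded variation, whereas the paper's intermediate Stieltjes integral against $\mathrm{d}f$ is not a priori well-defined because $f=\sqrt{v}$ need not inherit bounded variation from $v$ unless $v$ is monotone (one can build continuous BV $v\geq 0$ with $\sqrt{v}$ of unbounded variation). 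By working with $\mathrm{d}v$ throughout, you sidestep that issue entirely, and your Riemann--Stieltjes justification of the bivariate chain rule for $C^1\circ(\text{id},\text{BV})$ closes the argument cleanly. One remark: the interchange of expectation and the pathwise rough-integral limit, which you justify by a uniform integrable majorant on the compensated Riemann sums, is tacitly assumed in the paper as well; your dominated-convergence sketch (boundedness of $D_x^{(i)}F$ plus Gaussian moment bounds on the Hermite components) is the right tool but would need to be spelled out with a genuine majorant uniform over partitions, for example by combining the sewing-lemma remainder bound with an $L^2$ estimate on $\XX^i$, if full rigour is required at that step.
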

\begin{proof}
 By the Ito-type change of variable formula in Theorem \ref{Ito type formula}, it suffices to show
     \begin{equation}\label{(30)}
        \mathbb{E}[F(t, X_t) - F(s,X_s)]=\mathbb{E}\Big[\int_s^t D_uF(u, X_u)\mathrm{d}u \Big] + \mathbb{E}\Big[\int_s^t \frac{1}{2}\cdot D_{xx}F(u, X_u)\mathrm{d}\mathbb{E}[X_u^2]\Big]
    \end{equation}
    Let $\varphi$ be the density function of a standard normal distribution. We now compute the LHS of \eqref{(30)} with the short-hand notation $f(t):=\mathbb{E}[X^2_t]^{\frac{1}{2}}$:
    \begin{equation}\label{(31)}
        \begin{aligned}
            \mathbb{E}[F(t, X_t) - F(s,X_s)] &= \int_{\mathbb{R}} F(t,x)\cdot \frac{1}{f(t)}\cdot \varphi\Big(\frac{x}{f(t)}\Big)\mathrm{d}x - \int_{\mathbb{R}} F(s,x)\cdot \frac{1}{f(s)}\cdot \varphi \Big(\frac{x}{f(s)}\Big)\mathrm{d}x\\ & = \int_{\mathbb{R}} (F(t, f(t) x) - F(s, f(s) x))\cdot \varphi(x)\mathrm{d}x\\ & = \int_{\mathbb{R}}\int_s^t D_uF(u, f(u)x) \cdot \varphi(x)\mathrm{d}u\mathrm{d}x \\
            &\quad\quad\quad\quad+ \int_{\mathbb{R}}\int_s^t x\cdot D_xF(u, f(u)x) \cdot \varphi(x)\mathrm{d}f(u)\mathrm{d}x
        \end{aligned}
    \end{equation}
    where the first equation is due to $X_s\sim \mathcal{N}(0, f(s)^2)$; the second equation is a change of variable, whereas the third equation is due to the chain rule for Riemann-Stieltjes integral. It is obvious that the first term in the last line of \eqref{(31)} coincides with the first term of the RHS of \eqref{(30)}. Now it suffices to deal with the second term in the last line of \eqref{(31)}. Indeed,
    \begin{equation}\label{eq:Theorem 4.2 eq3}
        \begin{aligned}
            \int_{\mathbb{R}}\int_s^t x\cdot D_xF(u, f(u)x) \cdot \varphi(x)\mathrm{d}f(u)\mathrm{d}x & = \int_s^t \int_{\mathbb{R}}  D_xF(u, f(u)x) \cdot x\varphi(x)\mathrm{d}x \mathrm{d}f(u)\\ & = \int_s^t \int_{\mathbb{R}} f(u)\cdot D_{xx}F(u, f(u)x) \cdot \varphi(x)\mathrm{d}x \mathrm{d}f(u)\\
            & =  \int_{\mathbb{R}}  \int_s^tf(u)\cdot D_{xx}F(u, f(u)x) \cdot \mathrm{d}f(u)\varphi(x)\mathrm{d}x  \\
            & =\mathbb{E}\Big[\int_s^t \frac{1}{2}\cdot D_{xx}F(u, X_u)\mathrm{d}\mathbb{E}[X_u^2]\Big]
        \end{aligned}
    \end{equation}
    where the first equation is due to the Fubini theorem; the second is by an integration by parts of the inner integral and the fact that $x\varphi(x) = -\varphi'(x)$; the third is by Fubini theorem for a Lebesgue–Stieltjes integral, and the last equation is by the chain rule of the Riemann-Stieltjes integral. The desired equation \eqref{(30)} is thus proven.
\end{proof}

The same proof can be carried over to polynomial-induced (unbounded) controlled paths due to the integrability of polynomial growth functions against the Gaussian density:

\begin{theorem}\label{thm:Unbiasedness for polynomials}
    Let $X$ be a stochastic process satisfying Assumption \ref{Assumption} with suitable $\alpha$, $k$ therein, and $\XXX$ its Hermite rough path lift as in Definition \ref{Unbiased Rough Paths}. For any polynomial $P$, we have:
    \begin{equation}\label{eq:Unbiased Gaussian for Poly}
        \mathbb{E}\Big[\int_s^t (P(X_u), DP(X_u),..., D^{(k-1)}P(X_u))\mathrm{d}\XXX_u\Big]=0 \quad\quad\forall 0\leq s\leq t\leq T.
    \end{equation}
    In other words, $\XXX$ is an $\mathcal{H}_X^{\text{Pol}}-$unbiased rough integrator for
    \begin{equation}
        \mathcal{H}_X^{\text{Pol}}:= \Big\{ (P(X_t), DP(X_t),..., D^{(k-1)}P(X_t)):  \; \forall P \;\mathrm{polynomial}\Big\}.
    \end{equation}
\end{theorem}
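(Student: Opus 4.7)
The plan is to reduce the claim to the computation already carried out in the previous theorem, by choosing the right potential function for the rough It\^o formula and then justifying that the unboundedness of a polynomial does not break any step of that argument. Concretely, given a polynomial $P$, I take $F$ to be any antiderivative of $P$ (viewed as a function of $x$ alone, independent of $t$). Then $F$ is again a polynomial, in particular $F \in C^{1,k+1}$, and $D_x^{(i)} F = D^{(i-1)} P$ for every $i=1,\dots,k$. Since the Hermite lift $\XXX$ has renormalisation terms $G^2_{s,t} = -\tfrac{1}{2}(\mathbb E[X_t^2]-\mathbb E[X_s^2])$ and $G^i\equiv 0$ for $i\geq 3$, Assumption~\ref{Assumption} guarantees that $G^2 \in C^{k\alpha}$ (and is of bounded variation), so the polynomial-induced tuple $(P(X_u),DP(X_u),\dots,D^{(k-1)}P(X_u))$ is controlled by $\XXX$ by the exact same computation as in Example~\ref{Markovian Integrand}, because polynomial derivatives are uniformly bounded on compact sets and the path $X$ is a.s.\ bounded on $[0,T]$.

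Applying the rough It\^o formula (Theorem~\ref{Ito type formula}) to $F$, the time-derivative term disappears (since $F$ is time-independent) and only the $i=2$ correction term survives, giving the pathwise identity
\begin{equation}
    F(X_t) - F(X_s) \;=\; \int_s^t \bigl(P(X_u),DP(X_u),\dots,D^{(k-1)}P(X_u)\bigr)\mathrm{d}\XXX_u \;+\; \frac{1}{2}\int_s^t D^{(2)}F(X_u)\,\mathrm{d}\mathbb{E}[X_u^2].
\end{equation}
Taking expectation of both sides, \eqref{eq:Unbiased Gaussian for Poly} is equivalent to the probabilistic identity
\begin{equation}
    \mathbb{E}[F(X_t) - F(X_s)] \;=\; \frac{1}{2}\,\mathbb{E}\Big[\int_s^t D^{(2)}F(X_u)\,\mathrm{d}\mathbb{E}[X_u^2]\Big].
\end{equation}

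To establish this last identity, I replay the calculation in equations \eqref{(31)}--\eqref{eq:Theorem 4.2 eq3} of the previous theorem, with the bounded $C^{1,k+1}$ function $F$ replaced by a polynomial. Using the Gaussian marginal $X_u\sim \mathcal N(0,f(u)^2)$ with $f(u) := \mathbb{E}[X_u^2]^{1/2}$, the change of variable $x\mapsto f(u)x$, and the chain rule for Riemann--Stieltjes integration with respect to $f$ (which is well defined because $f$ is of bounded variation under Assumption~\ref{Assumption}), I arrive at the integration by parts step $\int_{\mathbb R} x F'(f(u)x)\varphi(x)\mathrm dx = f(u)\int_{\mathbb R} F''(f(u)x)\varphi(x)\mathrm dx$ via $x\varphi(x)=-\varphi'(x)$.

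The one point that needs care, and is the main obstacle, is the repeated use of Fubini and of the boundary term in the integration by parts, since $F$, $F'$ and $F''$ are polynomials and hence unbounded. However, for any polynomial $Q$ and any bounded interval $[s,t]$, the map $(u,x)\mapsto Q(f(u)x)\varphi(x)$ is jointly measurable and integrable with respect to $\mathrm df(u)\otimes \mathrm dx$ because $u\mapsto f(u)$ is bounded on $[s,t]$, of finite total variation, and $|Q(f(u)x)|\varphi(x)$ is dominated by a polynomial in $|x|$ times a Gaussian density, whose integral is finite. The boundary terms $x^m\varphi(x)\to 0$ as $|x|\to\infty$ for every $m\in\mathbb N$, so the integration by parts is legitimate. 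With these standard justifications in place, the computation of \eqref{(31)}--\eqref{eq:Theorem 4.2 eq3} transfers verbatim to the polynomial case and yields the desired identity, concluding the proof.
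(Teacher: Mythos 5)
Your proposal is correct and follows the same approach the paper takes — indeed the paper's own proof of Theorem~\ref{thm:Unbiasedness for polynomials} is just the one-line remark that the argument for $\mathcal H_X^{\mathrm{Mar};b}$ carries over because polynomials are integrable against Gaussian densities, and your write-up simply makes this explicit. The chain — take an antiderivative $F$ of $P$, check controlledness via Example~\ref{Markovian Integrand} (pathwise, on the a.s.\ compact range of $X$), apply Theorem~\ref{Ito type formula} with $G^2=-\tfrac12\mathbb E[X_\cdot^2]$ and $G^i\equiv 0$ for $i\ge 3$, take expectations, and replay \eqref{(31)}--\eqref{eq:Theorem 4.2 eq3} with Fubini--Tonelli and the vanishing boundary terms justified by Gaussian decay — is exactly what the paper intends, and you correctly identify the only point that requires new justification relative to the bounded case.
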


\begin{remark}\label{rem:Unbiased GRP not for Sig}
    We make the point that the above unbiasedness properties are true really only for Markovian-type integrands. Indeed, assume the easiest non-Markovian (i.e., path-dependent) controlled path $X_{s,\cdot}$ for some $s>0$, defined by the reference path itself. Then we have by Proposition \ref{prop:Sig as CP}:
    \begin{equation}
        \mathbb{E}\Big[\int_s^t (X_{s,u}, 1,0,...,0) \mathrm{d}\XXX_u\Big] =\mathbb{E}[\XX^2_{s,t}]= \frac{1}{2}(\mathbb{E}[(X_t-X_s)^2] -\mathbb{E}[X_t^2] + \mathbb{E}[X_s^2]).
    \end{equation}
    It is straightforward to check that it is $0$ for all $s,t\in [0,T]$ if and only if $X$ has uncorrelated increments. If we additionally require that the underlying process $X$ is Gaussian, it means that $X$ has to be a martingale, i.e., a deterministic time-change of the standard Brownian motion. We shall make more precise such arguments in Section \ref{sec:Unbiasedness}.
\end{remark}


\section{One-dimensional \texorpdfstring{$\mathcal{H}\times \mathfrak{T}_T$}{H×T\_T}–Unbiased Rough Integrators}\label{sec:Unbiasedness}



We just constructed unbiased rough integrators (cf. Definition \ref{def:Unbiased RI}) with respect to Markovian-type controlled paths for 1-dimensional processes with Gaussian marginals via the Hermite polynomials. In this section, we aim to answer the inverse question: Given a 1-dimensional stochastic process, does there exist a rough path renormalisation such that it is an unbiased integrator for a large set of integrands $\mathcal{H}$? If so, what should the rough path perturbation be? As will be illustrated in Theorem \ref{RoughKYTheorem}, the answer to these questions leads to the answer to what rough markets can admit the No Controlled Free Lunch property for different classes of portfolios.

Again, we fix a filtered probability space $(\Omega, \mathcal{F}, (\mathcal{F}_t)_{t\in [0, T]}, \mathbb{P})$ where $\mathcal{F}= \mathcal{F}_T$. Assume moreover that $(\mathcal{F}_t)_{t\in [0, T]}$ satisfies the usual conditions of right-continuity and completeness. We consider the following class of random rough paths as potential noises in the price process.

\begin{definition}[1-dimensional rough noises]\label{def:rough noise}
    Let $\alpha\in(0,1]$, $k:=\lfloor \frac{1}{\alpha} \rfloor$ and $X$ a 1-dimensional centered stochastic process with $X_0=0$ almost surely. We call $\XXX:=(1, X,\XX^2,...,\XX^k)$ an $\alpha-$H\"older rough noise if the following conditions are satisfied:
    \begin{itemize}
        \item (Adapted RP Lift) $\XXX$ is an $(\mathcal{F}_t)_{t\in [0, T]}-$adapted rough path lift of the path $X$ (cf. Definition \ref{def:Rough Market Model});
        \item (Deterministic renormalisation) The renormalisation terms $G^2(\omega),... G^k(\omega)$ encoded in the rough path $\XXX(\omega)$, as in Proposition \ref{Rough Paths via Bell Polynomials}, are deterministic.
    \end{itemize}
    We call $\XXX$ renormalised by bounded variation paths if $G^2,... G^k$ are of bounded variation.
\end{definition}

\begin{remark}[Why deterministic renormalisations]\label{rem:whydetrenorm}
    It is reasonable to question the constraints of deterministic renormalisation terms, since general continuous martingales do not have deterministic quadratic variation. However, in the spirit of Dambis–Dubins–Schwarz theorem, continuous (local) martingales are essentially adapted time changes of a standard Brownian motion, where the time change is characterised by the quadratic variation clock. As will be illustrated in Subsection \ref{subsec:Random renorm}, similar clock construction via the renormalisation terms can be made for random rough paths with monotone adapted renormalisation terms. Thus, the assumption of deterministic renormalisations is essentially considered without loss of generality.
\end{remark}

Before moving forward to the classification of unbiased rough integrators with respect to different classes of integrands, we briefly explain why it suffices to consider the $1-$dimensional cases.
\begin{remark}[Why $d{=}1$ suffices]\label{rem:reduce-1d}

Let $\XXX:=(1,X, \XX^2,...,\XX^k)$ be a rough path in $\mathbb{R}^d$. For any $1$-dimensional projection $\nu\in (\mathbb{R}^d)^*$, it is then straightforward to check that 
\begin{equation}
    \XXX^{\nu}:=\Big(1, \langle X, \nu \rangle, \langle\XX^2, \nu\otimes\nu \rangle, ...,  \langle\XX^k, \nu^{\otimes^k} \rangle \Big)
\end{equation}
yields a rough path lift of the one-dimensional path projection $\langle X, \nu \rangle$. Consequently, if $\XXX$ is an unbiased rough integrator with respect to a sufficiently large class of integrands, then each of its $1-$dimensional projections has to be a $1-$dimensional rough integrator with respect to the projection of those integrands. Hence, our $d{=}1$ results give the necessary constraints for every linear view of the noise.

On the other hand, as will be revealed in Subsection \ref{subsec:HSig}, as the class of integrands turns sufficiently rich, the only admissible $1$-dimensional unbiased rough noise integrator must be, if not the same as, extremely close to the It\^o rough path lift of a standard Brownian motion, up to a deterministic time change. In the multivariate It\^o case, correlations only change the covariance matrix and do not affect the unbiasedness of It\^o integrals, so no extra constraints arise beyond the $d{=}1$ case. Thus, presenting the classification in one dimension entails no loss for our purposes.
\end{remark}


\subsection{Case 1: \texorpdfstring{$\mathcal{H}=\mathcal{H}^{\text{Pol}}$}{H = H (Pol)} and \texorpdfstring{$\mathfrak{T}_T=\{T\}$}{T\_T = \{T\}}} \label{subsec:HPol}

We fix a $1-$dimensional rough noise $\XXX$ in the rest of this section. Consider first the cone of all polynomial-induced controlled paths:
 \begin{equation}
        \mathcal{H}_{\XXX}^{\text{Pol}}:= \Big\{ \Big(P(X_t), DP(X_t),..., D^{(k-1)}P(X_t)\Big):  \; \forall P \;\text{polynomial}\Big\}.
    \end{equation}
We omit ${\mathfrak{T}}_T$ =$\{T\}$ in the notation in this subsection. We have the following characterisation of $\mathcal{H}^{\text{Pol}}-$unbiased rough noises, which can be viewed as the inverse result of the constructions of Gaussian-Hermite rough paths in Subsection \ref{subsec:The unbiased Gaussian rough integrator}.

\begin{theorem}[$\mathcal{H}^{\text{Pol}}-$unbiased rough integrator]\label{thm:Pol-unbiased rough integrators}
    Let $\XXX$ be a rough noise, renormalised by bounded variation paths $G^2,...,G^k$. Then, $\XXX$ is an $\mathcal{H}^{\mathrm{Pol}}-$unbiased rough integrator if and only if it is a Gaussian-Hermite rough path as in Definition \ref{Unbiased Rough Paths}.
\end{theorem}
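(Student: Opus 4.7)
The ``if'' direction is exactly Theorem \ref{thm:Unbiasedness for polynomials}. For the converse, the plan is to plug the one-parameter family $F(x)=x^n$ into the It\^o-type formula (Theorem \ref{Ito type formula}), use unbiasedness to annihilate the rough-integral expectation, assemble the resulting recursive moment relations into a linear ODE for the characteristic function of $X_t$, and then invoke Marcinkiewicz's theorem to force the log-characteristic-function to be a quadratic polynomial. This realises the ``connection between Marcinkiewicz's theorem and our combinatorial classification of one-dimensional rough paths'' announced after Proposition \ref{Rough Paths via Bell Polynomials}.

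First I would derive the moment relations. Since $D_x^{(i)}x^n=\tfrac{n!}{(n-i)!}x^{n-i}$, the It\^o formula yields
\begin{equation*}
X_t^n=\int_0^t\bigl(n X_u^{n-1},\,n(n-1) X_u^{n-2},\ldots\bigr)\,\mathrm{d}\XXX_u-\sum_{i=2}^{\min(k,n)}\frac{n!}{(n-i)!}\int_0^t X_u^{n-i}\,\mathrm{d}G^i_u.
\end{equation*}
The rough-integral integrand is precisely the controlled path induced by the polynomial $Q(x)=nx^{n-1}$, hence lies in $\mathcal{H}^{\mathrm{Pol}}$; by unbiasedness its expectation vanishes at the terminal time. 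Applying the same argument with $\XXX$ restricted to any sub-interval $[0,t]$ (where $\mathcal{H}^{\mathrm{Pol}}$ restricts canonically and the BV-renormalised rough-noise structure is preserved), together with a Fubini exchange justified by the BV nature of each $G^i$, gives, for $m_n(t):=\mathbb{E}[X_t^n]$ and every $t\in[0,T]$,
\begin{equation*}
m_n(t)=-\sum_{i=2}^{\min(k,n)}\frac{n!}{(n-i)!}\int_0^t m_{n-i}(u)\,\mathrm{d}G^i_u.
\end{equation*}

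Next I would multiply the $n$-th relation by $(\imath\lambda)^n/n!$ and sum over $n\ge 1$. A reindexing $n\mapsto m:=n-i$, mirroring the exponential-generating-function identity for the complete Bell polynomials used in Proposition \ref{Rough Paths via Bell Polynomials}, collapses the double sum into
\begin{equation*}
\phi(\lambda,t)-1=-\sum_{i=2}^{k}(\imath\lambda)^i\int_0^t\phi(\lambda,u)\,\mathrm{d}G^i_u,\qquad\phi(\lambda,t):=\mathbb{E}\bigl[e^{\imath\lambda X_t}\bigr].
\end{equation*}
Since each $G^i$ is BV, this is a linear scalar integral equation whose unique solution with $\phi(\lambda,0)=1$ is $\phi(\lambda,t)=\exp\bigl(-\sum_{i=2}^{k}(\imath\lambda)^i G^i_{0,t}\bigr)$. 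For each fixed $t$, $\lambda\mapsto\phi(\lambda,t)$ is the bona fide characteristic function of the real random variable $X_t$ and $\log\phi(\lambda,t)$ is a polynomial in $\lambda$ of degree at most $k$. Marcinkiewicz's theorem \cite{Mar39} then forces this degree to be at most $2$, so $G^i_{0,t}=0$ for every $i\ge 3$ and every $t$; the surviving quadratic term yields $X_t\sim\mathcal{N}(0,-2G^2_{0,t})$ and hence $G^2_t=-\tfrac12\mathbb{E}[X_t^2]$, matching the Hermite data of Definition \ref{Unbiased Rough Paths} exactly.

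The main obstacle I anticipate is justifying the power-series representation of $\phi$: this demands that all polynomial moments of $X_t$ exist and that the resulting series has positive (indeed infinite) radius of convergence. The plan is to bootstrap from the recursive moment formula itself: iterating it represents $m_n(t)$ as a Bell-polynomial expression in the bounded BV increments $\{G^i_{0,t}\}$, whose factorial growth makes $\phi$ entire in $\lambda$ on every compact $[0,T]$ and legitimises the Fubini-type interchange used to derive the ODE. A secondary issue is the passage from unbiasedness on $[0,T]$ to unbiasedness on each sub-interval $[0,t]$; this should be routine since polynomial-induced controlled paths restrict naturally and the BV-rough-noise structure is stable under restriction.
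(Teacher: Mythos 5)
Your proposal follows the paper's own route almost step for step: apply the It\^o formula of Theorem \ref{Ito type formula} to $x\mapsto x^n$, kill the rough integral by unbiasedness, assemble the moment recursion, pass to a generating transform, solve the resulting linear integral equation, and finish with Marcinkiewicz. The only cosmetic difference is that you work with the characteristic function $\phi(\lambda,t)=\mathbb{E}[e^{\imath\lambda X_t}]$ while the paper works with the moment generating function $\phi_t(u)=\mathbb{E}[e^{uX_t}]$ on a small real disc and then extends analytically; both are viable, and your signs (which keep the $-\sum_{i\ge 2}$ from Theorem \ref{Ito type formula}) are in fact cleaner than the paper's displayed computation, which drops it.

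Where the proposal is weaker than the paper is precisely the step you flag as ``the main obstacle''. First, the claim that the factorial-type moment bounds make $\phi$ \emph{entire} is incorrect: a bound of the form $|m_n|\le n!\,C^n$ only yields a radius of convergence of $1/C$ for $\sum m_n (\imath\lambda)^n/n!$, not entirety. That is enough for the argument (the paper deliberately works on the disc $|u|<1/C$ and continues analytically), but you should not assert entirety. Second, your bootstrap — ``iterate the recursion to get a Bell-polynomial expression for $m_n$'' — is circular as stated: to write down the recursion you already need $\mathbb{E}[|X_t|^n]<\infty$ and a Fubini interchange, which are what the bounds are supposed to deliver. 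The paper avoids the circularity in Lemma \ref{lemm:Uniform bounds on moments} by a carefully ordered induction: even absolute moments are bootstrapped using Fubini--Tonelli on nonnegative integrands $|X_s|^{n-j}\,|\mathrm dG^j(s)|$, and odd absolute moments are sandwiched via Cauchy--Schwarz between the adjacent even ones. You should establish (a version of) that lemma before invoking the series representation and the term-by-term interchange; as written your plan assumes the output to justify the input.

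A further point worth making explicit, though it is equally implicit in the paper, is the ``secondary issue'' you mention. The hypothesis only gives $\mathbb{E}\!\left[\int_0^T Y\,\mathrm d\XXX\right]=0$ for $Y\in\mathcal{H}^{\mathrm{Pol}}$, while the recursion is used at every intermediate $t$. Simply ``restricting to $[0,t]$'' does not produce an element of $\mathcal{H}^{\mathrm{Pol}}$ on $[0,T]$, so if you want unbiasedness at all $t$ you either need to argue it via time-dependent test polynomials $P(t,x)=p(t)x^n$ (varying $p$ with $p(T)=1$), or observe that the controlled path which equals $(P(X),\ldots)$ up to time $t$ and is then continued as a constant-in-$X$ piecewise controlled path is a legitimate member of the (piecewise) class. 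A short remark to this effect would tighten your write-up.
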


The ``if" part was already indicated in Theorem \ref{thm:Unbiasedness for polynomials}. For the ``only if" part, we first prove the following lemma for the moments of the underlying noise process $X$:
\begin{lemma}\label{lemm:Uniform bounds on moments}
    Let $\XXX$ be a rough noise, renormalised by bounded variation paths $G^2,...,G^k$. If $\XXX$ is an $\mathcal{H}^{\mathrm{Pol}}-$unbiased rough integrator, then there exists a constant $C< \infty$ such that
    \begin{equation}
        \sup_{t\in[0,T]} \mathbb{E}[|X_t|^n]\leq \frac{3+(-1)^{(n+1)}}{2}\cdot n! \cdot C^n\quad\quad\forall n\in \mathbb{N}.
    \end{equation}
    For example, one can choose $C:= \max(4,\sup_{j=2,...,k} TV(G^j_{[0,T]}))$.
\end{lemma}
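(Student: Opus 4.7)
The plan is to induct on $n$ using the It\^o-type formula (Theorem~\ref{Ito type formula}) applied to $F(x)=x^{n}/n$: its successive derivatives $F^{(i)}(x)=\tfrac{(n-1)!}{(n-i)!}\,x^{n-i}$ form precisely the polynomial controlled path generated by $P(x)=x^{n-1}\in\mathcal{H}^{\mathrm{Pol}}$. Applying the formula on $[0,T]$, taking expectation, and using $\mathcal{H}^{\mathrm{Pol}}$-unbiasedness at $\tau=T$ to annihilate the rough-integral contribution, one obtains the moment recursion
\begin{equation}\label{eq:prop-Lemma5.2-1}
\mathbb{E}[X_T^{n}] \;=\; -\sum_{i=2}^{k}\frac{n!}{(n-i)!}\,\mathbb{E}\!\int_0^T X_u^{n-i}\,\mathrm{d}G^i_u.
\end{equation}
Fubini together with $TV(G^i)\le C$ and the inductive hypothesis $\sup_u\mathbb{E}|X_u|^{n-i}\le 2\,(n-i)!\,C^{n-i}$ then bounds the right-hand side by the geometric sum $2\,n!\sum_{i=2}^{k} C^{n-i+1}\le \tfrac{2}{C-1}\,n!\,C^{n}$, which for $C\ge 4$ is at most $n!\,C^{n}$, reproducing the even case of the target bound.

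Next I would split the induction on parity. For even $n$, $X_T^{n}\ge 0$ gives $\mathbb{E}|X_T|^{n}=|\mathbb{E}[X_T^{n}]|$, so \eqref{eq:prop-Lemma5.2-1} delivers the claim directly. For odd $n$, Cauchy--Schwarz interpolates between the two even neighbours:
\begin{equation}
\mathbb{E}|X_t|^{n} \;\le\; \sqrt{\mathbb{E}|X_t|^{n-1}\,\mathbb{E}|X_t|^{n+1}} \;\le\; \sqrt{(n-1)!\,(n+1)!}\,C^{n} \;=\;\sqrt{1+\tfrac{1}{n}}\,n!\,C^{n}\;\le\;2\,n!\,C^{n},
\end{equation}
matching exactly the factor $2$ in the odd-case bound. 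The $\frac{3+(-1)^{n+1}}{2}$ prefactor in the statement is just the concise encoding of this parity dichotomy.

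The delicate step is promoting the estimate from $t=T$ to $\sup_{t\in[0,T]}$: since $\mathcal{H}^{\mathrm{Pol}}$-unbiasedness at $\mathfrak{T}_T=\{T\}$ centres the rough integral only at the terminal time, applying It\^o on a generic subinterval $[0,t]$ leaves a residual rough-integral expectation that is not a priori zero. To close the induction uniformly in $t$, I would expand each $\XX^i_{s,t}$ via the Bell-polynomial representation of Proposition~\ref{Rough Paths via Bell Polynomials} into products of $X_{s,t}$ and $G^j_{s,t}$ and combine this with a sewing-type pathwise estimate on the compensated Riemann sum defining $\int_0^t (X^{n-1},\ldots)\,\mathrm{d}\XXX$, so that its expectation reduces to finite-order moment products already bounded by the induction hypothesis. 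Tracking the Bell-polynomial combinatorics so that the factorial bookkeeping and the choice $C=\max(4,\sup_j TV(G^j))$ remain compatible at every $t\in[0,T]$, rather than only at the terminal time, is the principal technical obstacle of the proof.
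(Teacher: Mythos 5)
Your proof follows the paper's route exactly: apply the It\^o-type formula to $F(x)=x^{n}$, use $\mathcal{H}^{\text{Pol}}$-unbiasedness to annihilate the rough-integral expectation, induct on even $n$ via Fubini and the total-variation bound on the $G^{j}$, and interpolate odd moments by Cauchy--Schwarz; the arithmetic, including the geometric-sum bound and the choice $C\ge 4$, also matches. Where you diverge is in flagging a potential issue with uniformity over $t\in[0,T]$. That observation is fair: $\mathcal{H}^{\text{Pol}}$-unbiasedness with $\mathfrak{T}_T=\{T\}$ only gives $\mathbb{E}\bigl[\int_0^T(P(X_u),\dots)\,\mathrm{d}\XXX_u\bigr]=0$, yet the paper invokes the resulting moment recursion at arbitrary $t\le T$ without separately arguing that the rough-integral expectation also vanishes at $t<T$ (which is exactly what the recursion at $t$ requires). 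However, the repair you sketch via Bell-polynomial expansion of $\XX^{i}_{s,t}$ plus a sewing-type estimate is circular: the sewing bound on the compensated Riemann sum is controlled precisely by the moments you are trying to establish, so it cannot be used to close the induction. The paper handles this step tacitly (a reading consistent with Theorem 3.7, which in the forward direction gives unbiasedness on every subinterval); had you adopted the same reading you would have reproduced the paper's argument verbatim. In short: same approach, same calculations, a valid observation about a step left implicit in the paper, but no working resolution of it.
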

\begin{proof}
    We prove this fact by induction. For $n=0$, the inequality is trivially true.
    Now, assume that the inequality holds up to some even $n$. Denote by $C:=\sup_{j=2,...,k} TV(G^j_{[0,T]})$. We then compute for any $t\in [0,T]$ and the even number $n$:
    \begin{equation}
        \begin{aligned}
            \mathbb{E}[|X_t|^{n+2}]  =  \mathbb{E}[X_t^{n+2}] &= \mathbb{E}\Big[ \sum_{j=2}^{\min(n+2, k)}  \int_0^t \frac{(n+2)!}{(n+2-j)! } X_s^{n+2-j} \mathrm{d} G^j(s) \Big]\\
            & \leq  \sum_{j=2}^{\min(n+2, k)} \mathbb{E}\Big[\int_0^t \Big|\frac{(n+2)!}{(n+2-j)! } X_s^{n+2-j} \Big|\cdot \Big| \mathrm{d} G^j(s) \Big|\Big]\\
            & \leq \sum_{j=2}^{\min(n+2, k)} \int_0^t \frac{(n+2)!}{(n+2-j)! } \mathbb{E}[|X_s^{n+2-j}|]\cdot \Big| \mathrm{d} G^j(s) \Big|\\
            & \leq \sum_{j=2}^{\min(n+2, k)} \frac{(n+2)!} {(n+2-j)!}\cdot \sup_{t\in[0,T]} \mathbb{E}[|X_t|^{n+2-j}]\cdot \sup_{j=2,...,k} TV(G^j_{[0,T]})\\
            & \leq (n+2)! \sum_{j=2}^{\min(n+2, k)}  2\cdot C^{n+3-j}\leq(n+2)! \cdot C^{n+2},
        \end{aligned}
    \end{equation}
    where the first line is due to the It\^o formula in Theorem \ref{Ito type formula} and the assumption of $\mathcal{H}^{\mathrm{Pol}}-$unbiasedness. We used Fubini-Tonelli due to the non-negativity of the integrands in the third line. The last equality is due to the choice of $C\geq 4$. Now for the $(n+1)-$absolute moments, we use Cauchy-Schwarz to compute:
    \begin{equation}
        \begin{aligned}
            \mathbb{E}[|X_t|^{n+1}] & \leq \mathbb{E}[|X_t|^{n}]^{\frac{1}{2}}\cdot \mathbb{E}[|X_t|^{n+2}]^{\frac{1}{2}}\\
            & \leq \sqrt{n!(n+2)!C^{2n+2}} = (n+1)!\cdot\sqrt{\frac{n+2}{n+1}}\cdot C^{n+1} \leq 2(n+1)!\cdot C^{n+1} \quad\quad\forall t\in [0,T].
        \end{aligned}
    \end{equation}
    The desired inequality is thus proven for all $n\in \mathbb{N}$.
\end{proof}

Now we may prove Theorem \ref{thm:Pol-unbiased rough integrators}
\begin{proof}[Proof of Theorem \ref{thm:Pol-unbiased rough integrators}]
    We only need to prove the ``only if" part. The above lemma ensures the existence of the characteristic function and moment generating function of $X_t$ inside the disc of radius $\frac{1}{C}$ for any $t\in [0,T]$ in form:
    \begin{equation}\label{eq:Char Func}
        \phi_t(u):= \mathbb{E}[e^{u X_t}]=\sum_{n=0}^\infty \frac{\mathbb{E}[X_t^n]}{n!}\cdot u^n\quad\quad \forall |u|<\frac{1}{C}.
    \end{equation}
    Moreover, $\phi_t$ is analytic, and the series above converges absolutely inside that disc. Recall that the It\^o formula in Theorem \ref{Ito type formula} gives:
    \begin{equation}
    \begin{aligned}
        \mathbb{E}[X^n_t] &= \mathbb{E}\Big[ \sum_{j=2}^{\min(n, k)}  \int_0^t \frac{(n)!}{(n-j)! } X_s^{n-j} \mathrm{d} G^j(s) \Big]\\
        & =\sum_{j=2}^{\min(n, k)} \int_0^t \frac{1}{(n-j)! } \mathbb{E}[X_s^{n-j}]\mathrm{d}G^j(s)
        \quad\quad\forall t\in [0,T] \;\mathrm{and}\; n\in \mathbb{N},
    \end{aligned}
    \end{equation}
    where we used Fubini in the second line due to the uniform bound on the absolute moments in the previous lemma. Plugging in \eqref{eq:Char Func} for the moment generating function gives:
    \begin{equation}
        \begin{aligned}
            \phi_t(u) & = \sum_{n=0}^\infty u^n\cdot\Big(\sum_{j=2}^{\min(n, k)} \int_0^t \frac{1}{(n-j)! } \mathbb{E}[X_s^{n-j}]\mathrm{d}G^j(s)\Big)\\
            & = \sum_{i=0}^\infty u^i\cdot\Big(\sum_{j=2}^k \int_0^t \frac{1}{i!}\mathbb{E}[X_s^i]\cdot u^j\cdot\mathrm{d}G^j(s)\Big)\\
            & = \int_0^t \Big(\sum_{i=0}^\infty  \frac{u^i}{i!}\mathbb{E}[X_s^i]\Big)\mathrm{d} \Big( \sum_{j=2}^k G^j(s) u^j \Big)\\
            & = \int_0^t  \phi_s(u) \mathrm{d} \Big(\sum_{j=2}^k G^j(s) u^j \Big)\quad\quad \forall t\in [0,T]\;\mathrm{and}\; u\in(-\frac{1}{C}, \frac{1}{C}),
        \end{aligned}
    \end{equation}
    where the second and third equalities use Tonelli's theorem by the absolute convergence of the series and the uniform bound on the absolute moments in the previous lemma, and we also used the change in index $i:=n-j$ in the second equality. Since $G^j$'s are of bounded variation and $X_0\equiv 0$, the only solution to the above integral equation is given by:
\begin{equation}
    \phi_t(u) = \phi_0(u) \exp\Big(\sum_{j=2}^k G^j(t) u^j\Big) = \exp \Big(\sum_{j=2}^kG^j(t) u^j \Big) \quad\quad \forall t\in [0,T]\;\mathrm{and}\; u\in(-\frac{1}{C}, \frac{1}{C}).
\end{equation}
This then extends to the same function analytically on the complex disc of radius $\frac{1}{C}$. Recall that Marcinkiewicz's theorem (\cite{Mar39} Th\'eor\`eme 2) states that the exponential of a polynomial of degree strictly larger than $2$ cannot be the characteristic function of a valid probability distribution. Thus, we must have $G^j\equiv 0$ for $j=3,...,k$ and
\begin{equation}
    \phi_t(u) = \exp(G^2(t)\cdot u^2) \quad\quad \forall t\in [0,T]\;\mathrm{and}\; |u|<\frac{1}{C}.
\end{equation}
This implies that $X_t$ is normal, $G^2(t) = -\frac{1}{2}\cdot \mathbb{E}[X^2_t]$ and $G^j\equiv 0$ for $j\geq 3$. By the definition of the perturbation functions $G^2,...G^k$, this means exactly that the rough path $\XXX$ is the Gaussian-Hermite rough path associated with $X$ as constructed in Definition \ref{Unbiased Rough Paths}.
\end{proof}

We note that it is necessary to work with rough noise in the above theorem, i.e., only with deterministic renormalisation functions. The same arguments above will not apply for random rough paths with stochastic renormalisation terms, since Fubini-type theorems are then no longer applicable. General $\mathcal{H}^{\mathrm{Pol}}-$unbiased rough integrator with stochastic renormalisations can be succeeded by rough integrals against  $\mathcal{H}^{\mathrm{Pol}}-$unbiased rough noises or RDEs driven by $\mathcal{H}^{\mathrm{Pol}}-$unbiased rough noises. We give a simple example with the standard Brownian motion below.

\begin{example}
    We let $\mathbf{B}^{\text{It\^o}}$ be the It\^o rough path lift of a $1-$dimensional standard Brownian motion and let 
    \begin{equation}
        (X_t, X'_t):=\Big(\int_0^t (B_t,1)\mathrm{d}\mathbf{B}^{\text{It\^o}}, B_t \Big)\quad\quad\forall t\in [0,T]
    \end{equation}
    be the controlled path induced by the second-order increment in $\mathbf{B}^{\text{It\^o}}$ itself. Moreover, let $\XXX:=(X, \XX)$ be the rough path lift of the controlled path $(X_t, X'_t)$ via Corollary \ref{CPasRP}. Via Proposition \ref{prop:renormalisation succeeded}, the renormalisation term encoded in $\XXX$ is then given by
    \begin{equation}
        G_{\mathbf{X}}(t): = \int_0^t B^2_u \mathrm{d} G_{\mathbf{Z}}(u) = -\frac{1}{2} \int_0^t B^2_u \mathrm{d}u\quad\quad\forall t\in [0,T],
    \end{equation}
    non-deterministic, coinciding with the quadratic variation term. However, letting $P$ be any polynomial, we then have
    \begin{equation}
    \begin{aligned}
        \mathbb{E}\Big[\int_0^T (P(X_t), DP(X_t))\mathrm{d} \XXX_t \Big] & = \mathbb{E}\Big[\int_0^T (P(X_t), DP(X_t)\cdot B_t)\mathrm{d} (X_t, X'_t) \Big]\\
        & = \mathbb{E} \Big[\int_0^T (P(X_t)\cdot X_t, DP(X_t)\cdot B_t\cdot X_t + P(X_t)\cdot X'_t)\mathrm{d} \mathbf{B}^{\text{It\^o}} \Big]\\
        & = \mathbb{E}\Big[\int_0^T P(X_t)\cdot X_t \mathrm{d} B_t \Big] = 0,
    \end{aligned}
    \end{equation}
    where the first equality is due to Proposition \ref{Consistency}, the second is by Proposition \ref{Associativity}, and the last line is due to the fact that the rough integral against $\mathbf{B}^{\text{It\^o}}$ coincides with the classical It\^o calculus as well as the martingale property of classical It\^o calculus.
\end{example}

We make a final comment that so far, we cannot say anything about the increments of the underlying noise process $X$, since only Markovian-type integrands are involved. In particular, we can only conclude that $X$ has Gaussian marginals, but not yet that it has Gaussian increments.


\subsection{Case 2: \texorpdfstring{$\mathcal{H}=\operatorname{Span}\!\big(\mathcal{H}^{\mathrm{Pol}}\cup\mathcal{H}^{\mathrm{pSig}}\big)$}{H = Span(H (Pol) ∪ H (pSig))}and \texorpdfstring{$\mathfrak{T}_T=[0,T]$}{T\_T = [0,T]}}\label{subsec:HSig}

In this subsection, we enlarge the set of integrands considered for the unbiasedness of a rough integrator. We fix a 1-dimensional rough noise $\XXX:=(1,X,\XX^2,...\XX^k)$ in this subsection.

Recall from Proposition \ref{prop:Sig as CP} that $\XXX$ lift to a signature $\XXX^\infty:=(1,X,\XX^2,...)$ inductively via rough integrals, and that each signature component $\XX^n$ can again be viewed as a controlled path $\forall n\in \mathbb{N}$. We denote by $\prescript{n}{s}{Y}:=(\prescript{n}{s}{Y}^{(1)},...,\prescript{n}{s}{Y}^{(k)})$ as the piecewise controlled path induced by the path $\prescript{n}{s}{Y}^{(1)}_t: = \XX^n_{s,t}\cdot \mathbbm{1}_{t\geq s}$ as in Proposition \ref{prop:Sig as CP} and introduce the following piecewise controlled paths:
\begin{equation}
    \mathcal{H}^{\mathrm{pSig}}_{\XXX}:=\{\prescript{n}{s}{Y}: \; \forall s\in [0,T] \text{ and }\forall n\in \mathbb{N}\}.
\end{equation}

\begin{remark}
    The controlled path $\prescript{n}{s}{Y}$ can be made valid without the indicator function $\mathbbm{1}_{t\geq s}$. However, doing so will require the information of $X_s$ to know $\prescript{n}{s}{Y}_t$ for even $t<s$. In other words, the indicator function is necessary to ensure that $\prescript{n}{s}{Y}$ is adapted with respect to the underlying filtration. On the other hand, by restricting $n\neq 0$, this type of piecewise controlled path now allows minimal use of information about the path history (the historical state $X_s$) in a time subinterval, but without a jump.
\end{remark}

\begin{remark}
    Note that the current setting is really an extension of the previous subsection. Recall that in the $1-$dimensional case, the signature components are essentially polynomials of the underlying path increments with some renormalisation terms. In other words, the set of integrands is enlarged from polynomials of the current state and time to Hermite polynomials of path increments and renormalisation increments in any time subinterval, in which path independence is introduced.
\end{remark}

We now let
\begin{equation}
    \mathcal{H}:=\text{Span}( \mathcal{H}^{\text{Pol}} \cup \mathcal{H}^{\text{pSig}}) \quad\quad \text{and} \quad\quad {\mathfrak{T}}_T:=[0,T],
\end{equation}
and classify the $\mathcal{H}\times  {\mathfrak{T}}_T-$ unbiased rough integrators in this subsection. As discussed in Remark \ref{rem:Unbiased GRP not for Sig}, it is immediate that not all Gaussian-Hermite rough paths can remain unbiased with respect to $\mathcal{H}\times  {\mathfrak{T}}_T$. Indeed, such unbiased rough integrators must be very close to the Gaussian-Hermite lift of a deterministic time change of a standard Brownian motion.

\begin{definition}[Chen-Hermite almost Brownian motion]\label{def:CHABM}
    Let $(X_t)_{t\in[0,T]}$ be a $1-$dimensional stochastic process with $X_0=0$. We call $(X_t)_{t\in[0,T]}$ a Chen-Hermite almost Brownian motion if the following holds:
    \begin{itemize}
        \item (Brownian increments) $X_{s,t}\sim \mathcal{N}(0,|t-s|)$ for all $s,t \in [0,T]$;
        \item (Chen-Hermite balancing condition) $\sum_{i=1}^{n-1} \mathbb{E}[H_i(X_{s,u}, -\frac{1}{2}(u-s))\cdot H_{n-i}(X_{u,t}, -\frac{1}{2}(t-u))]=0$ for all $0\leq s\leq u\leq t\leq T$ and all $n\geq 2.$
    \end{itemize}
\end{definition}

\begin{remark}\label{rem:CHABM vs BM}
    It is obvious that a standard Brownian motion is a Chen-Hermite almost Brownian motion due to independent increments. Conversely, the Chen-Hermite balancing condition at level $n=2$ means essentially that the process has uncorrelated adjacent increments, so only joint Gaussianity of the increments is missing. We are not able to prove in this paper, but we conjecture that the Chen-Hermite balancing condition is strong enough to recover joint Gaussianity. We shall make some more comments on the Chen-Hermite almost Brownian motion in Appendix \ref{subsubsec:CHABM}.
\end{remark}

\begin{conjecture}\label{conjecture}
    Any Chen-Hermite almost Brownian motion is a $1-$dimensional standard Brownian motion.
\end{conjecture}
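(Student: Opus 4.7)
The plan is to promote the two hypotheses of a Chen--Hermite almost Brownian motion into joint Gaussianity of every finite-dimensional marginal of $X$; once this is in hand, almost-sure continuity of paths together with $X_0=0$ and the Brownian variance structure identifies $X$ in law with a standard Brownian motion. The overall line of attack is a moment method: use the Chen--Hermite balancing identities, applied simultaneously on all refinements of $[0,T]$, to compute every mixed moment of disjoint increments, show that these moments coincide with the Wick-rule values of independent centred Gaussians, and then invoke Carleman's moment-determinacy criterion, whose hypothesis is supplied uniformly in $t$ by an analogue of Lemma \ref{lemm:Uniform bounds on moments} since the marginals are already Gaussian.

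Concretely, I would proceed as follows. The level-$n=2$ balancing applied to an arbitrary triple $s\le u\le t$ gives $\mathbb{E}[X_{s,u}X_{u,t}]=0$, and combining two such identities on overlapping triples yields $\mathbb{E}[X_{a,b}X_{c,d}]=0$ on any disjoint pair of intervals, so all increments are mutually uncorrelated. Arguing then by induction on the total order $N=\sum_j\alpha_j$, I would compute $\mathbb{E}\bigl[\prod_{j=1}^m X_{s_j,t_j}^{\alpha_j}\bigr]$ on disjoint $[s_j,t_j]$ by expressing each factor as a polynomial in $H_1(X_{s_j,t_j},-\tfrac12(t_j-s_j)),\ldots,H_{\alpha_j}(X_{s_j,t_j},-\tfrac12(t_j-s_j))$, inserting an intermediate point inside some $[s_j,t_j]$, invoking the level-$N$ balancing on that sub-triple, and thereby reducing the moment to a linear combination of strictly lower order mixed moments; the induction hypothesis together with the marginal Gaussianity of each $X_{s_j,t_j}$ would then force the value to coincide with the Gaussian one. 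Moment-determinacy finishes the proof.

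The main obstacle is precisely this reduction step. The Chen--Hermite balancing at level $n$ on a triple $(s,u,t)$ is only a \emph{scalar} identity $\sum_{i=1}^{n-1}\mathbb{E}[H_i(X_{s,u},-\tfrac12(u{-}s))H_{n-i}(X_{u,t},-\tfrac12(t{-}u))]=0$, whereas the Brownian chaos satisfies the much stronger \emph{individual} Wick orthogonalities, one for each $i\in\{1,\dots,n-1\}$. Bootstrapping the single scalar identity into the individual ones is exactly where the conjecture is open. A natural attempt is to treat each $F_i(u):=\mathbb{E}[H_i(X_{s,u},-\tfrac12(u{-}s))H_{n-i}(X_{u,t},-\tfrac12(t{-}u))]$ as a function of the intermediate point $u$ and compare the balancing on $(s,u,t)$ with its refinements on $(s,u',u,t)$ and $(s,u,u'',t)$ so as to peel off one summand at a time.

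Equivalently, one may reformulate the whole balancing family in generating-function form as $\mathbb{E}[e^{\lambda(X_{s,u}+X_{u,t})-\lambda^2(t-s)/2}]=1$ for all $\lambda\in\mathbb{R}$ and all $s\le u\le t$, and ask whether this continuous family of \emph{diagonal} identities can be analytically upgraded to the bivariate independence identity $\mathbb{E}[e^{\lambda X_{s,u}+\mu X_{u,t}-\lambda^2(u-s)/2-\mu^2(t-u)/2}]=1$ for all $\lambda,\mu$. Demonstrating that the freedom in varying $(s,u,t)$ and in further refining the partition is enough to separate the two Wick variables $\lambda$ and $\mu$, and hence to deduce independence of adjacent increments, is in my view the substantive technical core of the conjecture and the reason why the authors leave it open.
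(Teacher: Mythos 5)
The statement you were asked to prove is labelled a \emph{conjecture} in the paper: the authors explicitly state (Remark~\ref{rem:CHABM vs BM} and the discussion around Definition~\ref{def:CHABM}) that they are ``not able to prove'' that the Chen--Hermite balancing condition forces joint Gaussianity of increments, and they supply a discrete counterexample (Example following Conjecture~\ref{conjecture}) showing the implication fails without path continuity. So there is no paper proof to compare against, and your write-up, quite rightly, does not claim to be one.

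What you \emph{do} get right is the diagnosis. Your observation that the level-$2$ balancing plus telescoping across overlapping triples gives $\mathbb{E}[X_{a,b}X_{c,d}]=0$ for all disjoint intervals is correct, as is the generating-function reformulation: summing the balancing over all $n$ (and using marginal Gaussianity to kill the $i=0,n$ boundary terms) is equivalent to the single ``diagonal'' Wick identity $\mathbb{E}\bigl[\exp\bigl(\lambda(X_{s,u}+X_{u,t})-\tfrac{\lambda^2}{2}(t-s)\bigr)\bigr]=1$ for all $\lambda$ and all triples $s\le u\le t$. You also correctly isolate the obstruction: the balancing is a \emph{scalar} constraint (one equation per triple and per $n$, or equivalently a one-parameter family $\lambda\mapsto\lambda$), while independence of $X_{s,u}$ and $X_{u,t}$ needs the \emph{two}-parameter identity in $(\lambda,\mu)$; equivalently it needs each term $\mathbb{E}[H_i\cdot H_{n-i}]$ to vanish individually, not just the sum over $i$. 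Your moment-induction ``reduce by inserting an intermediate point'' scheme runs into precisely this wall: the balancing on a sub-triple only ever yields a new scalar relation among mixed moments, never enough relations to pin each mixed moment to its Gaussian value, and nothing in the induction step manufactures the missing degrees of freedom. This is exactly where the paper leaves the problem, and your closing paragraphs acknowledge it.

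Two remarks worth adding. First, any successful argument must use continuity of the sample paths in an essential way: the Sarmanov-type construction in the paper produces a three-point process with Gaussian increments satisfying the balancing at \emph{every} level that is nonetheless not jointly Gaussian, so partition refinement alone---the mechanism you lean on to ``peel off one summand at a time''---cannot possibly suffice in the absence of a continuity input. Second, the Carleman/moment-determinacy step you invoke is fine in principle (the uniform factorial moment bounds from Lemma~\ref{lemm:Uniform bounds on moments} make the one-dimensional marginals determinate, and joint Gaussianity would follow once all mixed moments match the Wick values), but it is downstream of the real gap; it does not help you produce the individual Wick orthogonalities. In short: your analysis is sound and matches the authors' own assessment, but, as you yourself say, it does not constitute a proof of the conjecture.
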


By allowing signature-induced piecewise controlled integrands, we have the following classification of $\mathcal{H} \times {\mathfrak{T}}_T$-unbiased rough integrators.

\begin{theorem}[$\mathcal{H} \times {\mathfrak{T}}_T$-unbiased rough integrator]\label{Thm:Main Theorem}
    Let $\XXX$ be a rough noise, renormalised by bounded variation paths $G^2,...,G^k$. Then, $\XXX$ is an $\mathcal{H} \times {\mathfrak{T}}_T$-unbiased rough integrator if and only if it is the Hermite rough path lift of a deterministic time change of a Chen-Hermite almost Brownian motion.
\end{theorem}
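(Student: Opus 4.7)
The proof tightens the admissible class of rough noises in two stages, exploiting the nested structure $\mathcal{H}^{\mathrm{Pol}} \subset \mathcal{H}$: polynomial-type test integrands pin down the marginal law via Theorem \ref{thm:Pol-unbiased rough integrators}, and the signature-type controlled paths $\prescript{n}{s}{Y}$ then extract the increment law together with a balancing condition across time.

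\textbf{Forward direction.} Since $T \in \mathfrak{T}_T$ and $\mathcal{H}^{\mathrm{Pol}} \subset \mathcal{H}$, the hypothesis implies that $\XXX$ is $\mathcal{H}^{\mathrm{Pol}}$-unbiased, so Theorem \ref{thm:Pol-unbiased rough integrators} forces $\XXX$ to be a Gaussian-Hermite rough path: $G^j \equiv 0$ for $3 \leq j \leq k$, $X_t \sim \mathcal{N}(0, V(t))$ with $V(t) := -2 G^2(t)$, and $\XX^n_{s,t} = H_n(X_{s,t}, G^2_{s,t})$. Next, for each $n \geq 1$ and $0 \leq s \leq t \leq T$, the piecewise controlled path $\prescript{n}{s}{Y}$ lies in $\mathcal{H}^{\mathrm{pSig}} \subset \mathcal{H}$ and $t \in \mathfrak{T}_T$, so by Proposition \ref{prop:Sig as CP},
\begin{equation*}
0 \;=\; \mathbb{E}\Big[\int_0^t \prescript{n}{s}{Y}_u\, \mathrm{d}\XXX_u\Big] \;=\; \mathbb{E}[\XX^{n+1}_{s,t}] \;=\; \mathbb{E}\bigl[H_{n+1}(X_{s,t}, G^2_{s,t})\bigr].
\end{equation*}

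To convert this scalar identity into a distributional statement about $X_{s,t}$, I would first upgrade Lemma \ref{lemm:Uniform bounds on moments} to a uniform increment bound $\mathbb{E}[|X_{s,t}|^n] \lesssim n!\,C^n$ (combining the marginal bound with $|X_{s,t}| \leq |X_s|+|X_t|$), which ensures that the moment generating function of $X_{s,t}$ converges on a neighbourhood of the origin. Applying the generating identity $\sum_n H_n(a,b) z^n = \exp(az + b z^2)$ and exchanging expectation with summation then yields $\mathbb{E}[e^{z X_{s,t}}] = \exp(\tfrac{1}{2} z^2 (V(t) - V(s)))$ for small $|z|$; this simultaneously forces $V$ to be non-decreasing and identifies $X_{s,t} \sim \mathcal{N}(0, V(t)-V(s))$. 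The Chen-Hermite balancing identity then falls out of Chen's relation \eqref{Chen's relation}: taking expectations of $\XX^n_{s,t} = \sum_{i=0}^n \XX^i_{s,u} \XX^{n-i}_{u,t}$ and invoking $\mathbb{E}[\XX^m_{\cdot,\cdot}] = 0$ for $m \geq 1$ kills the $i=0,n$ boundary terms and leaves $\sum_{i=1}^{n-1} \mathbb{E}[H_i(X_{s,u}, G^2_{s,u}) H_{n-i}(X_{u,t}, G^2_{u,t})] = 0$. Since $V$ is continuous, non-decreasing, with $V(0)=0$, setting $\phi := V$ and $B_r := X_{V^{-1}(r)}$ (with $V^{-1}(r) := \inf\{t : V(t) > r\}$; on plateaux of $V$ the process $X$ is a.s.\ constant, so the definition is unambiguous) produces a Chen-Hermite almost Brownian motion on $[0, V(T)]$ with $X = B \circ \phi$, so $\XXX$ is its Hermite lift.

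\textbf{Converse.} Assume $X_t = B_{\phi(t)}$ for a Chen-Hermite almost Brownian motion $B$ and a continuous non-decreasing $\phi$, and let $\XXX$ be the Hermite lift of $X$. For $Y \in \mathcal{H}^{\mathrm{Pol}}$ and any $t \in [0, T]$, the proof of Theorem \ref{thm:Unbiasedness for polynomials}, which only relies on $X_t \sim \mathcal{N}(0, \phi(t))$, applies verbatim with $T$ replaced by $t$. For $Y = \prescript{n}{s}{Y} \in \mathcal{H}^{\mathrm{pSig}}$, Proposition \ref{prop:Sig as CP} gives $\int_0^t \prescript{n}{s}{Y}_u\, \mathrm{d}\XXX_u = \XX^{n+1}_{s, t \vee s} = H_{n+1}(X_{s, t \vee s}, G^2_{s, t \vee s})$, whose expectation vanishes by Hermite orthogonality against $\mathcal{N}(0, \phi(t\vee s) - \phi(s))$. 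Linearity of expectation and of the rough integral then extends unbiasedness to the whole span $\mathcal{H} = \mathrm{Span}(\mathcal{H}^{\mathrm{Pol}} \cup \mathcal{H}^{\mathrm{pSig}})$.

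\textbf{Main obstacle.} The technically delicate step is the Gaussian increment extraction in the forward direction: one must produce uniform moment bounds on $X_{s,t}$ (not merely on the marginals $X_t$) sharp enough to justify the Fubini exchange in the MGF argument, and ensure that the conclusion also covers the degenerate case $V(t) = V(s)$, where the MGF identifies a point mass consistent with $X$ being a.s.\ constant on plateaux of $V$. The time-change bookkeeping at such plateaux is a subordinate issue, rendered harmless by precisely that constancy.
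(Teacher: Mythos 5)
Your proof follows essentially the same two-stage structure as the paper's (first pin down the marginal law via Theorem \ref{thm:Pol-unbiased rough integrators}, then use the signature test integrands plus a generating-function argument to obtain Brownian increments and the balancing identities), and the main ideas are correct. There is, however, one step you repeatedly lean on without justification: the identity $\XX^{n+1}_{s,t} = H_{n+1}(X_{s,t}, G^2_{s,t})$ for $n+1>k$. Theorem \ref{thm:Pol-unbiased rough integrators} only gives the Hermite form of $\XX^i$ for $i\le k$; for higher levels, $\XX^{n+1}_{s,t}$ is \emph{defined} as a rough integral via Proposition \ref{prop:Sig as CP}, and it is not automatic that this iterated integral coincides with $H_{n+1}$ applied to $(X_{s,t},G^2_{s,t})$. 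The paper closes this gap with a short induction (``a pure rough path tautology''), showing by Proposition \ref{Rough Paths via Bell Polynomials} applied at resolution $1/n$ that the compensated Riemann sums defining $\XX^{n+1}$ telescope exactly to $H_{n+1}$. Without this, your key chain $\mathbb{E}\bigl[\int_0^t \prescript{n}{s}{Y}_u\,\mathrm{d}\XXX_u\bigr]=\mathbb{E}[\XX^{n+1}_{s,t}]=\mathbb{E}[H_{n+1}(X_{s,t},G^2_{s,t})]$ only holds for $n+1\le k$, which is not enough to identify the increment MGF or to get the balancing condition at all levels. The same issue affects your use of Chen's relation $\XX^n_{s,t}=\sum_i \XX^i_{s,u}\XX^{n-i}_{u,t}$ for $n>k$: Definition \ref{Rough Paths} only guarantees Chen up to level $k$; for $n>k$ it must be derived (which again follows once the Hermite formula is established at all levels, via \eqref{eq:Bell to Chen}). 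The converse direction has the parallel gap.

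On the plus side, you obtain monotonicity of $V$ slightly more directly than the paper does: $\mathbb{E}[e^{zX_{s,t}}]\ge 1$ by Jensen forces $V(t)\ge V(s)$ immediately from the MGF identity, whereas the paper routes through the level-$2$ balancing condition (uncorrelated adjacent increments) to reach the same conclusion. Both are correct; yours is the shorter deduction. Your explicit upgrade of Lemma \ref{lemm:Uniform bounds on moments} to increment moments via $|X_{s,t}|\le|X_s|+|X_t|$ is a useful clarification of a step the paper states only implicitly.
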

\begin{proof}
    We first show the ``only if" part. From Theorem \ref{thm:Pol-unbiased rough integrators}, we already know that $\XXX$ must be an unbiased Gaussian rough integrator. In other words, $X$ is centered with Gaussian marginals and 
    \begin{equation}\label{eq:Hermite for n small}
    \begin{aligned}
        G_{s,t} &:= -\frac{1}{2} (\mathbb{E}[X^2_t]-\mathbb{E}[X^2_s]) \quad\quad\forall s,t\in [0,T];\\
        \XX^n_{s,t} &:= H_n(X_{s,t}, G_{s,t})\quad\quad \forall n\leq k.
    \end{aligned}
    \end{equation}

    We first show that \eqref{eq:Hermite for n small} holds also for $n>k$. We show this by induction, and this is just a pure rough path tautology. Assume \eqref{eq:Hermite for n small} holds up to level $n-1$. Since $n>k$, then we may define \begin{equation}
        \XXX^n_{s,t}:=(1,X_{s,t}, H_2(X_{s,t}, G_{s,t}),...,H_k(X_{s,t}, G_{s,t}),..., H_n(X_{s,t}, G_{s,t})) \quad\quad\forall (s,t)\in \Delta_{[0,T]}
    \end{equation}
    as a well-defined $\frac{1}{n}-$H\"older continuous rough path by Proposition \ref{Rough Paths via Bell Polynomials}. We then have:
    \begin{equation}\label{eq:signatureidentity}
        \begin{aligned}
            H_n(X_{s,t}, G_{s,t}) &=\int_s^t \Big(H_{n-1}(X_{s,u}, G_{s,u}), H_{n-2}(X_{s,u}, G_{s,u}),...,1
\Big)\mathrm{d}\XXX^n_u\\
            &= \lim_{|\mathcal{P}|\downarrow0}\sum_{[u,v]\in \mathcal{P}}\Big(\sum_{i=1}^k H_{n-i}(X_{s,u}, G_{s,u})\cdot\XX^i_{u,v} + \sum_{i=k+1}^n  H_{n-i}(X_{s,u}, G_{s,u})\cdot H_{i}(X_{u,v}, G_{u,v}) \Big)\\
            & = \lim_{|\mathcal{P}|\downarrow0}\sum_{[u,v]\in \mathcal{P}}\sum_{i=1}^k H_{n-i}(X_{s,u}, G_{s,u})\cdot\XX^i_{u,v}\\
            & = \lim_{|\mathcal{P}|\downarrow0}\sum_{[u,v]\in \mathcal{P}}\sum_{i=1}^k \XX^{n-i}_{s,u}\cdot\XX^i_{u,v}\\
            & = \int_s^t (\XX^{n-1}_{s,u},...,\XX^{n-k}_{s,u})\mathrm{d}\XXX^n_u = \XX^n_{s,t}\quad\quad\quad\quad\forall (s,t)\in \Delta_{[0,T]},
        \end{aligned}
    \end{equation}
    where we used Proposition \ref{prop:Sig as CP} in the first and last equality. The second and fifth equalities are definitions of the corresponding rough integrals. The third equality is due to the fact that $H_i(X_{s,t}, G_{s,t})$ is $i\alpha-$H\"older continuous with $i\alpha>1$ for $i> k$, and the fourth equality is by the induction assumption.

    Now we prove that the underlying noise process $X$ has Gaussian increments. By the definition of Bell's complete polynomials and the characterization of $\XX^n$ above, we have for all $s,t\in [0,T]$ and for $z\in \mathbb{R}$ small
    \begin{equation}\label{eq:exponential of increments}
        \exp(X_{s,t}\cdot z + G_{s,t}\cdot z^2)= \sum_{n=0}^\infty \frac{\XX^n_{s,t}}{n!}\cdot z^n.
    \end{equation}
    By the Gaussian nature of $X_t$ and $X_s$ and the uniform bounds on the moments in Lemma \ref{lemm:Uniform bounds on moments}, taking expectation on both sides of \eqref{eq:exponential of increments} and an exchange of sum and expectation by Fubini-Tonelli gives:
    \begin{equation}
        \begin{aligned}
            \mathbb{E}[\exp(X_{s,t}\cdot z + G_{s,t}\cdot z^2)] &= \sum_{n=0}^\infty \frac{\mathbb{E}[\XX^n_{s,t}]}{n!}\cdot z^n \\
            & =1+ \mathbb{E}[X_{s,t}]\cdot z + \sum_{n=2}^\infty \frac{1}{n!}\cdot z^n \cdot \mathbb{E}\Big[\int_0^t\prescript{n-1}{s}{Y}_u \mathrm{d}\XXX_u\big] \\&=1 \quad\quad \forall (s,t)\in \Delta_{[0,T]} \text{ and } \forall z \text{ small},
        \end{aligned}
    \end{equation}
    where we used the $\mathcal{H}\times{\mathfrak{T}}_T$-unbiasedness of $\XXX$ and that $X$ is centered in the last line, and the identity $\XX^n_{s,t} = \int_0^t\prescript{n-1}{s}{Y}_u \mathrm{d}\XXX_u$ due to Proposition \ref{prop:Sig as CP}. Thus, the moment generating function of $X_{s,t}$ is given by $M_{s,t}(z)=\exp (-G_{s,t}\cdot z^2)$ for $z$ small. Thus, we have
    $X_{s,t}\sim \mathcal{N}(0, -2G_{s,t})$ for all $s,t\in [0,T]$. 

    Now, taking expectation on the level-$n$ Chen's relation, we have
    \begin{equation}
    \begin{aligned}
         0 = \mathbb{E}[\int_0^t \prescript{n-1}{s}{Y}_v \mathrm{d} \XXX_v] &= \mathbb{E}[\XX^{n}_{s,t}]\\
         & = \mathbb{E}[\XX^{n}_{s,u}] + \mathbb{E}[\XX^{n}_{u,t}] + \sum_{i=1}^{n-1}\mathbb{E}[\XX^i_{s,u}\cdot \XX^{n-i}_{u,t}]\\
         & = \mathbb{E}\Big[\int_0^u \prescript{n-1}{s}{Y}_v\mathrm{d} \XXX_v\Big] + \mathbb{E}\Big[\int_0^t \prescript{n-1}{u}{Y}_v \mathrm{d} \XXX_v \Big] + \sum_{i=1}^{n-1}\mathbb{E}[\XX^i_{s,u}\cdot \XX^{n-i}_{u,t}]\\
         &=  \sum_{i=1}^{n-1}\mathbb{E}[\XX^i_{s,u}\cdot \XX^{n-i}_{u,t}] \\
         & = \sum_{i=1}^{n-1} \mathbb{E}\Big[H_i(X_{s,u}, -\frac{1}{2}(u-s)) \cdot H_{n-i}(X_{u,t}, -\frac{1}{2}(t-u))\Big] \quad\quad\quad\quad\forall s,u,t\in [0,T],
    \end{aligned}
    \end{equation}
    which then recovers the desired Chen-Hermite balancing condition for $\XXX$. In particular, the level-$2$ Chen-Hermite balancing condition indicates that $X$ has uncorrelated adjacent increments. Thus, we have the monotonicity of the variance function:
    \begin{equation}\label{eq:variancemonotone}
        \text{Var}(X_{t}) = \text{Var}(X_{0,s})+ \text{Var}(X_{s,t})\geq  \text{Var}(X_{s}) \quad\forall (s,t)\in \Delta_{[0,T]}.
    \end{equation}
    Let $\sigma(t): = \inf \{ s: \text{Var}(X_s)>t\}$  be the inverse function and define $\tilde{X}_t:=X_{\sigma(t)}$ as a new process with Gaussian increments. It is then straightforward to check that $(\tilde{X}_t)_{t\in [\sigma(0), \sigma(T)]}$ is a Chen-Hermite almost Brownian motion from the properties of $X$ proven above. Together with 
    \begin{equation}
        \tilde{G}_t:= -\frac{1}{2} \mathrm{Var}(\tilde{X}_t) = -\frac{1}{2} \mathrm{Var}(X_{\sigma(t)})\quad\quad\forall t \in [0,T],
    \end{equation}
    we write:
    \begin{equation}
        \tilde{\XXX}_{s,t}:= (1,\tilde{X}_{s,t},..., H_k(\tilde{X}_{s,t}, \tilde{G}_{s,t}))\quad\quad\forall (s,t)\in \Delta_{[0,T]},
    \end{equation}
    as the associated Gaussian-Hermite rough paths. It is then straightforward to check that    $\XXX_{s,t}=\tilde{\XXX}_{\text{Var}(X_s),\text{Var}(X_t)}$ is a deterministic time change of the Gaussian-Hermite rough path lift of $\tilde{\XXX}.$

    The``if" part is simply due to the proven fact that 
    \begin{equation}
        \mathbb{E}\Big[\int_0^t \prescript{n}{s}{Y}_u \mathrm{d} \XXX_u\Big] = \mathbb{E} [H_{n+1}(X_{s,t}, G_{s,t})]\quad\quad \forall (s,t)\in \Delta_{[0,T]},
    \end{equation}
    and that Hermite polynomials are orthogonal with respect to Gaussian distributions.
\end{proof}

As discussed in Remark \ref{rem:CHABM vs BM}, if we enforce the Gaussian increments requirement, which might be stronger than the Chen-Hermite balancing condition, then we have immediately:
\begin{corollary}\label{cor:joint Gaussian increments then Bm}
    Let $\XXX$ be as above. If, in addition, the joint increments of $X$ are Gaussian, then $\XXX$ must be the It\^o rough path lift of a deterministic time change of a standard Brownian motion.
\end{corollary}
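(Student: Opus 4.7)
My plan is to reduce the corollary to Theorem~\ref{Thm:Main Theorem} by upgrading the Chen--Hermite almost Brownian motion therein to a bona fide Brownian motion once joint Gaussianity of the increments is assumed. The only extra ingredient needed is that jointly Gaussian vectors are mutually independent as soon as they are pairwise uncorrelated, together with a bilinear/telescoping argument that boosts the level--$2$ Chen--Hermite balancing condition from \emph{adjacent} increments to \emph{arbitrary disjoint} increments on a common partition.

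First, I invoke Theorem~\ref{Thm:Main Theorem} to obtain, after a deterministic time change $\sigma$ (which commutes with the Hermite lift and with the It\^o lift), a process $\tilde X$ with $\operatorname{Var}(\tilde X_t)=t$ satisfying $\tilde X_{s,t}\sim\mathcal N(0,t-s)$, the Chen--Hermite balancing identities at every level $n\ge 2$, and almost surely continuous sample paths. Without loss of generality I work directly with $\tilde X$ and show it is a standard Brownian motion. The level--$2$ balancing reads $\mathbb E[\tilde X_{s,u}\tilde X_{u,t}]=0$ for every $s\le u\le t$, i.e.\ adjacent increments are uncorrelated.

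Second, I extend this to any pair of disjoint increments on a common grid. Given $s<u<v<w$ in $[0,T]$, bilinearity and additivity of increments give
\begin{equation}
\operatorname{Cov}(\tilde X_{s,u},\tilde X_{u,w})=\operatorname{Cov}(\tilde X_{s,u},\tilde X_{u,v})+\operatorname{Cov}(\tilde X_{s,u},\tilde X_{v,w}),
\end{equation}
and the first two covariances vanish by the adjacent-increment identity applied to the triples $(s,u,w)$ and $(s,u,v)$, yielding $\operatorname{Cov}(\tilde X_{s,u},\tilde X_{v,w})=0$. Iterating this over any finite partition $0\le t_0<t_1<\cdots<t_n\le T$, all pairs $(\tilde X_{t_{i-1},t_i},\tilde X_{t_{j-1},t_j})$ with $i\ne j$ are uncorrelated. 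By the extra joint Gaussianity hypothesis, the centred Gaussian vector $(\tilde X_{t_{i-1},t_i})_{i=1,\dots,n}$ has diagonal covariance matrix, hence its components are mutually independent. Since this holds for every partition, $\tilde X$ is a continuous process with stationary, mutually independent $\mathcal N(0,t-s)$ increments and $\tilde X_0=0$, i.e.\ a standard Brownian motion.

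Third, I match the rough-path lifts. By the first part of the proof of Theorem~\ref{Thm:Main Theorem}, for all $n\ge 1$ one has $\XX^n_{s,t}=H_n(X_{s,t},G_{s,t})$ with $G_{s,t}=-\tfrac12(t-s)$ after the time change. On the other hand, the classical iterated It\^o-integral identity for Brownian motion,
\begin{equation}
\int_s^t\!\!\int_s^{u_1}\!\!\cdots\!\!\int_s^{u_{n-1}} \mathrm dB_{u_n}\cdots\mathrm dB_{u_1} \;=\; H_n\bigl(B_{s,t},-\tfrac12(t-s)\bigr),
\end{equation}
identifies the Hermite lift of Definition~\ref{Unbiased Rough Paths} with the It\^o rough path lift of $B$. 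Undoing the time change $\sigma$ then yields the claim. The main obstacle is the uncorrelated-to-independent passage in step two, but under joint Gaussianity this is immediate, so the whole corollary is essentially a clean combination of Theorem~\ref{Thm:Main Theorem} with a short covariance computation and the Wick/Hermite identity for iterated It\^o integrals.
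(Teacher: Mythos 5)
Your proposal is correct and follows essentially the same route as the paper's own (very short) proof: reduce to Theorem~\ref{Thm:Main Theorem}, note that the level-$2$ Chen--Hermite balancing condition gives uncorrelatedness of increments, and use joint Gaussianity to upgrade this to independence, hence a standard Brownian motion. You helpfully spell out two small steps the paper elides—the bilinear/telescoping argument extending uncorrelatedness from adjacent to arbitrary disjoint increments, and the Wick/Hermite identity identifying the Hermite lift with the It\^o lift—but these are routine and the underlying argument is identical.
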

\begin{proof}
    The Chen-Hermite balancing condition at level $n=2$ gives the uncorrelatedness of disjoint increments, which is then enforced to independence by joint Gaussianity. Together with Brownian increments, we must have a standard Brownian motion. 
\end{proof}


\subsection{Case 3: \texorpdfstring{$\mathcal{H}^\#=\operatorname{Span}\!\big(\mathcal{H}^{\mathrm{Pol}}\cup \mathcal{H}^{\mathrm{AdpSig}}\big)$}{H\# = span(H (Pol) ∪ H (AdpSig))} and \texorpdfstring{$\mathfrak{T}_T=[0,T]$}{T\_T = [0,T]}}\label{subsec:Hsimple}

In the previous subsection, we needed to jointly impose Gaussian increments to enforce a standard Brownian motion in Corollary \ref{cor:joint Gaussian increments then Bm}. Indeed, this can be avoided immediately if we further enlarge the set of admissible continuous trading portfolios to encode more path-dependence. We define 
\begin{equation}
    \mathcal{H}^{\mathrm{AdpSig}}_{\XXX}:=\{\eta \cdot \prescript{n}{s}{Y}: \; \forall s\in [0,T], \; \forall \eta\in L^\infty(\mathcal{F}_s) \text{ and }\forall n\in \mathbb{N}\},
\end{equation}
and call such portfolios adaptedly scaled piecewise signature-induced portfolios. Consequently, we define
\begin{equation}
\mathcal{H}^\# :=\operatorname{Span}\!\big(\mathcal{H}^{\mathrm{Pol}}\cup \mathcal{H}^{\mathrm{AdpSig}}\big).
\end{equation}
Again, we restrict here $n\neq 0 $ to avoid any discontinuity in the trading portfolios. Since $\omega-$wise, $\eta$ is just a constant, adding such a multiplier does not break the pathwise controlledness of this class of portfolios. Then, essentially the same argument from the previous subsection can be passed to the conditional expectation level, and we obtain the following sharper result:
\begin{theorem}[$\mathcal{H}^\# \times {\mathfrak{T}}_T$-unbiased rough integrator]\label{Thm:Main Sharp Theorem}
    Let $\XXX$ be a rough noise, renormalised by bounded variation paths $G^2,...,G^k$. Then, $\XXX$ is an $\mathcal{H}^\# \times {\mathfrak{T}}_T$-unbiased rough integrator if and only if it is the It\^o rough path lift of a deterministic time change of a standard Brownian motion.
\end{theorem}

We will need the following auxiliary lemma for the conditional expectation level proof:

\begin{lemma}[Positivity recovers conditional centering]\label{lemm:condcentering}
Let $Y$ be a real random variable, and let $\mathcal{G}$ be a $\sigma-$algebra. Fix $a \geq 0$. Assume that $\mathbb{E} [e^{r|Y|}]<\infty$ for some $r>0$ and that, for every $B \in \mathcal{G}$ and every $m \geq 2$,
\begin{equation}\label{eq:condhermexp0}
    \mathbb{E}\Big[\mathbf{1}_B \, H_m\Big(Y,-\frac{1}{2} a\Big)\Big]=0 .
\end{equation}
Then
\begin{equation}\label{eq:condcharfunc}
    \mathbb{E}\left[e^{i \xi Y} \mid \mathcal{G}\right]=\exp \left(-\frac{1}{2} a \xi^2\right), \quad \forall \xi \in \mathbb{R} . 
\end{equation}
In particular, $\mathbb{E}[Y \mid \mathcal{G}]=0$.
\end{lemma}
\begin{proof}
Fix $B \in \mathcal{G}$ with $\mathbb{P}(B)>0$, and let $\mu_B$ and $m_B$ be the conditional law of $Y$ on $B$ and its mean:
\begin{equation}
    \mu_B(C):=\frac{\mathbb{P}(B \cap\{Y \in C\})}{\mathbb{P}(B)} , \quad \forall C \in \mathcal B (\mathbb R) , \quad m_B:=\int_{\mathbb R} x \mu_B(d x) . 
\end{equation}
By \eqref{eq:condhermexp0}, we have
\begin{equation}
    \int_{\mathbb R} H_m\big(x,-\frac{1}{2} a\big) \mu_B(\mathrm{d} x)=0, \quad \forall m \geq 2.
\end{equation}
Thus, again, by the exponential Taylor expansion with Hermite polynomials, the exponential-moment assumption allows us to rewrite
\begin{equation}
\begin{aligned}
    \int_{\mathbb R} \exp \Big(\lambda x-\frac{1}{2} a \lambda^2\Big) \mu_B(d x)  & = \sum_{n=0}^{\infty} \lambda^n \cdot\int_{\mathbb R} H_n \Big(x, -\frac{1}{2} a \Big) \mu_B(\mathrm{d}x) \\ &=1+\lambda m_B,
\end{aligned}
\end{equation}
which first holds for a small convergence radius and then extends by analytic continuation to the imaginary axis. Putting $\lambda=i \xi$ gives its Fourier transform
\begin{equation}\label{eq:Fourier}
    \widehat{\mu}_B(\xi)=\int_{\mathbb R} e^{i \xi x} \mu_B(\mathrm{d} x)=\left(1+i \xi m_B\right) \exp \Big(-\frac{1}{2} a \xi^2\Big) .
\end{equation}
If $a=0$, the identity for $m=2$ gives $\int_{\mathbb R} x^2 \mu_B(\mathrm{d} x)=0$, so $\mu_B=\delta_0$.

Assume $a>0$. The right-hand side of \eqref{eq:Fourier} is the Fourier transform of the signed measure $\left(1+\frac{m_B}{a} x\right) \varphi_a(x) \mathrm{d} x$ where $\varphi_a$ is the density of the normal distribution $\mathcal{N}(0, a)$ with mean $0$ and variance $a$. Then, the uniqueness of the Fourier transforms yields
\begin{equation}
    \mu_B(d x)=\left(1+\frac{m_B}{a} x\right) \varphi_a(x) \mathrm{d} x . 
\end{equation}

The density in the last display is nonnegative for all $x$ only if $m_B=0$. Therefore, $\mu_B=N(0, a)$.
Thus, for every $B \in \mathcal{G}$ with positive probability, the conditional law of $Y$ on $B$ is $N(0, a)$. This is equivalent to \eqref{eq:condcharfunc}, and the conditional mean is zero.
\end{proof}

Now we may prove Theorem \ref{Thm:Main Sharp Theorem}, which does not depend on, and can be seen as parallel to Theorem \ref{Thm:Main Theorem}.
\begin{proof}[Proof of Theorem \ref{Thm:Main Sharp Theorem}]
Assume first that $\XXX$ is $\mathcal{H}^\# \times \mathfrak{T}_T$-unbiased. By Theorem \ref{thm:Pol-unbiased rough integrators}, $\XXX$ is a Gaussian-Hermite rough path, where the renormalisation term is given by $G(t):= -\frac{1}{2} \mathrm{Var}(X_t).$
    
Fix $0 \leq s \leq t \leq T, n \in \mathbb{N}$. By the signature identity in \eqref{eq:signatureidentity} and unbiasedness, we have
\begin{equation}
    0  =\mathbb{E}\left[\int_0^t \eta \cdot \prescript{n}{s}{Y}_u \mathrm{d} \mathbf{X}_u\right]=\mathbb{E}\left[\eta \cdot \XX_{s, t}^{n+1}\right]
 =\mathbb{E}\left[\eta H_{n+1}\left(X_{s, t}, G_{s, t}\right)\right]\quad\quad \forall   \eta \in L^{\infty}\left(\mathcal{F}_s\right).
\end{equation}
Because $n \in \mathbb{N}$, this yields, for every $m \geq 2$,
\begin{equation}\label{eq:condexphermite=0}
\mathbb{E}\left[H_m\left(X_{s, t}, G_{s, t}\right) \mid \mathcal{F}_s\right]=0 .
\end{equation}
The exponential-moment assumption in Lemma \ref{lemm:condcentering} holds because $X_s$ and $X_t$ have Gaussian marginals and $\left|X_{s, t}\right| \leq\left|X_s\right|+\left|X_t\right|$. Taking \(m=2\) in \eqref{eq:condexphermite=0} gives
\begin{equation}
    0=\mathbb E\!\left[H_2(X_{s,t},G_{s,t})\mid\mathcal F_s\right]
=\frac12\mathbb E[X_{s,t}^2\mid\mathcal F_s]+G_{s,t}.
\end{equation}
Thus, applying Lemma \ref{lemm:condcentering} with
\begin{equation}
    Y=X_{s, t}, \quad \mathcal{G}=\mathcal{F}_s, \quad a=-2G_{s, t}=\mathrm{Var}(X_t) - \mathrm{Var}(X_s)\geq 0,
\end{equation}
we get
\begin{equation}
    \mathbb{E}\left[e^{i \xi X_{s, t}} \mid \mathcal{F}_s\right]=\exp \left(\xi^2 G_{s, t}\right) . 
\end{equation}
Thus $X_{s, t}$ is independent of $\mathcal{F}_s$ and has law $\mathcal{N}\left(0, \mathrm{Var}(X_t) - \mathrm{Var}(X_s)\right)$. Consequently, $X$ is a continuous process with independent Gaussian increments and a deterministic variance clock. Same arguments as in the proof of Theorem \ref{Thm:Main Theorem} imply that $\XXX$ must be a deterministic time change of the It\^o lift of a standard Brownian motion.

Conversely, the``if" part is simply due to the proven fact that 
    \begin{equation}
        \mathbb{E}\Big[\int_0^t \eta\cdot\prescript{n}{s}{Y}_u \mathrm{d} \XXX_u\Big] = \mathbb{E} [\eta\cdot H_{n+1}(X_{s,t}, G_{s,t})]\quad\quad \forall (s,t)\in \Delta_{[0,T]},\; \forall \eta \in L^\infty (\mathcal{F}_s),
    \end{equation}
    independent increments of a deterministic time-changed Brownian motion, and that Hermite polynomials are orthogonal with respect to Gaussian distributions.
\end{proof}


\subsection{Unbiased rough integrators with monotone random renormalisations}\label{subsec:Random renorm}

Now we extend the classification to random rough paths with random, adapted and monotone renormalisation paths via a DDS-type time change argument, as earlier mentioned in Remark \ref{rem:whydetrenorm}. We fix a $1-$dimensional $\alpha-$continuous random rough path $\XXX := (1, X, \XX^2,..., \XX^k)$ that is adapted to the filtered probability space $(\Omega, \mathcal{F}, (\mathcal{F}_t)_{t\in [0, T]}, \mathbb{P})$ where $\mathcal{F}= \mathcal{F}_T$ for some $\alpha \in (0,1]$ and $k:=\lfloor \frac{1}{\alpha} \rfloor$. Assume moreover that $(\mathcal{F}_t)_{t\in [0, T]}$ satisfies the usual conditions of right-continuity and completeness.

Recall from \eqref{eq:Recover perturbation} that the renormalisation term $G^j({t})$ can be recovered by $(\XX^j_{0,t}, X_{0,t}, G^{2}_{0,t}..., G^{j-1}_{0,t})$ for every $j=2,...,k$. Inductively, all renormalisation terms are adapted as well. Moreover, we assume that they are monotone decreasing and start from $0$. We consider two different cases.

\subsubsection{\texorpdfstring{$\alpha\in\big(\tfrac{1}{3},\,\tfrac{1}{2}\big]$}{alpha in (1/3, 1/2]}}

We now fix $\alpha\in (\frac{1}{3}, \frac{1}{2}]$, the random rough path $\XXX=(1, X, \XX^2)$ is then characterised by a single renormalisation term $G^2$. For any $t>0,$ we define the $t-$renormalisation clock as the following stopping time:
\begin{equation}\label{eq:renorm clock}
    \tau_t:=\inf \{s:\; G^2(s)<-\frac{1}{2}t\}.
\end{equation}
Assume moreover $G^2(T)<-\frac{1}{2}T'$ for some $T'>0$. We then consider the following time change of $\XXX:$
\begin{equation}
    \widetilde{\XXX}_{s,t}:= (1, \widetilde{X}_{s,t}, \widetilde{\XX}_{s,t}) \quad\quad\quad\forall (s,t)\in \Delta_{[0,T']}
\end{equation}
where
\begin{equation}
    \widetilde{X}_{s,t}:= X_{\tau_s, \tau_t},\;\;\;\widetilde{\XX}_{s,t}:=\XX_{\tau_s, \tau_t} = \frac{1}{2} (X_{\tau_s, \tau_t})^2+ G^2_{\tau_s, \tau_t}\quad\quad\quad \forall (s,t)\in \Delta_{[0,T']}.
\end{equation}
The renormalisation term encoded in $\widetilde{\XXX}$ becomes by monotonicity and continuity of $G^2$:
\begin{equation}
    \widetilde{G}^2_{s,t}:= G^2_{\tau_s, \tau_t} = -\frac{1}{2} (t-s) \quad\quad\quad \forall (s,t)\in \Delta_{[0,T']}
\end{equation}
is thus deterministic.

Now we fix $(Y, Y')$, a random piecewise controlled path with respect to $\XXX$, the same time change then yields a piecewise controlled path with respect to $\widetilde{\XXX}$:

\begin{equation}
    (\widetilde{Y}_t, \widetilde{Y}'_t): = (Y_{\tau_t}, Y'_{\tau_t})\quad\quad\quad \forall t\in [0, T'],
\end{equation}
and consequently, by construction of the rough integral (cf. Proposition \ref{Rough Integral}), we have
\begin{equation}\label{eq:RI time change}
    \int_0^{\tau_t}(Y_s, Y'_s) \mathrm{~d} \XXX_s = \int_0^t (\widetilde{Y}_s, \widetilde{Y}'_s ) \mathrm{~d} \widetilde{\XXX}_s.
\end{equation}

Set \(\widetilde{\mathcal F}_t:=\mathcal F_{\tau_t}\). For any class
\(\mathcal K_{\widetilde \XXX}\) of controlled paths over \(\widetilde \XXX\),
define its pull-back by
\begin{equation}
    \mathcal K_{\XXX}^{TC}
:=
\{Y:\widetilde Y_t:=Y_{\tau_t}\in \mathcal K_{\tilde{\XXX}}\}.
\end{equation}
We write
\begin{equation}
    \mathcal{H}_{\XXX}^{\mathrm{Pol;TC}},\quad \mathcal{H}_{\XXX}^{\mathrm{pSig;TC}},\quad \mathcal{H}_{\XXX}^{\mathrm{AdpSig;TC}}
\end{equation}
for the corresponding pull-back classes, and set
\begin{equation}
    \mathcal{H}_{\XXX}^{\#,\mathrm{TC}}
:=
\operatorname{Span}\bigl(\mathcal{H}_{\XXX}^{\mathrm{Pol;TC}}\cup \mathcal{H}_{\XXX}^{\mathrm{AdpSig;TC}}\bigr).
\end{equation}

Thus, for such random rough paths, we may conclude:

\begin{corollary}\label{cor:randomrenorm1}
    Let $\XXX$ be as above. It is an $\mathcal{H}^{\mathrm{Pol;TC}}\times \{ \tau_t: t\in [0, T']\}$-unbiased rough integrator if and only if it is an adapted time change of a Gaussian-Hermite rough path on $[0, \tau_{T'}]$. It is an $\mathrm{Span}(\mathcal{H}^{\mathrm{Pol; TC}}\cup \mathcal{H}^{\mathrm{pSig; TC}})\times \{ \tau_t: t\in [0, T']\}$-unbiased rough integrator if and only if it is an adapted time change of the Hermite rough path lift of a Chen-Hermite almost Brownian motion on $[0, \tau_{T'}]$.
\end{corollary}
\begin{proof}
    It is straightforward to see that $(Y, Y')\in \mathcal{H}^{\mathrm{Pol;TC}}_{\XXX}$ (or $\mathcal{H}^{\mathrm{pSig;TC}}_{\XXX}$, respectively)
    if and only if $(\widetilde{Y}, \widetilde{Y}')\in \mathcal{H}^{\mathrm{Pol}}_{\widetilde{\XXX}}$ (or $\mathcal{H}^{\mathrm{pSig}}_{\widetilde{\XXX}}$, respectively). The statement then follows from \eqref{eq:RI time change}, the fact that $\widetilde{\XXX}$ has a deterministic renormalisation term, Theorem \ref{thm:Pol-unbiased rough integrators} and Theorem \ref{Thm:Main Theorem}.
\end{proof}

More sharply, one uses the same arguments to obtain:

\begin{corollary}\label{cor:simple to BM with TC}
    Let $\XXX$ be as above. It is an $\mathcal{H}_{\XXX}^{\#,\mathrm{TC}}\times \{ \tau_t: t\in [0, T']\}$-unbiased rough integrator if and only if it is an adapted time change of the It\^o lift of a standard Brownian motion on $[0, \tau_{T'}]$. In particular, $\XXX$ is the It\^o rough path lift of a local martingale on $[0, \tau_{T'}]$.
\end{corollary}
\begin{proof}
    Same as the Proof of Corollary \ref{cor:randomrenorm1}.
\end{proof}

\begin{remark}
   If we assume $\XXX$ is an $\alpha-$H\"older continuous random rough path for some $\alpha\in (\frac{1}{3}, \frac{1}{2}]$ on the infinite time interval $[0, \infty)$, such that the monotone renormalisation term satisfies, $\lim_{t\rightarrow \infty} G^2(t) = -\infty$, the above corollaries extends to the stochastic interval $[0, \tau_{T'}]$ for any $T'>0$, following the classical DDS manner. In particular, Corollary \ref{cor:simple to BM with TC} enforces $X$ to be a local martingale on the whole time interval $[0, \infty)$. 
\end{remark}

\subsubsection{\texorpdfstring{$\alpha\in\big(0,\,\tfrac{1}{3}\big]$}{alpha in (0, 1/3]} with synchronous renormalisation clocks}\label{subsubsec:Syn}

Clearly, the above arguments do not automatically generalise to $\alpha<\frac{1}{3}$, since one has $\lfloor \frac{1}{\alpha} \rfloor - 1 =: k-1 \geq 2$ different choices of renormalisation clocks and a time change with respect to one of them does not ensure the other renormalisation terms become deterministic. Assume moreover $G^2(T)< -\frac{1}{2}T'$ for some $T'>0$. The above results extend only if $G^2,..., G^k$ share a synchronous renormalisation clock, i.e., the time-changed renormalisation 
\begin{equation}
    \widetilde{G}^j: \; [0,T'] \rightarrow \mathbb{R},\;\; t\mapsto G^j(\tau_t)\quad\quad\quad 
\end{equation}
is deterministic for all $ j=2,...,k$, where $\tau_t$ is the $t-$renormalisation clock with respect to $G^2$, as in \eqref{eq:renorm clock}. Then the same results follow from the same arguments.
\begin{corollary}\label{cor:randomrenorm2}
    Let $\XXX$ be an $\alpha-$H\"older continuous rough path for some $\alpha\in (0,1]$ with monotone renormalisation terms $G^2,... G^k$ that share a synchronous renormalisation clock as above. It is an $\mathcal{H}^{\mathrm{Pol;TC}}\times \{ \tau_t: t\in [0, T']\}$-unbiased rough integrator if and only if it is an adapted time change of a Gaussian-Hermite rough path on $[0, \tau_{T'}]$. It is an $\mathrm{Span}(\mathcal{H}^{\mathrm{Pol;TC}}\cup \mathcal{H}^{\mathrm{pSig;TC}})\times \{ \tau_t: t\in [0, T']\}$-unbiased rough integrator if and only if it is an adapted time change of the Hermite rough path lift of a Chen-Hermite almost Brownian motion on $[0, \tau_{T'}]$. It is an $\mathcal{H}_{\XXX}^{\#,\mathrm{TC}}\times \{ \tau_t: t\in [0, T']\}$-unbiased rough integrator if and only if it is an adapted time change of the It\^o rough path lift of a standard Brownian motion on $[0, \tau_{T'}]$. In particular, $\XXX$ is the It\^o rough path lift of a local martingale on $[0, \tau_{T'}]$.
\end{corollary}


\section{Market Model as Rough Paths and the No Controlled Free Lunch Condition}\label{sec:market}

We are now able to develop applications of our theory in financial modelling. In this section, we model general price processes as random rough paths, define a portfolio’s gain process via a well-defined rough integral against the price path, and clarify what we mean by No Controlled Free Lunch (NCFL) in this setting, which allows $\alpha$-h\"older continuous price paths for $\alpha>0$ arbitrarily small. We classify rough-paths-based market models with the (NCFL) condition in two different settings, the rough Bachelier-type market model and RDE market models. Throughout this section, we consider the scenario of continuous trading with an early liquidation allowed, i.e., no other jumps are allowed except a final cash out before the terminal time $T$.


\subsection{Frictionless market models as rough paths}\label{subsec:Markets as RP}
We first give meaning to stochastic continuous market models in the context of rough paths. The main difference compared with the standard semimartingale settings is that the gain process is understood as a rough integral. Throughout the rest of this paper, we fix a filtered probability space $(\Omega, \mathcal{F}, (\mathcal{F}_t)_{t\in [0, T]}, \mathbb{P})$ where $\mathcal{F}= \mathcal{F}_T$. Assume moreover that $(\mathcal{F}_t)_{t\in [0, T]}$ satisfies the usual conditions of right-continuity and completeness. We let $(S_t)_{t\in [0,T]}$ be a $(d+1)$-dimensional continuous $\mathcal{F}_t$-adapted stochastic process representing the prices of the underlying assets. If not otherwise specified, we use the convention that $S^0=1$, the riskless asset has a constant price.

\begin{definition}[Continuous rough market model]\label{def:Rough Market Model}
We call $\SSS:=(1, \Ss^1, \Ss^2, ...,\Ss^k)$ a rough market model associated with the the price process $S$ if the following conditions are satisfied:
\begin{itemize}
    \item (Adaptedness) $\SSS$ is an adapted two-parameter $T^{(k)}(\mathbb{R}^{d+1})$-valued stochastic process, i.e., a map from $\Delta_T\times \Omega$ to $T^{(k)}(\mathbb{R}^{d+1})$ such that $\SSS_{s,t}$ is $\mathcal{F}_t-$measurable for any $0\leq s\leq t\leq T$;
    \item  (RP Lift) There exists some $\alpha>\frac{1}{k+1}$ such that $\SSS$ is almost surely an $\alpha-$H\"older rough path such that $\Ss^1_{s,t} = S_t-S_s$ for any $0\leq s, t\leq T$.
\end{itemize}
In the following, we also use the self-explanatory notation $\SSS:=(1, S, \Ss^2, ...,\Ss^k)$ for such models.
\end{definition}

\begin{definition}[Controlled portfolios]\label{def:Controlled Portfolios}
    Consider the above rough market model. A (piecewise) controlled trading portfolio therein is understood as a random $\mathcal{F}_t$-adapted $\mathbb{R}^{d+1}\cong \mathcal{L}(\mathbb{R}^{d+1}, \mathbb{R})$-valued piecewise controlled path $Y(\omega):=(Y^{(1)}(\omega), Y^{(2)}(\omega),..., Y^{(k)})$. The value process of the trading portfolio $Y$ is defined as:
    \begin{equation}
        V_t(Y):= \sum_{n=0}^{d} Y^{(1); n}_t\cdot S^n_t.
    \end{equation}
\end{definition}

\begin{remark}
    Here, $Y^{(1)}_t(\omega)$ denotes the portfolio holdings across the traded assets at time $t$ and thus, the financial interpretation of the value process is self-evident. The controlledness requirements encode that the trading rule reacts to the recent path of the market. The higher order controlled components, $Y^{(i)}$ for $i=2,...,k$, on the one hand, play the role of ``sensitivities" quantifying how holdings adjust to market increments, while on the other hand, are necessary from a mathematical perspective to supply the well-definedness of the gain process as a rough integral, and hence the self-financing condition, as in \eqref{eq:Gain Process} and \eqref{eq:SF} below, which shall coincide with the self-financing condition in semimartingale settings in spirit. We will list a few typical classes of controlled portfolios in Subsubsection \ref{exp:Markovian Portfolios}, illustrating that the framework covers a broad range of strategies.
\end{remark}

Beyond continuous trading, we also allow a one-time cash-out before the terminal time $T$.
\begin{definition}[Admissible strategies]
    Consider the above rough market model. An admissible strategy is understood as a piecewise controlled trading portfolio $Y$ as above, together with an exit time $\tau$, which is a stopping time. The holding is then understood as 
    \begin{equation}
        Y^{\tau; (1);i}_t:= Y^{(1);i}_{t}\cdot \mathbbm{1}_{t<\tau} \quad\quad\mathrm{for}\; i=1,...,d 
    \end{equation}
    and 
    \begin{equation}
        Y^{\tau; (1);0}_t:= Y^{(1);0}_{t}\cdot \mathbbm{1}_{t<\tau} + \sum_{n=0}^{d} Y^{(1); n}_\tau\cdot S^n_\tau\cdot \mathbbm{1}_{t\geq\tau}.
    \end{equation}
    Then, the value process of the trading portfolio $(Y, \tau)$ is defined in the following self-evident way:
    \begin{equation}
        V_t(Y):= \sum_{n=0}^{d} Y^{(1); n}_{\tau\wedge t}\cdot S^n_{\tau\wedge t}.
    \end{equation}
\end{definition}

\begin{remark}
    We make the point that, except for the final liquidation at time $\tau$, when all risky assets are sold, no other jumps are allowed in the admissible strategies.
\end{remark}

\begin{definition}[Self-financing Condition]\label{def:Self-financing}
    Let $(Y,\tau)$ be an admissible trading strategy in a rough market model $\SSS$ as above. The gain process of the portfolio is defined as the corresponding rough integral:
    \begin{equation}\label{eq:Gain Process}
    G_t(Y):=\int_0^{\tau\wedge t} Y_u\mathrm{d} \mathbf{S}_u=
    \lim_{|\mathcal{P}| \downarrow 0}
    \sum_{[u,v]\in \mathcal{P}} Y^{(1)}_u S_{u,v} +Y^{(2)}_u \mathbb{S}^{2}_{u,v}+... +Y^{(k)}_u \mathbb{S}^{k}_{u,v},
\end{equation}
where $\mathcal{P}$ is a stochastic partition of $[0, \tau\wedge t]$ and we used the canonical isomorphism $\mathcal{L}(\mathbb{R}^{d+1}, \mathbb{R}) \cong \mathbb{R}^{d+1}$ in writing the rough integral. $Y$ is called self-financing if 
\begin{equation}\label{eq:SF}
    V_t(Y)- V_0(Y) = G_t(Y) \quad\quad \forall t\in [0,T].
\end{equation}
\end{definition}

\begin{remark}
    When $S$ is a (continuous) semimartingale and we equip it with the It\^o lift, the rough gain $\int_0^{\tau\wedge t} Y_u\,\mathrm d\mathbf S _u$ coincides with the classical It\^o integral $\int_0^{\tau\wedge t} Y^{(1)}_u\,\mathrm dS _u$ (cf. \cite{ALP24} Proposition 4.7); thus our self-financing condition above is essentially a rough path generalisation of the standard one.
\end{remark}

\begin{example}[Constant riskless asset price]\label{example:S^0=1}
    Consider the scenario of $(d+1)$ underlying assets with $S^0\equiv 1$. It is then essential to make sure that $S^0$ plays no role in the gain process. Indeed, let $\tilde{\SSS}$ be any rough path lift of the risky assets price process $(S^1,..., S^d)$. We shall lift $\tilde{\SSS}$ to a rough path lift of $(S^0,..., S^d)$ in the following way: for any word $w=w_1\cdots w_n$ with letters in the alphabet $\{0,...,d\}$, we set
    \begin{equation}
        \langle\SSS_{s,t}, w\rangle:= 
        \begin{cases}
            \langle\tilde{\SSS}_{s,t}, w\rangle \quad \quad \text{if} \quad w_i\neq 0,\quad \forall i=1,...,n;\\
            0 \quad\quad\text{otherwise}.
        \end{cases}
    \end{equation}
    In other words, we ensure that every postulated iterated integral involving a constant coordinate is trivial and the rough path lift is uniquely determined by the path formed by those non-constant coordinates.

    Now let $Y(\omega):=(Y^{(1)}(\omega), Y^{(2)}(\omega),..., Y^{(k)})$ be any trading portfolio in the rough market model $\SSS$. We define $\tilde{Y}(\omega):=(\tilde{Y}^{(1)}(\omega), \tilde{Y}^{(2)}(\omega),..., \tilde{Y}^{(k)})$ by
    \begin{equation}
        \mathcal{L} ((\mathbb{R}^{{d}})^{\otimes ^{i}}, \mathcal{L}(\mathbb{R}^d, \mathbb{R}))    \ni \tilde{Y}^{(i)}(\omega):=  \iota^* \circ Y^{(i)}(\omega)\circ\iota^{\otimes i} \quad\quad\forall i=1,...,k
    \end{equation}
    where 
    \begin{equation}
        \iota: \mathbb{R}^d\rightarrow \mathbb{R}^{d+1}, \quad (x_1,...,x_d)\mapsto (0, x_1,...,x_d)
    \end{equation}
    and $\iota^*$ is its dual map. $\tilde{Y}$ is essentially the portfolio on the risky assets. Indeed, it is straightforward to check that $\tilde{Y}$ is controlled by $\tilde{\SSS}$ with exactly the same error terms and that the two corresponding gain processes are the same:
    \begin{equation}
        G_t(Y):=\int_0^t Y_u\mathrm{d} \mathbf{S}_u = \int_0^t \tilde{Y}_u\mathrm{d} \tilde{\mathbf{S}}_u =: G_{t}(\tilde{Y}),
    \end{equation}
    indicating that the integral against the constant price path $S^0$ does not contribute to the gain process. We leave the justification of this fact to our readers since the intuition is rather clear.
\end{example}

Before giving some examples of the controlled portfolios as consequences of Subsection \ref{subsec:Examples of CP}, we make some first comments on the rough settings.

First, we make the point that we use the notation ``rough" not to mean that the sample paths of the price process that we will deal with are necessarily less regular than semimartingales (although we mostly care about such cases), the terminology ``rough" is used to rather mean that we are using rough paths and rough integral for the mathematical interpretation of necessary financial terminologies in the markets. Indeed, even for ``smoother" price processes, we can still use rough integrals for such interpretations, which allows renormalisations that can be hand-made and go beyond classical It\^o, Stratonovich, Young or Riemann-Stieltjes integrals. As we shall compare in Subsection \ref{subsec:RDE models}, the rough settings cover and indeed extend the classical semimartingale models for more general market noises.

Second, note that the first term in the compensated Riemann-Stieltjes sum in \eqref{eq:Gain Process} has a clear economic meaning as the gain in discrete settings, while the other terms remain rather mysterious. We do not argue or try to mislead our readers with the financial interpretation of any of the additional terms in the rough paths lift, controlled paths, or rough integrals. (See \cite{PP16} Section 4, \cite{ACLP23} Section 2, \cite{ALP24} Section 2 for some financial interpretation for the rough integral in the case of $\alpha\in (\frac{1}{3}, \frac{1}{2}]$ though.) At this point, we cannot connect them with any reasonable financial interpretations in our more general settings and would rather believe there are none. We use them solely to ensure the mathematical convergence of the compensated Riemann-Stieltjes sum, and thus a well-defined gain process in more general settings. However, one need not impose any interpretation on them and should think about the main contribution of this paper as follows: as a mathematical methodology to extend stochastic market models, the rough path renormalisation techniques do not carry the No Free Lunch condition outside of the classical It\^o-semimartingale setting. 

Finally, note that we do not put any classical admissibility constraints on (piecewise) controlled portfolios, e.g. an upper bound on maximal loaning, which are usually required in classical semimartingale models (cf. \cite{DS94}, \cite{DS06} Chapter 5). Indeed, requiring such constraints will not be a technical issue in the context of rough paths and can be achieved easily using a proper stopping time and piecewise controlled paths, e.g., stopping investment in risky assets when the maximum loaning is achieved. However, by not allowing jumps and requiring that the portfolio locally follow the stock price closely via the controlledness constraints in \eqref{eq:Controlled Path}, the value process would not behave dramatically for (piecewise) controlled portfolios. This is also justified by the fact (cf. Proposition \ref{Rough Integral}) that the gain process, as a rough integral, is again controlled by the price process, which means that it will behave crazily only if so does the price process. More specifically, classical ill-posed portfolios like the doubling strategy or the suicide strategy (cf. \cite{DS06} Section 8.2) are mathematically excluded from the setting. Thus, a discussion on the no-arbitrage condition within some class of controlled portfolios makes sense, and is desired, even without any classical admissibility constraints. We shall make clear the setting in the following subsection.


\subsubsection{Examples of controlled portfolios}\label{exp:Markovian Portfolios}
With the examples of controlled paths in Subsection \ref{subsec:Examples of CP}, we briefly realise them as different classes of trading portfolios in rough market models. We will use them to characterise the (NCFL) condition (cf. Definition \ref{def:NCFL}) soon. We fix an $(d+1)-$dimensional $\alpha-$H\"older continuous price path $S(\omega)_{[0,T]}$ and a rough path lift $\SSS(\omega):=(1,S(\omega),\Ss^2(\omega),...,\Ss^k(\omega))$, where $k:=\lfloor \frac{1}{\alpha} \rfloor$. We drop $\omega$ in the notation for convenience in this subsection.
\begin{example}[Markovian-type portfolios]
    Assume $\SSS$ is geometric, or non-geometric but $\alpha>\frac{1}{3}$, or non-geometric with $\alpha<\frac{1}{3}$ but with sufficiently regular renormalisation terms encoded as in Example \ref{Markovian Integrand}. Example \ref{exp:Markovian Portfolios}, \ref{Function of Geometric Rough Paths as Controlled Paths} and \ref{Markovian Integrand} and Remark \ref{rem:Function of Info and Path as CP} illustrate that sufficiently regular functions of the current price and current information of some other regular information paths can be lifted to well-defined controlled paths. They are thus well-defined as controlled portfolios. We call them Markovian-type portfolios, due to their nature that the current holding depends only on the current information, but not on the path history. They are referred to as functionally generated portfolios in \cite{ALP24} Section 3.
\end{example}

Markovian-type portfolios are restrictive, in particular, if we want to allow price processes beyond semimartingales (say, with memories) in the rough path settings. As introduced in Section \ref{subsec:Examples of CP}, there are mainly two ways to include path-dependence in controlled portfolios.

\begin{example}[Signature-based portfolios]
    Example \ref{exp:Sig as CP} indicates that the extended signature components of a rough path are again naturally controlled paths. In particular, linear combinations of the signature components are then well-defined as controlled portfolios. Such constructions encode the history of the path if there are at least two assets with non-constant price processes. Otherwise, one can still simply consider the signature of the time-augmented process. Moreover, the gain process of such portfolios can again be computed via the extended path signature components.
\end{example}

\begin{proposition}
    Let $\SSS$ be a rough market model, and $Y:=(Y^{(1)},..., Y^{(k)})$ be a controlled portfolio  with the holdings on each asset given by a linear combination of signature components of the price process:
    \begin{equation}
        Y^{(1);n}_t:= \sum_{l=1}^{L_n}  a_l\cdot \langle \XXX_{0,t}, w(l)\rangle \quad \forall n=0,...,d,
    \end{equation}
    with some words $w(l)$ of finite lengths and $a_l\in \mathbb{R}$, and the other terms in $Y$ are given as in \eqref{eq:Sig as CP} and \eqref{eq:Sig Portfolio int}. Then the gain process is
    \begin{equation}
        G_t(Y)=\sum_{n=0}^d \sum_{l=1}^{L_n} a_l\cdot  \langle \XXX_{0,t}, w(l)n\rangle
    \end{equation}
    where $w(l)n$ is the word obtained by adding the letter $n$ after $w(l)$.
\end{proposition}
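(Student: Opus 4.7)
By the linearity of the rough integral in its integrand, inherited from the compensated Riemann-sum definition in \cref{Rough Integral}, it suffices to establish the identity for a single summand. I would therefore fix a word $w = w_1 \cdots w_m$ with letters in $\{0, \ldots, d\}$ and a single asset index $n \in \{0, \ldots, d\}$, and consider the controlled portfolio $Y$ with $Y^{(1);n}_t = \langle \XXX_{0,t}, w\rangle$, $Y^{(1);n'}_t = 0$ for $n' \neq n$, and higher levels $Y^{(2)}, \ldots, Y^{(k)}$ generated from the signature structure in \eqref{eq:Sig as CP} and \eqref{eq:Sig Portfolio int}. The claim then reduces to showing $\int_0^t Y_u\,\mathrm{d}\SSS_u = \langle \XXX_{0,t}, wn\rangle$.

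My plan is to unfold the right-hand side of \eqref{eq:Sig Portfolio int} applied to the extended word $wn$ of length $m+1$ and match its integrand level by level with $Y$. At level one, the integrand in \eqref{eq:Sig Portfolio int} reads $\langle \XXX_{0,u}, w\rangle \otimes e_n$, which, viewed as an element of $\mathbb{R}^{d+1} \cong \mathcal{L}(\mathbb{R}^{d+1}, \mathbb{R})$, is precisely the linear functional carrying $\langle \XXX_{0,u}, w\rangle$ in its $n$-th coordinate and zero elsewhere --- exactly $Y^{(1)}_u$ --- and contracting with $S_{u,v}$ produces $\langle \XXX_{0,u}, w\rangle \cdot S^n_{u,v}$ in either interpretation. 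At level $i$ for $2 \le i \le \min(m+1, k)$, the integrand in \eqref{eq:Sig Portfolio int} is $\langle \XXX_{0,u}, w^{-(i-1)}\rangle \otimes e_{w_{m-i+2}} \otimes \cdots \otimes e_{w_m} \otimes e_n$, which under the canonical isomorphism identifying $(\mathbb{R}^{d+1})^{\otimes i}$ with $\mathcal{L}((\mathbb{R}^{d+1})^{\otimes i}, \mathbb{R})$ coincides with the $i$-th derivative $Y^{(i)}_u$ prescribed by \eqref{eq:Sig as CP}, since both take the same scalar signature tensor and append the asset-selector $e_n$ in the outermost slot.

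Once the two controlled-path integrands are identified at every level, their compensated Riemann sums coincide termwise and therefore yield the same rough-integral limit. Hence $\int_0^t Y_u\,\mathrm{d}\SSS_u$ equals the rough integral on the right-hand side of \eqref{eq:Sig Portfolio int} for the word $wn$, which by \cref{prop:Sig as CP} is by definition $\langle \XXX_{0,t}, wn\rangle$; summing over $l$ and $n$ by linearity recovers the full formula. The main obstacle I anticipate is purely notational: carefully tracking the canonical isomorphisms between tensor powers and multilinear functionals, and verifying that appending $e_n$ as the outermost tensor factor really does implement the operation of selecting the $n$-th asset coordinate in each Riemann summand. Once the conventions are pinned down, no conceptual difficulty remains beyond the definition of the rough integral and the signature extension of \cref{prop:Sig as CP}.
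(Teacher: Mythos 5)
Your proof is correct and takes essentially the same approach as the paper. The paper's proof is a two-line version of the same idea: write $Y^{(1)}_t = \sum_{n=0}^d \sum_{l=1}^{L_n} a_l\,\langle\XXX_{0,t}, w(l)\rangle\otimes e_n$ and invoke \eqref{eq:Gain Process} together with \eqref{eq:Sig Portfolio int}; you supply the level-by-level matching of Gubinelli derivatives with the integrand of \eqref{eq:Sig Portfolio int} that the paper leaves implicit, which makes the identity between the compensated Riemann sums explicit rather than asserted.
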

\begin{proof}
    One can simply write 
    \begin{equation}
        Y^{(1)} = \sum_{n=0}^d  Y^{(1);n}_t \otimes e_{n} = \sum_{n=0}^d  \sum_{l=1}^{L_n}  a_l\cdot \langle \XXX_{0,t}, w(l)\rangle \otimes e_{n}.
    \end{equation}
    Then the desired gain process follows from \eqref{eq:Gain Process} and \eqref{eq:Sig Portfolio int}.
\end{proof}

Notably, let $\mathbf{B}$ (or $\tilde{\mathbf{B}}$, respectively) be the It\^o/geometric rough path lift of a standard Brownian motion (with time augmentation, respectively). The above construction recovers the It\^o/Stratonovich signature of a standard Brownian motion (with time augmentation, respectively).

\begin{remark}[Universal approximation of signature-based portfolios]
    We mention that for the dimension $d>1$, path signature uniformly approximates an arbitrary function on the path space due to the Stone-Weierstrass theorem. See e.g. \cite{HBS24} and \cite{CPS25} for studies in this direction. This indicates that signature-based portfolios do offer a very universal set of strategies for hedging and replicating.
\end{remark}

\begin{example}[Path-dependent functionally generated portfolios]
   As described in Example \ref{exp:Functional as CP}, for $\alpha\in (\frac{1}{3}, \frac{1}{2}]$, sufficiently regular non-anticipative functionals of a path can also be viewed as a controlled path in a slightly weaker sense, thus as controlled portfolios.
\end{example}

\subsection{A Kreps-Yan type theorem}\label{subsec:RKYTheorem}
In this section, we clarify what we mean by arbitrage and No Controlled Free Lunch (NCFL) in our settings, and provide an analogy of the Kreps-Yan Theorem (c.f. Theorem 5.2.2 in \cite{DS06}) in the rough paths context. We continue with the settings from the previous subsection.

\begin{definition}[Arbitrage]\label{def:Arbitrage}
    A self-financing strategy $(Y, \tau)$ in a rough market model $\SSS$ is called an arbitrage if
    \begin{equation}
        \mathbb{P}[G_T(Y)\geq 0]=1\quad \quad \text{and}\quad \quad \mathbb{P}[G_T(Y)> 0]>0.
    \end{equation}
    In other words, given no initial capital, we are able to at least break even at terminal time $T$ with probability $1$ and with a positive probability of making a profit.
\end{definition}

Let $\mathcal{D}_{\mathbf{S}}$ be the set of a.s. piecewise by-$\mathbf{S}$-controlled paths and let $\mathfrak{T}$ be the set of all stopping times no later than $T$. Note that $\mathbf{Y}_t \hat{\in} \mathcal{F}_t$ and $G_T(\mathbf{Y})\hat{\in} \mathcal{F}.$ We let $\mathcal{H}\subseteq \mathcal{D}_{\mathbf{S}}$ be a linear and convex subset containing $0$ and ${\mathfrak{T}}_T\subseteq \mathfrak{T} $ any subset containing the trivial stopping time $T$. Denote by:
\begin{equation}
    K^p(\mathcal{H}, {\mathfrak{T}}_T):= \Big\{X=\int_0^{\tau} Y_t\mathrm{d} \mathbf{S}_t \text{  for some } Y \in \mathcal{H} \text{ and }  \tau\in {\mathfrak{T}}_T\Big\}\cap L^p(\Omega, \mathcal{F}, \mathbb{P})
\end{equation}
for any $p \in (1, +\infty]$, as the cone of $L^p$-contingent claims that can be replicated some self-financing admissible strategy $(Y, \tau)$ with zero initial values, and thus will be priced as $0$ by arbitrage. Moreover, define
\begin{equation}
    C^p(\mathcal{H},  {\mathfrak{T}}_T): = \Big\{X-k \text{ for some }X\in K^p(\mathcal{H},  {\mathfrak{T}}_T) \text{ and } k\in L^p_+(\Omega, \mathcal{F}, \mathbb{P})\Big\}.
\end{equation}
as the cone of $L^p$-contingent claims dominated by some $X\in K^p(\mathcal{H})$.

\begin{definition}[No controlled free lunch]\label{def:NCFL}
    For $p\in (1,+\infty]$, $\mathbf{S}$ is said to satisfy the condition of no controlled free lunch (NCFL) of order $p$ within $\mathcal{H}\times  {\mathfrak{T}}_T$ if the closure $\overline{C}^p(\mathcal{H},  {\mathfrak{T}}_T)$ of $C^p(\mathcal{H},  {\mathfrak{T}}_T)$, taken with respect to the weak-star topology on $L^p(\Omega, \mathcal{F}, \mathbb{P})$, satisfies
    \begin{equation}
        \overline{C}^p(\mathcal{H},  {\mathfrak{T}}_T)\cap L^p_+(\Omega, \mathcal{F}, \mathbb{P})=\{0\}.
    \end{equation}
    $\mathbf{S}$ is said to satisfy (NCFL) of order $p$ if it satisfies (NCFL) of order $p$ within $\mathcal{D}_{\mathbf{S}}\times \mathfrak{T}$.
    A contingent claim $k \in L^p_+(\Omega, \mathcal{F}, \mathbb{P})$ is called an $L^p$-free lunch.
\end{definition}

\begin{remark}
    The weak-star topology on $L^p(\Omega, \mathcal{F}, \mathbb{P})$ above is rather a short-hand notation, and shall be understood in the following sense for different values of $p$: For $p=\infty$, we indeed mean the weak-star topology $\sigma(L^\infty, L^1)$, whereas for $p\in (1, \infty) $, we actually mean the usual weak topology on $L^p(\Omega, \mathcal{F}, \mathbb{P})$, which formally coincides with the weak-star topology on $L^p(\Omega, \mathcal{F}, \mathbb{P})$ as a dual space.
\end{remark}

\begin{remark}
    If we set $p=\infty$, we are back in the settings of Kreps-Yan. However, unlike simple strategies in Kreps-Yan's original settings, the payoff of a controlled portfolio is usually not bounded. Thus, we have to introduce the intersection with $L^p$ to embed them into a proper Banach space. More generally, one can place the gain cone in an Orlicz-type space without modifying too much what we shall prove next.
\end{remark}

Now, we may state the rough Kreps-Yan type theorem:

\begin{theorem}[Rough Kreps-Yan Theorem]\label{RoughKYTheorem}
    Let $p\in (1, +\infty],\;q\in [1,+\infty)$ with $\frac{1}{p}+ \frac{1}{q}=1$ and $\mathcal{H}\subseteq \mathcal{D}_{\mathbf{S}}$ a linear and convex subset containing $0$ and $T\in{\mathfrak{T}}_T\subset {\mathfrak{T}}$ any set of stopping times. A rough market $\mathbf{S}$ satisfies \textnormal{(NCFL)} of order $p$ within $\mathcal{H}\times {\mathfrak{T}}_T$, if and only if there exists an equivalent measure $\mathbb{Q}$ such that 
    \begin{equation}\label{eq:Rough Kreps-Yan}
        \frac{\mathrm{d}\mathbb{Q}}{\mathrm{d}\mathbb{P}}\in L^q_+ \quad \text{and} \quad \mathbb{E}_{\mathbb{Q}}[\int_0^{\tau} Y_t \mathrm{d} \mathbf{S}_t] =0 \quad\forall Y\in \mathcal{H},\;\forall \tau\in {\mathfrak{T}}_T,
    \end{equation}
    i.e., $\SSS$ is an $\mathcal{H}\times {\mathfrak{T}_T}-$unbiased rough integrator under $\mathbb{Q}$.
\end{theorem}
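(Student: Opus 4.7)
The plan is to follow the classical Kreps--Yan scheme in the weak-star dual pairing $(L^p,L^q)$, adapted to the controlled-path gain cone $\overline{C}^p(\mathcal{H},\mathfrak{T}_T)$. The easier direction $(\Leftarrow)$ is essentially a duality computation, while the converse requires a geometric Hahn--Banach separation combined with a Halmos--Savage exhaustion to build a strictly positive density.

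For $(\Leftarrow)$, set $\phi:=\frac{d\mathbb{Q}}{d\mathbb{P}}\in L^q_+$. For any $g=X-k\in C^p(\mathcal{H},\mathfrak{T}_T)$ with $X=\int_0^\tau Y\,d\mathbf{S}\in K^p(\mathcal{H},\mathfrak{T}_T)$ and $k\in L^p_+$, the hypothesis gives $\mathbb{E}[\phi X]=0$, so $\mathbb{E}[\phi g]=-\mathbb{E}[\phi k]\leq 0$. Since $g\mapsto\mathbb{E}[\phi g]$ is weak-star continuous on $L^p$ (as $\phi\in L^q$), the inequality persists on $\overline{C}^p$; if $g\in\overline{C}^p\cap L^p_+$, then $g\geq 0$ together with $\mathbb{E}[\phi g]\leq 0$ and $\phi>0$ $\mathbb{P}$-a.s.\ force $g=0$, proving \textnormal{(NCFL)}.

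For $(\Rightarrow)$, I first verify that $\overline{C}^p$ is a weak-star closed convex cone containing $-L^p_+$: the cone property uses linearity of $\mathcal{H}$, while convexity of $C^p$ follows by patching two stopped admissible strategies into a single piecewise controlled portfolio on $[0,T]$, exploiting that the post-exit cash sits in the riskless asset and contributes nothing to the rough integral (cf.\ Example \ref{example:S^0=1}) and that $T\in\mathfrak{T}_T$. Under \textnormal{(NCFL)}, $\overline{C}^p\cap L^p_+=\{0\}$. For each $A\in\mathcal{F}$ with $\mathbb{P}(A)>0$, we have $\mathbf{1}_A\notin\overline{C}^p$; geometric Hahn--Banach in $(L^p,\sigma(L^p,L^q))$ thus produces $\phi_A\in L^q$ with $\mathbb{E}[\phi_A\mathbf{1}_A]>0$ and, since $\overline{C}^p$ is a cone through $0$, with $\mathbb{E}[\phi_A g]\leq 0$ for all $g\in\overline{C}^p$. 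The containment $-L^p_+\subset\overline{C}^p$ then forces $\phi_A\geq 0$ a.s.

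A standard Halmos--Savage exhaustion patches these functionals into a single $\phi^*\in L^q_+$ that is strictly positive $\mathbb{P}$-a.s.\ and still satisfies $\mathbb{E}[\phi^* g]\leq 0$ on $\overline{C}^p$: take a countable family $(\phi_n)$ from the previous step whose supports approach the essential supremum over all admissible $\phi\in L^q_+$, set $\phi^*:=\sum_n 2^{-n}\phi_n/\|\phi_n\|_{L^q}$, and rule out a strictly smaller support by applying the separation once more to the putative null-complement. Normalising $\frac{d\mathbb{Q}}{d\mathbb{P}}:=\phi^*/\mathbb{E}[\phi^*]$ yields $\mathbb{Q}\sim\mathbb{P}$ with density in $L^q_+$; for any $Y\in\mathcal{H}$ and $\tau\in\mathfrak{T}_T$, both $\int_0^\tau Y\,d\mathbf{S}$ and $-\int_0^\tau Y\,d\mathbf{S}$ belong to $K^p\subset C^p$ by linearity of $\mathcal{H}$, so $\mathbb{E}_\mathbb{Q}[\int_0^\tau Y\,d\mathbf{S}]$ is both $\leq 0$ and $\geq 0$, hence zero. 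The main obstacle is the convexity of $C^p$: unlike the classical simple-strategy Kreps--Yan setup, the no-jump constraint on (piecewise) controlled paths means that combining two stopped strategies must be done via a common refinement of the underlying piecewise partitions together with the liquidation convention; everything else is a direct transcription of the classical argument with the weak-star topology on $L^p$ in place of $L^\infty$.
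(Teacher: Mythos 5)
Your proof is correct and follows essentially the same route as the paper: weak-star continuity of the pairing for the easy direction, and Hahn--Banach separation plus an exhaustion argument to patch local separating functionals into a strictly positive density for the converse (the paper packages the separation as its Lemma~\ref{Lemma6.1} and carries out the exhaustion by hand, which is the same content as your Halmos--Savage step). You are in fact slightly more explicit than the paper at two points — deducing $\phi_A\ge 0$ from $-L^p_+\subset\overline{C}^p$, which the paper's proof of Lemma~\ref{Lemma6.1} omits, and flagging that convexity of $C^p(\mathcal{H},\mathfrak{T}_T)$ across differing stopping times is not automatic — though your ``patching'' remark would still need the stopped strategy $Y^\tau$ to belong to $\mathcal{H}$, a closure-under-stopping assumption that the paper also leaves implicit.
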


\begin{proof}
    The proof of the theorem requires only minimal modification to the proof of Theorem 5.2.2 in \cite{DS06}, and we repeat it for the convenience of readers in Appendix \ref{App:Kreps-Yan}.
\end{proof}
\begin{remark}
    Note that the rough integral in \eqref{eq:Rough Kreps-Yan} is defined pathwise and thus does not depend on the equivalent change of measure. 
\end{remark}

Note that the definition of our (NCFL) condition is made ad hoc, in the sense that we have freedom of choosing the class of admissible strategies $\mathcal{H}\times {\mathfrak{T}}_T$. The above theorem means that the no-arbitrage condition (NCFL) is determined by the existence of a triple $(\mathcal{H}\times {\mathfrak{T}}_T, \mathbb{Q}, \SSS)$, consisting of a class of controlled portfolios $\mathcal{H}\times {\mathfrak{T}}_T$, an equivalent measure $\mathbb{Q}$ and an $\mathcal{H}\times {\mathfrak{T}}_T$-unbiased rough integrator $\SSS$ under $\mathbb{Q}$.  For the sake of generality of allowed trading portfolios in a market, we would like $\mathcal{H}\times {\mathfrak{T}}_T$ as large as possible. However, the obvious dilemma is that the larger the $\mathcal{H\times {\mathfrak{T}}_T}$, the less likely we can find an equivalent measure $\mathbb{Q}$ such that $\SSS$ becomes an $\mathcal{H}\times {\mathfrak{T}}_T-$unbiased rough integrator under $\mathbb{Q}$. One main question that we study in this paper is whether for different classes of $\mathcal{H}\times {\mathfrak{T}}_T$, there exist $\mathcal{H}\times {\mathfrak{T}}_T-$unbiased rough integrators, and if so, how to classify them. We developed answers to such questions in Section \ref{sec:Unbiasedness}.

As a direct application of Theorem \ref{thm:Pol-unbiased rough integrators}, Theorem \ref{Thm:Main Theorem}, and Theorem \ref{Thm:Main Sharp Theorem}, we have the following immediate classification of (NCFL) condition for Bachelier-type rough market models induced by a rough noise.

\begin{corollary}[(NCFL) in Bachelier-type rough noise market models]\label{cor:NCFL Bachelier noise}
    Let $p\in (1, +\infty],\;q\in [1,+\infty)$ with $\frac{1}{p}+ \frac{1}{q}=1$ and let $\SSS$ be a $2-$dimensional rough market with the riskless asset price given by $S^0\equiv 1$ and the risky asset price realised as a 1-dimensional rough noise $\tilde{\SSS}$, as described in Example \ref{example:S^0=1}. Then,
    \begin{itemize}
        \item This market admits the $\mathrm{(NCFL})$ condition of order $p$ within (Markovian-type) polynomial-induced portfolios if and only if there exists an equivalent measure $\mathbb{Q}$ with $L^q$ Radon-Nikodym derivative such that $\tilde{\SSS}$ is a Gaussian-Hermite rough path under $\mathbb{Q}$. 
        \item This market admits the $\mathrm{(NCFL})$ condition of order $p$ within (Markovian-type) polynomial-induced portfolios, and piecewise signature-induced portfolios if and only if there exists an equivalent measure $\mathbb{Q}$ with $L^q$ Radon-Nikodym derivative such that $\tilde{\SSS}$ is the Hermite rough path lift of a deterministic time change of a Chen-Hermite almost Brownian motion under $\mathbb{Q}$.
        \item This market admits the $\mathrm{(NCFL})$ condition of order $p$ within (Markovian-type) polynomial-induced portfolios, and adaptedly scaled piecewise signature-induced portfolios if and only if there exists an equivalent measure $\mathbb{Q}$ with $L^q$ Radon-Nikodym derivative such that $\tilde{\SSS}$ is the It\^o rough path lift of a deterministic time change of a standard Brownian motion under $\mathbb{Q}$.
    \end{itemize}
\end{corollary}
\begin{proof}
    Recall that the renormalisation terms encoded in $\tilde{\SSS}$ are purely pathwise determined. Thus, if $\tilde{\SSS}$ is a rough noise under the physical measure $\mathbb{P}$, i.e., admits deterministic renormalisation terms under $\mathbb{P}$, then it remains a rough noise under any equivalent measure $\mathbb{Q}$. The statements then follow from the rough Kreps-Yan theorem and our classification of $1-$dimensional unbiased rough noises in Section \ref{sec:Unbiasedness}.
\end{proof}

Similarly, if we allow random renormalisation terms in the risky asset rough path $\tilde{\mathbf{S}}$, Corollary \ref{cor:randomrenorm1}, Corollary \ref{cor:simple to BM with TC}, and Corollary \ref{cor:randomrenorm2} apply and we have the following classification of (NCFL) condition for Bachelier-type rough market models.

\begin{corollary}[(NCFL) in Bachelier-type general rough market models]\label{cor:NCFL TC}
     Let $p,q, \mathbf{S}$ and $\widetilde{\mathbf{S}}$ be as above in Corollary \ref{cor:NCFL Bachelier noise}, except for now letting $\tilde{\SSS}$ be with adapted monotone renormalisation terms (cf. Proposition \ref{Rough Paths via Bell Polynomials}). Moreover, assume either $\alpha\in (\frac{1}{3}, \frac{1}{2}]$, or $\alpha\leq \frac{1}{3}$ and the renormalisation terms encoded in $\tilde{\SSS}$ share a synchronous clock (cf. Subsubsection \ref{subsubsec:Syn}). Then,
     \begin{itemize}
         \item This market admits the $\mathrm{(NCFL})$ condition of order $p$ within (Markovian-type) polynomial-induced portfolios with arbitrary exiting times if and only if there exists an equivalent measure $\mathbb{Q}$ with $L^q$ Radon-Nikodym derivative such that $\tilde{\SSS}$ is a time-changed Gaussian-Hermite rough path until a stopping time under $\mathbb{Q}$.
         \item This market admits the $\mathrm{(NCFL})$ condition of order $p$ within (Markovian-type) polynomial-induced portfolios, and piecewise signature-induced portfolios, if and only if there exists an equivalent measure $\mathbb{Q}$ with $L^q$ Radon-Nikodym derivative such that $\tilde{\SSS}$ is the Hermite rough path lift of an adapted time change of a Chen-Hermite almost Brownian motion until a stopping time under $\mathbb{Q}$.
         \item This market admits the $\mathrm{(NCFL})$ condition of order $p$ within (Markovian-type) polynomial-induced portfolios, and adaptedly scaled piecewise signature-induced portfolios with if and only if there exists an equivalent measure $\mathbb{Q}$ with $L^q$ Radon-Nikodym derivative such that $\tilde{\SSS}$ is the It\^o rough path lift of an adapted time change of a standard Brownian motion until a stopping time under $\mathbb{Q}$.
     \end{itemize}
\end{corollary}
\begin{proof}
    The same arguments as above apply together with Corollary \ref{cor:randomrenorm1} and Corollary \ref{cor:randomrenorm2}. In particular, we now consider stochastic exiting times using the renormalisation clocks as defined in \eqref{eq:renorm clock}.
\end{proof}

\begin{remark}
    In particular, we point out that the necessity of Gaussian marginal and increments of the market noise is no longer a model assumption in our theory, but rather a direct (NCFL) consequence for rough markets.
\end{remark}

We give an example pointing out the necessity of allowing the change of measure in the above theorem and corollaries with the mixed fractional Brownian motions (\cite{Che01}).
\begin{example}
    Let $X_t:=B^t+B^H_t$ be a $1-$dimensional mixed fractional Brownian motion where $(B_t)_{t\in [0,T]}$ is a $1-$dimensional standard Brownian motion and $(B^H_t)_{t\in [0,T]}$ is an independent $1-$dimensional fractional Brownian motion with Hurst parameter $H\in (\frac{3}{4}, 1]$. By Theorem \ref{Thm:Main Theorem}, there is no $\mathcal{H}\times {\mathfrak{T}}_T-$ unbiased rough integrator with $(X_t)_{t\in[0,T]}$ as the underlying noise process. The best we can achieve is to lift it to a Gaussian-Hermite rough path with its variance function $G(t):=-\frac{1}{2}\mathrm{Var}(X_t) = -\frac{1}{2}(t+ t^{2H}),$ such that it becomes unbiased with respect to Markovian-type integrands. However, it is well-known that $(X_t)_{[0,T]}$ is a semimartingale for $H\in (\frac{3}{4}, 1]$. More specifically, it is equivalent to a standard Brownian motion in that case (cf. \cite{Che01} Theorem 1.7) under an equivalent measure $\mathbb{Q}$. Thus, we might construct the rough noise lift
    \begin{equation}
        \XXX_{s,t}:=(1,X_{s,t}, H_2(X_{s,t}, -\frac{1}{2}(t-s)))\quad\quad \forall s,t\in [0,T]
    \end{equation}
    together with its deterministic quadratic variation path as the renormalisation term. Then, the law of $\XXX$ under the equivalent measure $\mathbb{Q}$ coincides with the It\^o rough path lift of a standard Brownian motion. Thus, it still admits the (NCFL) condition.
\end{example}

Before moving forward to the next subsection, giving a classification of the (NCFL) condition for rough-path extensions of classical SDE models, we give two final remarks on our rough path setting.

\begin{remark}
Recall from the definition of piecewise controlled paths that we do not allow jumps in the portfolios, except for the exit time. Except for the reason we explained in the previous subsection, that this will exclude classical ill-posed strategies, e.g. the doubling strategy and suicide strategy, there is one more reason for that. If we allow jumps in trading, since the $k-$tuple $(1,0,...,0)$ is always a controlled path for any rough path $\XXX$, simple strategies will also be included in the setting, and the gain process will be nothing else than the classical discrete gain process. This drives us back to the original Kreps-Yan settings, where if \eqref{eq:Rough Kreps-Yan} holds for all simple strategies, the equivalent measure $\mathbb{Q}$ will turn a martingale measure for the price process $S$ (cf. Theorem 5.2.2 in \cite{DS06}), and we will never be able to break through the semimartingale settings. Since the aim of this paper is to detect under maximal constraints, whether rough path renormalisation carries no-arbitrage conditions outside the semimartingale settings, we would like to exclude simple strategies in the setting to avoid obvious answers to this question.
\end{remark}

\begin{remark}
    We make some final comments on why we chose (NCFL) as the no-arbitrage condition in our settings, rather than develop a rough path version of the Fundamental Theorem of Asset Pricing. Since we are in a frictionless setting, we compare with the original version of the Fundamental Theorem of Asset Pricing for semimartingales in \cite{DS94}, where No Free Lunch with Vanishing Risk (NFLVR), a stronger no-arbitrage condition compared with (NFL), was considered. Roughly speaking, one replaces the weak-star topology in (NFL) with the strong topology for (NFLVR) such that an arbitrage can be characterised via a sequence of replicable contingent claims. Indeed, in the proof in \cite{DS94}, there were certain stochastic tricks for semimartingales and stochastic integrals that cannot be naively carried over in the random rough paths setting. The only additional part of the original proof, compared with the Kreps-Yan theorem, is to argue that under the (NFLVR) condition, the strong closure of the gain cone of contingent claims dominated by some replicable claim coincides with the weak-star closure. This extremely technical part relies strongly on the Memin theorem (cf. \cite{Mem80}), which states that the space of stochastic integrals against a fixed semimartingale, viewed as stochastic processes, is closed with respect to the Emery semimartingale metric. However, it is unrealistic to expect such results to hold for general stochastic processes in the random rough path setting since, on the one hand, there would be no natural replacement of the Emery semimartingale metric nor an analogy of a Doob decomposition for general stochastic processes, while on the other hand, there is no ready notion of predictable tightness in the rough path setting and the rough path topology (cf. \cite{FH14} Chapter 2) is too strong to manage the convergence of a sequence.
\end{remark}


\subsection{Rough differential equation models}\label{subsec:RDE models}
In practice, one usually models the stock price as the solution of a stochastic differential equation
describing certain dynamics, where the driving noises are some possibly correlated Brownian motions. Thus, it is natural to discuss the possibility of extending those models to RDE models driven by general rough noises. Now we consider a price process that is driven by an RDE of form
\begin{equation}\label{eq:RDE model RDE}
    \mathrm{d} Y_t=f\left(Y_t\right) \mathrm{d} \XXX _t, \quad Y_0=y_0
\end{equation}
as in Definition \ref{def:RDE}. Before lifting the price process to random rough paths and looking into the gain process, we list some classical market models that can be described as RDE models.

\begin{example}[The Black-Scholes model]
    Let $X_t:=(t, B_t)$ where $(B_t)_{t\in [0,T]}$ is a standard Brownian motion, we define the second order increment of the rough path $\XXX$ via
    \begin{equation}\label{eq:time-augmented Ito lift for BM}
        \XX_{s,t}:= 
        \begin{pmatrix}
            \frac{1}{2}(t-s)^2 & \int_s^t (u-s)\mathrm{d}B_{u}\\
            \int_s^t B_{s,u} \mathrm{d}u & \int_s^t B_{s,u} \mathrm{d}B_u = \frac{1}{2}B_{s,t}^2 - \frac{1}{2}(t-s)
        \end{pmatrix}
        \quad\quad\quad
    \forall s,t\in \Delta_{[0,T]},
    \end{equation}
    where the integrals against the Brownian motions are It\^o integrals. Moreover, by setting $f(t,x): = (\mu(t)\cdot x, \sigma(t)\cdot x)$ with some deterministic drift and volatility functions, \eqref{eq:RDE model RDE} essentially becomes the RDE version of the Black-Scholes SDE. In particular, the solution in the sense of SDE and RDE coincides by \cite{FH14}, Theorem 9.1. 
\end{example}

\begin{example}[Local volatility models]
    With the same choice of $\XXX$ as above, and let $f(t,x): = (\mu(t)\cdot x, \sigma(t,x)\cdot x)$ with some deterministic drift and volatility functions, local volatility models in Black-Scholes form are recovered. By deleting the time-augmentation in $\XXX$ and setting $f(t,x)=\sigma(t,x)$ for some volatility function, one recovers the local volatility models in the Bachelier form.
\end{example}

\begin{example}[Stochastic volatility models]\label{exp:SVM}
    It is possible to encode stochastic volatility into the rough path $\XXX$. We let $X_t:=(t, B^1_t, B^2_t)$ where $(B^1_t)_{t\in [0,T]}$ and $(B^2_t)_{t\in [0,T]}$ are two possibly correlated standard Brownian motions. One can lift $X$ to It\^o-type rough paths via the It\^o integrals in a similar manner as in \eqref{eq:time-augmented Ito lift for BM}. Moreover, for any $t,x,v$ define $f(t, x,v)\in \mathcal{L}(\mathbb{R}^3, \mathbb{R}^2)\cong  \mathbb{R}^{2\times3}$ via
    \begin{equation}
        f(t, x, v):=\begin{pmatrix}
            \mu(t)\cdot x & v\cdot x & 0\\
            \lambda(t) & 0& \sigma(t)\cdot v 
        \end{pmatrix}.
    \end{equation}
    One readily recovers stochastic volatility models of the form 
    \begin{equation}
        \begin{cases}
            \mathrm{d}S_t = \mu(t)\cdot S_t \mathrm{d} t + V_t\cdot S_t \mathrm{d}B^1_t\\
             \mathrm{d}V_t = \lambda(t) \mathrm{d} t + \sigma(t)\cdot V_t\mathrm{d}B^2_t.
        \end{cases}
    \end{equation}
    The Bachelier form of these models can also be described via RDEs in a similar manner as above.
\end{example}

\begin{example}[Rough volatility models]\label{exp:RVM}
    One can also encode rough volatility (cf. \cite{GJR18} and \cite{BFG16}) in the rough path $\XXX$. Since in rough volatility settings, the Hurst parameter of the fractional Brownian motion is usually set to be around $ 0.1$, let us assume in this example that $\alpha >0$ can be arbitrarily small. For simplicity, we consider the Bachelier form. We let $X_t:=(B^H_t, B_t)$, where $B^H$ is a fractional Brownian motion with Hurst parameter $H$ and $B$ is a standard Brownian motion. A joint It\^o rough path lift of $X$ purely via It\^o calculus was constructed in \cite{BFGJ24} via some algebraic extension on the tensor algebra. Here, we give an intuitive sketch of their constructions inductively.

    Let $k:=\lfloor\frac{1}{\alpha} \rfloor$ and $w=w_1\cdots w_n$ be a word with letters in $\{1,2\}$ of length $n\leq k$. For $n=1$, set 
    \begin{equation}
        \langle \XXX_{s,t}, 1\rangle = B_{s,t} \quad \quad \text{and} \quad\quad \langle \XXX_{s,t}, 2\rangle = B^H_{s,t} \quad\quad\forall s,t\in \Delta_{[0,T]}.
    \end{equation}
    For a word containing only the letter $2$, we simply set $\langle \XXX_{s,t}, 2\cdots 2\rangle$ to be the corresponding level of a $1$-dimensional rough path lift of $B^H$. As for the crossed integrals, we set inductively
    \begin{equation}
        \langle \XXX_{s,t}, w1\rangle:=\int_s^t \langle \XXX_{s,u}, w\rangle \mathrm{d}B_u \quad\quad\forall s,t\in \Delta_{[0,T]}
    \end{equation}
    via It\^o calculus; let $i$ be the largest index such that $w_i = 1$, we set
    \begin{equation}
    \begin{aligned}
        \langle \XXX_{s,t}, w2\rangle:=  &\langle \XXX_{s,t}, w_1\cdots w_i\rangle \cdot \langle\XXX_{s,t}, w_{i+1}\cdots w_n 2 \rangle \\ &- \int_s^t \langle\XXX_{s,u}, w_{i+1}\cdots w_n 2 \rangle \mathrm{d}\langle \XXX_{s,u}, w_1\cdots w_i\rangle \quad\quad\forall s,t\in \Delta_{[0,T]},
    \end{aligned}
    \end{equation}
    where the integral is viewed as an It\^o integral against a standard Brownian motion since $w_i =1$. The construction is essentially via a postulated integration by parts between the two components $B$ and $B^H$. It shall be expected that $\XXX$ is a valid rough path, since all our constructions are via valid integrals or some integration by parts property. We leave the verification of this fact to our readers, or see \cite{BFGJ24} Theorem 1.5  and Proposition 2.1 for an algebraic proof in a more general setting and a concrete formula of the construction above. One can then use \eqref{eq:RDE model RDE} but for $\alpha>0$ smaller to rewrite a large class of rough (even possibly path-dependent) volatility models. We refer to \cite{BFGJ24} Section 2.1 for a detailed discussion on the examples. See also \cite{FT24} for a partial rough path space construction for adapting rough volatility models in the context of rough paths.
\end{example}

Obviously, the above constructions shall work for general noises, as long as the noise for the price process and the noise for the volatility process together admit a well-defined random rough path lift. In practice, people model the noise for the price process with Brownian motion because on the one hand, it has the nice martingale property, which is strong enough to build no-arbitrage theorems; while on the other hand, It\^o calculus makes the computation with it fairly reachable. In the rough path/RDE setting, we are indeed able to make the computation reachable for a much wider class of proper noises, e.g. continuous martingales and non-Brownian Gaussian processes. We are thus interested in the question of which are the proper noises for the price process in the RDE models, for the sake of no-arbitrage, understood as the (NCFL) condition \`a la Kreps-Yan.


\subsubsection{The gain process}
To answer the question above, we need to study the gain process of a controlled portfolio in an RDE model. However, a technical barrier here is that the solution of an RDE, as the price process, is not a rough path immediately, and we do not have an integration theory against controlled paths a priori to define the gain process. Luckily, this is not an issue for $\alpha\in (\frac{1}{3}, \frac{1}{2}]$. We review some consistency results between controlled paths and rough paths in Appendix \ref{app:Consistecy}, where we shall be able to define the gain process properly. Similar to classical stochastic calculus, we are able to write the gain process as a rough integral against the underlying rough noise driving the market dynamics from a rough integral against the price process itself.

For simplicity, we consider the generalised RDE framework of the examples above in Bachelier form. Let $X_t:=(M^1_t, M^2_t)$ be some $2-$dimensional market noise process, where $M^1_t$ ($M^2_t$ respectively) is understood as the noise process in the price dynamics (volatility dynamics, respectively), and $\XXX:=(X, \XX)$ be a random rough path lift of $X$. Let $f\in \mathcal{L}(\mathbb{R}^2, \mathbb{R}^2)$ be a $C^2$ function of the form
    \begin{equation}
        f(s,v):=\begin{pmatrix}
            f_1(s,v) & 0\\
            0 & f_2(s,v)
        \end{pmatrix}
    \end{equation}
describing the market dynamics and $Y_t:=(S_t, V_t)$ together with some Gubinelli derivative $Y'_t \in \mathbb{R}^{2\times 2}$ be the solution to the RDE \eqref{eq:RDE model RDE}. Then $(S_t, S'_t)$ is a by-$M^1$-controlled path thanks to \eqref{1-18} and Proposition \ref{Rough Integral} by setting $S'_t:=f_1(S_t, V_t)$. By noticing that $\mathbf{M}^1:=(M^1, \mathbb{M}^1:=\langle \XX, 11\rangle)$ is a rough path lift of the price process noise $M^1$, we are now able to lift $(S, S')$ to a rough market model.

\begin{definition}\label{def:RDE model}
    In the above setting, we call the rough market model $\SSS:=(S, \Ss)$ the corresponding RDE model, where $\Ss$ is defined via:
    \begin{equation}
        \Ss_{s,t}:= \lim _{|\mathcal{P}| \downarrow 0} \sum_{[u,v]\in \mathcal{P}} (S_{s,u} \otimes S_{u, v}+S_u^{\prime} \otimes S_u^{\prime} \mathbb{M}^1_{u, v}) \quad\quad \forall s,t\in \Delta_{[0,T]}
    \end{equation}
    as in Corollary \ref{CPasRP}. 
\end{definition}

\begin{remark}\label{rem:multi-dim RDE model}
    Multi-dimensional RDE price processes driven by multi-dimensional market noises can be obtained in a similar manner.
\end{remark}

Given any controlled portfolio $(Y, Y')$ in the rough market model $\SSS$, we first check that the path $Y$ is also controlled by the underlying noise $M^1$. Indeed,
\begin{equation}\label{eq:Transitivity of CP}
    \begin{aligned}
        |Y_{s,t} -Y'_sS'_sM^1_{s,t}| &\approx |Y_{s,t}-Y'_s(S_{s,t}+O(|t-s|^{2\alpha}))|\\
        & \leq|Y_{s,t}-Y'_sS_{s,t}| +O(|t-s|^{2\alpha}) \lesssim O(|t-s|^{2\alpha}) \quad\quad\forall s,t\in \Delta_{[0,T]},
    \end{aligned}
\end{equation}
where we used the controlledness of $(S, S')$ in the first line and the controlledness of $(Y, Y')$ in the last inequality. Then, the gain process is then given by
\begin{equation}\label{eq:RDE model gain process}
\begin{aligned}
    G_t(Y) &=\int_0^t (Y_s,Y'_s)\mathrm{d}\SSS_s \\
    & = \int_0^t (Y_s,Y'_sS'_s)\mathrm{d}(S_s, S'_s)\\
    & = \int_0^t (Y_sf_1(S_s, V_s), Y'_sS'_sf_1(S_s, V_s)+Y_sf'_1(S_s, V_s))\mathrm{d}\mathbf{M}^1_s \quad\quad \forall t\in [0,T],
\end{aligned}
\end{equation}
where the second line is understood as the integral of a controlled path against another controlled path, as explained in Proposition \ref{1.2.9}. In the above equation, the second equality is due to Proposition \ref{Consistency} and the definition of an RDE that $S$ must satisfy \eqref{1-18}, while the third equality is due to Proposition \ref{Ass}.

\begin{remark}
    In RDE models, one can slightly extend the notion of controlled portfolios defined in Definition \ref{def:Controlled Portfolios}. Indeed, let $(Z,Z')$ be any path controlled by the noise process, including both the price and volatility noise. One might consider it a legit controlled portfolio, whose gain process is defined by
    \begin{equation}
        G_t(Z) := \int_0^t (Z_s,Z'_s)\mathrm{d}(S_s, S'_s)
    \end{equation}
    as the integral of a controlled path against another controlled path, as explained in Proposition \ref{1.2.9}. By similar arguments as in \eqref{eq:Transitivity of CP}, paths that are controlled by either the price process or the volatility process automatically fall into this larger class of portfolios. Then, similar to \eqref{eq:RDE model gain process}, the gain process can be written as a rough integral against the whole rough noise $\XXX$. This allows making full use of market information. Similar ideas were also discussed in \cite{DJ23}.
\end{remark}


\subsubsection{(NCFL) for RDE models}

Now we may discuss the NCFL condition in RDE models. Thanks to Theorem \ref{RoughKYTheorem}, the (NCFL) condition of an RDE model can be immediately reduced to an unbiasedness condition for the underlying rough noise driving the price process:

\begin{corollary}[NCFL for RDE models driven by rough noises]\label{cor:NCFL RDE}
    Let $p\in (1, +\infty],\;q\in [1,+\infty)$ with $\frac{1}{p}+ \frac{1}{q}=1$ and $\SSS$ be a Bachelier RDE model as above. Let $\mathcal{H}$ be a linear and convex set of controlled portfolios containing $0$, and $\mathfrak{T}_T$ be a set of stopping times bounded by $T$ and containing $T$. Then $\SSS$ satisfies (NCFL) of order $p$ within $\mathcal{H}\times\mathfrak{T}_T$, if and only if there exists an equivalent measure $\mathbb{Q}$ such that 
    \begin{equation}
        \frac{\mathrm{d}\mathbb{Q}}{\mathrm{d}\mathbb{P}}\in L^q_+ \quad \text{and} \quad \mathbb{E}_{\mathbb{Q}}[\int_0^\tau (Y_tf_1(S_t, V_t), Y'_tS'_tf_1(S_t, V_t)+Y_tf'_1(S_t, V_t)) \mathrm{d} \mathbf{M}^1_t] =0 \quad\forall Y\in \mathcal{H},\; \forall \tau \in \mathfrak{T}_T.
    \end{equation}
\end{corollary}

The (NCFL) condition can then be directly classified as a consequence of Theorem \ref{thm:Pol-unbiased rough integrators}, Theorem \ref{Thm:Main Theorem}, and Theorem \ref{Thm:Main Sharp Theorem}. We formulate only a short corollary here as a direct consequence of the strongest Theorem \ref{Thm:Main Sharp Theorem}.

For brevity, write
\begin{equation}
\Phi(Y)
:=
\left(
Y f_1(S,V),
Y'S'f_1(S,V)+Yf'_1(S,V)
\right).
\end{equation}
Thus, the condition above is precisely the unbiasedness of
\(\mathbf M^1\) against the induced class \(\Phi(\mathcal H)\).
Recall that
$\mathcal{H}_{\mathbf M^1}^{\#}
=
\operatorname{Span}
\left(
\mathcal{H}_{\mathbf M^1}^{\mathrm{Pol}}\cup \mathcal{H}_{\mathbf M^1}^{\mathrm{AdpSig}}
\right).$ Taking \(\mathfrak{T}_T=[0,T]\), Theorem \ref{Thm:Main Sharp Theorem} gives the following immediate consequence.

\begin{corollary}[Brownian collapse for RDE models]\label{cor:RDE Brownian collapse}
Let $p,q, \SSS$ be as above. Assume that the price-driving rough noise \(\mathbf M^1\) is a rough noise in the sense of Definition \ref{def:rough noise} with deterministic renormalisation terms. Consider a portfolio class \(\mathcal H\) such that $\Phi(\mathcal H)=\mathcal{H}^\#_{\mathbf M^1}$. Then $\SSS$ satisfies NCFL of order $p$ within $\mathcal{H}\times [0,T] $ if and only if there exists an equivalent measure $\mathbb Q$, with
$\frac{d\mathbb Q}{d\mathbb P}\in L^q_+$, under which
$\mathbf M^1$ is the Itô rough path lift of a deterministic time
change of Brownian motion.
\end{corollary}

Similarly, corollaries for RDE market models can be made for $\mathbf{M}^1$ with monotone random renormalisation terms as well, as consequences of Corollary \ref{cor:randomrenorm1}, Corollary \ref{cor:simple to BM with TC}, and Corollary \ref{cor:randomrenorm2}. We omit the details.

\begin{remark}
    The above discussions extend canonically to RDE models where price and volatility have common noises. Then, the (NCFL) condition will be equivalent to the unbiasedness of the rough noises driving the price process in a similar manner as above.
\end{remark}

\begin{remark}
    One can derive the same results for RDE models in Black-Scholes form, where the noise process $\tilde{X}_t:=(t,M^1_t, M^2_t)$ includes also a time augmentation. We sketch the arguments. Let $\tilde{\XXX}:=(\tilde{X}, \tilde{\XX})$ be some rough path lift. Indeed, the gain process of a controlled portfolio can be written as a rough integral against $\tilde{\mathbf{M}}^1$, the rough path lift of $\tilde{M}^1_t:=(t, M^1_t)$ encoded in $\tilde{\XXX}$ and then an analogy of Corollary \ref{cor:NCFL RDE} can be obtained for $\tilde{\mathbf{M}}^1$. Moreover, one can show that the rough integral against $\tilde{\mathbf{M}}^1$ is essentially a rough integral against $\mathbf{M}^1$ plus a Riemann-Stieltjes integral against $t$. By encoding the drift into the noise, one can rewrite the gain process as a rough integral against a single rough noise $\mathbf{M}^{1,drift} = (M^{1,drift}, \mathbb{M}^{1,drift})$. In particular, if $M^1$ is additionally Gaussian with some mild condition, one is able to find an equivalent measure $\mathbb{Q}$ such that $\text{Law}_{\mathbb{P}}(\mathbf{M}^1) = \text{Law}_{\mathbb{Q}}(\mathbf{M}^{1,drift})$ via the seminal Cameron-Martin theorem. In the case of a standard Brownian motion, this is done via Girsanov's transform. Thus, independent of the drift, the NCFL for the RDE model in Black-Scholes form is again equivalent to the unbiasedness of $\mathbf{M}^1$, the rough noise driving the price process, under some equivalent change of measure. We refer to \cite{QX18} for a discussion in the case $W^1=B^H$, a fractional Brownian motion with $H\in (\frac{1}{3}, \frac{1}{2}]$.
\end{remark}


\section{A Pathwise Arbitrage in Geometric Rough Markets}\label{sec:Arbitrage}

So far, we have studied the no-arbitrage of a rough market in the sense of (NCFL) a la Kreps-Yan. A free lunch in such settings lies in the weak-star closure of the set of contingent claims that are replicable by controlled portfolios. It is not clear whether a free lunch exists in this set itself, or, in other words, whether a controlled arbitrage in the sense of Definition \ref{def:Arbitrage} exists. We could not construct an arbitrage for every rough market that is not included in Theorem \ref{Thm:Main Theorem}. However, a classical arbitrage portfolio from \cite{HPS84} would apply if the rough path lift of the price process $S =(S^0,...,S^d)$ is geometric. Previously, this strategy was proven to yield arbitrage opportunities in markets whose sample paths are of bounded variation and bounded $p$-variation for some $p\in [1,2)$, as demonstrated in \cite{HPS84} and \cite{Sal98}, respectively. We adapt it to the rough path setting in this section. The validity of this arbitrage relies essentially on a chain rule of the integral used in the gain process.

We fix a $(d+1)-$dimensional $\alpha-$H\"older continuous price path $S =(S^0,...,S^d)$ throughout this section and let $k:=\lfloor \frac{1}{\alpha} \rfloor$.

\begin{theorem}\label{WGArbitrageTHM}
    Given a rough market model $\SSS:= (1, S, \Ss^2,...,\Ss^k)$ associated with the rough price process $S$, there exists a controlled arbitrage in the sense of Definition \ref{def:Arbitrage}, if $\SSS$ is geometric in the sense of Definition \ref{Geometric Rough Paths}.
\end{theorem}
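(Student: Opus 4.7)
The idea is to adapt the classical Harrison--Pliska--Salopek ``buy low, sell high'' strategy to the rough-path framework, exploiting the fact that a geometric rough path obeys exactly the same chain rule as a smooth path. Concretely, up to an implicit non-degeneracy assumption $\mathbb{P}[S^i_T\neq S^i_0]>0$ for at least one risky coordinate $i$ (without which every admissible portfolio has zero gain), I project $\SSS$ onto the $(0,i)$-coordinates via Remark \ref{rem:reduce-1d} and thereby reduce to the case $d=1$ with a riskless asset $S^0\equiv 1$ and a single risky asset $S:=S^1$ whose rough lift $\SSS$ is geometric. The candidate portfolio is $Y^1_t:=S_t-S_0$ holdings in the risky asset and $Y^0_t:=\tfrac{1}{2}S_0^2-\tfrac{1}{2}S_t^2$ holdings in the bond.

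The first step is to verify that $Y=(Y^0,Y^1)$, equipped with the obvious higher-order Gubinelli data $Y^{1,(2)}\equiv 1$, $Y^{1,(j)}\equiv 0$ for $j\geq 3$, and the higher derivatives of the polynomial $x\mapsto -\tfrac{1}{2}x^2$ evaluated at $S_t$ for the bond component, defines a legitimate controlled portfolio in the sense of Definition \ref{def:Controlled Portfolios}. This is immediate: for $Y^1$ every defect $Y^{1,(i)}_{s,t}-\sum_{j>i}Y^{1,(j)}_s\XX^{j-i}_{s,t}$ vanishes identically, and $Y^0$ is a polynomial in $S$, whose controlledness is covered by Example \ref{Function of Geometric Rough Paths as Controlled Paths}. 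Since $S^0\equiv 1$, the bond coordinate contributes nothing to the rough integral (cf.\ Example \ref{example:S^0=1}), so the entire gain comes from integrating $Y^1$ against $\SSS$.

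The second step is the decisive one: apply the geometric change of variable formula of Remark \ref{rem:Change of variable} (to be cited as Lemma \ref{lemm:Change of variable for GRP}) to the smooth function $F(x):=\tfrac{1}{2}(x-S_0)^2$, whose derivatives $F'(x)=x-S_0$, $F''(x)\equiv 1$, $F^{(j)}\equiv 0$ for $j\geq 3$ reproduce exactly the Gubinelli tuple of $Y^1$. This yields
\begin{equation*}
G_T(Y)\;=\;\int_0^T (Y^{1,(1)}_u,\,Y^{1,(2)}_u,\,0,\dots,0)\,\mathrm{d}\SSS_u\;=\;F(S_T)-F(S_0)\;=\;\tfrac{1}{2}(S_T-S_0)^2.
\end{equation*}
A direct algebraic check of $V_t(Y)-V_0(Y)=Y^0_t+(S_t-S_0)S_t=\tfrac{1}{2}(S_t-S_0)^2=G_t(Y)$ confirms that $(Y,T)$ is self-financing with zero initial capital. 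Since the right-hand side is non-negative pathwise and strictly positive on the event $\{S_T\neq S_0\}$ of positive probability, $(Y,T)$ is a controlled arbitrage in the sense of Definition \ref{def:Arbitrage}.

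The only subtle point is that the chain rule has to apply at the full rough-path regularity $\alpha\in(0,1]$ that we allow. This is precisely where geometricity is indispensable: by Proposition \ref{Rough Paths via Bell Polynomials} all renormalisation functions $G^2,\dots,G^k$ of a geometric lift vanish, so the It\^o-type formula of Theorem \ref{Ito type formula} reduces to the smooth chain rule with no correction terms. For a non-geometric lift, a non-trivial quadratic-renormalisation term $G^2$ would appear and could, in principle, cancel the positive quadratic gain, which is exactly what our classification in Section \ref{sec:Unbiasedness} exploits to obtain the (NCFL) property under an equivalent measure.
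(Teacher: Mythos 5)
Your proof is correct, but it follows a genuinely different construction from the paper's. The paper implements the Harrison--Pliska--Salopek ``power-mean'' strategy: for exponents $p<q$, the $p$-portfolio and $q$-portfolio are functionally generated and self-financing by the geometric chain rule (Lemma~\ref{lemm:Change of variable for GRP}), and shorting the first against the second yields the pathwise non-negative gain $M^q_T-M^p_T$ by the power-mean inequality (Lemma~\ref{PowerMeanIneq}). You instead restrict to a two-asset sub-market and run the classical quadratic arbitrage $Y^1_t=S_t-S_0$, whose gain equals $F(S_T)-F(S_0)=\tfrac12(S_T-S_0)^2$ with $F(x)=\tfrac12(x-S_0)^2$ by the same chain rule. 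Both proofs hinge on the geometric chain rule and both tacitly need a non-degeneracy of the price path (you make it explicit; the paper leaves it implicit). Your argument is shorter and makes transparent where geometricity enters (the vanishing of the renormalisation functions via Proposition~\ref{Rough Paths via Bell Polynomials}), but it is slightly less general on two counts: first, the paper explicitly allows $S^0$ to be non-constant, whereas your reduction to ``$d=1$, $S^0\equiv 1$'' needs the default num\'eraire convention --- the power-mean strategy handles non-constant $S^0$ for free because $M^p$ and $M^q$ are symmetric in all $d+1$ coordinates, while a quadratic analogue would require a num\'eraire change whose self-financing property you would then have to re-derive; second, your citation of Remark~\ref{rem:reduce-1d} is a little loose, since that remark covers $1$-dimensional linear projections $\langle X,\nu\rangle$ rather than restrictions to a $2$-dimensional coordinate sub-market (the analogous restriction statement is true and easy --- Chen's relation and the shuffle identity localise coordinate-wise --- but it should be stated rather than cited). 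Finally, a naming nit: your strategy is trend-following in the holdings (long when $S>S_0$, short when $S<S_0$), and the arbitrage profit is precisely the missing It\^o correction; given the paper's emphasis on renormalisation this is worth stating exactly rather than under the gloss ``buy low, sell high.''
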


For simplicity of notation, we assume without loss of generality that $S^e_0=1$ for any $e=0,...,d$. Otherwise, we can rescale the investment in each asset. We note that we do not fix $S^0$ to be constant or of bounded variation here.

For any $p\in \mathbb{R}$, we set 
\begin{equation}
    M^p_t:=\left\{\frac{1}{d+1} \sum_{e=0}^d\left(S^e_t\right)^p\right\}^{1 / p}
\end{equation}
to be the $p$-th order mean of the prices and define the $p$-portfolio $Y:=(Y^{(1)},..., Y^{(k)})$ with $Y^{(1)}:=(Y^{(1);0},...,Y^{(1);d})$ where
\begin{equation}
    Y^{(1);e}_t=M^p_t \frac{\left(S^e_t\right)^{p-1}}{\sum_{m=0}^d\left(S^m_t\right)^p}=: F(S_t)^e
\end{equation}
as a smooth function of the underlying price process $S$, and $Y^{(i)}_t:=D^{(i-1)}F(S_t)$ as described in Example \ref{Function of Geometric Rough Paths as Controlled Paths} for $i=2,...,k$.

We now first prove that the above portfolio is self-financing in the sense of Definition \ref{def:Self-financing}. 

\begin{proposition}\label{Prop:Arbitrage is self financing}
    In the above settings, we have for any $t$:
    \begin{equation}
        M^p_t = M_0^p + \int_0^t Y_s \mathrm{d} \SSS_s.
    \end{equation}
    In particular, the portfolio is self-financing.
\end{proposition}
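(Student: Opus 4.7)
The plan is to recognise the $p$-portfolio as a gradient strategy generated by the scalar function $\phi(x):=M^p(x)$ and then invoke the geometric It\^o/change-of-variable formula on it. First I would verify that the holdings $Y^{(1);e}_t$ are precisely the partial derivatives of $\phi$: a direct computation gives
\begin{equation}
    \partial_{x^e}\phi(x)\;=\;\frac{1}{p}\Bigl(\tfrac{1}{d+1}\sum_m (x^m)^p\Bigr)^{1/p-1}\cdot\frac{p\,(x^e)^{p-1}}{d+1}\;=\;M^p(x)\cdot\frac{(x^e)^{p-1}}{\sum_m (x^m)^p}\;=\;F(x)^e,
\end{equation}
so that $Y^{(1)}_t=D\phi(S_t)$ and, by construction, $Y^{(i)}_t=D^{(i)}\phi(S_t)$ for $i=1,\ldots,k$.

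Next I would apply the $d$-dimensional change of variable formula for geometric rough paths alluded to in Remark \ref{rem:Change of variable} (and to be cited as Lemma \ref{lemm:Change of variable for GRP}). Since $\SSS$ is geometric and $\phi$ is smooth away from the coordinate hyperplanes (a regime we may assume on the trading window, possibly after a stopping argument to keep prices bounded away from zero), the formula yields
\begin{equation}
    \phi(S_t)-\phi(S_0)\;=\;\int_0^t \bigl(D\phi(S_u),\,D^{(2)}\phi(S_u),\,\ldots,\,D^{(k)}\phi(S_u)\bigr)\,\mathrm{d}\SSS_u\;=\;\int_0^t Y_u\,\mathrm{d}\SSS_u,
\end{equation}
which is exactly the claimed identity $M^p_t=M^p_0+\int_0^t Y_s\,\mathrm{d}\SSS_s$. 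Controlledness of the integrand is automatic here since smooth functions of a geometric rough path yield controlled paths (cf. Example \ref{Function of Geometric Rough Paths as Controlled Paths}).

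To conclude that $Y$ is self-financing in the sense of Definition \ref{def:Self-financing}, I would compute the current portfolio value
\begin{equation}
    V_t(Y)\;=\;\sum_{e=0}^d Y^{(1);e}_t\cdot S^e_t\;=\;M^p_t\cdot\frac{\sum_e (S^e_t)^p}{\sum_m (S^m_t)^p}\;=\;M^p_t,
\end{equation}
so $V_t(Y)-V_0(Y)=M^p_t-M^p_0$, matching the gain process obtained above. The main obstacle is purely regularity bookkeeping: $\phi$ is smooth only where $\sum_m(x^m)^p>0$ and the arguments $(x^e)^{p-1}$ are well-defined, so for non-integer $p$ one must either assume strictly positive prices throughout $[0,T]$ or work up to a suitable exit time from a compact domain in the positive orthant. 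Since the rough market model allows an early liquidation (cf. the admissible strategies), this is harmless: stopping at the first exit yields a self-financing controlled strategy to which the chain rule applies without issue.
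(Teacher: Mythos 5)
Your proposal is correct and follows essentially the same route as the paper: identify the holdings as the gradient of the scalar function $x\mapsto M^p(x)$ (the paper writes this via $G(x)=(\sum_m x_m^p)^{1/p}$ up to the constant $(\tfrac{1}{d+1})^{1/p}$), apply the change-of-variable formula for geometric rough paths from Lemma \ref{lemm:Change of variable for GRP}, and check directly that $V_t(Y)=M^p_t$. The extra remarks on positivity of prices and stopping are harmless bookkeeping that the paper handles implicitly via the normalisation $S^e_0=1$ and positivity of prices.
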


We need a change of variable formula for $d-$dimensional geometric rough paths as noted in Remark \ref{rem:Change of variable}. The result is well-known, so we only cite it.

\begin{lemma}\label{lemm:Change of variable for GRP}
    Let $X:[0,T] \rightarrow \mathbb{R}^d$ be an $\alpha$-H\"older continuous path and $\XXX:= (1, X, \XX^2,..., \XX^k)$ be a geometric rough path lift with $k:=\lfloor \frac{1}{\alpha}\rfloor$. Now for any smooth function $F: \mathbb{R}^d \rightarrow \mathbb{R}$, we have 
    \begin{equation}
F\left(X_t\right)-F\left(X_s\right)=\int_s^t (D F\left(X_r\right), D^{(2)}F\left(X_r\right),..., D^{(k)}F\left(X_r\right)  )\mathrm{d} \XXX_r, \quad\quad\forall 0\leq s,t \leq T,
\end{equation}
where the integrand is a controlled path.
\end{lemma}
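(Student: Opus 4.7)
The plan is to adapt the one-dimensional It\^o-type formula of Theorem \ref{Ito type formula} to the multi-dimensional geometric setting, where every renormalisation term $G^i$ vanishes and the combinatorics of iterated integrals is recovered purely by the shuffle relation. The essential ingredient is the following consequence of \eqref{Shuffle Product}: for any symmetric $i$-tensor $A\in (\mathbb{R}^d)^{\otimes i}$ and any $0\le v\le w\le T$,
\begin{equation}\label{eq:contraction-identity}
\langle A,\XX^{i}_{v,w}\rangle \;=\; \frac{1}{i!}\,\langle A, X_{v,w}^{\otimes i}\rangle.
\end{equation}
This is proved by writing $\langle A, X_{v,w}^{\otimes i}\rangle = A^{j_1\cdots j_i}X^{j_1}_{v,w}\cdots X^{j_i}_{v,w}$, expanding each product of single-letter components via \eqref{Shuffle Product} as $\langle \XXX_{v,w}, j_1\shuffle\cdots\shuffle j_i\rangle = \sum_{\sigma\in S_i}\XX^{i; j_{\sigma(1)}\cdots j_{\sigma(i)}}_{v,w}$, relabelling summation indices, and using the symmetry of $A$ to collapse the $i!$ permutations into a single $i!\cdot\langle A,\XX^i_{v,w}\rangle$.

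First I would verify that the candidate integrand $(DF(X),D^{(2)}F(X),\dots,D^{(k)}F(X))$ is a genuine $\XXX$-controlled path in the sense of Definition \ref{Controlled Paths}, extending Example \ref{Function of Geometric Rough Paths as Controlled Paths} to the multi-dimensional case. Each $D^{(i)}F(X_t)$ is symmetric as a tensor, and a Taylor expansion of $D^{(i)}F$ around $X_s$ combined with the contraction identity \eqref{eq:contraction-identity} applied to the successive symmetric derivatives yields
\begin{equation}
\Big| D^{(i)}F(X_t) - \sum_{j=i+1}^{k} D^{(j)}F(X_s)\cdot \XX^{j-i}_{s,t}\Big| \;\lesssim\; |t-s|^{(k+1-i)\alpha},
\end{equation}
which is exactly the controlledness requirement \eqref{eq:Controlled Path}. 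Proposition \ref{Rough Integral} then guarantees that the rough integral $\int_s^t(DF,\dots,D^{(k)}F)(X_r)\,\mathrm{d}\XXX_r$ exists as the limit of compensated Riemann sums $\Xi_{v,w}:=\sum_{i=1}^{k}D^{(i)}F(X_v)\cdot \XX^{i}_{v,w}$.

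Next I would establish the pointwise Taylor expansion of $F(X_w)-F(X_v)$ in terms of $\Xi_{v,w}$. Taylor's theorem gives
\begin{equation}
F(X_w)-F(X_v) \;=\; \sum_{i=1}^{k}\frac{1}{i!}\,\bigl\langle D^{(i)}F(X_v),\,X_{v,w}^{\otimes i}\bigr\rangle \;+\; O\bigl(|w-v|^{(k+1)\alpha}\bigr),
\end{equation}
and since each $D^{(i)}F(X_v)$ is a symmetric $i$-tensor, the contraction identity \eqref{eq:contraction-identity} turns the $i$-th summand into $D^{(i)}F(X_v)\cdot\XX^{i}_{v,w}$, giving $F(X_w)-F(X_v)=\Xi_{v,w}+O(|w-v|^{(k+1)\alpha})$. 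Summing telescopically over a partition $\mathcal{P}$ of $[s,t]$ and letting $|\mathcal{P}|\downarrow 0$, the cumulative error is bounded by $|\mathcal{P}|^{(k+1)\alpha-1}\cdot(t-s)\to 0$ because $(k+1)\alpha>1$, and the Riemann sum converges to the rough integral, yielding the desired identity.

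The heart of the argument, and in my view the only non-routine step, is the shuffle-based contraction identity \eqref{eq:contraction-identity}: it is the unique place where the geometric hypothesis on $\XXX$ is used, and it is exactly what would fail for a non-geometric lift, obliging one to compensate with the renormalisation terms as in Theorem \ref{Ito type formula}. Carefully writing out the combinatorial cancellation for arbitrary mixed multi-indices $j_1\cdots j_i$, and making sure the relabelling of summation variables correctly exploits the full symmetry of $A$, is the main bookkeeping task; once that is in place, the sewing estimate and the passage to the limit are essentially identical to the proof of Proposition \ref{Rough Integral}.
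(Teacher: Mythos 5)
The paper does not actually prove this lemma: it simply cites Kelly's thesis (\cite{Kel12}, Theorem 5.2.16, with trivial bracket extension). You instead give a self-contained proof inside the paper's own framework, by adapting the telescoping/sewing structure of Theorem~\ref{Ito type formula} to the multidimensional geometric case. Your key contraction identity
\begin{equation*}
\langle A,\XX^{i}_{v,w}\rangle=\tfrac{1}{i!}\langle A,X_{v,w}^{\otimes i}\rangle,\qquad A\ \text{symmetric},
\end{equation*}
is exactly the multidimensional geometric substitute for the Bell-polynomial bookkeeping the paper uses in one dimension; your derivation of it from \eqref{Shuffle Product} (iterated shuffles of single letters giving the sum over $S_i$, then symmetry of $A$ collapsing the $i!$ permutations) is correct, and the partial-contraction variant you need for the controlledness of $(DF(X),\dots,D^{(k)}F(X))$ goes through by the same relabelling since each $D^{(i+m)}F(X_s)$ is symmetric in its last $m$ slots. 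The telescoping sum together with the error bound $O(|w-v|^{(k+1)\alpha})$ and $(k+1)\alpha>1$ then identifies the limit of $\sum\Xi_{v,w}$ with the rough integral via the sewing lemma. One minor slip: in your controlledness display the left-hand side should be the increment $D^{(i)}F(X_t)-D^{(i)}F(X_s)$ rather than $D^{(i)}F(X_t)$ alone, to match \eqref{eq:Controlled Path}; the Taylor expansion you invoke produces the correct term, so this is a typographical rather than logical issue. Overall, where the paper delegates to the literature, you give a transparent, self-contained argument that also clarifies precisely where and how the geometric (shuffle) hypothesis enters; that is a genuinely useful alternative, and arguably a better fit for the paper's otherwise self-contained exposition of one-dimensional rough-path calculus.
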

\begin{proof}
    See e.g. \cite{Kel12} Theorem 5.2.16 by setting the bracket extension therein to be trivial.
\end{proof}

\begin{proof}[Proof of Proposition \ref{Prop:Arbitrage is self financing}]
    First, set $G: \mathbb{R}^{d+1}\rightarrow \mathbb{R},\; (x_0,...,x_{d+1})\mapsto (\sum_m (x_m)^p)^{\frac{1}{p}}.$ Then one has $DG(x_0,...,x_d)= \{(\sum_m (x_m)^p)^{\frac{1}{p}-1}\cdot (x_e)^{p-1}\}_{e=0,...,d}$, which is again a smooth function. Thus, with some abuse of notation on rough integrals, we have
    \begin{equation}
        \int_0^t Y_s \mathrm{d} \SSS_s= (\frac{1}{d+1})^{\frac{1}{p}}\int_0^T DG(S_s^0,...,S_s^d) \mathrm{d} \SSS_s= (\frac{1}{d+1})^{\frac{1}{p}} (G(S_t)- G(S_0))= M^p_t-M^p_0
    \end{equation}
    where the middle equality is due to Lemma \ref{lemm:Change of variable for GRP}. Moreover, it is straightforward to verify that the portfolio value process is exactly $M^p_t$ and therefore the portfolio is self-financing. 
\end{proof}

To construct an arbitrage based on such portfolios, we cite the following inequality about the $p$-order mean:
\begin{lemma}[$p$-th order mean]\label{PowerMeanIneq}
    For a finite number of non-negative numbers $a_1$,..., $a_n$ and $p\in \mathbb{R}$, we define their $p-$th order mean as 
    \begin{equation}
        M^p(a):=\left\{\frac{1}{n} \sum_{k=1}^n\left(a_k\right)^p\right\}^{1 / p}.
    \end{equation}
    For $-\infty< p<q< \infty$, we have $M^p(a)\leq M^q(a)$, where equality holds only when $a_1=a_2=\cdots =a_n$.
\end{lemma}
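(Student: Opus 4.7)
The plan is to prove this classical Power Mean Inequality by a single application of Jensen's inequality to $\varphi(x)=x^{q/p}$ on $(0,\infty)$, with careful bookkeeping of signs to make the argument uniform across the cases $0<p<q$, $p<q<0$, and $p<0<q$. To avoid degenerate issues, I would first reduce to the case $a_1,\dots,a_n>0$ (strict positivity): if some $a_k=0$ and $p,q>0$, both sides are continuous in the $a_k$'s, so the general inequality follows by approximation; the cases involving $p\le 0$ with some $a_k=0$ are interpreted as $M^p=0$, which makes the inequality trivial on that side.

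The core step is as follows. Assume $p<q$ with $pq\neq 0$ (I will handle $p=0$ or $q=0$ separately if needed, via the geometric-mean limit, but since the statement only demands $p\neq q$ in $\mathbb R$, a short limiting argument suffices). Observe that $\varphi(x)=x^{q/p}$ is strictly convex on $(0,\infty)$ in each of the situations $q/p>1$ (both $p,q$ positive) and $q/p<0$ (i.e. $p<0<q$), while it is strictly concave when $0<q/p<1$ (both $p,q$ negative, since then $|q|<|p|$). Applying Jensen's inequality to the uniform probability measure on $\{a_1^p,\dots,a_n^p\}$ then yields
\begin{equation}
\Bigl(\tfrac1n\sum_{k=1}^n a_k^p\Bigr)^{q/p}\;\le\;\tfrac1n\sum_{k=1}^n a_k^q
\quad\text{in the convex cases,}
\end{equation}
and the reverse inequality in the concave case. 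Raising both sides to the $1/q$ power and remembering that $x\mapsto x^{1/q}$ is increasing for $q>0$ and decreasing for $q<0$ reverses the inequality exactly when needed, so in every case one recovers $M^p(a)\le M^q(a)$.

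For the equality statement, I would invoke the strict version of Jensen: since $\varphi$ is strictly convex (resp.\ concave) in each sub-case, equality in Jensen's inequality holds if and only if $a_1^p=\cdots=a_n^p$, which by injectivity of $x\mapsto x^p$ on $(0,\infty)$ forces $a_1=\cdots=a_n$. The ``hard part'' of this proof is not mathematical but bookkeeping: one must keep track of which direction the Jensen inequality goes and whether the final $1/q$-root flips it, so I would present the three cases ($0<p<q$, $p<0<q$, $p<q<0$) as a small table before executing the single Jensen step once, to keep the argument uniform and avoid circular sign errors.
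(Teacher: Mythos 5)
Your proof is correct and complete: the convexity classification of $\varphi(x)=x^{q/p}$ across the three sign configurations is right, and the sign of $1/q$ flips the inequality in exactly the one concave case ($p<q<0$), so a single application of Jensen with the strict-version equality criterion delivers both the inequality and the rigidity statement. The paper does not prove this lemma at all---it simply cites Hardy--Littlewood--P\'olya, Theorem~16---so you are supplying what the paper omits. Your argument is in fact the standard proof found in that reference, organized slightly more economically around one Jensen step with sign bookkeeping rather than separate cases; nothing essential differs. One benign remark: in the sub-case $p<q<0$ with some $a_k=0$ (and not all $a_j$ equal), both $M^p$ and $M^q$ evaluate to $0$ under the usual convention, so the stated equality characterization fails for nonnegative data; this is a defect inherited from the lemma's phrasing, not from your proof---your reduction to strictly positive $a_k$ is exactly the right hygiene, and you might simply note that the equality clause should be read on $(0,\infty)^n$ (which is in any case the only regime the paper uses, since the price processes there satisfy $S_0^e=1>0$).
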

\begin{proof}
    See \cite{HLP59} Theorem 16.
\end{proof}

Now, we consider the following strategy: Let $q>p$. We buy $\frac{1}{d+1}$ shares of each security at time zero and manage this portfolio according to the $q$-portfolio, and sell $\frac{1}{d+1}$ shares of each security at time zero and manage these short positions according to the $p$-portfolio. This yields a profit of $M^q(t)-M^p(t)>0$ (cf. Lemma \ref{PowerMeanIneq}) at any time $t$ with no investment. Formally, the portfolio is a controlled path as the difference between two controlled paths. Moreover, since the gain is positive at any time, this portfolio is indeed an arbitrage, which finishes the proof of Theorem \ref{WGArbitrageTHM}. In particular, we point out that this is a pathwise arbitrage strategy.

Finally, we derive a Lipschitz-H\"older version of the results in \cite{HPS84} and \cite{Sal98} as a corollary of Theorem \ref{WGArbitrageTHM}.

\begin{corollary}
    If the sample paths of the price process $S$ is $\alpha$-H\"older continuous for some $\alpha\in (\frac{1}{2}, 1]$, and the integral used in the gain process is the corresponding Riemann-Stieltjes integral against $S$, the above constructed strategy is an arbitrage in the market.
\end{corollary}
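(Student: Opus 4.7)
The plan is to reduce the statement to a direct application of Theorem \ref{WGArbitrageTHM}. The key observation is that for $\alpha \in (\tfrac{1}{2},1]$, we have $k:=\lfloor \tfrac{1}{\alpha}\rfloor = 1$, so a rough path lift consists solely of the path increments $\mathbf{S}=(1,S)$, with no higher tensor levels. First, I would note that such a truncated rough path is automatically geometric in the sense of Definition \ref{Geometric Rough Paths}: the shuffle product identity \eqref{Shuffle Product} must only be checked for pairs of words whose total length does not exceed $k=1$, and there are no such nontrivial pairs, so the condition is vacuously satisfied.

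Second, I would verify that the rough integral from Proposition \ref{Rough Integral} coincides with the classical Riemann-Stieltjes integral against $S$ in this regime. Indeed, controlled paths reduce to $Y=(Y^{(1)})$ with $Y^{(1)}$ being an $\alpha$-H\"older continuous path, and the compensated Riemann sum in \eqref{Compensated Riemann Sum} degenerates into
\begin{equation}
    \int_s^t Y_u\mathrm{d}\mathbf{S}_u = \lim_{|\mathcal{P}|\downarrow 0} \sum_{[t_n,t_{n+1}]\in\mathcal{P}} Y^{(1)}_{t_n}\, S_{t_n,t_{n+1}},
\end{equation}
which is precisely Young's integral, and in particular the Riemann-Stieltjes integral whenever the latter is well-defined. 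Hence the gain process in the market is exactly the rough integral in Definition \ref{def:Self-financing}.

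Third, the smoothness of the mean function $G:(x_0,\ldots,x_d)\mapsto \bigl(\sum_m x_m^p\bigr)^{1/p}$ ensures that $(F(S_t))_{t\in[0,T]}$ is a valid (piecewise) controlled portfolio in the sense of Example \ref{Function of Geometric Rough Paths as Controlled Paths}; with $k=1$ this just requires $F\in C^1$, which is automatic. Applying Theorem \ref{WGArbitrageTHM} to the geometric rough market model $(1,S)$ with the $p$-portfolio and $q$-portfolio combination yields the desired pathwise arbitrage, with the instantaneous profit $M^q_t - M^p_t > 0$ guaranteed by Lemma \ref{PowerMeanIneq}.

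I do not expect any serious obstacle: the only substantive issue is confirming the coincidence of the rough integral with the Riemann-Stieltjes integral under the corollary's hypothesis, which is Young's classical result. Everything else is an invocation of Theorem \ref{WGArbitrageTHM} once geometricity and integrand regularity have been established.
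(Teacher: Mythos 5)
Your proof is correct, and it takes a genuinely different (and arguably more direct) route from the paper's own argument. The paper embeds $S$ into a level-$2$ rough path $(S,\Ss)$ with $\Ss_{s,t}:=\int_s^t S_{s,u}\,\mathrm{d}S_u$ defined via the Young/Riemann--Stieltjes integral (valid since $S$ has finite $p$-variation for some $p<2$), then argues (i) that the compensator term $Y'_s\Ss_{s,t}$ is $O(|t-s|^{2\alpha})$ with $2\alpha>1$ and hence vanishes in the Riemann-sum limit, so the rough integral against $(S,\Ss)$ agrees with the Riemann--Stieltjes gain process, and (ii) that $(S,\Ss)$ is geometric because the Young integral satisfies the chain rule (Remark \ref{rem:GRP}). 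You instead observe that for $\alpha\in(\tfrac12,1]$ one has $k=\lfloor 1/\alpha\rfloor=1$, so the canonical rough market model is the bare level-$1$ object $(1,S)$: there is no compensator to dispatch, a controlled path is just an $\alpha$-H\"older path, the rough integral \eqref{Compensated Riemann Sum} is literally the Young limit, and geometricity is vacuous because the shuffle identity \eqref{Shuffle Product} imposes no nontrivial constraints on words of total length $\le 1$. Both routes then conclude by Theorem \ref{WGArbitrageTHM}; at level $1$ its engine, Lemma \ref{lemm:Change of variable for GRP}, reduces to the fundamental theorem of calculus for Young integrals. Your approach avoids constructing and analysing $\Ss$ altogether. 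The only cosmetic issue is the phrase ``Young's integral, and in particular the Riemann--Stieltjes integral whenever the latter is well-defined,'' which has the containment backwards — the Young integral is the extension of Riemann--Stieltjes to this H\"older regime — but this does not affect the argument, since the corollary's ``Riemann--Stieltjes integral against $S$'' is precisely the Young integral one must use when $S$ is not of bounded variation.
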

\begin{proof}
    Since $S$ is $\alpha$-H\"older continuous for some $\alpha\in (\frac{1}{2}, 1]$, it is also of bounded $p-$variation for some $p\in [1,2)$, one might define $\Ss_{s,t}:= \int_s^t S_{s,u}\mathrm{d}S_u$ as the Riemann-Stieltjes integral and then $\SSS:=(S,\Ss)$ is a well-defined rough path. We notice that the following compensation term in the rough integral $\int_0^T (Y, Y') \mathrm{d} \SSS$
    \begin{equation}
        Y'_{s}\Ss_{s,t}\sim O(|t-s|^{2\alpha})\quad\quad \forall 0\leq s\leq t\leq T
    \end{equation}
    since $\Ss$ is $2\alpha$-H\"older. Since $2\alpha>1$, we have 
    \begin{equation}
        \lim _{|\mathcal{P}| \downarrow 0} \sum_{[s, t] \in \mathcal{P}}Y_s^{\prime} \mathbb{S}_{s, t} = 0
    \end{equation}
    and thus 
    \begin{equation}
        \int_0^T (Y, Y') \mathrm{d} \SSS = \lim _{|\mathcal{P}| \downarrow 0} \sum_{[s, t] \in \mathcal{P}}\left(Y_s S_{s, t}+Y_s^{\prime} \mathbb{S}_{s, t}\right) = \lim _{|\mathcal{P}| \downarrow 0} \sum_{[s, t] \in \mathcal{P}} Y_s S_{s, t} = \int_0^T Y \mathrm{d} S,
    \end{equation}
    i.e., the rough integral coincides with the Riemann-Stieltjes integral. Thus, the gain process coincides with the one we define above. Since Riemann-Stieltjes integral admits a chain rule/ integration by parts, the rough path $(S, \Ss)$ is indeed geometric by the discussions in Remark \ref{rem:GRP}. By our discussion above, our strategy is an arbitrage in such markets.
\end{proof}

The above construction also has the following immediate consequence for RDE models:

\begin{corollary}
   Let $\alpha \in (\frac{1}{3}, \frac{1}{2}]$ and $\SSS$ be an (d+1)-dimensional RDE model for $d>0$ as in Definition \ref{def:RDE model} and Remark \ref{rem:multi-dim RDE model}. If the rough path lift of the price process $M^1$ encoded in the rough market noise $\XXX$ is geometric, then the above construction yields a controlled arbitrage in the rough market model $\SSS$.
\end{corollary}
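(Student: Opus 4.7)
By Theorem \ref{WGArbitrageTHM}, it suffices to show that the rough market lift $\SSS=(1,S,\Ss)$ built in Definition \ref{def:RDE model} is itself a geometric rough path. Since $\alpha\in(\tfrac{1}{3},\tfrac{1}{2}]$ fixes $k=2$, this reduces to the single shuffle identity at level two, namely
\begin{equation*}
\mathrm{Sym}(\Ss_{s,t})=\tfrac{1}{2}\,S_{s,t}\otimes S_{s,t},\qquad\forall(s,t)\in\Delta_{[0,T]}.
\end{equation*}
Once this is established, the $(p,q)$-power-mean strategy constructed after Theorem \ref{WGArbitrageTHM}, applied to the $d$ risky coordinates of $S$ and extended trivially on the riskless coordinate via Example \ref{example:S^0=1}, yields the desired controlled arbitrage without further modification.

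The plan is to compute $\mathrm{Sym}(\Ss_{s,t})$ directly from the compensated Riemann sum in Definition \ref{def:RDE model}, splitting its two summands. For the ``free'' first summand, a straightforward re-indexing gives the telescoping identity
\begin{equation*}
\sum_{[u,v]\in\mathcal{P}}\bigl(S_{s,u}\otimes S_{u,v}+S_{u,v}\otimes S_{s,u}\bigr)=S_{s,t}\otimes S_{s,t}-\sum_{[u,v]\in\mathcal{P}}S_{u,v}\otimes S_{u,v},
\end{equation*}
which accounts for $\tfrac{1}{2}S_{s,t}\otimes S_{s,t}$ plus an unwanted quadratic-variation remainder that does not vanish as $|\mathcal{P}|\downarrow 0$ in the H\"older regime $\alpha\le\tfrac{1}{2}$. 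For the ``bracket'' second summand, I will use the assumed geometricity of $\mathbf{M}^1$, giving $\mathrm{Sym}(\mathbb{M}^1_{u,v})=\tfrac{1}{2}M^1_{u,v}\otimes M^1_{u,v}$, to collapse the symmetric part to a rank-one tensor, and then invoke the controlled-path estimate $S_{u,v}=S'_uM^1_{u,v}+O(|v-u|^{2\alpha})$ from Definition \ref{Controlled Paths} to substitute $S'_uM^1_{u,v}$ by $S_{u,v}$:
\begin{equation*}
\mathrm{Sym}\bigl(S'_u\otimes S'_u\,\mathbb{M}^1_{u,v}\bigr)=\tfrac{1}{2}(S'_uM^1_{u,v})\otimes(S'_uM^1_{u,v})=\tfrac{1}{2}\,S_{u,v}\otimes S_{u,v}+O\bigl(|v-u|^{3\alpha}\bigr).
\end{equation*}
Summation over $\mathcal{P}$ therefore produces exactly the missing $\tfrac{1}{2}\sum S_{u,v}\otimes S_{u,v}$, cancelling the unwanted remainder from the first summand, while the $O(|v-u|^{3\alpha})$ error aggregates to $O(|\mathcal{P}|^{3\alpha-1}(t-s))\to 0$ since $3\alpha-1>0$.

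The main technical obstacle is the index-tracking step in the bracket part: one must check that the geometric symmetrisation of $\mathbb{M}^1$ passes cleanly through the contraction with the matrix $S'_u=f_1(S_u,V_u)\in\mathbb{R}^{d\times m}$ to yield the rank-one tensor $(S'_uM^1_{u,v})\otimes(S'_uM^1_{u,v})$, and that the substitution error is uniform over the partition. The hypothesis $\alpha>\tfrac{1}{3}$ is used precisely and only here, to ensure $3\alpha-1>0$ so that the cubic remainder vanishes in the sewing limit. Once $\SSS$ is shown to be geometric, Theorem \ref{WGArbitrageTHM} applies verbatim, and the $(p,q)$-portfolio delivers the strictly positive pathwise gain $M^q_t-M^p_t>0$ from Lemma \ref{PowerMeanIneq}, i.e., a controlled arbitrage in $\SSS$.
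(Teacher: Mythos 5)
Your proof is correct and takes essentially the same route as the paper: deduce that $\SSS$ is geometric, then invoke Theorem~\ref{WGArbitrageTHM} for the power-mean arbitrage. The only deviation is that the paper simply cites Proposition~\ref{WGpreserved} (geometricity is preserved when lifting a controlled path to a rough path) for the first step, whereas you re-derive that proposition inline via a slightly different but equally valid telescoping of the compensated Riemann sum.
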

\begin{proof}
    As discussed in Definition \ref{def:RDE model}, $\SSS$ is the rough path lift of $(S,S')$ as a controlled path by $(M^1, \mathbb{M}^1)$ via Proposition \ref{CPasRP}, where $(M^1, \mathbb{M}^1)$ is the rough path lift of the price process $M^1$ encoded in the rough market noise $\XXX$. If $(M^1, \mathbb{M}^1)$ is geometric, then so is $\SSS$ by Proposition \ref{WGpreserved}. The statement then follows from Theorem \ref{WGArbitrageTHM}.
\end{proof}

\begin{remark}
    We refer to \cite{HPS84} Section II for some economic heuristics for such portfolios. Roughly speaking, as $q \rightarrow \infty$, the associated strategy becomes closer to holding only the highest-priced security (equal division when there is more than one security at the maximum). This fact lends credibility to such a strategy of ``going with the leader," while a chain rule provides, in mathematics, the desired self-financing condition of such portfolios. 
\end{remark}


\section{Conclusion and Further Work}\label{sec:Conclusion}

In this paper, we conducted a fundamental investigation into the capacity of rough path theory to serve as a foundation for continuous, frictionless, No Controlled Free Lunch (NCFL) financial market models. It turns out that by allowing different types of portfolios and exiting times, the restriction on the admissible rough noise becomes severe. Our theory demonstrates again that the semimartingale property is not an artefact of Itô's calculus, but rather a robust and fundamental consequence of the absence of arbitrage in a frictionless environment. The power of rough path theory, in this context, lies not in creating new arbitrage-free models but in providing a definitive proof of their non-existence. We note that simple strategies were avoided in developing our theory. In the context of rough noise, i.e., random rough paths with deterministic renormalisation terms (cf. Proposition \ref{Rough Paths via Bell Polynomials}), a brief summary is given by:

\vspace{1em}

\noindent\resizebox{\textwidth}{!}{%
\begin{tabular}{|p{3.5cm}|p{3cm}|p{6.5cm}|p{2.5cm}|}
\hline
\textbf{Admissible Portfolios ($\mathcal{H}$)} & \textbf{Exiting Times ($\mathfrak{T}_T$)} & \textbf{Required Structure of Rough Noise $\XXX$ (Unbiased Integrator)} & \textbf{Key Result} \\
\hline
Polynomial-induced Markovian-type  portfolios ($\mathcal{H}^{\text{Pol}}$) & No early exiting $\mathfrak{T}_T:=\{T\}$ & \textbf{Gaussian-Hermite Rough Path:} $X$ has Gaussian marginals; lift is defined by Hermite polynomials with $G^2(t) = -\frac{1}{2}\text{Var}(X_t)$. & Theorem \ref{thm:Pol-unbiased rough integrators} \\
\hline
Polynomial-induced + Signature-induced portfolios $\mathrm{Span}$($\mathcal{H}^{\text{Pol}} \cup \mathcal{H}^{\text{pSig}}$) & Any deterministic exit time $\mathfrak{T}_T:=[0,T]$ & \textbf{Hermite Lift of a Time-Changed Chen-Hermite Almost Brownian Motion:} $X$ has Brownian-like increments and satisfies the Chen-Hermite balancing condition. & Theorem \ref{Thm:Main Theorem} \\
\hline
Polynomial-induced + adaptedly scaled Signature-induced portfolios ($\mathcal{H}^\# := \mathrm{Span}(\mathcal{H}^{\text{Pol}} \cup \mathcal{H}^{\text{AdpSig}}$))  & Any deterministic exit time $\mathfrak{T}_T:=[0,T]$ & \textbf{Itô Lift of a Time-Changed Standard Brownian Motion:} $X$ has independent Gaussian increments. & Theorem \ref{Thm:Main Sharp Theorem} \\
\hline
\end{tabular}
}

\vspace{1em}
\noindent Similar classification results were made for random rough paths with monotone random renormalisation terms (and a synchronous clock) in Subsection \ref{subsec:Random renorm}.


Building on this foundational work, several promising avenues for future research emerge. First, as conjectured in Conjecture \ref{conjecture}, it would be interesting to distinguish a Chen-Hermite almost Brownian motion and a $1-$dimensional standard Brownian motion. Second, it is still worth discussing whether rough paths and rough integrals allow more general rough noises in modelling markets with frictions. For instance, it would be interesting to discuss no-arbitrage in markets with delayed information in the framework of delayed rough integrals introduced in \cite{NNT08}. Third, our current framework contains only continuous random rough paths with deterministic renormalisation terms and cannot deal with jumps yet. In the c\`adl\`ag scenario, other normal martingales, e.g. an It\^o lift of the compensated Poisson process $X_t:= N_t-t$, should also be admissible, where clearly the current constraints that renormalisation terms must be deterministic must be removed, since the pathwise quadratic variation is $N_t$, a pure jump process. Developing a similar theory for random c\`adl\`ag rough paths (cf. \cite{FS17}) shall yield a more comprehensive answer to the possibility of frictionless market modelling with rough paths.  Finally, as mentioned, our framework is not enough to explain RDEs driven by non-geometric $\alpha-$H\"older continuous rough paths for $\alpha\leq \frac{1}{3}$. The barrier is that the consistency between controlled paths and rough paths cannot be restored in that case. Passing to the \emph{branched} setting (cf. \cite{Gub10}) restores compatibility and, simultaneously, enlarges the admissible renormalisation class for non-geometric lifts of a fixed path. On the market side, extending to branched drivers is natural for the gain process of signature-based portfolios in signature-based models (cf. \cite{CGS23}): the gain process then contains tree-level terms such as $\int_0^t \mathbb{X}^{(n)}_{0,u}\,\mathrm{d}\mathbb{X}^{(m)}_u$, which are not canonical in our current tensor algebra setting but arise as basis elements of the Connes–Kreimer algebra in the setting of \cite{Gub10}. A key next step is to analyse the (NCFL) condition in this enlarged class and to classify  \emph{unbiased branched/c\`adl\`ag rough integrators} relative to the same families of admissible strategies.

\appendix

\section{Some proofs}\label{app:Proofs}

We collect some proofs in this Appendix. 
\subsection{Proof of Proposition \ref{Rough Integral}}\label{App:Rough Integral}
We begin with the proof of Proposition \ref{Rough Integral}, which requires Gubinelli's seminal sewing lemma \cite{Gub04}. We cite the version of \cite{FH14} Lemma 4.2 and refer the readers to a proof therein as well. Let $W$ be a Banach space. We introduce the space $C_2^{\alpha, \beta}([0, T], W)$ of functions $\Xi$ from the 2 -simplex $\{(s, t): 0 \leq s \leq t \leq T\}$ into $W$ such that $\Xi_{t, t}=0$ and such that
$$
\|\Xi\|_{\alpha, \beta} \:{=} \|\Xi\|_\alpha+\|\delta \Xi\|_\beta<\infty,
$$
where $\|\Xi\|_\alpha=\sup _{s<t} \frac{\left|\Xi_{s, t}\right|}{|t-s|^\alpha}$ as usual, and also
$$
\delta \Xi_{s, u, t} :{=} \Xi_{s, t}-\Xi_{s, u}-\Xi_{u, t}, \quad\|\delta \Xi\|_\beta :{=} \sup _{s<u<t} \frac{\left|\delta \Xi_{s, u, t}\right|}{|t-s|^\beta} .
$$

\begin{lemma}[\textbf{Sewing lemma}]\label{Sewing Lemma}
    Let $\alpha$ and $\beta$ be such that $0<\alpha \leq 1<\beta$. Then, there exists a unique linear map $\mathcal{I}: C_2^{\alpha, \beta}([0, T], W) \rightarrow C^\alpha([0, T], W)$ such that $(\mathcal{I} \Xi)_0=0$ and
\begin{equation}\label{1-14}
    \left|(\mathcal{I} \Xi)_{s, t}-\Xi_{s, t}\right| \leq C|t-s|^\beta .
\end{equation}
where $C$ only depends on $\beta$ and $\|\delta \Xi\|_\beta$. $\mathcal{I}$ is called the sewing map. (The $\alpha$-H\"older norm of $\mathcal{I} \Xi$ also depends on $\|\Xi\|_\alpha$ and hence on $\left.\|\Xi\|_{\alpha, \beta}.\right)$
\end{lemma}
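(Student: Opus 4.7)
The plan is to construct $\mathcal{I}\Xi$ as the limit of compensated Riemann-type sums along refining partitions, and then verify the four required properties (existence of the limit, the bound in \eqref{1-14}, $\alpha$-Hölder regularity, and uniqueness) by separate arguments. Concretely, for a partition $\mathcal{P}=\{s=t_0<t_1<\cdots<t_N=t\}$ of $[s,t]$, define
\begin{equation}
    \Xi^{\mathcal{P}}_{s,t}:=\sum_{i=0}^{N-1}\Xi_{t_i,t_{i+1}},
\end{equation}
and set $(\mathcal{I}\Xi)_{s,t}:=\lim_{|\mathcal{P}|\downarrow 0}\Xi^{\mathcal{P}}_{s,t}$, provided the limit exists.

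The core quantitative step — and the main technical obstacle — is to show that this limit exists and to produce the key estimate
\begin{equation}
    \bigl|\Xi^{\mathcal{P}}_{s,t}-\Xi_{s,t}\bigr|\;\le\; C_\beta\,\|\delta\Xi\|_\beta\,|t-s|^\beta,
\end{equation}
uniformly in $\mathcal{P}$. The standard way I would carry this out is the \emph{Young-style point removal}: given a partition of $N\ge 2$ interior subintervals of $[s,t]$, by a pigeonhole argument there exists an index $j$ with $t_{j+1}-t_{j-1}\le 2(t-s)/(N-1)$. Removing $t_j$ from $\mathcal{P}$ changes $\Xi^{\mathcal{P}}_{s,t}$ by exactly $-\delta\Xi_{t_{j-1},t_j,t_{j+1}}$, whose norm is bounded by $\|\delta\Xi\|_\beta\bigl(2(t-s)/(N-1)\bigr)^\beta$. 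Iterating until only the trivial partition $\{s,t\}$ remains, and summing the geometric-type series in $N$ (which converges since $\beta>1$), yields the bound with $C_\beta=2^\beta\sum_{N\ge 1}N^{-\beta}$. The same telescoping argument, applied to the difference of two refinements of a common partition, shows that $(\Xi^{\mathcal{P}}_{s,t})_{\mathcal{P}}$ is Cauchy as $|\mathcal{P}|\downarrow 0$, hence $(\mathcal{I}\Xi)_{s,t}$ is well defined; the bound \eqref{1-14} passes to the limit.

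From the construction, $(\mathcal{I}\Xi)_{t,t}=0$ and the additivity $(\mathcal{I}\Xi)_{s,t}=(\mathcal{I}\Xi)_{s,u}+(\mathcal{I}\Xi)_{u,t}$ for $s\le u\le t$ follows by concatenating partitions of $[s,u]$ and $[u,t]$. Defining $(\mathcal{I}\Xi)_r:=(\mathcal{I}\Xi)_{0,r}$ then gives a genuine path with $(\mathcal{I}\Xi)_0=0$. For $\alpha$-Hölder regularity, one simply writes
\begin{equation}
    \bigl|(\mathcal{I}\Xi)_r-(\mathcal{I}\Xi)_q\bigr|=\bigl|(\mathcal{I}\Xi)_{q,r}\bigr|\le |\Xi_{q,r}|+C_\beta\|\delta\Xi\|_\beta|r-q|^\beta\le \|\Xi\|_\alpha|r-q|^\alpha+C_\beta\|\delta\Xi\|_\beta|r-q|^\beta,
\end{equation}
so $\|\mathcal{I}\Xi\|_\alpha$ is controlled by $\|\Xi\|_{\alpha,\beta}$ on any bounded interval.

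For uniqueness, suppose $\mathcal{J}$ is another linear map with the stated properties and set $R_{s,t}:=(\mathcal{J}\Xi)_{s,t}-(\mathcal{I}\Xi)_{s,t}$ where $(\mathcal{J}\Xi)_{s,t}$ denotes the increment of the path $\mathcal{J}\Xi$. Then $R$ is additive in the Chen-relation sense (so $\delta R=0$) and, by the triangle inequality applied to \eqref{1-14} for both maps, satisfies $|R_{s,t}|\le 2C|t-s|^\beta$ with $\beta>1$. Additivity plus $o(|t-s|)$ regularity forces $R\equiv 0$ by a standard telescoping argument: for any $s<t$ and $n\in\mathbb{N}$,
\begin{equation}
    |R_{s,t}|=\Bigl|\sum_{i=0}^{n-1}R_{s+i(t-s)/n,\;s+(i+1)(t-s)/n}\Bigr|\le n\cdot 2C\bigl((t-s)/n\bigr)^\beta=2C|t-s|^\beta\,n^{1-\beta}\xrightarrow{n\to\infty}0,
\end{equation}
so $R\equiv 0$. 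This also proves linearity and continuity of $\mathcal{I}$, completing the proof.
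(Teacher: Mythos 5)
The paper does not give its own proof of this lemma; it simply cites Friz--Hairer, Lemma 4.2. Your proof is correct and is essentially the standard textbook argument that the cited reference also relies on: compensated Riemann sums, the Young-style point-removal estimate (pigeonhole an index $j$ with $t_{j+1}-t_{j-1}\le 2(t-s)/(N-1)$, peel off one point at a time, sum a convergent $\beta$-series), the refinement comparison to get convergence, and the subdivision argument $|R_{s,t}|\le 2C\,n^{1-\beta}|t-s|^\beta\to 0$ for uniqueness. Two small slips of phrasing, neither fatal: removing $t_j$ changes $\Xi^{\mathcal P}_{s,t}$ by $+\delta\Xi_{t_{j-1},t_j,t_{j+1}}$, not $-\delta\Xi_{t_{j-1},t_j,t_{j+1}}$ (irrelevant once you take norms); and for the Cauchy step you should compare two partitions $\mathcal P_1,\mathcal P_2$ each against their \emph{common refinement} $\mathcal P_1\cup\mathcal P_2$, not ``two refinements of a common partition.'' The latter step also deserves the one extra line you implicitly use: for $\mathcal Q\supseteq\mathcal P$, applying the point-removal bound on each $[t_i,t_{i+1}]\in\mathcal P$ gives
\begin{equation}
\bigl|\Xi^{\mathcal Q}_{s,t}-\Xi^{\mathcal P}_{s,t}\bigr|\le C_\beta\|\delta\Xi\|_\beta\sum_i(t_{i+1}-t_i)^\beta\le C_\beta\|\delta\Xi\|_\beta\,|\mathcal P|^{\beta-1}(t-s),
\end{equation}
which vanishes as $|\mathcal P|\downarrow 0$ and is what makes the net Cauchy. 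With these cosmetic fixes the proof is complete.
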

\begin{proof}
    See \cite{FH14} Lemma 4.2.
\end{proof}

Now we may prove Proposition \ref{Rough Integral}.

\begin{proof}[Proof of Proposition \ref{Rough Integral}]
    We apply Gubinelli's sewing lemma to the local approximations
    \begin{equation}\label{Local Approximation}
        \Xi_{s,t} := \sum_{i=1}^k Y^{(i)}_{s} \XX^{i}_{s, t} \quad \quad \forall s,t \in [0,T].  
    \end{equation}
We compute:
\begin{equation}
\begin{aligned}
     |\delta \Xi_{s,u,t}| & = |\sum_{i=1}^k (Y^{(i)}_{s}\XX^i_{s,t} - Y^{(i)}_{s}\XX^i_{s,u} - Y^{(i)}_{u}\XX^i_{u,t})| \\
     &= |\sum_{i=1}^k (Y^{(i)}_{s}(\XX^i_{u,t} + \sum_{j=1}^{i-1} \XX^{i-j}_{s,u}\otimes \XX_{u,t}^j) - Y^{(i)}_{u}\XX^i_{u,t})|  \\
     &= |\sum_{i=1}^k -Y^{(i)}_{s,u} \XX_{u,t}^{i} + \sum_{i=1}^k \sum_{j = 1}^{i-1} Y^{(i)}_s (\XX_{s,u}^{i-j}\otimes \XX^j_{u,t})|\\
     &= |\sum_{i=1}^k -Y^{(i)}_{s,u} \XX_{u,t}^{i} + \sum_{i=1}^{k-1} \sum_{j = i+1}^{k} Y^{(j)}_s (\XX_{s,u}^{j-i}\otimes \XX^i_{u,t})| \\
     &=  |\sum_{i=1}^k -(Y^{(i)}_{s,u} - \sum_{j = i+1}^k Y^{(j)}_s \XX_{s,u}^{j-i})\XX_{u,t}^{i}|\lesssim |t-s|^{(k+1)\alpha} \quad\quad \forall s<u<t\in [0,T],
\end{aligned}
\end{equation}
where the second equality is due to Chen's relation, the fourth equation is a change of index $(i,j)$ to $(j,i)$, and the last inequality is due to the rough path regularity that $\XX_{u,t}^{i}\lesssim |t-u|^{i\alpha} \leq |t-s|^{i\alpha}$ and the controlledness that $Y^{(i)}_{s,u} - \sum_{j = i+1}^k Y^{(i)}_s\cdot \XX_{s,u}^{j-i}\lesssim |u-s|^{(k+1-i)\alpha} \leq |t-s|^{(k+1-i)\alpha}$. We note that $Y^{(j)}_s \XX_{s,u}^{j-i}$ in the last line is characterized as an element in $\mathcal{L}((\mathbb{R}^d)^{\otimes ^{i}}, V)\cong \mathcal{L}((\mathbb{R}^d)^{\otimes ^{i-1}}, \mathcal{L}(\mathbb{R}^d, V))$ via
\begin{equation}
     v\mapsto Y^{(j)}_s (\XX_{s,u}^{j-i}\otimes v) \quad\quad \forall v\in (\mathbb{R}^d)^{\otimes ^i},
\end{equation}
where we recall that $ Y^{(j)}_s\in \mathcal{L}((\mathbb{R}^d)^{\otimes ^{j-1}}, \mathcal{L}(\mathbb{R}^d, V))\cong \mathcal{L}((\mathbb{R}^d)^{\otimes ^{j}}, V)$ and that $\XX_{s,u}^{j-i} \in (\mathbb{R}^d)^{\otimes^{j-i}}$, with which it is straightforward to check that
\begin{equation}
    Y^{(j)}_s (\XX_{s,u}^{j-i}\otimes \XX^i_{u,t}) = (Y^{(j)}_s \XX_{s,u}^{j-i})\XX_{u,t}^{i}
\end{equation}
in writing the fifth equality. Now, since $(k+1)\cdot\alpha>1$, Gubinelli's sewing lemma applies and the limit on the RHS of \eqref{Compensated Riemann Sum} does exist as $\mathcal{I}\Xi$ where $\mathcal{I}$ is the sewing map in the above lemma. 

Now we prove that the $k-$tuple $(\int_0^\cdot Y_u\mathrm{d}\XXX_u, Y^{(1)},..., Y^{(k-1)})=: (Z^{(1)},..., Z^{(k)})$ is again a $V$-valued controlled path. By taking $\beta=(k+1)\alpha$, it follows from the error bound \eqref{1-14} in the sewing lemma and the fact that $\XX^k\in C^{k\alpha}$:
\begin{equation}
\begin{aligned}
    |Z^{(1)}_{s,t}- \sum_{j=2}^k Z^{(j)}_{s,t}\XX^{j-1}_{s,t}| &= |(\mathcal{I}\Xi)_{s,t}- \sum_{i=1}^{k-1} Y^{(i)}_{s,t}\XX^{i}_{s,t}|\\
    & \leq |(\mathcal{I}\Xi)_{s,t}- \sum_{i=1}^{k} Y^{(i)}_{s,t}\XX^{i}_{s,t}| + |Y^{(k)}_{s,t}\XX^{k}_{s,t}| \\
    &\lesssim |t-s|^{(k+1)\alpha} + |t-s|^{k\alpha}\lesssim |t-s|^{k\alpha}\quad\quad \quad\forall 0<s,t<T.
\end{aligned}
\end{equation}
Similarly, for any $i=2,...,k-1$, we compute:
\begin{equation}
    \begin{aligned}
    |Z^{(i)}_{s,t}- \sum_{j=i+1}^k Z^{(j)}_{s,t}\XX^{j-i}_{s,t}| &= |Y^{(i-1)}_{s,t}- \sum_{j=i}^{k-1} Y^{(j)}_{s,t}\XX^{j+1-i}_{s,t}|\\
    & \leq |Y^{(i-1)}_{s,t}- \sum_{j=i}^{k} Y^{(j)}_{s,t}\XX^{j+1-i}_{s,t}| + |Y^{(k)}_{s,t}\XX^{k+1-i}_{s,t}| \\
    &\lesssim |t-s|^{(k+2-i)\alpha} + |t-s|^{(k+1-i)\alpha}\lesssim |t-s|^{(k+1-i)\alpha}\quad\quad \quad\forall 0<s,t<T,
\end{aligned}
\end{equation}
where we used \eqref{eq:Controlled Path} for the $(i-1)$-th component of the controlled path $Y$ and the fact that $\XX^{k+1-i} \in C^{(k+1-i)\alpha}$ in writing the error bounds. The controlledness of $Z$ is thus proven.
\end{proof}

\subsection{Proof of Proposition \ref{prop:Sig as CP}}\label{App:Sig as CP}
\begin{proof}[Proof of Proposition \ref{prop:Sig as CP}]
    For $n\leq k$, we prove \eqref{eq:Sig as CP} by induction on $n$. Recall the convention that $\langle \XXX, \emptyset \rangle \equiv 1$. \eqref{eq:Sig as CP} is obviously true for $n=1$. Assume \eqref{eq:Sig as CP} holds for all words of length $m<n$. Given any word $w$ of length $n$, then 
    \begin{equation}
         \Big(\langle \XXX_{s,\cdot}, w^{-1} \rangle, \langle \XXX, w^{-2} \rangle \otimes e_{w_{n-1}},\cdots, \; \langle \XXX, \emptyset \rangle \otimes e_{w_1}\otimes\cdots \otimes e_{w_{n-1}},0,...,0\Big)
    \end{equation}
    is a controlled path since $w^{-1}$ is of length $n-1$. Then, after taking the tensor product with $e_{w_n}$, one may compute the following rough integral
    \begin{equation}
    \begin{aligned}
         \int _s^t & \Big(\langle \XXX_{s,u}, w^{-1} \rangle \otimes e_{w_{n}}, \langle \XXX_{s,u}, w^{-2} \rangle \otimes e_{w_{n-1}}\otimes e_{w_{n}},\cdots, \; \langle \XXX_{s,u}, \emptyset \rangle \otimes e_{w_1}\otimes\cdots \otimes e_{w_{n}},0,...,0\Big) \mathrm{d} \XXX_u\\
        & =\lim_{|\mathcal{P}|\downarrow 0} \sum_{[u,v]\in \mathcal{P}} \sum_{i=1}^{n} \langle \XXX_{s,u}, w_{1}\cdots w_{n-i} \rangle \cdot \langle \XXX_{u,v}, w_{n-i+1}\cdots w_{n}\rangle\\
        & = \lim_{|\mathcal{P}|\downarrow 0} \sum_{[u,v]\in \mathcal{P}} \langle \XXX_{s,v}, w \rangle - \langle \XXX_{s,u}, w \rangle\\
        & = \langle \XXX_{s,t}, w \rangle \quad\quad\quad \quad\quad\quad\quad\quad\quad\quad\quad\quad\forall 0\leq s,t\leq T.
    \end{aligned}
    \end{equation}
    where the first equality is the definition of the rough integral, the second is by Chen's relation, and the last is just summing up the telescoping sum. By Proposition \ref{Rough Integral}, we have the controlledness of the integral path again.
    
    That we may extend $\XXX$ via \eqref{eq:Sig Portfolio int} follows the same argument inductively for $n>k$.
\end{proof}

\subsection{Proof of Lemma \ref{lem:Bell}}\label{App:Bell}
We verify the explicit formula of complete Bell polynomials in Lemma \ref{lem:Bell}.

\begin{proof}[Proof of Lemma \ref{lem:Bell}]
    We compute from the definition:
    \begin{equation}
        \begin{aligned}
             \exp(\sum^k_{m=1}a_m\cdot x^m)& = \sum_{n=0}^\infty \frac{1}{n!}(\sum^k_{m=1}a_m\cdot x^m)^n\\
             & = \sum_{n=0}^\infty \frac{1}{n!}\sum_{\substack{p_1,..., p_k\geq0 \\ p_1+\cdots+p_k=n}}\frac{n!}{p_1!\cdot \cdots \cdot p_k!}\prod_{m=1}^k (a_m\cdot x^m)^{p_m}\\
             & = \sum_{n=0}^\infty \sum_{\substack{p_1,..., p_k\geq0 \\ p_1+\cdots+p_k=n}}\frac{1}{p_1!\cdot \cdots \cdot p_k!}(\prod_{m=1}^k a_m^{p_m})\cdot x^{\sum_{m=1}^k m\cdot p_m}\\
             & = \sum_{n=0}^\infty \sum_{\substack{p_1,..., p_k\geq0 \\ p_1+2p_2+\cdots+k\cdot p_k=n}} (\prod_{m=1}^k \frac{a_m^{p_m}}{p_m!})\cdot x^n,
        \end{aligned}
    \end{equation}
    where we collect the terms by the total degree $n = \sum_{m=1}^k m\cdot p_m$ in the last line. The lemma then follows from the Definition \ref{BellPoly}.
\end{proof}

\subsection{Proof of Theorem \ref{RoughKYTheorem}}\label{App:Kreps-Yan}
We now prove Theorem \ref{RoughKYTheorem}. Let us first state a functional analytical lemma.

\begin{lemma}\label{Lemma6.1}
    Assume that $(NCFL)$ of order $p$ within $\mathcal{H}$ holds. For any fixed $f\in L^p_+\backslash \{0\}$, there exists $g\in L^q_+ \subset (L^p)^*$, viewed as a linear functional, such that $\mathbb{E}[h\cdot g] \leq 0$ for any $h \in \overline{C}^p(\mathcal{H}\times {\mathfrak{T}}_T)$, and such that $\mathbb{E}[f\cdot g]> 0$.
\end{lemma}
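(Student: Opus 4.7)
The plan is a direct application of the geometric Hahn-Banach separation theorem to the pair $(\{f\}, \overline{C}^p)$ in the appropriate locally convex topology, following the standard Kreps-Yan scheme invoked in the paper.

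First, I would record three structural properties of $\overline{C}^p := \overline{C}^p(\mathcal{H}, \mathfrak{T}_T)$: (i) it is a convex cone containing $0$, inheriting the linear and convex structure of $\mathcal{H}$ (together with $0 \in \mathcal{H}$ and $T \in \mathfrak{T}_T$, which give $0 \in K^p$) and the cone structure of $L^p_+$; (ii) it contains $-L^p_+$, since taking the zero integral $X = 0 \in K^p$ and an arbitrary $k \in L^p_+$ shows $-k \in C^p \subseteq \overline{C}^p$; (iii) it is closed in the weak topology $\sigma(L^p, L^q)$ for $p \in (1, \infty)$, or in the weak-star topology $\sigma(L^\infty, L^1)$ for $p = \infty$, by construction.

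Second, by the $\mathrm{(NCFL)}$ hypothesis, $f \notin \overline{C}^p$. I would apply the geometric Hahn-Banach theorem in the locally convex TVS $(L^p, \sigma(L^p, L^q))$ (resp.\ $(L^\infty, \sigma(L^\infty, L^1))$), strictly separating the closed convex set $\overline{C}^p$ from the compact singleton $\{f\}$: there exists a continuous linear functional represented by some $g \in L^q$ and a constant $c \in \mathbb{R}$ with
\begin{equation}
\mathbb{E}[h g] \leq c < \mathbb{E}[f g], \qquad \forall h \in \overline{C}^p.
\end{equation}
Exploiting the cone property, for any $h \in \overline{C}^p$ and $\lambda > 0$ we have $\lambda h \in \overline{C}^p$, so $\lambda \mathbb{E}[h g] \leq c$; sending $\lambda \to +\infty$ forces $\mathbb{E}[h g] \leq 0$. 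Specialising to $h = 0 \in \overline{C}^p$ yields $c \geq 0$, whence $\mathbb{E}[f g] > c \geq 0$.

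Finally, I would verify positivity of $g$: by property (ii), $-k \in \overline{C}^p$ for every $k \in L^p_+$, so the previous bound $\mathbb{E}[(-k) g] \leq 0$ reads $\mathbb{E}[k g] \geq 0$ for all $k \in L^p_+$; testing with $k = \mathbf{1}_{\{g < 0\}} \in L^p_+$ yields $g \geq 0$ almost surely, i.e., $g \in L^q_+$. The argument contains no deep obstacle, being a routine separation; the only subtlety is to select the correct topology depending on whether $p = \infty$ or $p \in (1, \infty)$, and to justify the convex-cone structure of $\overline{C}^p$ in the piecewise-controlled-path setting using only the linearity of $\mathcal{H}$, with both points being standard and already implicit in the Kreps-Yan reference \cite{DS06} cited for the main theorem.
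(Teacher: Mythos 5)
Your proof follows the same Kreps--Yan separation scheme as the paper: apply Hahn--Banach/second-separation to the closed convex cone $\overline{C}^p$ and the compact singleton $\{f\}$, then exploit the cone structure and $0 \in \overline{C}^p$ to turn the separating constant into $0$. The only difference is that you explicitly verify $g \in L^q_+$ by testing against $-L^p_+ \subseteq \overline{C}^p$, a step the paper's proof omits (it declares the lemma proven once $\mathbb{E}[hg]\le 0$ and $\mathbb{E}[fg]>0$ are in hand); your version is therefore slightly more complete, though the positivity of $g$ is indeed exactly as you derive it and would usually be considered part of the standard argument.
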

\begin{proof}
    A direct application of the Second Separation Theorem (c.f. Theorem II.9.2 in \cite{SW99}) to the closed convex set $\overline{C}^p(\mathcal{H}\times {\mathfrak{T}}_T)$ and the compact set $\{f\}$ gives us some $\alpha< \beta$ and $g\in L^q \cong (L^p)^*$ such that $g|_{\overline{C}^p(\mathcal{H}\times {\mathfrak{T}}_T)}\leq \alpha$ and  $g(f)> \beta$. Notice that $0\in \overline{C}^p(\mathcal{H}\times {\mathfrak{T}}_T)$ so $\alpha \geq 0$. Moreover, since $\overline{C}^p(\mathcal{H}\times {\mathfrak{T}}_T)$ is a cone, $\alpha$ has to be non-positive and thus $\alpha=0$ and the lemma is proven. 
\end{proof}

Now we are ready to prove the Kreps-Yan type theorem.

\begin{proof}[Proof of Theorem \ref{RoughKYTheorem}]
     We first assume there exists such an equivalent measure $\mathbb{Q}$. Then $\mathbb{E}_{\mathbb{Q}}[X]\leq0$ for every $X\in C^p(\mathcal{H}\times {\mathfrak{T}}_T)$ by assumption. Moreover, since the map $X\mapsto \mathbb{E}_{\mathbb{Q}}[X]=\mathbb{E}_{\mathbb{P}}[X\cdot\frac{\mathrm{d\mathbb{Q}}}{\mathrm{d}\mathbb{P}}]$ is continuous with respect to the weak star topology by the definition of the weak star topology and the assumption that $\frac{\mathrm{d}\mathbb{Q}}{\mathrm{d}\mathbb{P}}\in L^q_+ \subseteq(L^p)^*$, this then extends to $\overline{C}^p(\mathcal{H}\times {\mathfrak{T}}_T)$. This then implies the desired $(NCFL)$ property.\\
    Now we assume the $(NCFL)$ condition of order $p$ within $\mathcal{H}\times {\mathfrak{T}}_T$ holds. We shall show the existence of a desired equivalent measure by an exhaustion argument. Denote by $\mathcal{G}$ the set of all $g \in L_{+}^q\subseteq (L^q)^*$ such that $g|_{\overline{C}^p(\mathcal{H}\times {\mathfrak{T}}_T)} \leq 0$. Since $0 \in \mathcal{G}$, $\mathcal{G}$ is non-empty. Now let $\mathcal{S}$ be the family of (equivalence classes of) subsets of $\Omega$ formed by the supports $\{g>0\}$ of the elements $g \in \mathcal{G}$. Note that $\mathcal{S}$ is closed under countable unions, as for a sequence $\left(g_n\right)_{n=1}^{\infty} \in \mathcal{G}$, we may find strictly positive scalars $\left(\alpha_n\right)_{n=1}^{\infty}$, such that $\sum_{n=1}^{\infty} \alpha_n g_n \in \mathcal{G}$. Thus, let $\left(g_n\right)_{n=1}^{\infty} \in \mathcal{G}$ be a sequence such that $\mathbb{P}[\{g_n>0\}]\rightarrow \sup_{g \in \mathcal{G}} \mathbb{P}[\{g>0\}]$, we may choose $g_0:=\sum_{n=1}^{\infty} \alpha_n g_n \in \mathcal{G}$ for some proper choices of $\alpha_n>0$ (e.g., $ \alpha_n:=
2^{-n} /(1+\left\|g_n\right\|_q)
$). By the $\sigma$-additivity of $\mathbb{P}$, $g_0$ satisfies:
\begin{equation}
    \mathbb{P}\left[\left\{g_0>0\right\}\right]=\sup_{g \in \mathcal{G}} \mathbb{P}[\{g>0\}].
\end{equation}
We now claim that $\mathbb{P}\left[\left\{g_0>0\right\}\right]=1$, which readily shows that $g_0$ is strictly positive almost surely. Indeed, if $\mathbb{P}\left[\left\{g_0>0\right\}\right]<1$, then we could apply Lemma \ref{Lemma6.1} to $f=\mathbbm{1}_{\left\{g_0=0\right\}}$ to find $g_1 \in \mathcal{G}$ with
\begin{equation}
    \mathbb{E}_{\mathbb{P}}\left[f g_1\right]=\int_{\left\{g_0=0\right\}} g_1(\omega) d \mathbb{P}(\omega)>0.
\end{equation}
Hence, $g_0+g_1$ would be an element of $\mathcal{G}$ such that $\mathbb{P}\left[\left\{g_0+g_1>0\right\}\right]>\mathbb{P}\left[\left\{g_0>0\right\}\right]$, a contradiction. Normalise $g_0$ so that $\left\|g_0\right\|_1=1$ and let $\mathbb{Q}$ be the measure on $\mathcal{F}$ with Radon-Nikod\'ym derivative $\frac{d \mathbb{Q}}{d \mathbb{P}}=g_0$. From $g_0\leq 0$ on $\overline{C}^p(\mathcal{H}\times {\mathfrak{T}}_T)$ one implies $g_0 = 0$ on the linear subspace $K^p(\mathcal{H}\times {\mathfrak{T}}_T)$. Thus, $\mathbb{Q}$ satisfies the desired properties.
\end{proof}


\section{Some comments on the Chen-Hermite almost Brownian motion}\label{subsubsec:CHABM}

We now collect a few immediate consequences and cautions of the Chen-Hermite almost Brownian motion (cf. Definition \ref{def:CHABM}). Let us first provide some evidence for Conjecture \ref{conjecture}. First, the Gaussian moment bounds and Kolmogorov's continuity criterion yield a continuous modification with $\frac{1}{2}^-$Hölder, i.e., same sample path regularity as a standard Brownian motion. Moreover, the increment law $X_{t}-X_{s}\sim\mathcal N(0,|t-s|)$ together with the level-$2$ Chen-Hermite balancing condition implies that a Chen-Hermite almost Brownian motion must have the same variance and covariance structure as a standard Brownian motion. As illustrated in Corollary \ref{cor:joint Gaussian increments then Bm}, if joint Gaussian increments are enforced, we must have a standard Brownian motion. In other words, it suffices to prove the following conjecture:

\begin{conjecture}
    Let $(X_t)_{t\in[0,T]}$ be a $1-$dimensional continuous stochastic process with Gaussian single increments such that the Chen-Hermite balancing condition is satisfied for all adjacent time intervals, i.e.,
    \begin{equation}
        \sum_{i=1}^{n-1} \mathbb{E}[H_i(X_{s,u}, -\frac{1}{2}(u-s))\cdot H_{n-i}(X_{u,t}, -\frac{1}{2}(t-u))]=0, \quad\quad\forall0\leq s\leq u\leq t\leq T, \quad \forall n\geq 2.
    \end{equation}
    Then $(X_t)_{t\in[0,T]}$ must be a Gaussian process.
\end{conjecture}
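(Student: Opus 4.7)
The plan is to establish joint Gaussianity of $(X_{t_1}, \ldots, X_{t_n})$ for every finite set of times $0 < t_1 < \cdots < t_n \le T$ by showing that the mixed Hermite moments
\[
E_{\vec m}(t_0,\ldots,t_n) := \mathbb{E}\Big[\prod_{j=1}^n H_{m_j}\big(X_{t_{j-1},t_j},-\tfrac12(t_j-t_{j-1})\big)\Big]
\]
vanish for every multi-index $\vec m = (m_1,\ldots,m_n) \in \mathbb{N}_0^n$ having at least two entries $m_j \ge 1$. Once this is established, expanding the joint characteristic function $\phi_{\vec t}(\vec\xi) := \mathbb{E}[\exp(i \sum_j \xi_j X_{t_{j-1},t_j})]$ in the Hermite basis exactly as in Subsection~\ref{subsec:HSig} yields the factorisation $\phi_{\vec t}(\vec\xi) = \prod_j \exp(-\tfrac12 \xi_j^2 (t_j - t_{j-1}))$; the increments are then jointly independent Gaussians, and combining with continuity identifies $X$ with a standard Brownian motion, settling even the stronger form of the conjecture.

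I would prove the vanishing of $E_{\vec m}$ by induction on the triple $(n,N,k)$, where $n$ is the partition size, $N=m_1+\cdots+m_n$ is the total weight, and $k$ is the number of non-zero entries. The base cases $k=1$ (Gaussian orthogonality of a single Hermite polynomial) and $(k,N,n)=(2,2,2)$ (the level-$2$ Chen-Hermite identity $\mathbb{E}[X_{s,u}X_{u,t}]=0$) are immediate. For the inductive step I would iterate Chen's relation: refining one interval $[t_{j-1},t_j]$ at an intermediate point $u'$ converts $H_{m_j}(X_{t_{j-1},t_j},\cdot)$ into the binomial sum $\sum_{i+l=m_j}H_i(X_{t_{j-1},u'},\cdot)H_l(X_{u',t_j},\cdot)$. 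Substituting into $E_{\vec m}$, taking expectations, and invoking the Chen-Hermite balancing at the refined partition cancels many cross terms, while the induction hypothesis eliminates all contributions of lower weight or with fewer non-zero entries, ideally isolating $E_{\vec m}$ itself.

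The principal obstacle is that the adjacent-interval Chen-Hermite condition only provides an \emph{aggregate} identity
\[
\sum_{\substack{i+j=N\\ i,j\ge1}}\mathbb{E}\big[H_i(X_{s,u},-\tfrac12(u-s))\,H_j(X_{u,t},-\tfrac12(t-u))\big]=0,
\]
rather than pointwise vanishing of each summand; at each level $N\ge 3$ this is a single scalar constraint among $N-1$ unknown moments. To break this degeneracy I would continuously vary the intermediate point $u\in(s,t)$, producing a one-parameter family of identities in which the individual moments scale at distinct rates through the rough-path bounds $\|H_k(X_{u,t},-\tfrac12(t-u))\|_{L^2}\lesssim|t-u|^{k/2}$. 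Letting $u$ approach an endpoint and matching leading orders in $|t-u|$ should separate the moments one pair at a time, much like extracting coefficients from an asymptotic expansion; the continuity of $X$ together with the uniform Gaussian moment bounds of Lemma~\ref{lemm:Uniform bounds on moments} would justify the limiting procedure.

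The hardest part, and arguably the reason this remains a conjecture, is the combinatorial bookkeeping needed to guarantee that each refinement step yields a \emph{new} independent linear relation on the targeted moment rather than one already implied by coarser identities. A parallel route I would pursue is to apply Marcinkiewicz's theorem (already central to the proof of Theorem~\ref{thm:Pol-unbiased rough integrators}) to the logarithm of the full joint characteristic function $\phi_{\vec t}$: if the iterated balancing identities can be shown to force every joint cumulant of order strictly greater than two to vanish, joint Gaussianity follows at once. Both approaches ultimately hinge on converting scalar aggregate constraints into the pointwise vanishing of individual mixed cumulants, which is the core technical challenge this plan isolates.
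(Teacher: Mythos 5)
This statement is an open conjecture in the paper; the authors explicitly say they could not prove it (nor refute it by counterexample in the continuous setting), so there is no ``paper's own proof'' to compare your attempt against. What you have written is a research plan, not a proof, and you say so yourself. Let me pinpoint why the plan does not close.

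Your central obstacle is correctly identified: for each level $n\ge 3$ and each triple $s<u<t$ the Chen--Hermite balancing condition supplies a single scalar equation
\[
\sum_{\substack{i+j=n\\ i,j\ge 1}}\mathbb{E}\bigl[H_i(X_{s,u},-\tfrac12(u-s))\,H_j(X_{u,t},-\tfrac12(t-u))\bigr]=0,
\]
which is one constraint among $n-1$ unknown mixed moments, and nothing in the hypothesis forces each summand to vanish individually. Your proposed remedy---sweep $u$ through $(s,t)$ and match leading orders in $|t-u|$---does not obviously disentangle the sum, for the following structural reason: the unknown cross-moments $\mathbb{E}[H_i(X_{s,u},\cdot)H_j(X_{u,t},\cdot)]$ are themselves functions of $u$ in \emph{both} arguments simultaneously. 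You therefore do not have fixed unknowns against a one-parameter family of linear forms (the situation in which matching asymptotic coefficients extracts each unknown); you have a one-parameter family of scalar identities in which the unknowns co-vary with the scaling weights $(t-u)^{j/2}$. Cauchy--Schwarz gives $|\mathbb{E}[H_i\,H_j]|\lesssim (u-s)^{i/2}(t-u)^{j/2}$, so as $u\to t$ the term $i=n-1,\,j=1$ is the slowest to decay, but there is no a priori control on the sign, the sharpness of that upper bound, or potential cancellations with the other terms at the same asymptotic order, precisely because the joint law of $(X_{s,u},X_{u,t})$ is the unknown object. Without that control the ``extraction'' step is circular.

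The Marcinkiewicz route has the same circularity in a different guise: to apply Marcinkiewicz to the joint characteristic function of the increments you would first need to know that its logarithm is a polynomial, i.e.\ that all joint cumulants above some finite order vanish---but that is essentially the joint Gaussianity you are trying to establish. (In the one-dimensional marginal case, Theorem~\ref{thm:Pol-unbiased rough integrators} gets around this because the unbiasedness hypothesis against \emph{all} polynomial controlled paths pins down every single marginal moment, not just aggregate sums; the Chen--Hermite balancing is strictly weaker.) Finally, the paper's own Sarmanov-copula construction shows that the balancing identities alone, without continuity, admit non-Gaussian realisations, so any successful argument must exploit continuity in an essential way---a feature your plan invokes only to justify passing to limits, not as the mechanism that kills the non-Gaussian degrees of freedom. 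In short, the gap you flag is genuine, your candidate fix does not close it, and the statement remains open as the paper asserts.
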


We could not construct a counterexample to a much weaker version of the conjecture above by requiring the balancing condition for only $n=2$, i.e., only demanding uncorrelated adjacent increments. However, we note that the conjecture is wrong if we do not impose continuity on the process $X$. We now give a counterexample via a Sarmanov-type copula in the discrete setting.

\begin{example}
    Pick any two bounded, mean-zero functions $h, g \in L^2(\mathbb{R}, \varphi)$, where $\varphi$ is the standard normal density function. For a small $\varepsilon$, define a joint density on $\mathbb{R}^2$ by
\begin{equation}
    f_{\varepsilon}(x, y)=\varphi(x) \varphi(y)[1+\varepsilon(h(x) g(y)-h(y) g(x))].
\end{equation}
The construction is essentially adding some anti-symmetric perturbation to the joint density of two independent standard normals. In particular, the boundedness of the perturbation functions g and h ensures the well-definedness of the joint density function for $\varepsilon$ small.

One checks that $X, Y \sim \mathcal{N}(0,1)$ , $X+Y \sim \mathcal{N}(0,2)$, they are not jointly Gaussian but satisfy the Chen-Hermite balancing condition at all levels $n\geq 2$:
\begin{equation}
    \sum_{i=1}^{n-1} \mathbb{E}\Big[H_i\Big(X, -\frac{1}{2}\Big)\cdot H_{n-i}\Big(Y, -\frac{1}{2}\Big)\Big]=0, \quad\quad \forall n\geq 2.
\end{equation}

\begin{proof}
 The marginal Gaussianity is proven by partial integrating the joint pdf wrt $x$ ($y$ respectively), and then the mean-zero properties of the chosen perturbation functions $g$ and $h$ ensure the perturbation terms vanish after partial integration. The Gaussianity of $X+Y$ is obtained by computing the characteristic function:
\begin{equation}
    \begin{aligned}
\qquad 
\phi_{X+Y}(t) & =\mathbb{E}\left[e^{i t(X+Y)}\right] \\
& =\iint e^{i t(x+y)} \varphi(x) \varphi(y)[1+\varepsilon(h(x) g(y)-h(y) g(x))] \mathrm{d} x \mathrm{d} y \\
& =\left(\int e^{i t x} \varphi(x) \mathrm{d} x\right)^2+\varepsilon(I-I) =  (\phi_{X}(t))^2 = e^{-t^2}, 
\end{aligned}
\end{equation}
where  $I:=\iint e^{i t(x+y)} \phi(x) \phi(y) h(x) g(y) \mathrm{d} x \mathrm{d} y$ is integrable; consequently, $X+Y\sim \mathcal{N}(0,2)$. Now we check the Chen-Hermite balancing condition at level $n$ for any $n\geq 2$:
\begin{equation}
\begin{aligned}
    \sum_{i=1}^{n-1} \mathbb{E}\Big[H_i\Big(X, -\frac{1}{2}\Big)\cdot & H_{n-i}\Big(Y, -\frac{1}{2}\Big)\Big] = \int_{\mathbb{R}}\int_{\mathbb{R}} \sum_{i=1}^{n-1} H_i\Big(x, -\frac{1}{2}\Big)\cdot H_{n-i}\Big(y, -\frac{1}{2}\Big)\cdot \varphi(x)\varphi(y) \mathrm{d}x \mathrm{d}y \\ & + \int_{\mathbb{R}}\int_{\mathbb{R}} \sum_{i=1}^{n-1} H_i\Big(x, -\frac{1}{2}\Big)\cdot H_{n-i}\Big(y, -\frac{1}{2}\Big)\cdot  \varepsilon(h(x) g(y)-h(y) g(x)) \varphi(x)\varphi(y) \mathrm{d}x \mathrm{d}y\\ 
    & = 0+ \int_{\mathbb{R}}\int_{\mathbb{R}} \sum_{i=1}^{n-1} H_i\Big(x, -\frac{1}{2}\Big)\cdot H_{n-i}\Big(y, -\frac{1}{2}\Big)\cdot  \varepsilon(h(x) g(y)-h(y) g(x)) \varphi(x)\varphi(y) \mathrm{d}x \mathrm{d}y\\
    & = 0
\end{aligned}
\end{equation}
where the first line is from integrating the independent coupling part of the joint distribution, and the last equality is due to the whole integrand being anti-symmetric in $x$ and $y$. Now we finally show $(X,Y)$ is not jointly Gaussian. The perturbation in the joint density is antisymmetric in $(x, y)$, so $f_{\varepsilon}(x, y) \neq f_{\varepsilon}(y, x)$ for $\varepsilon \neq 0$. Any bivariate normal with identical marginals is symmetric; hence, this joint law is not Gaussian.     
\end{proof}

Now, we may consider a discrete process $(X_n)_{n=0,1,2}$, with 
\begin{equation}
    X_0:=0,\quad X_1:=X,\quad X_2=X+Y.
\end{equation}
The above construction ensures that the process has Gaussian increments, the adjacent increments satisfy the Chen-Hermite balancing condition for every $n\geq 2$, but the process fails to be a Gaussian process.

\end{example}

\section{Consistency between Rough Paths and Controlled Paths}\label{app:Consistecy}

Now fix $\alpha\in (\frac{1}{3}, \frac{1}{2}]$ and  $\XXX:=(X,\XX)$ an $\alpha$-H\"older rough path in $V:=\mathbb{R}^d$. We use the notation $(Y,Y')$ for controlled paths, where $Y'$ is usually referred to as the Gubinelli derivative in literature. We list some consistency results between controlled paths and rough paths, which are used in Subsection \ref{subsec:RDE models}. Most of the statements are standard, and the proof can be found in \cite{FH14} Chapter 7 or \cite{All21}. If not, a proof will be given.

\begin{proposition}\label{1.2.9}
    Let $\left(Y, Y^{\prime}\right)$ an $\mathcal{L}(W, \bar{W})-$valued controlled path  and  $(Z, Z')$ an $W$-valued controlled paths for some Euclidean spaces $\bar{W}$ and $W$. Then we have a well-defined integral via:
    \begin{equation}\label{CPintCP}
        \int (Y, Y') \mathrm{d} (Z, Z') := \lim _{|\mathcal{P}| \downarrow 0} \sum_{[s, t] \in \mathcal{P}} (Y_s Z_{s, t}+Y_s^{\prime} Z_s^{\prime} \mathbb{X}_{s, t}).
    \end{equation}
    where we identify $Y_s'Z_s'$ as a linear map via the canonical isomorphism $\mathcal{L} (V, \mathcal{L}(W, \bar{W})) \times \mathcal{L} (V, W) \cong \mathcal{L} (V\otimes V, \bar{W})$. Moreover, the integral is again controlled by $\XXX$ with Gubinelli derivative $YZ$.
\end{proposition}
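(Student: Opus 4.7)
The plan is to apply Gubinelli's sewing lemma (Lemma \ref{Sewing Lemma}) to the two-parameter local approximation
\begin{equation}
\Xi_{s,t} := Y_s Z_{s,t} + Y_s' Z_s' \XX_{s,t},
\end{equation}
where $Y_s' Z_s'$ is interpreted as an element of $\mathcal{L}(V\otimes V, \bar W)$ via $(v_1\otimes v_2)\mapsto Y_s'(v_1)\,Z_s'(v_2)$. I would first observe that $\|\Xi\|_\alpha<\infty$: the first summand is $O(|t-s|^\alpha)$ because $Y$ is bounded on $[0,T]$ and $Z$ is $\alpha$-H\"older, while the second summand is $O(|t-s|^{2\alpha})$ by boundedness of $Y',Z'$ together with $\XX\in C^{2\alpha}$.

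The technical core is then the estimate $\|\delta \Xi\|_{3\alpha}<\infty$, which together with $3\alpha>1$ triggers the sewing lemma and defines the limit in \eqref{CPintCP}. Using Chen's relation $\XX_{s,t}=\XX_{s,u}+\XX_{u,t}+X_{s,u}\otimes X_{u,t}$, a direct computation gives
\begin{equation}
\delta\Xi_{s,u,t} \;=\; -Y_{s,u}\,Z_{u,t} \;+\; Y_s'Z_s'\bigl(X_{s,u}\otimes X_{u,t}\bigr) \;+\; \bigl(Y_s'Z_s' - Y_u'Z_u'\bigr)\,\XX_{u,t}.
\end{equation}
The third term is immediately $O(|t-s|^{3\alpha})$ since $Y',Z'$ are $\alpha$-H\"older and $\XX\in C^{2\alpha}$. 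For the first two terms I would use the telescoping identity $AB-A'B'=(A-A')B+A'(B-B')$ with $A=Y_{s,u}$, $A'=Y_s'(X_{s,u})$, $B=Z_{u,t}$, $B'=Z_s'(X_{u,t})$. The controlledness of $(Y,Y')$ yields $A-A'=R^Y_{s,u}=O(|u-s|^{2\alpha})$; for $B-B'$ one splits further as $(Z_u'-Z_s')(X_{u,t}) + R^Z_{u,t}$, both $O(|t-s|^{2\alpha})$ by $\alpha$-H\"older regularity of $Z'$ and the controlledness of $(Z,Z')$. Combining with the trivial bounds on $A'$ and $B$ gives $O(|t-s|^{3\alpha})$ for the difference $Y_{s,u}Z_{u,t}-Y_s'Z_s'(X_{s,u}\otimes X_{u,t})$, closing the estimate.

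Once the sewing lemma produces a unique $\mathcal{I}\Xi\in C^\alpha$ with $(\mathcal{I}\Xi)_{s,t}=\Xi_{s,t}+O(|t-s|^{3\alpha})$, I would define $I_t:=(\mathcal{I}\Xi)_{0,t}$ and verify that the pair $(I, YZ')$ is a controlled path, where $YZ'\in\mathcal{L}(V,\bar W)$ denotes the pointwise composition $Y_\cdot\circ Z_\cdot'$ (this is the object the statement denotes by $YZ$, and it is $\alpha$-H\"older as a product of a bounded $\alpha$-H\"older and a bounded $\alpha$-H\"older path). Indeed,
\begin{equation}
I_{s,t} - Y_s Z_s' X_{s,t} \;=\; \bigl(\mathcal I\Xi\bigr)_{s,t}-\Xi_{s,t} \;+\; Y_s\bigl(Z_{s,t}-Z_s'X_{s,t}\bigr) \;+\; Y_s'Z_s'\XX_{s,t},
\end{equation}
and each of the three pieces on the right is $O(|t-s|^{2\alpha})$ (the first by the sewing error, the second by $R^Z\in C^{2\alpha}$, the third by $\XX\in C^{2\alpha}$), so the remainder is of the required order.

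The main obstacle I expect is bookkeeping the tensor-type identifications in the term $Y_s'Z_s'(X_{s,u}\otimes X_{u,t})$ and keeping the product rule $AB-A'B'$ consistent with these identifications; everything else is the standard sewing-lemma machinery already spelled out in Appendix \ref{App:Rough Integral}. A minor additional verification, stated but not used above, is the $\alpha$-H\"older regularity of the path $t\mapsto Y_tZ_t'$, needed to conclude that $(I,YZ')$ is a genuine controlled path rather than just satisfying the Taylor-type remainder bound; this follows at once from $\|Y\|_\infty,\|Z'\|_\infty<\infty$ and the $\alpha$-H\"older regularity of $Y$ and $Z'$ separately.
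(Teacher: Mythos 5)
Your proposal is correct and is the standard sewing-lemma argument; the paper supplies no proof of its own but simply cites \cite{FH14} Remark 4.12, and your write-up is exactly the argument one finds there (local approximation $\Xi_{s,t}=Y_sZ_{s,t}+Y'_sZ'_s\XX_{s,t}$, Chen's relation on $\delta\Xi$, the bilinear telescoping $AB-A'B'=(A-A')B+A'(B-B')$ to close the $3\alpha$-bound, then read off the Gubinelli remainder from the sewing error). You also correctly observed that the Gubinelli derivative stated as ``$YZ$'' must in fact be $YZ'$ for the types $\mathcal{L}(V,\bar W)$ to match; this is a typo in the statement.
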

\begin{proof}
    See \cite{FH14} Remark 4.12.
\end{proof}

In particular, associating a controlled path with its own iterated integral in this sense allows us to translate it into a rough path:

\begin{corollary}[Controlled paths as rough paths]\label{CPasRP}
    Let $(Y, Y')$ be a path controlled by $\XXX$. Then $\mathbf{Y}:=(Y, \mathbb{Y})$ is also an $\alpha$-H\"older rough path, where
\begin{equation}
    \mathbb{Y}_{s, t} :=\int_s^t Y_{s, r} \otimes d Y_r {:=} \lim _{|\mathcal{P}| \downarrow 0} \sum_{[u,v]\in \mathcal{P}} (Y_{s,u} \otimes Y_{u, v}+Y_u^{\prime} \otimes Y_u^{\prime} \mathbb{X}_{u, v}).
\end{equation}
In particular, by setting $(Y, Y'):=(X, Id)$ we recover $\mathbf{Y}=\XXX$. Note that the integral is exactly in the sense of Proposition \ref{1.2.9}.
\end{corollary}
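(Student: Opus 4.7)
The plan is to read the formula for $\mathbb{Y}_{s,t}$ as the rough integral $\int_s^t Y_{s,r}\otimes \mathrm{d}Y_r$ obtained by feeding the controlled path $(Y_{s,\cdot}, Y')$ (with $s$ fixed) into Proposition~\ref{1.2.9}, with the integrand tensored on the left so that the output lives in $W\otimes W$. More precisely, for any fixed $s$, the map $r\mapsto Y_{s,r}$ is Gubinelli-controlled by $\XXX$ with the same derivative $Y'$ as $Y$ itself (the increments coincide), and the tensor-on-the-left functional $v\mapsto v\otimes(\cdot)$ is linear, hence the pair $(Y_{s,\cdot}\otimes(\cdot),\,Y'_{\cdot}\otimes(\cdot))$ is again a controlled path with values in the appropriate space of linear maps. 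Invoking Proposition~\ref{1.2.9} on this pair and $(Y,Y')$ then yields exactly the compensated Riemann sum in the statement.

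First, I would use the sewing lemma route already followed in Proposition~\ref{Rough Integral} to conclude both existence of the limit and the sharp bound $\mathbb{Y}_{s,t}=Y'_s\otimes Y'_s\,\mathbb{X}_{s,t}+O(|t-s|^{3\alpha})$; together with $|\mathbb{X}_{s,t}|\lesssim|t-s|^{2\alpha}$ this gives the required $2\alpha$-H\"older regularity $|\mathbb{Y}_{s,t}|\lesssim|t-s|^{2\alpha}$. The H\"older bound on $Y$ itself is immediate from its controlledness.

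Next, I would verify Chen's relation
\begin{equation}
\mathbb{Y}_{s,t}-\mathbb{Y}_{s,u}-\mathbb{Y}_{u,t}=Y_{s,u}\otimes Y_{u,t}, \qquad s\le u\le t,
\end{equation}
by additivity of the rough integral: writing $Y_{s,r}=Y_{s,u}+Y_{u,r}$ for $r\ge u$,
\begin{equation}
\mathbb{Y}_{s,t}-\mathbb{Y}_{s,u}=\int_u^t Y_{s,r}\otimes\mathrm{d}Y_r=\int_u^t Y_{u,r}\otimes\mathrm{d}Y_r+Y_{s,u}\otimes\int_u^t\mathrm{d}Y_r=\mathbb{Y}_{u,t}+Y_{s,u}\otimes Y_{u,t}.
\end{equation}
This confirms $\mathbf{Y}=(Y,\mathbb{Y})$ is a genuine $\alpha$-H\"older rough path.

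For the particular case $(Y,Y')=(X,\mathrm{Id})$, I would check that the local approximation collapses to $X_{s,u}\otimes X_{u,v}+\mathbb{X}_{u,v}$, and then apply Chen's relation for $\XXX$ to telescope the sum and identify the limit with $\mathbb{X}_{s,t}$. The main (mild) obstacle is just bookkeeping on tensor arguments and the canonical identifications $\mathcal{L}(V,\mathcal{L}(W,W\otimes W))\cong\mathcal{L}(V\otimes W,W\otimes W)$ used to fit Proposition~\ref{1.2.9}; once those are pinned down, every step is a direct corollary of results already in the excerpt.
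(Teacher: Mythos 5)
Your argument is correct and is essentially a reconstruction of the standard argument the paper cites; the paper itself gives no proof of Corollary~\ref{CPasRP}, referring only to \cite{FH14}, Section~7.1, and your route (read $\mathbb{Y}_{s,\cdot}$ as a rough integral of the left-tensored controlled path against $(Y,Y')$ in the sense of Proposition~\ref{1.2.9}, extract the $2\alpha$-H\"older bound from the sewing estimate $\mathbb{Y}_{s,t}=Y'_s\otimes Y'_s\,\mathbb{X}_{s,t}+O(|t-s|^{3\alpha})$, and verify Chen by additivity and linearity of the rough integral) is exactly that argument. The bookkeeping you flag — that $Y_{s,\cdot}$ is controlled with the same Gubinelli derivative $Y'$, that $(Y'_u\otimes(\cdot))\,Y'_u\,\mathbb{X}_{u,v}=Y'_u\otimes Y'_u\,\mathbb{X}_{u,v}$ under the canonical identification, and that a constant integrand $Y_{s,u}\otimes(\cdot)$ with zero derivative integrates to $Y_{s,u}\otimes Y_{u,t}$ — all check out, as does the telescoping to $\mathbb{X}_{s,t}$ in the special case $(Y,Y')=(X,\mathrm{Id})$.
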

\begin{proof}
    See \cite{FH14} Section 7.1.
\end{proof}

In particular, the construction preserves the geometric property:

\begin{proposition}[Geometricity is preserved]\label{WGpreserved}
    Let $\XXX$, $(Y, Y')$ and $\mathbf{Y}$ be as above. Moreover, assume that $\XXX$ is geometric. Then $\mathbf{Y}$ is again geometric.
\end{proposition}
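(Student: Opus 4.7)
The plan is to verify the single shuffle identity $\langle \mathbf{Y}_{s,t}, ij\rangle + \langle \mathbf{Y}_{s,t}, ji\rangle = Y^i_{s,t}\,Y^j_{s,t}$ for all letters $i,j$, which (since $\alpha\in(\tfrac13,\tfrac12]$ and $k=2$) is exactly the geometricity condition for $\mathbf{Y}$. I would work directly at the level of the compensated Riemann sums defining $\mathbb{Y}_{s,t}$ and exploit that, by hypothesis, $\XXX$ is already geometric, i.e.\ $\mathbb{X}_{s,t}+\mathbb{X}_{s,t}^{\top}=X_{s,t}\otimes X_{s,t}$, together with the controlledness $Y_{u,v}=Y'_u X_{u,v}+R^Y_{u,v}$ with $|R^Y_{u,v}|\lesssim|v-u|^{2\alpha}$.

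Concretely, for a partition $\mathcal{P}$ of $[s,t]$ I would symmetrise the two building blocks in
\begin{equation}
\mathbb{Y}_{s,t}\;=\;\lim_{|\mathcal{P}|\downarrow 0}\sum_{[u,v]\in\mathcal{P}}\bigl(Y_{s,u}\otimes Y_{u,v}+Y'_u\otimes Y'_u\,\mathbb{X}_{u,v}\bigr).
\end{equation}
For the first block, the telescoping identity
\begin{equation}
Y_{s,u}\otimes Y_{u,v}+Y_{u,v}\otimes Y_{s,u}=Y_{s,v}\otimes Y_{s,v}-Y_{s,u}\otimes Y_{s,u}-Y_{u,v}\otimes Y_{u,v}
\end{equation}
sums to $Y_{s,t}\otimes Y_{s,t}-\sum_{[u,v]}Y_{u,v}\otimes Y_{u,v}$. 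For the second block, the geometricity of $\XXX$ gives $Y'_u\otimes Y'_u(\mathbb{X}_{u,v}+\mathbb{X}_{u,v}^{\top})=(Y'_u X_{u,v})\otimes(Y'_u X_{u,v})$, and then substituting $Y'_u X_{u,v}=Y_{u,v}-R^Y_{u,v}$ produces $\sum_{[u,v]}Y_{u,v}\otimes Y_{u,v}$ plus cross and quadratic remainder terms bounded by $O(|v-u|^{3\alpha})$ and $O(|v-u|^{4\alpha})$ respectively, hence $O(|\mathcal{P}|^{3\alpha-1})$ in total, which vanishes as $|\mathcal{P}|\downarrow 0$ since $3\alpha>1$. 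The two contributions cancel the unwanted $\sum Y_{u,v}\otimes Y_{u,v}$ pieces, leaving exactly $Y_{s,t}\otimes Y_{s,t}$ in the limit.

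The main (and essentially only) obstacle is the controllable error estimate: one must check that all cross terms of the form $Y_{u,v}\otimes R^Y_{u,v}$, $R^Y_{u,v}\otimes Y_{u,v}$ and $R^Y_{u,v}\otimes R^Y_{u,v}$ are genuinely summable to $o(1)$ along partitions, which is routine once one combines the $\alpha$-H\"older bound on $Y$ with the $2\alpha$-bound on $R^Y$ and uses $3\alpha>1$. Everything else is a direct algebraic manipulation together with the already established convergence of the rough integral defining $\mathbb{Y}$ in Corollary \ref{CPasRP}. I would then simply note that this proves $\mathbb{Y}_{s,t}+\mathbb{Y}_{s,t}^{\top}=Y_{s,t}\otimes Y_{s,t}$, which is the full shuffle relation at level two and hence the geometricity of $\mathbf{Y}$.
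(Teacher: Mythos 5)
Your proposal is correct and follows essentially the same route as the paper's proof: symmetrise the local approximation, invoke geometricity of $\XXX$ to write $Y'_u\otimes Y'_u\,2\mathrm{Sym}(\mathbb{X}_{u,v})=(Y'_uX_{u,v})\otimes(Y'_uX_{u,v})$, use controlledness to replace $Y'_uX_{u,v}$ by $Y_{u,v}$ up to an $O(|v-u|^{2\alpha})$ remainder, and conclude by a telescoping identity together with summability of the $O(|v-u|^{3\alpha})$ errors under $3\alpha>1$. The only difference is cosmetic: you telescope the two blocks separately and let the $\sum Y_{u,v}\otimes Y_{u,v}$ pieces cancel across blocks, while the paper first combines all terms into a single expression and telescopes once at the end.
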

\begin{proof}
    Set $\Xi_{u,v} := Y_{s,u} \otimes Y_{u, v}+Y_u^{\prime} \otimes Y_u^{\prime} \mathbb{X}_{u, v}$ as above. By the geometricity of $\XXX$ we have:
    \begin{equation}
        2\mathrm{Sym}(Y_u^{\prime} \otimes Y_u^{\prime} \mathbb{X}_{u, v}) =Y_u^{\prime} \otimes Y_u^{\prime} 2\mathrm{Sym}(\mathbb{X}_{u, v}) = Y_u^{\prime} \otimes Y_u^{\prime} (X_{u,v}\otimes X_{u,v}). 
    \end{equation}
    Thus, for any $i,j=1,...,n$ we have
    \begin{equation}\label{WGLocApprox}
    \begin{aligned}
        \Xi_{u,v}^{i,j}+ \Xi_{u,v}^{j,i} &= Y^i_{s,u} Y^j_{u, v}+ Y^j_{s,u} Y^i_{u, v}+ Y'^i_u X_{u,v}\cdot Y'^j_u X_{u,v}\\
        &= Y^i_{s,u} Y^j_{u, v}+ Y^j_{s,u} Y^i_{u, v}+ Y^i_{u,v}\cdot Y^j_{u,v}+ O\left(|v-u|^{3 \alpha}\right)\\
        &= -(Y^i_s Y^j_{u,v} + Y^j_s Y^i_{u,v}) + (Y^i_v Y^j_v - Y^i_u Y^j_u) + O\left(|v-u|^{3 \alpha}\right)
    \end{aligned}
    \end{equation}
    where the second equality is by the simple observation that $Y^i$ is again controlled by $X$ with Gubinelli derivative $Y'^i$ for any component $Y^i$, and the third equality is a simple computation. Now summing them up along any $\mathcal{P}$ and letting $|\mathcal{P}|\downarrow 0$, we obtain as desired $\mathbb{Y}^{i,j}_{s,t}+\mathbb{Y}^{j,i}_{s,t} = Y^i_{s,t} Y^j_{s,t}.$
\end{proof}

More concretely, the renormalisation terms in controlled-paths-lifted rough paths can be computed explicitly if the controlled paths are from a rough integral.
\begin{proposition}\label{prop:renormalisation succeeded}
    Let $\mathbf{X}=(X, \mathbb{X}) $ be as above but now let $d=1$ and let $\left(K, K^{\prime}\right)$ be a controlled path. Recall that $\left(Z, Z^{\prime}\right):=\left(\int_0^{\cdot} K_u \mathrm{~d} \mathbf{X}_u, K\right)$ is again a controlled path as in Theorem \ref{Rough Integral}. Let $\mathbf{Z}=(Z, \mathbb{Z})$ be the canonical rough path lift of $Z$, as defined in Corollary \ref{CPasRP}. Moreover, let $G_{\XXX} $ and $G_{\mathbf{Z}}$ be the renormalisation functions encoded in $\XXX$ and $\mathbf{Z}$ as in Proposition \ref{Rough Paths via Bell Polynomials}, respectively. Then

\begin{equation}
    G_{\mathbf{Z}}(t)=\int_0^t\left(K_u \cdot K_u\right) \mathrm{d}G_{\mathbf{X}}(u)
\end{equation}
where the integral on the right-hand side is a Young integral.
\end{proposition}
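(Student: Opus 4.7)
The plan is to reduce the statement to a direct computation on the canonical rough path lift of $Z$, using the Bell–polynomial parametrisation of one–dimensional rough paths from Proposition \ref{Rough Paths via Bell Polynomials}. Since $d=1$ and $\alpha\in(\tfrac13,\tfrac12]$, the renormalisation terms are single $C^{2\alpha}$ functions, and $\mathbb{X}_{s,t}=\tfrac12 X_{s,t}^2+G_{\XXX;s,t}$, $\mathbb{Z}_{s,t}=\tfrac12 Z_{s,t}^2+G_{\mathbf Z;s,t}$. The target identity thus becomes $\mathbb{Z}_{s,t}-\tfrac12 Z_{s,t}^2=\int_s^t K_u^2\,\mathrm d G_{\XXX}(u)$, where the right-hand side is a Young integral (well-defined because $K\in C^{\alpha}$, hence $K^2\in C^{\alpha}$, and $G_{\XXX}\in C^{2\alpha}$ with $\alpha+2\alpha>1$).

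First, I would invoke Corollary \ref{CPasRP} and Proposition \ref{1.2.9} to write
\begin{equation}
\mathbb{Z}_{s,t}=\lim_{|\mathcal P|\downarrow 0}\sum_{[u,v]\in\mathcal P}\Bigl(Z_{s,u}Z_{u,v}+K_u^2\,\mathbb{X}_{u,v}\Bigr),
\end{equation}
using $Z'=K$. Substituting $\mathbb{X}_{u,v}=\tfrac12 X_{u,v}^2+G_{\XXX;u,v}$ splits the limit into three sums: the ``geometric'' crossed sum $\Sigma_1:=\sum Z_{s,u}Z_{u,v}$, the squared-increment sum $\Sigma_2:=\tfrac12\sum K_u^2 X_{u,v}^2$, and the renormalisation sum $\Sigma_3:=\sum K_u^2\,G_{\XXX;u,v}$.

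The key step is to recognise that $\Sigma_1+\Sigma_2$ telescopes to $\tfrac12 Z_{s,t}^2$ modulo vanishing errors. By the sewing lemma (Lemma \ref{Sewing Lemma}) applied to $Z_t=\int_0^t K_u\,\mathrm d\XXX_u$, one has the local expansion $Z_{u,v}=K_u X_{u,v}+K'_u\mathbb{X}_{u,v}+O(|v-u|^{3\alpha})$; squaring and using $K'_u\mathbb{X}_{u,v}=O(|v-u|^{2\alpha})$ and $Z_{u,v}=O(|v-u|^{\alpha})$ gives $K_u^2 X_{u,v}^2=Z_{u,v}^2+O(|v-u|^{3\alpha})$, so $\Sigma_2=\tfrac12\sum Z_{u,v}^2+o(1)$ as $|\mathcal P|\downarrow 0$ (there are $O(|\mathcal P|^{-1})$ terms of size $O(|\mathcal P|^{3\alpha})$, and $3\alpha>1$). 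Then the elementary algebraic identity $Z_{s,u}Z_{u,v}+\tfrac12 Z_{u,v}^2=\tfrac12\bigl(Z_{s,v}^2-Z_{s,u}^2\bigr)$ yields $\Sigma_1+\Sigma_2=\tfrac12 Z_{s,t}^2+o(1)$.

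Finally, $\Sigma_3$ converges to the Young integral $\int_s^t K_u^2\,\mathrm d G_{\XXX}(u)$ by the standard Young sum convergence, using the H\"older regularities mentioned above. Combining the three limits gives $\mathbb{Z}_{s,t}=\tfrac12 Z_{s,t}^2+\int_s^t K_u^2\,\mathrm d G_{\XXX}(u)$, which, via the Bell–polynomial correspondence, identifies $G_{\mathbf Z;s,t}$ with the claimed Young integral and in particular yields its additivity. I do not expect a serious obstacle: the proof is a careful bookkeeping of the sewing error and the telescoping identity; the only point requiring attention is that the residual $O(|v-u|^{3\alpha})$ terms in $\Sigma_2$ are indeed summable precisely because $3\alpha>1$, which is exactly where the restriction $\alpha>\tfrac13$ enters.
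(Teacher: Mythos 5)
Your proof is correct and, notably, it is fully self‑contained, whereas the paper's "proof" simply cites \cite{All21} Lemma~6.8 without reproducing the argument. The decomposition $\mathbb{Z}_{s,t}=\lim(\Sigma_1+\Sigma_2+\Sigma_3)$ is sound, and each step checks out: the sewing estimate $Z_{u,v}=K_uX_{u,v}+K'_u\mathbb{X}_{u,v}+O(|v-u|^{3\alpha})$ does give $K_u^2X_{u,v}^2=Z_{u,v}^2+O(|v-u|^{3\alpha})$ (the cross term $2K_uX_{u,v}\cdot K'_u\mathbb{X}_{u,v}$ is itself $O(|v-u|^{3\alpha})$); the algebraic telescope $Z_{s,u}Z_{u,v}+\tfrac12 Z_{u,v}^2=\tfrac12(Z_{s,v}^2-Z_{s,u}^2)$ is an identity; the error sum vanishes since $\sum|v-u|^{3\alpha}\le|\mathcal P|^{3\alpha-1}(t-s)\to 0$ as $3\alpha>1$; and $\Sigma_3$ converges to the Young integral since $K^2\in C^\alpha$, $G_{\XXX}\in C^{2\alpha}$, and $\alpha+2\alpha>1$. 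One small remark worth spelling out: the resulting limit directly yields $G_{\mathbf Z;s,t}=\int_s^t K_u^2\,\mathrm dG_{\XXX}(u)$ for all $(s,t)$, whose additivity and $2\alpha$‑H\"older regularity are then automatic from the Young integral, consistent with what Proposition~\ref{Rough Paths via Bell Polynomials} guarantees abstractly; but this is only a presentational point. Your route is the natural first‑principles derivation of the cited lemma and buys transparency about exactly where $\alpha>\tfrac13$ is used.
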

\begin{proof}
    See \cite{All21} Lemma 6.8.
\end{proof}

 Another important property of this embedding is that the integral of a controlled path against another controlled path (cf. Proposition \ref{1.2.9}) is the same as when viewing the integrator as a rough path and the integral in the sense of (\ref{Compensated Riemann Sum}), as in the following result:

\begin{proposition}[Consistency]\label{Consistency}
    Let $\XXX$, $(Y, Y')$ and $\mathbf{Y}$ be as above. If $\left({Z}, {Z}^{\prime}\right)$ is controlled by $\mathbf{Y}$, then we have the equality
\begin{equation}
    \int_s^t (Z_r, Z'_r) \mathrm{d} \mathbf{Y}_r = \int_s^t (Z_r, Z'_rY'_r) \mathrm{~d} (Y_r, Y'_r)
\end{equation}
where $(Z_r, Z'_rY'_r)$ is a path controlled by $X$. The left-hand side is a rough integral as in \eqref{Compensated Riemann Sum}, while the right-hand side is in the sense of (\ref{CPintCP}).
\end{proposition}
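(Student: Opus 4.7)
The plan is to unpack both sides as limits of compensated Riemann sums and reduce the equality to a vanishing-error estimate controlled by the sewing lemma. Writing out the left-hand side by \eqref{Compensated Riemann Sum} gives
\begin{equation}
\int_s^t (Z_r, Z'_r) \,\mathrm{d} \mathbf{Y}_r \;=\; \lim_{|\mathcal{P}|\downarrow 0} \sum_{[u,v]\in\mathcal{P}} \bigl( Z_u\, Y_{u,v} + Z'_u\, \mathbb{Y}_{u,v} \bigr),
\end{equation}
while the right-hand side, by the definition in Proposition \ref{1.2.9}, is
\begin{equation}
\int_s^t (Z_r, Z'_r Y'_r) \,\mathrm{d}(Y_r, Y'_r) \;=\; \lim_{|\mathcal{P}|\downarrow 0} \sum_{[u,v]\in\mathcal{P}} \bigl( Z_u\, Y_{u,v} + Z'_u Y'_u\otimes Y'_u\, \XX_{u,v} \bigr).
\end{equation}
The first summands agree, so the proof reduces to showing that the second-order compensators produce the same limit as $|\mathcal{P}|\downarrow 0$.

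Next I would invoke the construction of $\mathbf{Y}$ from Corollary \ref{CPasRP}: by setting the base point equal to the left endpoint of each partition interval, the local approximation for $\mathbb{Y}$ is precisely $Y'_u\otimes Y'_u\,\XX_{u,v}$ (the increment term $Y_{u,u}\otimes Y_{u,v}$ vanishes), and so the sewing-lemma error bound \eqref{1-14} yields
\begin{equation}
\bigl| \mathbb{Y}_{u,v} - Y'_u\otimes Y'_u\, \XX_{u,v} \bigr| \;\lesssim\; |v-u|^{3\alpha}.
\end{equation}
Multiplying by the locally bounded quantity $Z'_u$ and summing along $\mathcal{P}$ gives a total error of order $(t-s)\,|\mathcal{P}|^{3\alpha-1}$, which vanishes because $3\alpha>1$. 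Hence the two Riemann sums above have the same limit, which proves the identity.

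For the auxiliary claim that $(Z, Z'Y')$ is controlled by $\XXX$, a triangle-inequality split together with the two controlledness hypotheses gives
\begin{equation}
|Z_{s,t} - Z'_s Y'_s\, X_{s,t}| \le |Z_{s,t} - Z'_s\, Y_{s,t}| + |Z'_s|\,|Y_{s,t} - Y'_s\, X_{s,t}| \lesssim |t-s|^{2\alpha},
\end{equation}
where the first term is $O(|t-s|^{2\alpha})$ because $(Z,Z')$ is controlled by $\mathbf{Y}$ and the second because $(Y,Y')$ is controlled by $\XXX$.

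The essentially only delicate point is keeping the tensor identifications consistent, so that the expression $Z'_u Y'_u\otimes Y'_u\,\XX_{u,v}$ appearing in the controlled-path integral really matches the second-order contribution to $\mathbb{Y}_{u,v}$ obtained from Corollary \ref{CPasRP}; this is purely a matter of the canonical isomorphisms already invoked in Proposition \ref{1.2.9} and Corollary \ref{CPasRP} and presents no substantive obstacle. Everything else is a direct sewing-lemma consolidation.
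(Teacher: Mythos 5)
Your proof is correct and proceeds by what is essentially the canonical argument: expand both sides as compensated Riemann sums, note that the first-order terms agree exactly, and show the second-order terms $Z'_u\,\mathbb{Y}_{u,v}$ and $Z'_u\,Y'_u\otimes Y'_u\,\mathbb{X}_{u,v}$ differ by a per-interval error of order $|v-u|^{3\alpha}$ by the sewing-lemma bound \eqref{1-14} applied to the construction of $\mathbb{Y}$ in Corollary \ref{CPasRP}, so that the sum of errors vanishes since $3\alpha>1$; the verification that $(Z,Z'Y')$ is controlled by $\XXX$ via the triangle inequality is also the standard route. The paper itself does not prove this statement but refers to \cite{All21} Proposition 6.1, which uses the same compensated-Riemann-sum comparison, so your blind proof matches the expected argument and fills in the details cleanly.
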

\begin{proof}
     See \cite{All21} Proposition 6.1.
\end{proof}

One also has the following associativity property for rough integrations:

\begin{proposition}[Associativity]\label{Ass}
Let $\mathbf{X}$ be as above and $\left(Y, Y^{\prime}\right),\left(K, K^{\prime}\right)$ two controlled paths defined in proper Euclidean spaces such that the two integrals in (\ref{Associativity}) exist. Moreover, let $\left(Z, Z^{\prime}\right) := \left(\int_0^{\cdot} (K_u, K'_u) \mathrm{d} \mathbf{X}_u, K\right) $ be another controlled path by Theorem \ref{Rough Integral}. Then
\begin{equation} \label{Associativity}
    \int_0 (Y_u ,Y'_u) \mathrm{~d} (Z_u, Z'_u) =\int_0 (Y_u K_u, Y'_uK_u + Y_uK'_u)\mathrm{d} \mathbf{X}_u
\end{equation}
where the integral on the left-hand side is in the sense of (\ref{CPintCP}) and the integral on the right-hand side is a rough integral as in \eqref{Compensated Riemann Sum}.
\end{proposition}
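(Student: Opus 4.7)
The plan is to identify a common local compensated Riemann--Stieltjes approximation for both sides and show that their discrepancy vanishes in the mesh-size limit by Gubinelli's sewing lemma. I would first unfold the LHS using the definition in Proposition \ref{1.2.9},
$$\int_0^t (Y_u, Y'_u)\,\mathrm{d}(Z_u, Z'_u) = \lim_{|\mathcal{P}|\downarrow 0} \sum_{[s,u]\in\mathcal{P}} \bigl(Y_s Z_{s,u} + Y'_s Z'_s \,\XX_{s,u}\bigr),$$
and use the defining relation $Z' = K$ to rewrite the second summand as $Y'_s K_s \,\XX_{s,u}$.

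Next, I would insert the sewing-lemma local description of $Z_{s,u}$. Since $Z_\cdot = \int_0^\cdot (K_u, K'_u)\,\mathrm{d}\XXX_u$ is constructed by Proposition \ref{Rough Integral}, the error bound \eqref{1-14} (applied with $\beta = 3\alpha$) yields
$$Z_{s,u} = K_s X_{s,u} + K'_s \,\XX_{s,u} + R^Z_{s,u}, \qquad |R^Z_{s,u}| \lesssim |u-s|^{3\alpha}.$$
Substituting and regrouping gives
$$Y_s Z_{s,u} + Y'_s K_s \,\XX_{s,u} = (Y_s K_s)\, X_{s,u} + (Y_s K'_s + Y'_s K_s)\, \XX_{s,u} + Y_s R^Z_{s,u}.$$
The first two terms on the right form exactly the local compensated Riemann sum defining the RHS of \eqref{Associativity} with integrand $(YK,\, Y'K + YK')$; one checks separately that this tuple is controlled by $\XXX$ via the Leibniz rule for controlled paths (cf.\ \cite{FH14} Chapter 7), which also ensures the RHS rough integral exists under the hypotheses of the statement.

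Finally, I would dispose of the residual sum. Since $Y$ is $\alpha$-H\"older and hence uniformly bounded on $[0,T]$, and $3\alpha > 1$,
$$\Big|\sum_{[s,u]\in\mathcal{P}} Y_s R^Z_{s,u}\Big| \lesssim \sum_{[s,u]\in\mathcal{P}} |u-s|^{3\alpha} \leq T\,|\mathcal{P}|^{3\alpha - 1} \xrightarrow{|\mathcal{P}|\downarrow 0} 0,$$
so both approximating sums have the same limit, which is the claimed identity. The only mildly technical step is securing the uniform remainder bound $|R^Z_{s,u}| \lesssim |u-s|^{3\alpha}$ across all $0 \le s < u \le T$; this is not an obstacle in the strict sense, since it follows immediately from the sewing-lemma estimate \eqref{1-14} applied to the local approximation $(s,u)\mapsto K_s X_{s,u} + K'_s \XX_{s,u}$, but it is where the entire argument leans on the regularity hypotheses $\alpha > 1/3$ encoded in the framework of this subsection.
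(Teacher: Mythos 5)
Your argument is correct. The paper itself gives no proof here and simply cites \cite{All21} Proposition 6.2, so the relevant comparison is against the standard argument that such references contain, and your derivation is exactly that: unfold the left-hand side via \eqref{CPintCP}, use $Z'=K$, replace $Z_{s,u}$ by its level-one/level-two expansion plus a remainder controlled at order $3\alpha$ via the sewing estimate \eqref{1-14}, observe that the local approximants then agree with those of the right-hand rough integral, and dispose of the residual sum by the elementary $3\alpha>1$ telescoping bound. The only ingredients you lean on implicitly---that $(YK,\,Y'K+YK')$ is a controlled path (Leibniz rule) and that the sewing remainder for $Z$ is uniform in $(s,u)$ with $\beta=3\alpha$---are both secured by the $\alpha\in(\tfrac13,\tfrac12]$ hypothesis in force throughout this appendix, and you flag them appropriately. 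No gaps.
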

\begin{proof}
    See \cite{All21} Proposition 6.2.
\end{proof}

\begin{remark}
    We note that the results in this appendix cannot be extended to $\alpha<\frac{1}{3}$ for general (non-geometric) rough paths. One needs to employ the notion of branched rough paths (cf. \cite{Gub10}) to develop analogous techniques.
\end{remark}

\bibliographystyle{apacite}
\bibliography{main.bib}
\end{document}